\newcommand{\red}{\textcolor{red}}
\newcommand{\itemcolor}[1]{
  \renewcommand{\makelabel}[1]{\color{#1}\hfil ##1}}
\DeclareMathAlphabet{\mathcal}{OMS}{cmsy}{m}{n}
\theoremstyle{plain}
\newtheorem*{theorem*}{Theorem}
\newtheorem{theorem}{Theorem}
\numberwithin{theorem}{section} 
\newtheorem{prop}[theorem]{Proposition}
\numberwithin{prop}{section} 
\numberwithin{cor}{section} 
\newtheorem{lem}[theorem]{Lemma}
\numberwithin{lem}{section} 
\newtheorem{conj}[theorem]{Conjecture}
\numberwithin{conj}{section} 
\newtheorem{quest}[theorem]{Question}
\numberwithin{quest}{section} 
\newtheorem{problem}[theorem]{Problem}
\numberwithin{problem}{section} 
\numberwithin{InductAssump}{section} 
\numberwithin{comment}{section}
\theoremstyle{definition}
\newtheorem{remark}[theorem]{Remark}
\numberwithin{remark}{section} 
\numberwithin{recipe}{section} 
\newtheorem{defn}[theorem]{Definition}
\numberwithin{defn}{section} 
\newenvironment{proofIdea}{\paragraph*{Proof (idea).}}{\hfill$\square$ \bigskip}
\newcounter{parentnumber}
\def\clap#1{\hbox to 0pt{\hss#1\hss}}
\DeclareRobustCommand{\cev}[1]{%
  \mathpalette\do@cev{#1}%
}
\newcommand{\do@cev}[2]{%
  \fix@cev{#1}{+}%
  \reflectbox{$\m@th#1\vec{\reflectbox{$\fix@cev{#1}{-}\m@th#1#2\fix@cev{#1}{+}$}}$}%
  \fix@cev{#1}{-}%
}
\newcommand{\fix@cev}[2]{%
  \ifx#1\displaystyle
    \mkern#23mu
  \else
    \ifx#1\textstyle
      \mkern#23mu
    \else
      \ifx#1\scriptstyle
        \mkern#22mu
      \else
        \mkern#22mu
      \fi
    \fi
  \fi
}
\numberwithin{equation}{section}
\numberwithin{figure}{section}
\renewcommand{\thesection}{\arabic{section}} 
\global\long\def\SLE{\mathrm{SLE}}
\global\long\def\SLEk{\mathrm{SLE}_{\kappa}}
\global\long\def\hSLEk{\mathrm{hSLE}_{\kappa}}
\global\long\def\CLE{\mathrm{CLE}}
\global\long\def\sF{\mathcal{F}}
\global\long\def\sZ{\mathcal{Z}}
\global\long\def\sL{\mathcal{L}}
\global\long\def\sS{\mathcal{S}}
\global\long\def\bR{\mathbb{R}}
\global\long\def\bRpos{\mathbb{R}_{> 0}}
\global\long\def\bZ{\mathbb{Z}}
\global\long\def\bZpos{\mathbb{Z}_{> 0}}
\global\long\def\bZnn{\mathbb{Z}_{\geq 0}}
\global\long\def\bQ{\mathbb{Q}}
\global\long\def\bC{\mathbb{C}}
\global\long\def\bH{\mathbb{H}}
\global\long\def\ii{\mathfrak{i}}
\global\long\def\cl#1{\overline{#1}}
\global\long\def\Mob{f}
\global\long\def\primaryRep{\mathsf{V}}
\global\long\def\one{\scalebox{1.1}{\textnormal{1}} \hspace*{-.75mm} \raisebox{.025em}{|} \,}
\global\long\def\sIndex{\upsilon}
\global\long\def\OO{\mathcal{O}}
\global\long\def\ud{\mathrm{d}}
\global\long\def\pder#1{\frac{\partial}{\partial#1}}
\global\long\def\pdder#1{\frac{\partial^{2}}{\partial#1^{2}}}
\global\long\def\pddder#1{\frac{\partial^{3}}{\partial#1^{3}}}
\global\long\def\set#1{\left\{  #1\right\}  }
\global\long\def\Uqsltwo{\mathcal{U}_{q}(\mathfrak{sl}_{2})}
\global\long\def\qnum#1{\left[#1\right] }
\global\long\def\qfact#1{\left[#1\right]! }
\global\long\def\swaltime{\tau}
\global\long\def\LP{\mathrm{LP}}
\global\long\def\Catalan{\mathrm{C}}
\global\long\def\constantfromdiagram#1#2#3{A_{#1}^{#2,#3}}
\global\long\def\linkpatt{\omega}
\global\long\def\hF{{}_2F_1}
\global\long\def\graph{G}
\global\long\def\link#1#2{\{#1,#2\}}
\global\long\def\removeLink{/}
\global\long\def\SymmGrp{\mathfrak{S}}
\newcommand{\edgeof}[2]{{\langle #1 , #2 \rangle}}
\global\long\def\dmn{\mathrm{dim}\,}
\global\long\def\id{\mathrm{id}}
\global\long\def\SymmGrp{\mathfrak{S}}
\global\long\def\Vir{\mathfrak{Vir}}
\global\long\def\chamber{\mathfrak{X}}
\global\long\def\extendedChamber{\mathfrak{W}}
\global\long\def\PartF{\sZ}
\global\long\def\Sol{\sS}
\global\long\def\multii{\vartheta}
\global\long\def\multidim{\overline{s}}
\global\long\def\projdmn{s}
\global\long\def\BasisF{\mathscr{F}}
\global\long\def\indexSet{I}
\global\long\def\index{\iota}
\begin{document}


\title[Towards a conformal field theory for Schramm-Loewner evolutions]{\vspace*{1.5cm}Towards a conformal field theory for Schramm-Loewner evolutions \vspace*{.5cm}}

\author{\bf Eveliina Peltola}
\affiliation{{\color{blue}{\tt{\small eveliina.peltola@hcm.uni-bonn.de}}} \\ 
Institute for Applied Mathematics, University of Bonn, \\
Endenicher Allee 60, D-53115 Bonn, Germany
\vspace*{.5cm}}


%



\begin{abstract}
We discuss the partition function point of view for chordal Schramm-Loewner evolutions 
and their relationship with correlation functions in conformal field theory. 
Both are closely related to crossing probabilities and interfaces in critical models in two-dimensional statistical mechanics.
We gather and supplement previous results with different perspectives,
point out remaining difficulties, and suggest directions for future studies.
\end{abstract}

\maketitle

\renewcommand{\tocname}{}
{\hypersetup{linkcolor=black}
\tableofcontents
}


%

%

\section{\label{sec:intro}Introduction}

The general aim of this article is to illustrate some features of the connection of 
critical models of statistical mechanics with conformal field theory, 
i.e., conformally invariant quantum field theory. 
One way to mathematically formulate such a connection is in terms of random geometry, 
where topological or geometric properties of the models are associated to conformally invariant objects.
Recently, this approach has been very successful for two-dimensional systems:
examples include the conformal invariance of crossing probabilities in critical 
models~\cite{Cardy:Critical_percolation_in_finite_geometries, 
LPS:Conformal_invariance_in_2d_percolation, 
LLS:Conformal_invariance_in_Ising_model,
Smirnov:Critical_percolation_in_the_plane}, 
their relationship with correlation functions in conformal field theory
(see~\cite{BBK:Multiple_SLEs_and_statistical_mechanics_martingales,
Izyurov:Smirnovs_observable_for_free_boundary_conditions_interfaces_and_crossing_probabilities,
FSKZ-A_formula_for_crossing_probabilities_of_critical_systems_inside_polygons, 
Peltola-Wu:Crossing_probabilities_of_multiple_Ising_interfaces}, 
and references therein), 
the description of critical planar interfaces 
in terms of conformally invariant random curves (Schramm-Loewner evolutions) 
\cite{Schramm:Scaling_limits_of_LERW_and_UST, 
Smirnov:Critical_percolation_in_the_plane,
LSW:Conformal_invariance_of_planar_LERW_and_UST,
Smirnov:Towards_conformal_invariance_of_2D_lattice_models,
Schramm-Sheffield:Contour_lines_of_2D_discrete_GFF,
Schramm-Sheffield:A_contour_line_of_the_continuum_GFF,
CDHKS:Convergence_of_Ising_interfaces_to_SLE},
and a random geometry formulation of 2D quantum 
gravity~\cite{Polyakov:Quantum_geometry_of_bosonic_strings,
Duplantier:Conformal_fractal_geometry_and_boundary_quantum_gravity, 
LeGall:Uniqueness_and_universality_of_the_Brownian_map,
Miermont:Brownian_map_is_scaling_limit_of_uniform_random_plane_quadrangulations,
Sheffield-Miller:Imaginary_geometry1,
DMS:Liouville_quantum_gravity_as_mating_of_trees}.
In~this article, we focus on the relationship of Schramm-Loewner evolutions 
with correlation functions in conformal field theory. 

\bigskip

Quantum field theory is manifest in particle physics and condensed matter physics:
it describes, for instance, interactions in electromagnetic theory, the standard model, 
and many-body systems. 
The basic objects, ``fields'', have infinitely many degrees of freedom and they might not admit a mathematically precise meaning.
The observable quantities are ``averages'' (expectation values) of the fields, usually 
termed correlation functions~\cite{DMS:CFT, Schottenloher:Mathematical_introduction_to_CFT, Mussardo:Statistical_field_theory}. 
Quantum field theory is believed to also describe scaling limits of many lattice models of statistical mechanics
(lattice models are formulated on discretizations of the space, lattices, 
and the limit when the mesh of the lattice tends to zero is called the scaling limit).
In general, statistical mechanics 
concerns systems with a large number of degrees of freedom, such as gases, liquids, and solids.
The key objective is to derive a macroscopic description of the system 
(which could perhaps be concretely observed) 
via a suitable probability distribution for the microscopic states, 
which due to the enormous number of variables 
cannot be deterministically analyzed, 
see~\cite{Mussardo:Statistical_field_theory, Friedli-Velenik:Statistical_mechanics_of_lattice_systems}.

Of particular interest to us are statistical models which exhibit continuous (second order) phase transitions
--- abrupt changes of macroscopic properties when, e.g., the temperature of the system is varied continuously. 
An example of such a phenomenon is the loss of magnetization in a ferromagnet when it is heated above the Curie temperature
(in dimension at least two); see Figure~\ref{fig: Ising phase transition}.
The value of the temperature at which the phase transition occurs is called critical.
A common feature of critical phenomena in continuous phase transitions is that the characteristic length scale of the system, 
the correlation length $\xi(T)$, diverges as the temperature $T$ approaches its critical value $T_c$.
For instance, in the ferromagnet, the characteristic length scale is described by the decay of correlations
$C_T(x,y):= \mathbb{E}[\sigma_x \sigma_y] - \mathbb{E}[\sigma_x] \mathbb{E}[\sigma_y]$ 
of two atomic spins $\sigma_x$ and $\sigma_y$ at positions $x$ and $y$ far apart:
at very high temperatures, thermal fluctuations 
overcome the spins' interactions and the correlations decay exponentially fast: 
$C_T(x,y) \sim e^{-|x-y| / \xi(T)}$ as $|x-y| \to \infty$.
On the other hand, when $T \searrow T_c$, we have $\xi(T) \to \infty$. 
At criticality $T = T_c$, the correlations decay according to a power law:  
$C_{T_c}(x,y) \sim |x-y|^{-2 \Delta}$ as $|x-y| \to \infty$,
where $\Delta$ is a critical exponent for the model.

Scaling limits of the above type of models 
at criticality should be scale-invariant, as the divergence of the correlation length indicates 
(more formally, the scaling limit is described by a fixed point of the renormalization group 
flow, see, e.g.,~\cite{Cardy:Scaling_and_renormalization_in_statistical_physics}).
A.~Polyakov~\cite{Polyakov:Conformal_symmetry_of_critical_fluctuations} 
conjectured in the 1970s that these models  
should even enjoy a much stronger symmetry, conformal invariance.
In the 1980s, convincing physical arguments for the conformal invariance
were indeed given for two-dimensional systems by Polyakov with A.~Belavin  
and A.~Zamolodchikov~\cite{BPZ:Infinite_conformal_symmetry_in_2D_QFT, 
BPZ:Infinite_conformal_symmetry_of_critical_fluctuations_in_2D}, 
and later by J. Polchinski~\cite{Polchinski:Scale_and_conformal_invariance_in_quantum_field_theory}.
Specifically, in the scaling limit, a critical lattice model with continuous phase transition 
should converge to some conformal field theory (CFT), 
regardless of the precise microscopic details of the model 
(e.g., choice of lattice, see~\cite{Chelkak-Smirnov:Universality_in_2D_Ising_and_conformal_invariance_of_fermionic_observables},
or exact interaction range, see~\cite{GGM:The_scaling_limit_of_the_energy_correlations_in_non_integrable_Ising_models}). 
Also, the models should form universality classes, within which macroscopic properties,
such as decay of correlations and critical exponents, are similar.

In two dimensions, supplementing the global conformal symmetry,
Belavin, Polyakov, and Zamolodchikov~\cite{BPZ:Infinite_conformal_symmetry_in_2D_QFT}
also postulated invariance under ``infinitesimal''   conformal transformations, 
yielding infinitely many conserved quantities (instead of fixing only finitely many degrees of freedom, as the global conformal symmetry does). 
This idea had striking implications: the universality classes are classified by one 
parameter $c$, the central charge of the CFT; the CFTs form representations of the Virasoro algebra, 
the conformal symmetry algebra of the plane; and
the representations of this algebra were completely classified by 
B.~Fe{\u\i}gin and D.~Fuchs~\cite{Feigin-Fuchs:Representations_of_Virasoro}.
Thus, the two-dimensional CFTs could be analyzed in great detail. 
Further developments include J.~Cardy's introduction of CFTs with boundary~\cite{Cardy:Conformal_invariance_and_surface_critical_behavior, 
Cardy:Boundary_conditions_fusion_rules_and_Verlinde_formula, 
Cardy:Critical_percolation_in_finite_geometries}
to understand surface critical phenomena and the effect of boundary conditions,
as well as B.~Nienhuis's Coulomb gas formulation for phase 
transitions~\cite{Nienhuis:Exact_critical_point_and_exponents_of_the_On_model_in_two_dimensions,
Nienhuis:Critical_behavior_of_two-dimensional_spin_models_and_charge_asymmetry_in_the_Coulomb_gas,
Nienhuis:Coulomb_gas_formulation_of_2D_phase_transitions}, 
giving new predictions for, e.g., the values of critical exponents, many of which still remain extremely challenging for mathematicians.

\begin{figure}
\centering
\includegraphics[width=.25\textwidth]{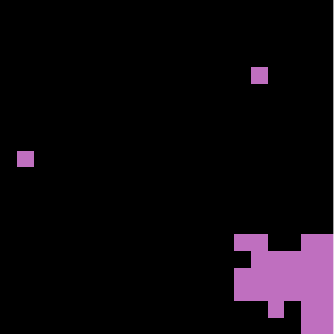}
\qquad\qquad
\includegraphics[width=.25\textwidth]{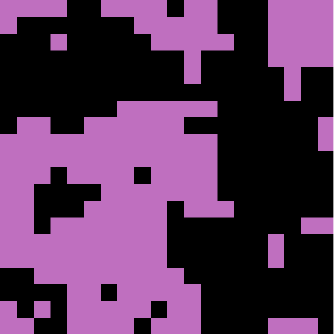}
\qquad\qquad
\includegraphics[width=.25\textwidth]{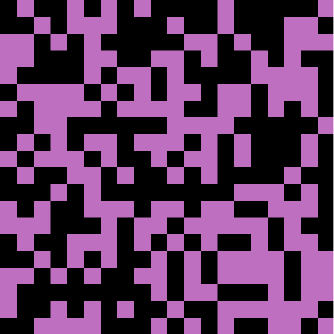}
\caption{\label{fig: Ising phase transition}
The phase transition in the ferromagnetic Ising model. 
In high temperatures (right), the system is disordered (paramagnetic) and spins at far away points almost independent
(i.e., correlations of spins decay exponentially fast in the distance).
In low temperatures (left), typical configurations are ordered (ferromagnetic) and the system is strongly correlated even at long distances.
At the unique critical temperature $T_c$, macroscopic clusters of both spins appear,
the system does not have a typical length scale, and correlations decay polynomially in the distance.
As the lattice mesh tends to zero, this critical system should be described by 
a conformally invariant quantum field theory.}
\end{figure}

A major breakthrough in mathematics relating conformal invariance and critical phenomena was
the introduction of  stochastic Loewner evolutions, now known as Schramm-Loewner evolutions (SLE),
in the seminal work~\cite{Schramm:Scaling_limits_of_LERW_and_UST} of O.~Schramm. 
The $\SLE_\kappa$ is a one-parameter family 
of random planar curves indexed by $\kappa \geq 0$ 
(the speed of the curve when viewed as a growth process driven by Brownian motion),
which is uniquely characterized by its conformal invariance and a Markovian property.
Schramm's idea led to remarkable success:
with G.~Lawler and W.~Werner, Schramm calculated critical exponents for planar Brownian 
motion~\cite{LSW:Brownian_intersection_exponents1, LSW:Brownian_intersection_exponents2}
and proved one of the first results towards conformal invariance of critical models 
in statistical mechanics~\cite{LSW:Conformal_invariance_of_planar_LERW_and_UST}:
SLE curves indeed describe scaling limits of interfaces for certain polymer models 
(loop-erased walks and uniform spanning trees).
Also, critical exponents for percolation were rigorously derived using 
SLE~\cite{Smirnov-Werner:Critical_exponents_for_two-dimensional_percolation, LSW:One-arm_exponent_for_2D_critical_percolation}.

Around that time, S.~Smirnov and R.~Kenyon
independently and ingeniously 
used discrete complex analysis to establish more results on conformal invariance of scaling limits of critical planar models:
convergence of the dimer model height function to the Gaussian free field (``free boson'') by 
Kenyon~\cite{Kenyon:The_asymptotic_determinant_of_the_discrete_Laplacian, Kenyon:Conformal_invariance_of_domino_tiling, 
Kenyon:Dominos_and_the_Gaussian_free_field},
conformal invariance for the exploration process and crossing probabilities in critical percolation 
by Smirnov~\cite{Smirnov:Critical_percolation_in_the_plane}, 
extended  by F.~Camia and 
C.~Newman~\cite{Camia-Newman:2D_percolation_full_scaling_limit,
Camia-Newman:Critical_percolation_exploration_path_and_SLE_proof_of_convergence} 
to include the collection of loops (cluster boundaries inside the domain),
and later, conformal invariance for the critical Ising and FK-Ising models
by Smirnov et.~al~\cite{Smirnov:Towards_conformal_invariance_of_2D_lattice_models, Smirnov:Conformal_invariance_in_random_cluster_models1,
Chelkak-Smirnov:Universality_in_2D_Ising_and_conformal_invariance_of_fermionic_observables,
Hongler-Smirnov:Energy_density_in_planar_Ising_model, 
CDHKS:Convergence_of_Ising_interfaces_to_SLE, CHI:Conformal_invariance_of_spin_correlations_in_planar_Ising_model},
in terms of correlations and interfaces.

Physicists also became very interested in SLEs.
Indeed, Schramm's ideas were novel, providing a different approach to
understanding critical phenomena in relation with quantum field theory, especially CFT. 
After the introduction of SLEs, J.~Cardy  soon  predicted a relationship between SLE curves and certain ``boundary condition changing operators'' 
in critical models~\cite{Cardy:SLE_and_Dyson_circular_ensembles, Cardy:SLE_for_theoretical_physicists}. 
This was formalized by M.~Bauer and D.~Bernard~\cite{Bauer-Bernard:Conformal_field_theories_of_SLEs, 
Bauer-Bernard:SLE_martingales_and_Virasoro_algebra,
Bauer-Bernard:Conformal_transformations_and_SLE_partition_function_martingale}, 
who argued in particular that certain CFT correlation functions are related to 
martingales for the SLE curves, and 
there must be a specific relationship between the $\SLEk$
and the central charge $c(\kappa)$ of the CFT.
Thus, conjecturally, certain CFT fields  
should correspond to the growth of SLE curves.

Since then, many variants of SLEs have been rigorously related to critical models,
thus verifying their conformal invariance in the scaling 
limit~\cite{Smirnov:Critical_percolation_in_the_plane, 
LSW:Conformal_invariance_of_planar_LERW_and_UST, Camia-Newman:2D_percolation_full_scaling_limit, 
Zhan:Scaling_limits_of_planar_LERW_in_finitely_connected_domains,
Schramm-Sheffield:Contour_lines_of_2D_discrete_GFF, Hongler-Kytola:Ising_interfaces_and_free_boundary_conditions,
Schramm-Sheffield:A_contour_line_of_the_continuum_GFF,
CDHKS:Convergence_of_Ising_interfaces_to_SLE, 
Izyurov:Smirnovs_observable_for_free_boundary_conditions_interfaces_and_crossing_probabilities, 
BPW:On_the_uniqueness_of_global_multiple_SLEs}. 
However, these limits as conformal (quantum) field theories are still not mathematically well understood.
From the SLE point of view, so-called partition 
functions~\cite{BBK:Multiple_SLEs_and_statistical_mechanics_martingales, 
Dubedat:Commutation_relations_for_SLE,
Lawler:Partition_functions_loop_measure_and_versions_of_SLE,
Dubedat:SLE_and_Virasoro_representations_fusionB}
can be abstractly viewed as CFT correlation functions. 
We will see how such a connection also makes mathematical sense, even though the ``SLE generating fields'' themselves might not.

\bigskip

{\bf Role of this article.}
The main goal is to shed light on the  
connection of SLE curves with certain CFT correlation functions, probabilistically known as SLE partition functions 
(i.e., ``total masses'' for the measures on curves).
We also discuss the role of  the ``SLE generating fields'' which should be associated to these correlation functions,
but cautiously note that the mathematical meaning of such fields is not clear,
whereas the correlation functions are both well-defined and quite well understood.

The SLE partition functions can be studied in terms of a hypoelliptic PDE system.
Such PDEs are well  known in the CFT literature for correlation functions of so-called degenerate conformal fields. 
Notably, exactly the same PDEs also follow by purely probabilistic arguments from SLE martingales, or viewing the SLEs as 
hypoelliptic diffusion processes~\cite{Kontsevich:CFT_SLE_and_phase_boundaries, 
Kontsevich-Suhov:On_Malliavin_measures_SLE_and_CFT,
Dubedat:SLE_and_Virasoro_representations_localizationA}.
In particular, strong classification results for these functions can be 
established~\cite{Flores-Kleban:Solution_space_for_system_of_null-state_PDE1, Flores-Kleban:Solution_space_for_system_of_null-state_PDE2,
Flores-Kleban:Solution_space_for_system_of_null-state_PDE3, Flores-Kleban:Solution_space_for_system_of_null-state_PDE4}.

In fact, such a classification is not only interesting from the field theoretical point of view, 
but also regarding the SLE processes themselves, 
and especially their relation with interfaces and crossing probabilities in critical statistical mechanics models.
Indeed, different connectivity patterns of multiple SLEs can be encoded in so-called pure partition functions,
which form a distinguished basis in the space of SLE partition 
functions~\cite{Kytola-Peltola:Pure_partition_functions_of_multiple_SLEs,
Peltola-Wu:Global_and_local_multiple_SLEs_and_connection_probabilities_for_level_lines_of_GFF}.
These basis functions, in turn, 
are also naturally related to probabilities of non-local crossing events, e.g., for the critical Ising 
model~\cite{Peltola-Wu:Crossing_probabilities_of_multiple_Ising_interfaces, KKP:Conformal_blocks_pure_partition_functions_and_KW_binary_relation}.

Finally, these 
functions admit a beautiful hierarchy of fusion rules,
which can be thought of as a rigorous operator product expansion, 
one of the cornerstones of conformal field theory.
In fact, in some cases the fusion can also be related to actual observables in critical models~\cite{Gamsa-Cardy:The_scaling_limit_of_two_cluster_boundaries_in_critical_lattice_models, 
KKP:Conformal_blocks_pure_partition_functions_and_KW_binary_relation},
$\SLE$ observables~\cite{BJV:Some_remarks_on_SLE_bubbles_and_Schramms_2point_observable,
Lenells-Viklund:Coulomb_gas_integrals_for_commuting_SLEs},
or generalizations of multiple $\SLE$ 
measures~\cite{Friedrich-Werner:Conformal_restriction_highest_weight_representations_and_SLE,
Kontsevich:CFT_SLE_and_phase_boundaries,
Friedrich-Kalkkinen:On_CFT_and_SLE,
Kontsevich-Suhov:On_Malliavin_measures_SLE_and_CFT,
Dubedat:SLE_and_Virasoro_representations_fusionB}. 
See also~\cite{BPZ:Infinite_conformal_symmetry_of_critical_fluctuations_in_2D, 
Cardy:Critical_percolation_in_finite_geometries,  
Watts:A_crossing_probability_for_critical_percolation_in_two_dimensions,
Bauer-Bernard:Conformal_field_theories_of_SLEs,
Bauer-Bernard:SLE_CFT_and_zigzag_probabilities, 
Dubedat:Euler_integrals_for_commuting_SLEs,
Dubedat:Excursion_decomposition_for_SLE_and_Watts_crossing_formula,
Sheffield-Wilson:Schramms_proof_of_Watts_formula,
Flores-Kleban:Solution_space_for_system_of_null-state_PDE4,
FSK:Multiple_SLE_connectivity_weights_for_rectangles_hexagons_and_octagons,
JJK:SLE_boundary_visits, Peltola-Wu:Global_and_local_multiple_SLEs_and_connection_probabilities_for_level_lines_of_GFF} for further examples.

\bigskip

{\bf Organization of this article.}
We discuss the SLE, its relation to critical lattice models, and basics of CFT in Section~\ref{sec: preli}.
Specifically, Section~\ref{subsec:SLE} contains the definition and basic properties of the SLE.
In Section~\ref{subsec: Ising}, we introduce  the Ising model as an example of a critical lattice model in statistical mechanics.
In Section~\ref{subsec:CFT}, we review some basic features of two-dimensional CFT, and in Section~\ref{subsec:mgles},
we explain how lattice interfaces and SLEs could be related to CFT correlation functions via martingale observables.
Sections~\ref{subsec:CFT}--\ref{subsec:mgles} are not intended to be mathematically precise, 
but rather to serve as motivation and illustration. As supporting material,
Appendix~\ref{app:Vir} contains some representation theory of the Virasoro algebra.

In Section~\ref{sec: Multiple SLE partition functions}, we introduce the SLE partition functions.
First, in Section~\ref{subsec:multiple SLEs} we briefly discuss multiple SLEs and the notion of an SLE partition function.
Then, in Sections~\ref{subsec:ppfdef} and~\ref{subsec:ppfprop}  we give a PDE theoretic 
definition for the multiple SLE partition functions and discuss their most important properties.
Last, in Section~\ref{subsec:scaling_limimt_results_etc}, 
we  briefly discuss applications 
to the theory of SLEs as well as to the conformal invariance for critical models.


Section~\ref{sec:OPE} concerns an operator product expansion
(OPE) 
for the SLE partition functions --- a fusion procedure to generate other CFT correlation functions 
from the functions of Section~\ref{sec: Multiple SLE partition functions}. 
We begin in Section~\ref{subsec:FusionCFT} with a brief  and heuristic summary of the role of the OPE in CFT,
following the physics literature. In Sections~\ref{subsec: fusion dub} and~\ref{subsec: fusion SCCG},
we discuss two possible approaches to make the OPE structure for the SLE partition functions
mathematically well-defined. In Section~\ref{subsec:OPE for ppf}, we state a rather general result to this end.

Section~\ref{sec:Malek} is devoted to some speculations on how the OPE structure from  Section~\ref{sec:OPE}
could be useful 
for constructive field theory, based on ideas presented recently in~\cite{Abdesselam:Second-quantized_Kolmogorov-Chentsov_theorem}. 
In Section~\ref{subsec: Random tempered distributions}, we briefly discuss one way to make mathematical sense of the ``fields'' in quantum field theory
as random distributions. In Sections~\ref{subsec: ASPWC} and~\ref{subsec: Bootstrap for SLEs}, 
we outline how the OPE structure from Section~\ref{sec:OPE}
could perhaps be used to try and understand the ``SLE generating fields'' mathematically. 
The goal of this last section is to open some perspectives and to rise questions for future developments in the field.

\bigskip
\bigskip

\begin{center}
\bf Acknowledgements
\end{center}

The purpose of this paper is to summarize several joint works from my personal perspective.
The presentation is intended to give an overview 
of work distributed in many articles, as well as to gather and clarify 
known results and remaining problems.
The occasionally heuristic motivations and speculations are supposed 
to serve as motivation rather than exposition.
Also, I have omitted many important related works, to which the given citations should guide the reader. 

I have benefited enormously from discussions with numerous people, 
and trying to list all of them would only result in forgetting to mention many important names.
Special thanks belong to Julien Dub\'edat and Greg Lawler for inspiration and interesting discussions, 
as well as to Vincent Beffara, Steven Flores, Alex Karrila, Kalle Kyt\"ol\"a, and Hao Wu 
for fruitful collaboration and discussions.
I have enjoyed very much my collaboration with Hao Wu
on the probabilistic approach to the multiple SLE partition functions and their relation with critical lattice models,
discussed in Section~\ref{sec: Multiple SLE partition functions} and Appendices~\ref{app:Hao} \&~\ref{app}.
Of special importance to Section~\ref{sec:OPE} is my joint work with Kalle Kyt\"ol\"a,
whom I would like to thank also for introducing me to the subject in the first place.
%
Finally, I wish to thank Abdelmalek Abdesselam for pointing out a connection to his work,
that I briefly discuss in Section~\ref{sec:Malek}.

I cordially acknowledge the financial support of
the ERC AG COMPASP, the NCCR SwissMAP, and the Swiss~NSF.


\section{\label{sec: preli} Schramm-Loewner evolution in statistical mechanics and conformal field theory}

In this section, we introduce Schramm-Loewner evolutions (SLE) and  describe how they are connected to 
lattice models of statistical mechanics and conformal field theory (CFT).
We focus on the case of planar domains with boundary and consider chordal interfaces, 
neglecting many (also interesting) phenomena in the bulk.
We omit altogether, for example, the conformal loop ensembles 
(CLE)~\cite{Sheffield:Exploration_trees_and_CLEs, Sheffield-Werner:CLEs},
Brownian excursions and loops and conformal restriction measures~\cite{LSW:Conformal_restriction_the_chordal_case} 
related to the stress-energy 
tensor~\cite{Friedrich-Werner:Conformal_restriction_highest_weight_representations_and_SLE,
Friedrich-Kalkkinen:On_CFT_and_SLE,
DRC:Identification_of_the_stress-energy_tensor_through_conformal_restriction_in_SLE_and_related_processes,
Doyon-Random_loops_and_conformal_field_theory},
as well as the case of  general Riemann 
surfaces~\cite{Kontsevich:CFT_SLE_and_phase_boundaries, 
Kontsevich-Suhov:On_Malliavin_measures_SLE_and_CFT, Dubedat:SLE_and_Virasoro_representations_localizationA}.
One of the first celebrated applications of SLE was the rigorous calculation of critical 
exponents~\cite{LSW:Brownian_intersection_exponents1, LSW:Brownian_intersection_exponents2, 
Smirnov-Werner:Critical_exponents_for_two-dimensional_percolation, LSW:One-arm_exponent_for_2D_critical_percolation},
in agreement with the earlier predictions in the physics 
literature~\cite{denNijs:Extended_scaling_relations_for_the_magnetic_critical_exponents_of_the_Potts_model,
BPZ:Infinite_conformal_symmetry_in_2D_QFT,
BPZ:Infinite_conformal_symmetry_of_critical_fluctuations_in_2D,
Cardy:Conformal_invariance_and_surface_critical_behavior,
Dotsenko-Fateev:Conformal_algebra_and_multipoint_correlation_functions_in_2D_statistical_models,
Duplantier-Saleur:Exact_determination_of_the_percolation_hull_exponent_in_two_dimensions,
Nienhuis:Coulomb_gas_formulation_of_2D_phase_transitions}.
There is also an interesting connection of SLEs with Liouville  theory of
gravity~\cite{Duplantier:Conformal_fractal_geometry_and_boundary_quantum_gravity,  
Sheffield-Miller:Imaginary_geometry1,
DMS:Liouville_quantum_gravity_as_mating_of_trees}.
For these developments, we invite the reader to consult the aforementioned papers and references therein.

The obvious relation of SLE curves with lattice models is rather geometric ---
SLEs describe interfaces, or domain walls, of critical planar lattice models in the scaling limit (i.e, as the lattice mesh tends to zero).
In general, these models are believed to be described by conformally invariant quantum field theories, CFTs, in the continuum.
However, mathematical understanding of such a statement remains unclear
and is one of the major challenges in modern mathematical physics.
On the other hand, martingale observables for SLE curves are closely related to certain correlation functions in 
CFT, which can be mathematically defined as real or complex analytic functions.
One of the goals of the present article is to shed light on this latter connection.

\bigskip

We begin in Section~\ref{subsec:SLE} with the introduction of the chordal SLE and discuss some of its main features.
Then, in Section~\ref{subsec: Ising} we make connection with lattice models, taking as an example the critical planar Ising model, 
for which many important results have been rigorously obtained. 
Analogous results have also been proved or conjectured for many other critical models~\cite{Schramm:ICM}:
percolation, self-avoiding and loop-erased walks, Potts model, $O(n)$-model, random-cluster model, Gaussian free field, etc.

Section~\ref{subsec:CFT} contains a very brief and incomplete introduction to some aspects of conformal field theory,
important for the purposes of the present article.
Then, in Section~\ref{subsec:mgles} we discuss martingale observables and describe how 
the two fundamental properties of SLE, conformal invariance and the domain Markov property, 
give rise to a prediction that certain conformal fields, denoted ``$\Phi_{1,2}$'', 
should be associated to the growth of SLE curves from the boundary.
The discussion in these two subsections is not intended to be rigorous, but rather to serve as motivation and illustration,
and to provide ideas and background from physics.

To keep the discussion brief and intuitive, 
most of this section is presented in a rather informal manner, 
one reason being that the mathematical content of some of the statements is not yet fully understood,
and another that we wish to avoid the technical (although important) points.
There already exists an extensive literature,  
to which we give references along the way.

\subsection{\label{subsec:SLE}Schramm-Loewner evolution (SLE)}

The Schramm-Loewner evolutions, originally called ``stochastic'' Loewner evolutions, were introduced 
at the turn of the millennium
by O.~Schramm~\cite{Schramm:Scaling_limits_of_LERW_and_UST}, 
who argued that they are the only 
possible random curves that could describe scaling limits of critical lattice interfaces in  two-dimensional systems.
Schramm's definition was inspired by the classical theory of C.~Loewner~\cite{Loewner:Untersuchungen_uber_schlichte_konforme_Abbildungen_des_Einheitskreises} 
for dynamical description of the growth of hulls, encoded in conformal maps. Schramm's revolutionary input was that such maps
could also be random.
Aiming at the construction of scaling limits of critical lattice interfaces, 
the law of the SLE curve should be manifestly conformally invariant. 
Schramm observed in~\cite{Schramm:Scaling_limits_of_LERW_and_UST} 
that when requiring in addition a Markovian property for the growth of the curve, 
there is only a one-parameter family of such random curves, that he labeled by $\kappa \geq 0$
and called the $\SLE_\kappa$.
Physically, the parameter $\kappa$ describes the universality class
of the corresponding critical model, or equivalently, the central charge of the corresponding conformal field 
theory~\cite{Cardy:Scaling_and_renormalization_in_statistical_physics, Cardy:SLE_for_theoretical_physicists}. 
Mathematically, $\kappa$ is the ``speed'' of the Brownian motion associated to the growth of the $\SLE_\kappa$ curve; 
see Figure~\ref{fig: SLE} and the construction below Definition~\ref{defn:SLE}.

\bigskip

\begin{defn} \label{defn:SLE}
For $\kappa \geq 0$, the (chordal) Schramm-Loewner evolution $\SLE_{\kappa}$ 
is a family of probability measures $\mathbb{P}_{\Omega; x,y}$ on curves, indexed by simply connected domains $\Omega \subsetneq \bC$
with two distinct boundary points $x,y \in \partial \Omega$.
Each measure $\mathbb{P}_{\Omega; x,y}$ is supported on continuous unparameterized curves in $\overline{\Omega}$ from $x$ to $y$.
This family is uniquely determined by the following two properties: 
\begin{itemize}

\item {\bf \textit{Conformal invariance}}: 
Fix two simply connected domains $\Omega, \Omega' \subsetneq \bC$
and boundary points $x,y \in \partial \Omega$ and $x',y' \in \partial \Omega'$,
with $x \neq y$ and $x' \neq y'$. 
According to the Riemann mapping theorem,
there exists a conformal bijection $\Mob \colon \Omega \to \Omega'$ such that $\Mob(x)=x'$ and $\Mob(y)=y'$.
With any choice of such a map, we have $\Mob(\eta) \sim \mathbb{P}_{\Omega'; x',y'}$ if $\eta \sim \mathbb{P}_{\Omega; x,y}$.

\item {\bf \textit{Domain Markov property}}: 
Given an initial segment $\eta[0,\tau]$ of the $\SLE_\kappa$ curve $\eta \sim \mathbb{P}_{\Omega; x,y}$
up to a stopping time $\tau$ (parameterizing $\eta$ by $[0,\infty)$, say), 
the conditional law of the remaining piece $\eta[\tau,\infty)$ is the law $\mathbb{P}_{\Omega_\tau; \eta(\tau),y}$ of 
the $\SLE_\kappa$ from the tip $\eta(\tau)$ to $y$ in the 
component $\Omega_\tau$ of the complement $\Omega \setminus \eta[0,\tau]$ of the initial segment containing the target point $y$ on its boundary.
\end{itemize}
\end{defn}

\begin{figure}
\centering
\includegraphics[width=\textwidth]{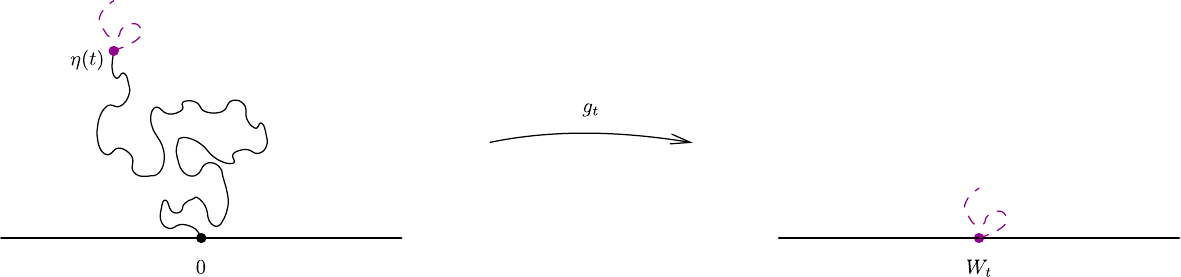}
\caption{\label{fig: SLE}
Illustration of the Loewner maps 
$g_t \colon \mathsf{H}_t \to \bH$ for the $\SLEk$ curve $\eta$,
where $\mathsf{H}_t$ is the unbounded  
component of  the curve's complement $\bH \setminus \eta[0,t]$ at time $t$.
The image of the tip $\eta(t)$ of the $\SLEk$ curve is the driving process $W_t = \sqrt{\kappa} B_t$.
}
\end{figure}

Explicitly, $\SLE_{\kappa}$ curves can be generated using random Loewner evolutions.
Thanks to its conformal invariance, 
it suffices to construct the $\SLE_{\kappa}$ curve $\eta \sim \mathbb{P}_{\bH; 0,\infty}$
in the upper half-plane $\bH := \{z \in \bC \; | \; \mathrm{Im}(z) > 0\}$ from $0$ to $\infty$.
In its construction as a growth process, the time evolution of $\eta$
is encoded in a solution of the Loewner differential equation: a collection $(g_{t})_{t\ge 0}$ of conformal maps $z \mapsto g_t(z)$.
Such maps were first considered by C.~Loewner in the 1920s while studying the Bieberbach 
conjecture~\cite{Loewner:Untersuchungen_uber_schlichte_konforme_Abbildungen_des_Einheitskreises}.
He managed to describe certain growth processes by a single ordinary differential equation, now known as 
the Loewner equation. In the upper half-plane $\bH \ni z$, it has the form 
\begin{align}\label{eq: Loewner equation}
\frac{\ud}{\ud t} g_t(z) = \frac{2}{g_t(z) - W_t},\qquad g_0(z)=z,
\end{align}
where $t \mapsto W_t$
is a real-valued continuous function, called the driving function. 
Note that, for each $z \in \bH$,
 this equation is only well-defined  up to a blow-up time, called the swallowing time of $z$,
\begin{align*}
\swaltime_z := \sup\Big\{ t > 0 \; \big| \; \inf_{ s \in[0,t]} |g_s(z)-W_s| > 0 \Big\}  .
\end{align*}
The hulls $K_{t} := \overline{\{z\in\bH \; | \; \swaltime_z \le t\}}$, for $t \geq 0$, define a growth process, called a Loewner chain.
For each $t \in [0,\swaltime_z)$, the map $z \mapsto g_t(z)$ is 
the unique conformal bijection from $\mathsf{H}_t:=\bH\setminus K_{t}$ onto $\bH$
with normalization chosen as $\smash{\underset{z\to\infty}{\lim}|g_t(z)-z|=0}$. 
Figure~\ref{fig: SLE} illustrates the Loewner chain associated to the $\SLEk$ process.

Originally, Loewner considered continuous, deterministic driving functions
(continuity of $W_t$ ensures that $K_t$ grow only locally). 
Schramm's groundbreaking idea in~\cite{Schramm:Scaling_limits_of_LERW_and_UST} 
was to take $W_t$ to be a random driving process.
In order for the process to describe scaling limits of critical interfaces, he required the 
resulting curve to satisfy the two properties in Definition~\ref{defn:SLE}.
The domain Markov property is particularly natural for discrete exploration processes, 
as we shall see in Section~\ref{subsec:mgles}. 
With conformal invariance, it guarantees that the driving process $(W_t)_{t\ge 0}$ 
has independent and stationary increments, and moreover that
$W_t=\sqrt{\kappa}B_t$, where $(B_{t})_{t\ge 0}$ is the standard Brownian motion.
Schramm proved with S.~Rohde in~\cite{Rohde-Schramm:Basic_properties_of_SLE} that
this growth process $(K_{t})_{t\ge 0}$ is almost surely generated by a continuous transient curve $(\eta_{t})_{t\ge 0}$, 
in the sense that $\mathsf{H}_t$ is the unbounded 
component of $\bH \setminus \eta[0,t]$ for each $t\ge 0$, and 
$|\eta(t)|\to\infty$ as $t\to\infty$.
The curve $\eta$ is (a parametrization of) the chordal $\SLE_{\kappa}$ in $(\bH;0,\infty)$ and $K_t$ is its hull. 
In~\cite{Rohde-Schramm:Basic_properties_of_SLE}, it was also shown that 
the $\SLE_\kappa$ curve exhibits 
phase transitions at $\kappa=4$ and $\kappa = 8$: almost surely,
\begin{itemize}
\item when $\kappa\in [0,4]$, the $\SLE_\kappa$ are simple curves, which only touch the boundary of the domain at their endpoints,

\item when $\kappa\in (4,8)$, the $\SLE_\kappa$ curves have self-touchings, are non-self-crossing, and
touch the boundary of the domain in a fractal set (with dimension $2-8/\kappa$~\cite{Alberts-Sheffield:Hausdorff_dim_of_SLE_intersected_with_real_line}),

\item when $\kappa\ge 8$, the $\SLE_\kappa$ curves are space-filling.
\end{itemize}

For more background on SLEs and related topics,
see, e.g., the books~\cite{Lawler:Conformally_invariant_processes_in_the_plane, Kemppainen:SLE_book}
and the original papers~\cite{Schramm:Scaling_limits_of_LERW_and_UST, Rohde-Schramm:Basic_properties_of_SLE}.

\subsection{\label{subsec: Ising} SLE in critical models -- the Ising model}

Next, we discuss how SLEs are related to scaling limits of critical statistical mechanics models.
We recall that many models are formulated on discretizations of the space, lattices, 
and the limit when the mesh of the lattice tends to zero is called the scaling limit.
For definiteness, we consider the 
Ising model, which describes a magnet with a paramagnetic (disordered)
and a ferromagnetic (ordered) phase --- see Figure~\ref{fig: Ising phase transition} for an illustration.
It was postulated in the seminal articles~\cite{Polyakov:Conformal_symmetry_of_critical_fluctuations, BPZ:Infinite_conformal_symmetry_of_critical_fluctuations_in_2D}
of A.~Belavin, A.~Polyakov, and A.~Zamolodchikov  that in the scaling limit, 
the critical planar Ising model is conformally invariant.
Indeed, this has been recently verified to a large 
extent~\cite{Chelkak-Smirnov:Universality_in_2D_Ising_and_conformal_invariance_of_fermionic_observables, 
Hongler-Smirnov:Energy_density_in_planar_Ising_model, CDHKS:Convergence_of_Ising_interfaces_to_SLE, CGN:Planar_Ising_magnetization_field1, CHI:Conformal_invariance_of_spin_correlations_in_planar_Ising_model, 
Izyurov:Smirnovs_observable_for_free_boundary_conditions_interfaces_and_crossing_probabilities, 
Izyurov:Critical_Ising_interfaces_in_multiply_connected_domains,
Benoist-Hongler:The_scaling_limit_of_critical_Ising_interfaces_is_CLE3,
BPW:On_the_uniqueness_of_global_multiple_SLEs}, 
and the Ising model can be claimed to be the best understood model from this point of view.
In the present article, we concentrate on a geometric description of conformal invariance,
phrased in terms of chordal interfaces~\cite{CDHKS:Convergence_of_Ising_interfaces_to_SLE, 
Izyurov:Smirnovs_observable_for_free_boundary_conditions_interfaces_and_crossing_probabilities, 
BPW:On_the_uniqueness_of_global_multiple_SLEs}, 
and their description in terms of certain CFT correlation functions, known as ``partition functions'' for the interfaces
(see Section~\ref{sec: Multiple SLE partition functions}).

\bigskip

In the Ising model, the magnet is described as a collection of atoms lying on a lattice, each with spin $\ominus$ or $\oplus$.
The configurations on a finite (planar) graph $\graph = (V, E)$ are random assignments
$\sigma = (\sigma_v)_{v \in V} \in \{\ominus, \oplus\}^{V}$ of spins at each vertex $v \in V$, with nearest-neigbor interaction
at inverse-temperature $\beta = \frac{1}{T} > 0$ sampled according to the Boltzmann measure 
\begin{align*}
  \mu_{\beta,\graph}(\sigma)
  := \frac{1}{Z_{\beta, \graph}} \exp\bigg(\beta \sum_{\edgeof{v}{w} \in E} \sigma_v \sigma_w \bigg), \qquad \qquad
  \textnormal{with partition function} \quad Z_{\beta, \graph} := \sum_{\sigma} \exp \bigg(\beta \sum_{\edgeof{v}{w} \in E}\sigma_v \sigma_w \bigg) .
\end{align*}
(We consider constant interaction strength 
at all edges  without an external magnetic field).

A more geometrical way to view the Ising model is its domain-wall representation. 
The spin configuration $\sigma$ 
results in a collection of contours, called domain walls, that separate the two different spin values from each other on the dual graph 
$\graph^* = (V^*, E^*)$ of $\graph$.
Conversely, each contour collection corresponds to two spin configurations, $\sigma$ and $-\sigma$ (related by a global spin-flip 
$\oplus \leftrightarrow \ominus$).
In other words, the Ising spin configurations $\sigma \in  \{\ominus, \oplus\}^{V}$ are in two-to-one correspondence with
subsets $\Gamma_{|\sigma|} \subset E^*$ of edges of the dual graph 
that consist of loops in the interior
and paths connecting some boundary points  
--- see also Figure~\ref{fig: Ising} with two colors representing the two spins. 
The Boltzmann weight of $\sigma$ can be written as
\begin{align*}
\exp\bigg(\beta \sum_{\edgeof{v}{w} \in E} \sigma_v \sigma_w \bigg)
=  \exp\bigg( \beta \; \# E + \sum_{e \in \Gamma_{|\sigma|}}(-2\beta) \bigg) 
=  \exp\big( \beta \; \# E -2\beta \; \# \Gamma_{|\sigma|} \big) ,
\end{align*}
where ``$\#$'' denotes the number of edges in $E$ or $\Gamma_{|\sigma|}$. Therefore, we have
\begin{align*}
  \mu_{\beta,\graph}(\sigma) = 
\frac{\exp\big( -2\beta \; \# \Gamma_{|\sigma|} \big)}{2 \tilde{Z}_{\beta, \graph}} , \qquad \qquad
  \textnormal{where} \quad \tilde{Z}_{\beta, \graph} = 
\sum_{\substack{\Gamma \subset E^* \\ \textnormal{in-even subgraphs}}} \exp\big( -2\beta \; \# \Gamma \big) ,
\end{align*}
and ``in-even'' subgraphs mean subsets $\Gamma$ of $E^*$ such that each vertex in $\Gamma$ 
which lies in the interior of $\graph^*$ has an even number of neighbors in $\Gamma$
(with no restriction for vertices on the boundary). 
The factor $2$ in the denominator is due to the symmetry $\oplus \leftrightarrow \ominus$.

In low temperatures, the factor $e^{-2\beta}$ is very small, so most likely are the configurations 
where there are only a few, if any, disagreeing nearest-neigbor spins;
see Figure~\ref{fig: Ising phase transition} (left). This is the ordered phase. 
On the other hand, in very high temperatures, $e^{-2\beta}$ is close to one
and all configurations seem equally likely. 
In a typical configuration, there are many small loops; see Figure~\ref{fig: Ising phase transition} (right).
This is the disordered phase. 
The existence of two phases indicates that a phase transition would occur as the temperature is 
varied. Indeed, R.~Peierls proved in 1936  the existence of a unique critical temperature $T_c$
where the phase transition occurs; see Figure~\ref{fig: Ising phase transition} (middle).
The value of $T_c$ was (non-rigorously) identified in the 1940s by H.~Kramers and G.~Wannier by a duality 
argument, and rigorously derived by C.~Yang in the 1950s.
We refer to, e.g.,~\cite{McCoy-Wu:2D_Ising_model, DCS:Conformal_invariance_of_lattice_models,
Mussardo:Statistical_field_theory, Friedli-Velenik:Statistical_mechanics_of_lattice_systems} for more details. 
(The critical temperature $T_c$ is also a critical fixed point of the renormalization group 
flow, see~\cite{Polyakov:Conformal_symmetry_of_critical_fluctuations,
BPZ:Infinite_conformal_symmetry_of_critical_fluctuations_in_2D, 
Cardy:Scaling_and_renormalization_in_statistical_physics}.)

So far, we had no restrictions for the spins on the boundary of $\graph$ or $\graph^*$ --- the model had free boundary conditions.
In general, one can impose various boundary conditions for the Ising model, such as free, wired ($\oplus$ or $\ominus$), 
or different on different segments of the boundary. 
For instance, in wired $\oplus$ boundary conditions, the spins at all boundary vertices are set to equal $\oplus$.
In this case, the domain-wall representation 
is particularly simple: all domain walls are collections of loops, and we have
\begin{align*}
  \mu_{\beta,\graph}^\oplus(\sigma) = 
\frac{\exp\big( -2\beta \; \# \Gamma_{|\sigma|} \big)}{\tilde{Z}_{\beta, \graph}^\oplus} , \qquad \qquad
  \textnormal{where} \quad \tilde{Z}_{\beta, \graph}^\oplus = 
\sum_{\substack{\Gamma \subset E^* \\ \textnormal{even subgraphs}}} \exp\big( -2\beta \; \# \Gamma \big) ,
\end{align*}
and even subgraphs mean subsets $\Gamma$ of $E^*$ whose every vertex 
has an even number of neighbors in $\Gamma$ (so $\Gamma$ consists of loops).

Of particular interest to us are the Dobrushin boundary conditions (domain-wall boundary conditions), where we choose
$\oplus$ along a given boundary arc $(x \; y)$ and $\ominus$ along the complementary boundary arc $(y \; x)$; see Figure~\ref{fig: Ising} (left). 
Then,  the domain walls consist of collections of loops together with
one chordal path (interface) connecting $x$ and $y$. Therefore, we have
\begin{align*}
  \mu_{\beta,\graph}^{\textrm{Dob}}(\sigma) = 
\frac{\exp\big( -2\beta \; \# \Gamma_{|\sigma|} \big)}{\tilde{Z}_{\beta, \graph}^{\textrm{Dob}}} , \qquad \qquad
  \textnormal{where} \quad \tilde{Z}_{\beta, \graph}^{\textrm{Dob}} = 
\sum_{\substack{\Gamma = \gamma \cup L , \\ L \subset E^* \textnormal{ even subgraph,} \\
\gamma \textnormal{ path $x \, \leftrightarrow \,y$}}}  \exp\big( -2\beta \; \# \Gamma \big) .
\end{align*}
By the celebrated results of D.~Chelkak, S.~Smirnov, 
et.~al.~\cite{Smirnov:Towards_conformal_invariance_of_2D_lattice_models, Smirnov:Conformal_invariance_in_random_cluster_models1,
Chelkak-Smirnov:Universality_in_2D_Ising_and_conformal_invariance_of_fermionic_observables, 
CDHKS:Convergence_of_Ising_interfaces_to_SLE}, 
at the critical temperature $T = T_c$,
the random interface $\gamma$ converges in the scaling limit 
weakly to the chordal $\SLEk$ process with $\kappa = 3$
(for suitable approximations, see~\cite{CDHKS:Convergence_of_Ising_interfaces_to_SLE} 
and Section~\ref{subsec:scaling_limimt_results_etc} for more details).
More generally, under alternating boundary conditions, 
$\oplus$ along given boundary arcs and  $\ominus$ along the complementary boundary arcs (see Figure~\ref{fig: Ising} (right)),
several macroscopic interfaces occur, and they converge in the scaling limit (at criticality) to multiple $\SLE_3$ 
processes~\cite{Izyurov:Smirnovs_observable_for_free_boundary_conditions_interfaces_and_crossing_probabilities,
BPW:On_the_uniqueness_of_global_multiple_SLEs}.
It has also been proven recently that the interior domain walls 
converge in the scaling limit (at criticality) to the so-called conformal loop ensemble
$\CLE_3$~\cite{Benoist-Hongler:The_scaling_limit_of_critical_Ising_interfaces_is_CLE3},
and critical interfaces with other variants of $\oplus$/$\ominus$/free boundary conditions
to variants of the $\SLE_3$~\cite{Hongler-Kytola:Ising_interfaces_and_free_boundary_conditions,
Izyurov:Smirnovs_observable_for_free_boundary_conditions_interfaces_and_crossing_probabilities}.

\begin{figure}
\begin{minipage}[b]{0.4\textwidth}
\centering
\begin{overpic}[width=.8\textwidth]{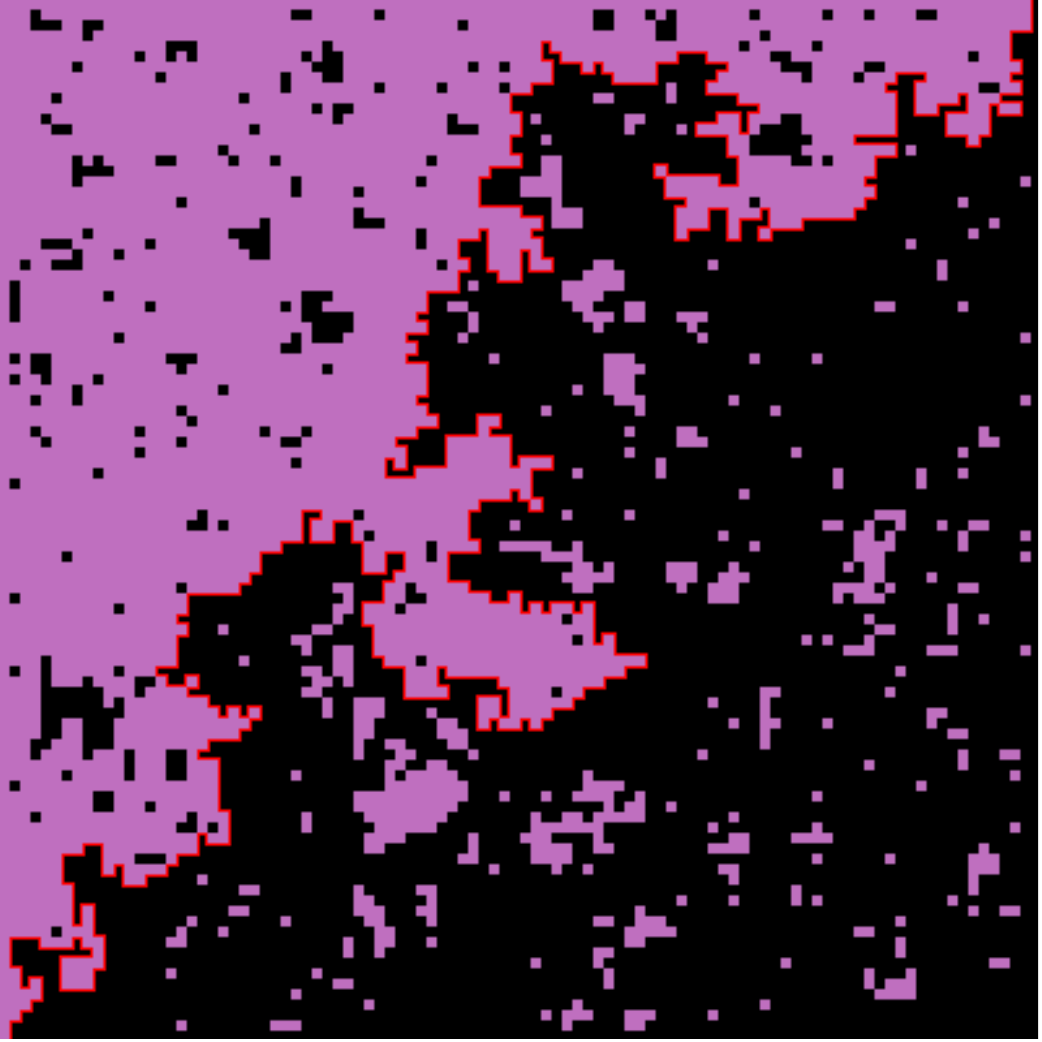}
 \put (-7,-2) {\Large$x$}
 \put (103,99) {\Large$y$}
\end{overpic}
\end{minipage}
\qquad \qquad
\begin{minipage}[b]{0.4\textwidth}
\centering
\begin{overpic}[width=.8\textwidth]{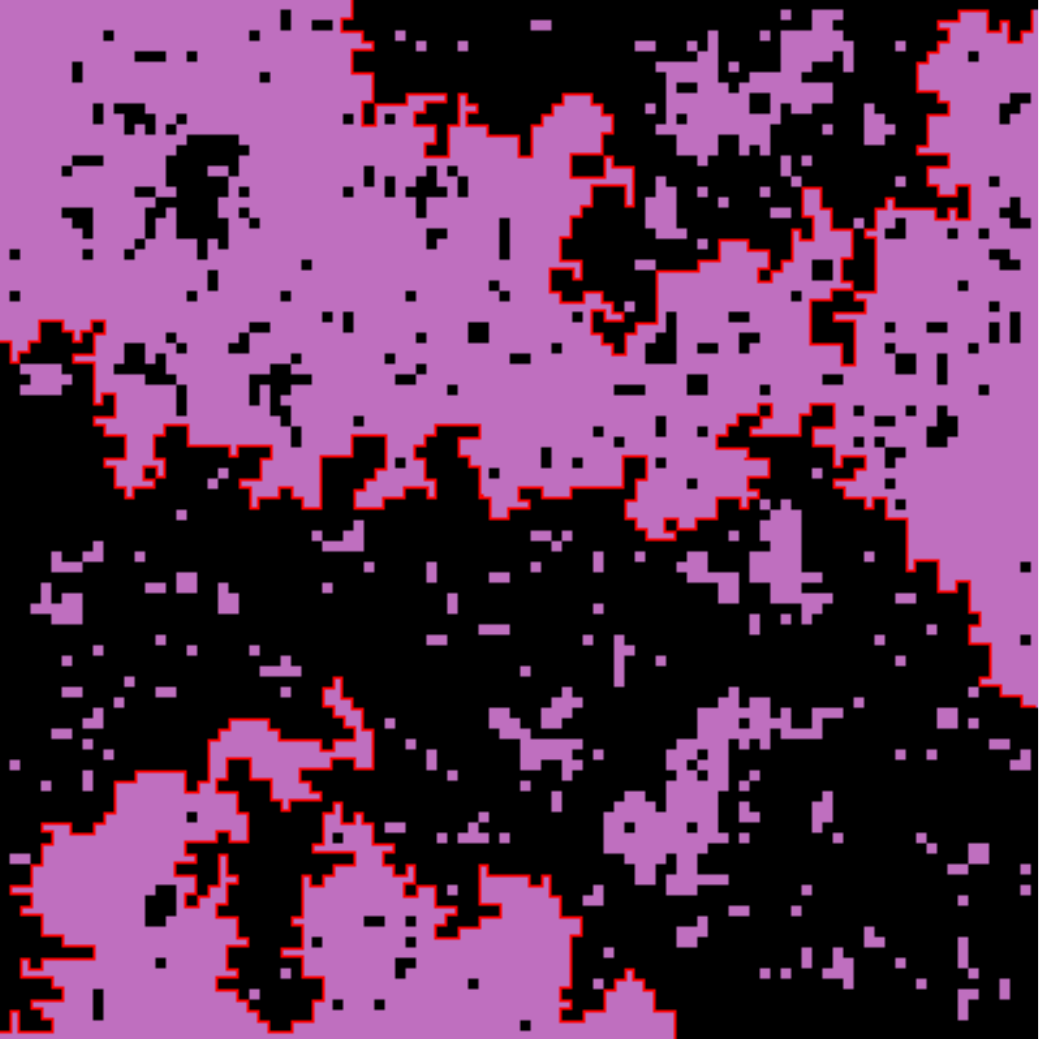}
 \put (-10,-2) {\Large$x_1$}
 \put (-10,65) {\Large$x_6$}
 \put (103,30) {\Large$x_3$}
 \put (63,-7) {\Large$x_2$}
 \put (103,99) {\Large$x_4$}
 \put (30,104) {\Large$x_5$}
\end{overpic}
\end{minipage}
\caption{\label{fig: Ising}
Critical Ising model configurations on a square 
lattice with Dobrushin (left) and alternating (right) boundary conditions. 
The points $x$ and $y$ (resp. $x_1, \ldots, x_6$) should be understood, e.g., 
as midpoints of edges connecting two boundary vertices where the boundary conditions change.
In the figure (and in Figure~\ref{fig: Ising phase transition}), the two colors represent the two spins $\oplus$ and $\ominus$.
}
\end{figure}

\subsection{\label{subsec:CFT}Conformal field theory (CFT)}

Next, we briefly describe some aspects of 2D conformal field theory (CFT). 
There are many textbooks on CFT from different viewpoints, 
see, e.g,~\cite{DMS:CFT, Schottenloher:Mathematical_introduction_to_CFT, Mussardo:Statistical_field_theory}. 
Here, we aim to only give some rough ideas,
in order to motivate the connection of SLEs with CFT and 
to illustrate how it could be understood. 
We emphasize that in CFT, the \emph{fields} themselves might not be analytically well-defined  objects,
but nevertheless, their \emph{correlation functions} are well-defined  functions of several complex variables.
Moreover, some correlation functions have been rigorously related to lattice model 
correlations (see, e.g.,~\cite{Hongler-Smirnov:Energy_density_in_planar_Ising_model,
CHI:Conformal_invariance_of_spin_correlations_in_planar_Ising_model, CHI:inprep} for the Ising model)
and SLE curves (see, e.g.,~\cite{Kytola-Peltola:Pure_partition_functions_of_multiple_SLEs,
Peltola-Wu:Crossing_probabilities_of_multiple_Ising_interfaces,
KKP:Conformal_blocks_pure_partition_functions_and_KW_binary_relation}, and Section~\ref{sec: Multiple SLE partition functions}).

\bigskip

Scaling limits of critical lattice models are expected to enjoy conformal invariance.
The conformal maps on the extended complex plane $\hat{\bC} := \bC \cup \{\infty\}$ form a group of finite dimension, 
the M\"obius group $\mathrm{PSL}(2,\bC)$, 
acting as M\"obius transformations
$\Mob(z) = \frac{az+b}{cz+d}$ with $a,b,c,d \in \bC$ and $ad-bc = 1$.
In particular, global conformal invariance only results in finitely many (three)
constraints for the physical system. However, A.~Belavin, A.~Polyakov, and A.~Zamolodchikov
observed in the 1980s that, in two dimensions, imposing \emph{local} conformal invariance 
yields infinitely many independent symmetries
\cite{BPZ:Infinite_conformal_symmetry_in_2D_QFT,
BPZ:Infinite_conformal_symmetry_of_critical_fluctuations_in_2D}.
On $\hat{\bC}$, the local conformal transformations 
are just the locally invertible holomorphic and anti-holomorphic maps ---
see, e.g.,~\cite[Chapters~\red{1},\red{2},\red{5}]{Schottenloher:Mathematical_introduction_to_CFT} for details.
In CFT \`a la Belavin, Polyakov \& Zamolodchikov, one 
regards the local conformal invariance as  invariance
under infinitesimal transformations (or vector fields which generate the local 
conformal mappings): for instance, the infinitesimal holomorphic transformations are written as Laurent 
series, $z \mapsto z + \sum_{n \in \bZ} a_n z^n$,
which can be seen to be generated by the vector fields $\ell_n := - z^{n+1} \pder{z}$, for $n \in \bZ$,
constituting a Lie algebra isomorphic to the Witt algebra
with commutation relations $[\ell_n,\ell_m] = (n-m)\ell_{n+m}$. 
(In this section, we will not take into account the anti-holomorphic sector, 
see~\cite{DMS:CFT, Schottenloher:Mathematical_introduction_to_CFT, Mussardo:Statistical_field_theory}.)

In quantized systems, the symmetry groups and algebras often are central 
extensions of their classical counterparts.
In particular, in conformally invariant quantum field theory (i.e., CFT),
the conformal symmetry algebra is the unique central extension of the Witt algebra by the one-dimensional abelian Lie algebra $\bC$, 
namely the Virasoro algebra $\Vir$.
The central part represents a ``conformal anomaly'', giving rise to a projective representation of the Witt algebra
--- see, e.g.,~\cite[Chapters~\red{3},\red{4},\red{5}]{Schottenloher:Mathematical_introduction_to_CFT} 
for the algebraic side 
and~\cite{Cardy:Scaling_and_renormalization_in_statistical_physics, DMS:CFT} 
for a geometric interpretation of the conformal anomaly.
Precisely, $\Vir$ is the infinite-dimensional Lie algebra generated by $\mathrm{L}_n$, for $n \in \bZ$, 
together with a central element $\mathrm{C}$, with commutation relations 
\begin{align*}
[\mathrm{L}_n,\mathrm{C}] = 0 \qquad \quad \textnormal{and} \quad \qquad 
[\mathrm{L}_n,\mathrm{L}_m] = (n-m) \mathrm{L}_{n+m} 
+ \frac{1}{12} n(n^2-1) \delta_{n,-m} \mathrm{C} , \qquad \textnormal{for } n,m \in \bZ.
\end{align*}

Algebraically, the basic objects in a CFT, the conformal fields, can be regarded as elements in
representations of the symmetry algebra $\Vir$, where the central element acts as 
a constant multiple of the identity, $\mathrm{C} = c \; \id$. The number $c \in \bC$ is 
called the central charge of the CFT. For relation to SLEs and statistical physics,
real central charges $c \leq 1$ are relevant (using the parameterization 
$c(\kappa) = \frac{(3\kappa-8)(6-\kappa)}{2\kappa}$, this corresponds to $\kappa > 0$).
We briefly review some representation theory of $\Vir$ in Appendix~\ref{app:Vir}.

There are many attempts to understand conformal fields analytically
--- e.g., as operator-valued distributions~\cite{Schottenloher:Mathematical_introduction_to_CFT},
vertex operators~\cite{Huang:2D_Conformal_geometry_and_VOAs},
or formal objects in a bosonic Fock space~\cite{Kang-Makarov:Gaussian_free_field_and_conformal_field_theory}.
In the present article, we focus on correlation functions.
They are analytic (multi-valued) functions $F \colon \extendedChamber_{n} \to \bC$ 
(also called $n$-point functions)
defined on the configuration space 
\begin{align} \label{eq: chamberComplex}
\extendedChamber_n :=\; & 
 \{ (z_{1},\ldots,z_n) \in \bC^{n} \; | \; z_i \neq z_j \textnormal{ if } i \neq j \} .
\end{align}
Physicists speak of correlation functions as ``vacuum expectation values'' of  
fields $\Phi_{\index_i}(z_i)$ and denote them by
\begin{align} \label{eq: corr fction notation} 
F_{\index_1, \ldots, \index_n}(z_1,\ldots,z_n) = \big\langle \Phi_{\index_1}(z_1) \cdots \Phi_{\index_n}(z_n) \big\rangle .
\end{align}
Because of the conformal symmetry, the correlation functions are assumed to be covariant under (global)
conformal transformations. In a CFT on the full $\hat{\bC}$, this means that 
under all M\"obius transformations $\Mob \in \mathrm{PSL}(2,\bC)$, we have
\begin{align} \label{eq: correlation function Mobius covariance}
F_{\index_1, \ldots, \index_n}(z_1,\ldots,z_n) = 
\prod_{i=1}^n | \Mob'(z_i) |^{\Delta_{\index_i}}  \times  F_{\index_1, \ldots, \index_n} (\Mob(z_1),\ldots,\Mob(z_n)),
\end{align}
with some conformal weights $\Delta_{\index_i} \in \bR$ associated to the fields $\Phi_{\index_i}$.
Of specific interest to us is CFT in the domain $\bH$ with boundary $\partial \bH = \bR$, 
where the global conformal transformations are also M\"obius maps, $\Mob \in \mathrm{PSL}(2,\bR)$.
For example, the multiple $\SLEk$ partition functions discussed in Section~\ref{sec: Multiple SLE partition functions}
satisfy covariance property~\eqref{eq: correlation function Mobius covariance},
where $\Delta_{\index_i} = h_{1,2} = \frac{6-\kappa}{2\kappa}$, for all $1 \leq i \leq  n$;
see~\eqref{eq: multiple SLE Mobius covariance}.

\bigskip

In this article, we are concerned with so-called primary fields. 
They are fields whose correlation functions also have
a covariance property under local conformal transformations, 
in an infinitesimal sense, see~\cite[Chapter~\red{9}]{Schottenloher:Mathematical_introduction_to_CFT}. 
Other fields in the CFT are called descendant fields,  obtained from the primary fields by action of the Virasoro algebra.
A primary field $\Phi(z)$ of conformal weight $\Delta$
generates a highest-weight module $\primaryRep_{c,\Delta}$ of the Virasoro algebra of 
weight $\Delta$ and central charge $c$ (see Appendix~\ref{app:Vir}). 
In physics, it is called the ``conformal family'' of $\Phi(z)$, 
consisting of linear combinations of the ``descendant fields'' of $\Phi(z)$.
In general, the descendants have the form $\mathrm{L}_{-n_1} \cdots \mathrm{L}_{-n_k} \Phi(z)$,
where $n_1 \geq \cdots \geq n_k > 0$ and $k \geq 1$. Their correlation functions 
are formally determined from the correlation functions of $\Phi(z)$ using linear differential 
operators which arise from the generators of the Virasoro algebra 
(see, e.g.,~\cite[Chapter~\red{10}]{Mussardo:Statistical_field_theory}): 
for any primary fields $\{ \Phi_{\index_i}(z_i) \; | \; 1 \leq i \leq n\}$, 
we have 
\begin{align}
& \; \big\langle \Phi_{\index_1}(z_1) \cdots \Phi_{\index_n}(z_n) 
\; \mathrm{L}_{-k} \Phi(z) \big\rangle \; 
= \; \mathcal{L}_{-k}^{(z)} \;
\big\langle \Phi_{\index_1}(z_1) \cdots \Phi_{\index_n}(z_n) \Phi(z) \big\rangle,
\qquad \textnormal{where} \nonumber \\
& \; \mathcal{L}_{-k}^{(z)} \; := \; \sum_{i=1}^n 
\left( \frac{(k-1) \Delta_{\index_i}}{(z_i - z)^k}
\; - \; \frac{1}{(z_i - z)^{k-1}} \pder{z_i} \right) , \qquad \textnormal{for } k \in \bZpos . 
\label{eq: Virasoro partial differential operator}
\end{align}

Now, consider the $\Vir$-module $\primaryRep_{c,\Delta}$ generated by the primary 
field $\Phi(z)$. It is necessarily a quotient of a Verma module,
$\primaryRep_{c,\Delta} \cong \mathrm{M}_{c,\Delta}/\mathrm{J}$, by some submodule $\mathrm{J}$ (see Appendix~\ref{app:Vir}). 
Suppose that the conformal weight $\Delta = h_{r,s}$ belongs to the special class~\eqref{eq: Kac weights GEN} 
discussed in Appendix~\ref{app:Vir}, and denote $\Phi = \Phi_{r,s}$ accordingly.
Then, by Theorem~\ref{thm:FF}, the Verma module $\mathrm{M}_{c,h_{r,s}}$ contains a singular vector 
$v = P(\mathrm{L}_{-1},\mathrm{L}_{-2},\ldots) v_{c,h_{r,s}}$
at level $rs$, where $P$ is a polynomial in the generators of the Virasoro algebra.
If this vector is contained in $\mathrm{J}$
(which is the case, e.g., when $\primaryRep_{c,h_{r,s}}$ is irreducible), then the descendant field 
$P(\mathrm{L}_{-1},\mathrm{L}_{-2},\ldots) \Phi_{r,s}(z)$ in $\primaryRep_{c,h_{r,s}}$
corresponding to the singular vector $v$ is zero, a null field.  
In this case, we say that $\Phi_{r,s}(z)$ has a degeneracy at level $rs$.
In particular, correlation functions containing the field $\Phi_{r,s}(z)$ then  satisfy partial differential equations
(known as null-field equations)
given by the polynomial $P(\mathcal{L}_{-1}^{(z)},\mathcal{L}_{-2}^{(z)},\ldots)$ and the differential operators~\eqref{eq: Virasoro partial differential operator},
\begin{align*}
0 \; = \; \big\langle \Phi_{\index_1}(z_1)\cdots \Phi_{\index_n}(z_n) P(\mathrm{L}_{-1},\mathrm{L}_{-2},\ldots) \Phi_{r,s}(z) \big\rangle 
\; = \; P(\mathcal{L}_{-1}^{(z)},\mathcal{L}_{-2}^{(z)},\ldots) \;
\big\langle \Phi_{\index_1}(z_1)\cdots \Phi_{\index_n}(z_n) \Phi_{r,s}(z) \big\rangle .
\end{align*}
In other words, for the correlation function~\eqref{eq: corr fction notation} with $\Phi_ \index(z) =  \Phi_{r,s}(z)$,
we have the following (perfectly well-defined) PDE:
\begin{align} \label{eq: PDE for correlation functions}
F_{\index_1, \ldots, \index_n, \index}  \colon \extendedChamber_{n+1} \to \bC ,
\qquad\qquad
P(\mathcal{L}_{-1}^{(z)},\mathcal{L}_{-2}^{(z)},\ldots) \; F_{\index_1, \ldots, \index_n, \index} (z_1,\ldots,z_n, z) = 0 .
\end{align}

An example of such a PDE is the second order equation~\eqref{eq: singular equation level two} generated by the singular 
vector~\eqref{eq: singular vector level two} at level two, associated to the primary field 
$\Phi_{1,2}(z)$ of conformal weight $h_{1,2}$ (or $\Phi_{2,1}(z)$, $h_{2,1}$, see Appendix~\ref{app:Vir}). 
Combining with translation invariance, this PDE gives rise to a PDE in the system of equations~\eqref{eq: multiple SLE PDEs} 
for the multiple $\SLE_\kappa$ partition functions,  
discussed in Section~\ref{sec: Multiple SLE partition functions}.

\begin{remark}
Primary fields with degeneracy at level two can be associated, 
e.g., with the spin 
and the energy density in the scaling limit of the critical Ising 
model~\cite{BPZ:Infinite_conformal_symmetry_in_2D_QFT, 
BPZ:Infinite_conformal_symmetry_of_critical_fluctuations_in_2D}
(with $c=1/2$, $h_{2,1} = 1/16$, $h_{1,2} = 1/2$, and $\kappa = 3$). 
Furthermore, it was argued in~\cite{Cardy:Effect_of_boundary_conditions_on_the_operator_content_of_two-dimensional_conformally_invariant_theories,
Cardy:Boundary_conditions_fusion_rules_and_Verlinde_formula, 
Burkhardt-Xue:Conformal_invariance_and_critical_systems_with_mixed_boundary_conditions,
Burkhardt-Guim:Conformal_theory_of_2D_Ising_model_with_homogeneous_boundary_conditions_and_with_disordered_boundary_fields}
that the field $\Phi_{1,2}(x)$ implements a boundary condition change from $\oplus$ to $\ominus$ at the boundary point $x$, 
see also Figure~\ref{fig: Ising}. 
Thus, $\Phi_{1,2}$ could be thought of as an ``interface generating field'' for the spin Ising model.

One could also modify the boundary conditions of a critical lattice model
by inserting other types  of boundary condition changes at given boundary points.
For instance, fields of type $\Phi_{1,s}(x)$ or $\Phi_{r,1}(x)$ with higher level degeneracies
could perhaps generate arm events on the 
boundary~\cite{Duplantier-Saleur:Exact_surface_and_wedge_exponents_for_polymers_in_two_dimensions,
Bauer-Saleur:On_some_relations_between_local_height_probabilities_and_conformal_invariance}. 
Correlation functions of these fields satisfy PDEs of higher order, that we will discuss in Section~\ref{sec:OPE}.
One can construct solutions to these PDEs from limits of solutions of the second order PDEs~\eqref{eq: multiple SLE PDEs}. 
\end{remark}

\subsection{\label{subsec:mgles}Martingale observables for interfaces}

In this section, we describe heuristically how certain martingales 
associated to critical interfaces can be related to correlation functions of the CFT fields $\Phi_{1,2}$
appearing in Section~\ref{subsec:CFT}. Our presentation is not intended to be rigorous, but we rather wish to
give the intuitive idea of why such a connection might exist. 
Even though the nature of the objects ``$\Phi_{1,2}$'' is unclear,
their correlation functions~\eqref{eq: corr fction notation} can be 
well understood and studied, e.g., as 
multiple SLE partition functions (discussed in Section~\ref{sec: Multiple SLE partition functions}).

\bigskip

Consider the 
Ising model with some boundary conditions (b.c.).
The expected value of a random variable $\mathcal{O}$ (``observable''), such as a product 
$\sigma_{v_1} \cdots \sigma_{v_n}$ of spins at given vertices $v_1, \ldots, v_n \in V$,
or the energy $\varepsilon_{\edgeof{v}{w}} = \sigma_v \sigma_w$
at an edge $\edgeof{v}{w} \in E$, is 
\begin{align*}
\mathbb{E}_{\beta, \graph}^{\textrm{b.c.}}[\mathcal{O}] := \frac{1}{Z_{\beta, \graph}^{\textrm{b.c.}}} 
\sum_{\sigma} 
\mathcal{O}(\sigma) \; \exp \bigg(\beta \sum_{\edgeof{v}{w} \in E}\sigma_v \sigma_w \bigg) .
\end{align*}
Conjecturally, the expectation of the discrete observable $\mathcal{O}$ 
should converge in the scaling limit to a correlation function of some ``continuum observable'' (or quantum field) $\Phi$.
In particular, for the planar Ising model at its critical temperature $T_c = \frac{1}{\beta_c}$,
the object ``$\Phi$'' should be a conformally invariant field in a CFT.
Thus, if $\graph = \graph^\delta \subset \delta \bZ^2$ approximate some planar (simply connected) domain  
$\Omega\subset \bC$ as $\delta \searrow 0$ (e.g., in the Carath\'eodory topology), we expect the following convergence to take place
(of course, a lot of work has to be done in order to make such a statement mathematically precise ---
for the critical Ising model, this can actually be established to a large extent, see~\cite{Chelkak-Smirnov:Universality_in_2D_Ising_and_conformal_invariance_of_fermionic_observables, 
Hongler-Smirnov:Energy_density_in_planar_Ising_model,
CGN:Planar_Ising_magnetization_field1,
CHI:Conformal_invariance_of_spin_correlations_in_planar_Ising_model, 
CHI:inprep}):
\begin{align*}
\delta^{-D} \,
\mathbb{E}_{\beta_c, \graph^\delta}^{\textrm{b.c.}}[\mathcal{O}^\delta] 
\qquad
\overset{\delta \to 0}{\longrightarrow} 
\qquad
\frac{\big\langle \Phi \big\rangle_{\Omega}^{\textrm{b.c.}}}{\big\langle \one \big\rangle_{\Omega}^{\textrm{b.c.}}} 
= \frac{\big\langle \Phi \; \Psi^{\textrm{b.c.}} \big\rangle_{\Omega}}{\big\langle \Psi^{\textrm{b.c.}} \big\rangle_{\Omega}}  ,
\end{align*}
where $D \in \bR$ is the scaling dimension of $\Phi$, 
and $\Psi^{\textrm{b.c.}}$ is a ``field''  implementing the 
boundary conditions on $\partial \Omega$. 
(In general, the scaling dimension $D = \Delta + \tilde{\Delta}$ 
is the sum of the conformal ($\Delta$) and anti-conformal ($\tilde{\Delta}$) weights of 
$\Phi$~\cite{DMS:CFT, Schottenloher:Mathematical_introduction_to_CFT, Mussardo:Statistical_field_theory}.)

For instance, 
according to  predictions in the physics 
literature~\cite{Cardy:Boundary_conditions_fusion_rules_and_Verlinde_formula, 
Cardy:Effect_of_boundary_conditions_on_the_operator_content_of_two-dimensional_conformally_invariant_theories,
Burkhardt-Xue:Conformal_invariance_and_critical_systems_with_mixed_boundary_conditions,
Burkhardt-Guim:Conformal_theory_of_2D_Ising_model_with_homogeneous_boundary_conditions_and_with_disordered_boundary_fields},
when imposing Dobrushin boundary conditions
$\oplus$ on the boundary arc $(x^\delta \; y^\delta)$ and $\ominus$ on the complementary arc $(y^\delta \; x^\delta)$,
as in Figure~\ref{fig: Ising} (left), 
we expect that
\begin{align*}
\delta^{-D} \,
\mathbb{E}_{\beta_c, \graph^\delta}^{\textrm{Dob}}[\mathcal{O}^\delta] 
\qquad
\overset{\delta \to 0}{\longrightarrow} 
\qquad
\frac{\big\langle \Phi \big\rangle_{\Omega}^{\textrm{Dob}}}{\big\langle \one \big\rangle_{\Omega}^{\textrm{Dob}}} 
= \frac{\big\langle \Phi \; \Psi^{\textrm{Dob}} \big\rangle_{\Omega}}{\big\langle \Psi^{\textrm{Dob}} \big\rangle_{\Omega}}  ,
\end{align*}
where the boundary condition changing operator has the form
$\Psi^{\textrm{Dob}}(x,y) = \Phi_{1,2}(x) \Phi_{1,2}(y)$,
with $x = \smash{\underset{\delta\to0}{\lim} \, x^\delta}$ and $y = \smash{\underset{\delta\to0}{\lim} \, y^\delta}$. 
In general, for alternating boundary conditions with $2N$ marked boundary points 
$x_1^{\delta}, \ldots x_{2N}^{\delta}$ converging to $x_1, \ldots, x_{2N}$,
\begin{align} \label{eq::alternating}
\oplus \textnormal{ on }(x_{2j-1}^{\delta} \, x_{2j}^{\delta}), \qquad \textnormal{for } 1 \leq j \leq N ,
\qquad\qquad \textnormal{and} \qquad \qquad
\ominus \textnormal{ on }(x_{2j}^{\delta} \,  x_{2j+1}^{\delta}),\qquad \textnormal{for }  0 \leq j \leq N ,
\end{align}
as in Figure~\ref{fig: Ising} (right), the boundary condition changing operator should 
have the form~\cite{Burkhardt-Guim:Conformal_theory_of_2D_Ising_model_with_homogeneous_boundary_conditions_and_with_disordered_boundary_fields}
\begin{align} \label{eq: Ising pf general domain}
\Psi^{\textrm{alt}}(x_1,x_2,\ldots,x_{2N}) = \Phi_{1,2}(x_1) \Phi_{1,2}(x_2) \cdots  \Phi_{1,2}(x_{2N}) ,
\qquad 
\big\langle \Psi^{\textrm{alt}}(x_1,\ldots,x_{2N}) \big\rangle_{\Omega}
= \mathrm{pf} \Big( 
\big\langle \Phi_{1,2}(x_i) \Phi_{1,2}(x_j) \big\rangle_{\Omega} \Big)_{i,j=1}^{2N} ,
\end{align}
where $\mathrm{pf}(\cdot)$ is the Pfaffian of the $(2N \times 2N)$-matrix of two-point functions with zeros on the diagonal.
We remark that the Pfaffian structure on the right side is specific for the spin-Ising model 
(with $\kappa=3$, $h_{1,2} = 1/2$, and $c=1/2$), 
whereas the normalization factors $\big\langle \Psi^{\textrm{Dob}} \big\rangle_{\Omega} = \big\langle \Phi_{1,2}(x) \Phi_{1,2}(y)\big\rangle_{\Omega}$
and
$\big\langle \Psi^{\textrm{alt}}\big\rangle_{\Omega} = \big\langle \Phi_{1,2}(x_1) \cdots  \Phi_{1,2}(x_{2N})\big\rangle_{\Omega}$
could also be defined for other models for which alternating boundary conditions can be made sense of
(see also Section~\ref{sec: Multiple SLE partition functions} for general classification and relation to the $\SLE_\kappa$).

\bigskip

Consider now the planar Ising model on $\graph$ 
with Dobrushin boundary conditions as in Figure~\ref{fig: Ising} (left). 
We define a (discrete time) exploration process $(\gamma(t))_{t \geq 0}$ 
by following the chordal interface on the dual graph starting from $x = \gamma(0)$ 
in such a way that immediately to the left (resp.~right) of $\gamma$ we have spins $\ominus$ (resp.~$\oplus$),
and in case of ambiguity, we always turn left.
(In what follows, we will abuse notation for the time $t \geq 0$, discrete for the lattice exploration process, continuous for the $\SLE$ process).

The exploration process naturally has the following  domain Markov property: if we have observed $\gamma{[0,t]}$ up to a time $t$
(i.e., after a certain number of steps), then the remaining part of $\gamma$ is distributed as the exploration process
for the Ising model on the graph  $\graph \setminus \gamma{[0,t]}$ with Dobrushin boundary conditions
$\oplus$ on the boundary arc $(\gamma(t) \;  y)$ and $\ominus$ on the complementary arc $(y \; \gamma(t))$. 
We recall from Definition~\ref{defn:SLE} that such a Markovian property is manifest also for the growth of the chordal SLE process.

Using the exploration process, we can define its natural filtration $(\mathcal{F}_t)_{t \geq 0}$ and consider martingale observables.
The conditional expectation of an observable $\mathcal{O}$ given $\mathcal{F}_t$
is trivially a local martingale, and thanks to the domain Markov property, 
we can rewrite the conditional expectation as the usual expectation on the graph $\graph\setminus \gamma{[0,t]}$:
\begin{align*}
\mathbb{E}_{\beta, \graph}^{\textrm{Dob}}[\mathcal{O} \; | \; \mathcal{F}_t]
= \mathbb{E}_{\beta, \graph\setminus \gamma{[0,t]}}^{\textrm{Dob}}[\mathcal{O}] .
\end{align*}
Again, we expect that at criticality, this quantity converges in the scaling limit to a ratio of CFT correlation functions:
\begin{align} \label{eq:mgle obs}
\delta^{-D} \, \mathbb{E}_{\beta_c, \graph^\delta}^{\textrm{Dob}}[\mathcal{O}^\delta \; | \; \mathcal{F}_t]
= \delta^{-D} \,\mathbb{E}_{\beta_c, \graph^\delta \setminus \gamma^\delta{[0,t]}}^{\textrm{Dob}}[\mathcal{O}^\delta]
\qquad
\overset{\delta \to 0}{\longrightarrow} 
\qquad
\frac{\big\langle \Phi \; \Psi^{\textrm{Dob}} \big\rangle_{\Omega_t}}{\big\langle \Psi^{\textrm{Dob}} \big\rangle_{\Omega_t}} ,
\end{align}
where the domain $\Omega_t \subset \bC$ is approximated by $\graph^\delta \setminus \gamma^\delta{[0,t]}$ as $\delta \searrow 0$.
Of course, the domain $\Omega_t = \Omega \setminus \gamma{[0,t]}$ should be
given by the complement of the scaling limit curve $\gamma$
of the discrete exploration interface $\gamma^\delta$, 
namely, the chordal $\SLE_3$ curve~\cite{CDHKS:Convergence_of_Ising_interfaces_to_SLE}.
In particular, the limiting expression on the right side of~\eqref{eq:mgle obs} should be a local martingale for 
the chordal $\SLE_3$ curve $\gamma$.

To see what the martingale property gives us, suppose that our 
observable depends on some variables $z_1, \ldots, z_n \in \cl\Omega$ and
$\Phi$ has the form of a product of some CFT primary fields,
$\Phi(z_1, \ldots, z_n) = \Phi_{\index_1}(z_1) \cdots \Phi_{\index_n}(z_n)$,
with conformal weights $\Delta_1, \ldots, \Delta_n \in \bR$.
For example, $\Phi$ could be a product of spins (with $\Phi_{\index_i}(z_i) = \sigma_{z_i}$ and $\Delta_{\index_i} = 1/16$, for all $i$).
Write also the boundary condition changing operator as $\Psi^{\textrm{Dob}}(x,y) = \Phi^{\ominus \oplus}(x) \Phi^{\oplus \ominus}(y)$, 
a product of some primary fields of some weights $\Delta^{\ominus \oplus}$ and $\Delta^{\oplus \ominus}$,
where $x$ and $y$ are (the scaling limits of) the boundary points where $\oplus$ changes to $\ominus$.
Then, using  conformal covariance postulate~\eqref{eq: correlation function Mobius covariance} for CFT correlation functions,
we can write the local martingale~\eqref{eq:mgle obs} in the form
\begin{align*}
M_{\Omega_t} (\gamma(t),y; z_1, \ldots,z_n) := \; &
\frac{\big\langle \Phi(z_1, \ldots, z_n) \; \Phi^{\ominus \oplus}(\gamma(t)) \Phi^{\oplus \ominus}(y)
\big\rangle_{\Omega_t}}{\big\langle \Phi^{\ominus \oplus}(\gamma(t)) \Phi^{\oplus \ominus}(y)
\big\rangle_{\Omega_t}} 
=  \prod_{i=1}^n | \Mob'(z_i) |^{\Delta_{\index_i}}  \times
M_{\Omega} (\Mob(\gamma(t)),\Mob(y); \Mob(z_1), \ldots, \Mob(z_n) )
,
\end{align*}
where $\Mob \colon \Omega_t \to \Omega$ is a conformal map (and we assume that it extends to the boundary of $\Omega_t$).
In particular, taking $\Omega = \bH$ to be the upper half-plane, $x = 0$, $y = \infty$, 
and $\Mob = g_t \colon \mathsf{H}_t \to \bH$ 
the solution to the Loewner equation~\eqref{eq: Loewner equation}
for the $\SLE_\kappa$ curve $\gamma$ 
with driving function $W_t = \sqrt{\kappa} B_t$
(for the Ising model, $\kappa = 3$, but let us keep it symbolic here), 
and dropping $g_t(y) = y = \infty$, we have
\begin{align} \label{eq:mgle obs ex}
M_{\mathsf{H}_t} (\gamma(t); z_1, \ldots,z_n) =
\prod_{i=1}^n g_t'(z_i)^{\Delta_{\index_i}}  \times
M_{\bH}(W_t; g_t(z_1), \ldots, g_t(z_n) ),
\end{align}
where $W_t = g_t(\gamma(t))$. 
Now, it is straightforward to formally calculate the It\^o differential of 
the local martingale~\eqref{eq:mgle obs ex} using
It\^o's formula, the observation $g_t'(z) > 0$, and the relations 
\begin{align*}
 \ud g_t(z) = \frac{2}{g_t(z) - W_t} \; \ud t 
\qquad \qquad \textnormal{and} \qquad \qquad
\ud g_t'(z) = -  \frac{2g_t'(z)}{(g_t(z) - W_t)^2} \; \ud t ,
\end{align*}
which follow from the Loewner equation~\eqref{eq: Loewner equation}.
(We cautiously note that it is not clear that $M$ is smooth enough to apply It\^o's formula.)
By the martingale property, the drift term in the result should equal zero, which gives the following second order PDE:
\begin{align} \label{eqn:CFTmglePDE}
\left[ \frac{\kappa}{2}\pdder{x} +
\sum_{i=1}^n \left( \frac{2}{z_i-x} \pder{z_i} -  \frac{2\Delta_{\index_i}}{(z_i-x)^2}\right) \right] M_{\bH} (x; z_1, \ldots,z_n) = 0 .
\end{align} 
We invite the reader to compare the PDE~\eqref{eqn:CFTmglePDE} to the PDEs in~\eqref{eq: multiple SLE PDEs} 
in Section~\ref{sec: Multiple SLE partition functions}.
In CFT language, the boundary condition changing operator  
$\Phi^{\ominus \oplus}(x) = \Phi_{1,2}(x)$
has a degeneracy at level two, with
conformal weight of special type: $\Delta^{\ominus \oplus} = h_{1,2}$ 
--- recall Section~\ref{subsec:CFT} and Appendix~\ref{app:Vir} for the degeneracies and PDEs in CFT.
Similarly, we have $\Phi^{\oplus \ominus}(y) = \Phi_{1,2}(y)$ and $\Delta^{\oplus \ominus} = h_{1,2}$.

\begin{remark}
In~\cite{Bauer-Bernard:Conformal_field_theories_of_SLEs, 
Bauer-Bernard:Conformal_transformations_and_SLE_partition_function_martingale}, 
M.~Bauer and D.~Bernard considered the effect of the local conformal symmetry realized by the Virasoro algebra
to the Loewner chains that generate the SLE curves. 
They observed, in particular, that there must be an explicit relation with the $\SLE_\kappa$ parameter $\kappa > 0$
and the conformal weight $h_{1,2} = \frac{6-\kappa}{2\kappa}$, and the central charge 
of the theory should be parametrized by $c = \frac{(3\kappa-8)(6-\kappa)}{2\kappa}$.
The above martingale ideas also appear in~\cite{Bauer-Bernard:Conformal_field_theories_of_SLEs, 
Bauer-Bernard:Conformal_transformations_and_SLE_partition_function_martingale}.
\end{remark}

Explicitly, the normalization factor (``partition function'') 
with Dobrushin boundary conditions reads
\begin{align} \label{eq: 2ptf}
\big\langle \one \big\rangle_{\Omega}^{\textrm{Dob}} 
= \big\langle \Phi_{1,2}(x) \Phi_{1,2}(y) \big\rangle_{\Omega} = H_{\Omega}(x,y)^{h_{1,2}} ,
\end{align}
where $H_{\Omega}$ is the boundary Poisson kernel in 
$\Omega$~\cite{BBK:Multiple_SLEs_and_statistical_mechanics_martingales, Dubedat:Commutation_relations_for_SLE,
Kozdron-Lawler:Configurational_measure_on_mutually_avoiding_SLEs, Dubedat:SLE_and_free_field}. 
This function is well-defined for all $\kappa > 0$, 
although it might not always have an interpretation in a discrete model. 
$H_{\Omega}(x,y)^{h_{1,2}}$ can also be understood as the partition function (or ``total mass'') for the chordal $\SLE_\kappa$ curve from $x$ to 
$y$, introduced in~\cite{Lawler:SLE, Lawler:Partition_functions_loop_measure_and_versions_of_SLE} ---
see also~\cite{Dubedat:SLE_and_free_field, Kang-Makarov:Gaussian_free_field_and_conformal_field_theory}.
In the upper half-plane $\bH$, we have the simple formula $H_\bH(x,y) =  |x-y|^{-2}$, so 
\begin{align} \label{eq: 2ptf uhp}
\big\langle \Phi_{1,2}(x) \Phi_{1,2}(y) \big\rangle_{\bH} = |x-y|^{-2h_{1,2}}.
\end{align}
For general $\kappa> 0$, it is not obvious at all what the $2N$-point function 
$\big\langle \Phi_{1,2}(x_1) \cdots  \Phi_{1,2}(x_{2N}) \big\rangle_{\Omega}$ should be.
For some lattice models (e.g., the critical Ising model, the Gaussian free field), 
it can be understood in terms of specific solutions of the PDEs~(\ref{eqn:CFTmglePDE},~\ref{eq: multiple SLE PDEs}),
known as (symmetric)  partition functions for the multiple $\SLEk$ processes,
see~\cite[Section~\red{4.4}]{Peltola-Wu:Global_and_local_multiple_SLEs_and_connection_probabilities_for_level_lines_of_GFF}.
Morally, we expect that
\begin{align*}
\big\langle \Psi^{\textrm{alt}}(x_1,\ldots,x_{2N}) \big\rangle_{\Omega}
= \big\langle \Phi_{1,2}(x_1)  \cdots  \Phi_{1,2}(x_{2N}) \big\rangle_{\Omega}
= \sum_{\alpha \in \LP_N} \PartF_\alpha (\Omega; x_1,\ldots,x_{2N}) ,
\end{align*}
where $\alpha \in \LP_N$ are planar pair partitions 
and $\PartF_\alpha$ are (possibly constant multiples of) the pure
partition functions for the multiple $\SLEk$,
both discussed in Section~\ref{sec: Multiple SLE partition functions}.
See item~\ref{item::ising_crossing_proba} of Theorem~\ref{thm: summary}
in Section~\ref{subsec:scaling_limimt_results_etc}  for a rigorous statement of this sort.

Unfortunately, the mathematical meaning of the ``fields'' $\Phi_{1,2}$ is not really understood even for the Ising model.
The Pfaffian correlation functions~\eqref{eq: Ising pf general domain} 
do coincide with those of the energy density, or the free fermion, on the boundary,
but for neither field a well-defined scaling limit has been established.

\section{\label{sec: Multiple SLE partition functions}Multiple SLE partition functions and applications}

In this section, we discuss one way to make sense of correlation functions of type
$\big\langle \Phi_{1,2}(x_1)  \cdots \Phi_{1,2}(x_{2N}) \big\rangle$. 
Even if the nature of the ``fields'' $\Phi_{1,2}$ is not mathematically clear, 
functions $\PartF (x_1, \ldots, x_{2N}) = \big\langle \Phi_{1,2}(x_1) \cdots \Phi_{1,2}(x_{2N}) \big\rangle$
of $2N$ complex or real variables can still be defined.
Furthermore, these functions do satisfy properties predicted by CFT. 
On the other hand, they are also associated to (commuting) multiple $\SLEk$ processes
growing from  the boundary points $x_1, \ldots, x_{2N} \in \bR = \partial \bH$. 
The multiple $\SLEk$ processes have been shown to describe scaling limits of multiple interfaces in, e.g., the critical Ising
model~\cite{Izyurov:Smirnovs_observable_for_free_boundary_conditions_interfaces_and_crossing_probabilities,
BPW:On_the_uniqueness_of_global_multiple_SLEs}.
(To other variants of the $\SLEk$, certain other CFT correlation functions could be associated, see, 
e.g.,~\cite{BBK:Multiple_SLEs_and_statistical_mechanics_martingales, Kytola:On_CFT_of_SLE_kappa_rho,
Hongler-Kytola:Ising_interfaces_and_free_boundary_conditions}.)

We begin by briefly discussing the multiple $\SLEk$ processes in Section~\ref{subsec:multiple SLEs}.  
Then, in Section~\ref{subsec:ppfdef} we define the multiple SLE partition functions, to be interpreted as 
correlation functions of type $\big\langle \Phi_{1,2}(x_1)  \cdots \Phi_{1,2}(x_{2N}) \big\rangle$.
Section~\ref{subsec:ppfprop} is devoted to properties of these functions, and Section~\ref{subsec:scaling_limimt_results_etc}
to a brief overview of applications to critical lattice models and classification of SLEs.

\subsection{\label{subsec:multiple SLEs} Multiple SLEs and their partition functions}

One curve in a multiple $\SLEk$ (sampled from its marginal law) can be described via a Loewner chain 
similar to the usual chordal case~\eqref{eq: Loewner equation}, but where 
the Loewner driving function $W_t$ has a drift given by the interaction with the other marked boundary points.
On the upper half-plane $\bH$ with marked points $x_{1} < \cdots < x_{2N}$,
for the curve starting from $x_j$, with $j \in \{1,\ldots,2N\}$, we have
\begin{align}\label{eqn::marginalj}
\begin{split}
\begin{cases}
\ud W_t 
= \sqrt{\kappa} \; \ud B_t 
+ \kappa\partial_j \log\PartF \big(V_t^1, \ldots, V_t^{j-1}, W_t, V_t^{j+1},\ldots, V_t^{2N}\big) \; \ud t,  \\
\ud V_t^i = \frac{2 \; \ud t}{V_t^i-W_t}, \qquad \textnormal{for } i\neq j, 
\end{cases}
\qquad \qquad 
\begin{cases}
W_0 = x_j  , \\
V_0^i = x_i, \qquad \textnormal{for } i\neq j ,
\end{cases}
\end{split}
\end{align}
where $\PartF$ is a so-called multiple $\SLEk$ partition function, and $V_t^i$ are the time evolutions of the other marked 
points~\cite{Dubedat:Commutation_relations_for_SLE}.

\begin{remark} \label{eqn:SDE_well_def}
The system~\eqref{eqn::marginalj} of stochastic differential equations (SDE)
only makes sense locally, i.e., up to a certain stopping time.
However, with strong enough control of the partition function $\PartF$, the Loewner chain~\eqref{eqn::marginalj} 
can be shown to be well-defined including the time when the curve 
swallows some of the marked points  --- see Proposition~\ref{prop:cor}
and~\cite[Proposition~\red{4.9}]{Peltola-Wu:Global_and_local_multiple_SLEs_and_connection_probabilities_for_level_lines_of_GFF}
(with $\kappa \in (0,6]$ and $\PartF$ a pure partition function) 
and~\cite[Theorem~\red{5.8}]{Karrila:Multiple_SLE_local_to_global} (for examples arising from critical lattice interfaces).
For the single $\SLEk$ from $x_1$ to $x_2$,  
such a property
for the curve was proven in~\cite[Section~\red{7}]{Rohde-Schramm:Basic_properties_of_SLE}
(and in~\cite{LSW:Conformal_invariance_of_planar_LERW_and_UST} for the exceptional case $\kappa=8$).
It was also shown that almost surely, the curve $(\gamma(t))_{t > 0}$ hits the boundary $\partial \bH = \bR$
only at its endpoint $x_2$ if $\kappa \in (0,4]$, whereas if $\kappa > 4$, then 
the curve almost surely hits the boundary already before hitting $x_2$.
(See also Figure~\ref{fig: iterate}.)
\end{remark}

In fact, the $\SLEk$ type curve $\gamma$ driven by $W_t$, a solution to~\eqref{eqn::marginalj},
is a Girsanov transform of the chordal $\SLEk$ driven by $\sqrt{\kappa} B_t + x_j$ 
by a (local) martingale $M_t$ obtained from the partition function $\PartF$,
\begin{align} \label{eq:magle}
M_t = \prod_{\substack{1 \leq i \leq 2N \\ i\neq j}}g_t'(x_{i})^{h} \times
    \PartF \big(g_t(x_1),\ldots,g_t(x_{j-1}),\sqrt{\kappa}B_t+x_j,g_t(x_{j+1}),\ldots,g_t(x_{2N}) \big) ,
\end{align}
where $h = h_{1,2} = \frac{6-\kappa}{2\kappa}$,
and $g_t$ is the solution to the Loewner equation~\eqref{eq: Loewner equation}
with driving function $\sqrt{\kappa}B_t+x_j$.
In other words, the Radon-Nikodym derivative of the law of 
$\gamma$ with respect to the chordal $\SLEk$ is given by $M_t / M_0$, at least up to a stopping time.

\bigskip

For the SDEs~\eqref{eqn::marginalj}, 
it suffices to define the partition function $\PartF$ up to a multiplicative constant, which disappears in the logarithmic derivative.
By the requirement that $M_t$ is a local martingale, $\PartF$ should satisfy a certain
second order partial differential equation, stated in~\eqref{eq: multiple SLE PDEs} below (c.f. also~\eqref{eqn:CFTmglePDE}).
Such an equation holds symmetrically for all $j \in \{1,\ldots,2N\}$~\cite{Dubedat:Commutation_relations_for_SLE}.
These PDEs appear in the CFT literature as the null-field equations for 
correlations of the field $\Phi_{1,2}$,
which is exactly the field that should generate SLE type curves emerging from the 
boundary~\cite{Cardy:Conformal_invariance_and_surface_critical_behavior, Cardy:Boundary_conditions_fusion_rules_and_Verlinde_formula, 
Bauer-Bernard:Conformal_field_theories_of_SLEs, 
Bauer-Bernard:SLE_martingales_and_Virasoro_algebra,
Bauer-Bernard:Conformal_transformations_and_SLE_partition_function_martingale} ---
recall Sections~\ref{subsec:CFT}--\ref{subsec:mgles}.

As an easy example, let us consider two points $x_1 < x_2$. In this case, there is only one partition function
(up to a multiplicative constant), already appearing in Section~\ref{sec: preli}, 
Equations~\eqref{eq: 2ptf} and~\eqref{eq: 2ptf uhp}: 
$\big\langle \Phi_{1,2}(x_1) \Phi_{1,2}(x_2) \big\rangle_{\bH} = \PartF(x_1,x_2)$,
\begin{align}
\PartF(x_1,x_2)
= \PartF_{\vcenter{\hbox{\includegraphics[scale=0.2]{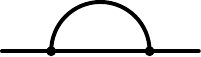}}}}(x_1,x_2) 
 = (x_2-x_1)^{(\kappa-6)/\kappa} .
\end{align}
The driving function $W_t$ of the curve starting from $x_1$ and the time evolution $V_t := V_t^2$
of the other point $x_2$ in~\eqref{eqn::marginalj} satisfy  
\begin{align}\label{eqn::marginalj1curve}
\begin{cases}
\ud W_t 
= \sqrt{\kappa} \ud B_t  + \frac{\kappa-6}{W_t-V_t} \; \ud t , \\
\ud V_t =  \frac{2 \; \ud t}{V_t-W_t}, 
\end{cases}
\qquad \qquad 
\begin{cases}
W_0 = x_1 , \\
V_0 = x_2 .
\end{cases}
\end{align}
This process is the chordal $\SLEk$ in $\bH$ from $x_1$ to $x_2$
--- in particular, by~\cite{LSW:Conformal_invariance_of_planar_LERW_and_UST, Rohde-Schramm:Basic_properties_of_SLE},
it defines a continuous curve that terminates at the point $x_2$.
Similarly, we can grow the curve starting from $x_2$.
In fact, the law of the chordal $\SLEk$ curve is reversible: 
if $\gamma \sim \mathbb{P}_{\bH;x_1,x_2}$, then the time-reversal $\bar{\gamma}$ of $\gamma$
has the law $\mathbb{P}_{\bH;x_2,x_1}$~\cite{Zhan:Reversibility_of_chordal_SLE, Sheffield-Miller:Imaginary_geometry3}.

As a slightly less trivial example, consider four points $x_1 < x_2 < x_3 < x_4$ and assume that $\kappa \in (0,8)$.
Then, the multiple $\SLE_\kappa$ partition functions are given by hypergeometric functions.
Specifically, any linear combination of the two functions
\begin{align} 
\label{eq: hg formulas for 4p fctions1}
\PartF_{\vcenter{\hbox{\includegraphics[scale=0.2]{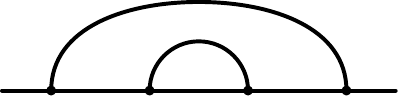}}}} (x_1,x_2,x_3,x_4)
:= \; & (x_4-x_1)^{-2h}(x_3-x_2)^{-2h} \left(\frac{(x_2-x_1)(x_4-x_3)}{(x_4-x_2)(x_3-x_1)}\right)^{2/\kappa} 
\; \frac{\hF\left(\frac{4}{\kappa}, 1-\frac{4}{\kappa}, \frac{8}{\kappa}; \frac{(x_2-x_1)(x_4-x_3)}{(x_4-x_2)(x_3-x_1)}\right)}{\hF\left(\frac{4}{\kappa}, 1-\frac{4}{\kappa}, \frac{8}{\kappa}; 1\right)} , \\ 
\label{eq: hg formulas for 4p fctions2}
\PartF_{\vcenter{\hbox{\includegraphics[scale=0.2]{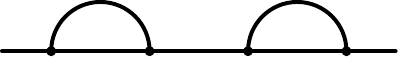}}}} (x_1,x_2,x_3,x_4)
:= \; &  (x_2-x_1)^{-2h}(x_4-x_3)^{-2h} \left(\frac{(x_4-x_1)(x_3-x_2)}{(x_4-x_2)(x_3-x_1)}\right)^{2/\kappa} 
\; \frac{\hF\left(\frac{4}{\kappa}, 1-\frac{4}{\kappa}, \frac{8}{\kappa}; \frac{(x_4-x_1)(x_3-x_2)}{(x_4-x_2)(x_3-x_1)}\right)}{\hF\left(\frac{4}{\kappa}, 1-\frac{4}{\kappa}, \frac{8}{\kappa}; 1\right)} ,
\end{align}%
is a partition function for a $2$-$\SLEk$ process.
(As a side remark, note that $\hF\left(\frac{4}{\kappa}, 1-\frac{4}{\kappa}, \frac{8}{\kappa}; z\right)$ is bounded for $z \in [0,1]$ 
when $\kappa\in (0,8)$, but infinite at $z = 1$ when $\kappa = 8$.)
Let us consider the curve starting from $x_1$, with driving function $W_t$ satisfying  
the SDEs~\eqref{eqn::marginalj} for $j=1$ and $N=2$. 
It can be shown~\cite[Theorem~\red{1.1}]{Wu:Convergence_of_the_critical_planar_ising_interfaces_to_hypergeometric_SLE} 
that with partition function $\PartF = \PartF_{\vcenter{\hbox{\includegraphics[scale=0.2]{figures/link-2.pdf}}}}$,
this curve terminates almost surely at $x_4$.
Similarly, taking the partition function $\PartF = \PartF_{\vcenter{\hbox{\includegraphics[scale=0.2]{figures/link-1.pdf}}}}$,
the curve terminates almost surely at $x_2$.

In general, it follows from J.~Dub\'edat's work~\cite[Theorem~\red{7}]{Dubedat:Commutation_relations_for_SLE} 
(see also~\cite[Theorem~\red{A.4}]{Kytola-Peltola:Pure_partition_functions_of_multiple_SLEs})
that so-called local multiple $\SLEk$ processes, generated via the Loewner chain~\eqref{eqn::marginalj},
are in one-to-one correspondence with the multiple $\SLEk$ partition functions $\mathcal{Z}(x_1, \ldots, x_{2N})$,
defined as positive functions that satisfy the PDE system~\eqref{eq: multiple SLE PDEs} and 
a specific conformal transformation property~\eqref{eq: multiple SLE Mobius covariance} 
stated in Section~\ref{subsec:ppfdef}. 
For the detailed definition of the local multiple $\SLEk$ processes, we refer 
to~\cite{Dubedat:Commutation_relations_for_SLE},~\cite[Appendix~\red{A}]{Kytola-Peltola:Pure_partition_functions_of_multiple_SLEs}, and~\cite[Section~\red{4.2}]{Peltola-Wu:Global_and_local_multiple_SLEs_and_connection_probabilities_for_level_lines_of_GFF}.

The existence of multiple $\SLEk$ partition functions is not clear for general $N \geq 3$.
When $\kappa \in (0,4]$,  
they can be constructed using the Brownian loop 
measure~\cite{Kozdron-Lawler:Configurational_measure_on_mutually_avoiding_SLEs,
Lawler:Partition_functions_loop_measure_and_versions_of_SLE,
Peltola-Wu:Global_and_local_multiple_SLEs_and_connection_probabilities_for_level_lines_of_GFF},
and the curves weighted by such partition functions (in the sense of Girsanov)
are absolutely continuous with respect to the chordal $\SLEk$ --- \eqref{eq:magle} is a true martingale. 
Unfortunately, when $\kappa > 4$, the Brownian loop measures 
appearing in the construction become infinite, so this approach does not work as such. 
There is another construction avoiding the Brownian loop measure, which is 
currently rigorously performed for $\kappa \in (0,6]$~\cite[Section~\red{6}]{Wu:Convergence_of_the_critical_planar_ising_interfaces_to_hypergeometric_SLE}.
We will discuss this approach in Appendix~\ref{app:Hao}.
For the range $\kappa > 6$, no construction is known to date. In Appendix~\ref{app:Hao}, we also discuss 
how the case of $\kappa \in (6,8)$ could be treated, if certain technical estimates could be established.

\subsection{\label{subsec:ppfdef} Definition of the multiple SLE partition functions}

Now we give a PDE theoretic definition and classification of the multiple $\SLE_\kappa$ partition functions 
(relaxing the positivity assumption --- see Remark~\ref{rem:positiovity}).
Our definition is motivated by J.~Dub\'edat's work~\cite{Dubedat:Commutation_relations_for_SLE}, 
where he derived properties that the partition functions must satisfy. 
These properties were further investigated in many works, 
e.g.,~\cite{Graham:Multiple_SLEs,
Kozdron-Lawler:Configurational_measure_on_mutually_avoiding_SLEs,
Lawler:Partition_functions_loop_measure_and_versions_of_SLE, Kytola-Peltola:Pure_partition_functions_of_multiple_SLEs,
Peltola-Wu:Global_and_local_multiple_SLEs_and_connection_probabilities_for_level_lines_of_GFF}. 
A physical derivation with CFT interpretations appears in~\cite{BBK:Multiple_SLEs_and_statistical_mechanics_martingales}
--- see also~\cite{Flores-Kleban:Solution_space_for_system_of_null-state_PDE1, Flores-Kleban:Solution_space_for_system_of_null-state_PDE2, 
Flores-Kleban:Solution_space_for_system_of_null-state_PDE3, Flores-Kleban:Solution_space_for_system_of_null-state_PDE4},
and recall the discussion in Section~\ref{subsec:mgles} for statistical physics motivation.

\bigskip

Fix a parameter $\kappa \in (0,8)$.
For each $N \geq 1$, consider functions
$\PartF \colon \chamber_{2N} \to \bC$ defined 
on the configuration space
\begin{align} \label{eq: chamber}
\chamber_{2N} :=\; & \{ (x_{1},\ldots,x_{2N}) \in \bR^{2N} \; | \; x_{1} < \cdots < x_{2N} \} .
\end{align}
We assume that $\PartF$ satisfy the following three properties:
\begin{itemize}
\item[\red{$\mathrm{(COV)}$}] {\bf \textit{M\"obius covariance}}: 
With conformal weight $h = \frac{6-\kappa}{2\kappa} (= h_{1,2})$,
we have the covariance rule 
\begin{align} \label{eq: multiple SLE Mobius covariance}
\PartF(x_{1},\ldots,x_{2N}) = \; &
\prod_{i=1}^{2N} \Mob'(x_{i})^{h} 
\times \PartF(\Mob(x_{1}),\ldots,\Mob(x_{2N})) ,
\end{align}
for all M\"obius maps $\Mob \colon \bH \to \bH$ 
such that $\Mob(x_{1}) < \cdots < \Mob(x_{2N})$.

\item[\red{$\mathrm{(PDE)}$}] {\bf \textit{Partial differential equations of second order}}: 
We have
\begin{align} \label{eq: multiple SLE PDEs}
\left[ \frac{\kappa}{2}\pdder{x_i} + \sum_{\substack{1 \leq j \leq 2N \\ j \neq i}} \left(\frac{2}{x_{j}-x_{i}}\pder{x_j} - 
\frac{2h}{(x_{j}-x_{i})^{2}}\right) \right] 
\PartF(x_1,\ldots,x_{2N}) =  0 , \qquad \textnormal{for all } i \in \{1,\ldots,2N\} .
\end{align}

\item[\red{$\mathrm{(PLB)}$}] {\bf \textit{power law bound}}: 
There exist $C>0$ and $p>0$ such that, for all 
$N \geq 1$ and for all $(x_1,\ldots, x_{2N}) \in \chamber_{2N}$, we have
\begin{align}\label{eqn::powerlawbound}
|\PartF(x_1, \ldots, x_{2N})|  \leq \;  C \prod_{1 \leq i<j \leq 2N}(x_j-x_i)^{\mu_{ij}(p)}, 
\qquad \qquad 
\textnormal{where } \quad
\mu_{ij}(p) :=
\begin{cases}
p, \quad &\textnormal{if } |x_j-x_i| > 1, \\
-p, \quad &\textnormal{if } |x_j-x_i| < 1.
\end{cases}
\end{align}
\end{itemize}

By~\cite[Theorem~\red{8}]{Flores-Kleban:Solution_space_for_system_of_null-state_PDE3} 
(see also~\cite{Dubedat:Euler_integrals_for_commuting_SLEs, Dubedat:Commutation_relations_for_SLE}), 
for each $N \geq 1$, the solution space
\begin{align} \label{eq: solution space}
\Sol_N := \{ \PartF \colon \chamber_{2N} \to \bC \; | \; \PartF \textnormal{ satisfies $\mathrm{(COV)}$, $\mathrm{(PDE)}$~\&~$\mathrm{(PLB)}$} \}
\end{align}
is finite-dimensional and it consists of so-called Coulomb gas integral solutions 
(see also Appendix~\ref{app: Coulomb gas}).

\begin{theorem} \label{thm:Steven}
\textnormal{\cite[Theorem~\red{8}]{Flores-Kleban:Solution_space_for_system_of_null-state_PDE3}} 
\;
For each $N \geq 1$, we have $\dmn \Sol_N = \Catalan_N := \frac{1}{N+1} \binom{2N}{N}$.
\end{theorem}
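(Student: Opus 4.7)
The plan is to establish both inequalities $\dmn \Sol_N \ge \Catalan_N$ and $\dmn \Sol_N \le \Catalan_N$, exploiting the fact that $\Catalan_N$ counts planar pair partitions of $\{1,\ldots,2N\}$ (link patterns $\alpha \in \LP_N$) and constructing a basis of $\Sol_N$ indexed by these link patterns. The architecture has two independent halves: an explicit construction of $\Catalan_N$ linearly independent solutions for the lower bound, and an inductive asymptotic analysis of the PDE system near the boundary strata $\{x_{i+1}=x_i\}$ for the upper bound.

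For the lower bound, I would produce $\Catalan_N$ linearly independent solutions $\PartF_\alpha$, indexed by $\alpha \in \LP_N$, using Coulomb gas (Dotsenko--Fateev) integrals of the form
$$
\PartF_\alpha(x_1,\ldots,x_{2N})
\;=\; \prod_{1\le i<j\le 2N}(x_j-x_i)^{2/\kappa}
\int_{\Gamma_\alpha} \prod_{k=1}^{N}\prod_{i=1}^{2N}(u_k-x_i)^{-4/\kappa}
\prod_{1\le k<\ell\le N}(u_k-u_\ell)^{8/\kappa}\, \ud u_1\cdots \ud u_N,
$$
where $\Gamma_\alpha$ is an $N$-dimensional integration cycle (e.g.\ Pochhammer-type contours encircling pairs of points linked in $\alpha$ to render the integrand univalued). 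Direct differentiation combined with integration by parts in the screening variables $u_k$ shows that such integrals solve $\mathrm{(PDE)}$; covariance $\mathrm{(COV)}$ with weight $h$ follows from a change of variables in $u_k$; and $\mathrm{(PLB)}$ reduces to standard estimates on the integrand. Linear independence of $\{\PartF_\alpha\}_{\alpha \in \LP_N}$ is then verified by examining leading asymptotics as adjacent points collide: the contours are designed so that $\PartF_\alpha$ exhibits a singular $(x_{i+1}-x_i)^{-2h}$ behavior exactly when $\{i,i+1\}$ is a link of $\alpha$, yielding a triangular change of basis with respect to link-pattern indexing.

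For the upper bound, I argue by induction on $N$. The base case $N=1$ is immediate: $\mathrm{(COV)}$ alone forces $\PartF(x_1,x_2)=c\,(x_2-x_1)^{-2h}$, so $\dmn \Sol_1 = 1 = \Catalan_1$. For the inductive step, given $\PartF \in \Sol_N$, I study the degeneration $\epsilon := x_{i+1}-x_i \to 0$ with the remaining coordinates fixed. Extracting the singular part of the $i$-th equation in $\mathrm{(PDE)}$ yields the indicial equation $\tfrac{\kappa}{2}\sigma(\sigma-1)+2\sigma - 2h = 0$, whose roots are $\sigma_- = -2h$ and $\sigma_+ = 2/\kappa$ (the two CFT fusion channels of $\Phi_{1,2}\times\Phi_{1,2}$). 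A Frobenius analysis of the full system, combined with $\mathrm{(PLB)}$ to exclude exponential or sub-Frobenius contributions, gives
$$
\PartF(\ldots,x_i,x_{i+1},\ldots) \;=\; \epsilon^{\sigma_-}\,A_i(\PartF) \;+\; \epsilon^{\sigma_+}\,B_i(\PartF)\;+\;o(\epsilon^{\sigma_+}),
$$
where $A_i(\PartF), B_i(\PartF)$ are functions of the $2N-2$ remaining coordinates. One shows that the ``collar'' maps $A_i \colon \Sol_N \to \Sol_{N-1}$ are well-defined (the image inheriting $\mathrm{(COV)}$, $\mathrm{(PDE)}$, and $\mathrm{(PLB)}$ after the cascade) and that the joint map $\bigoplus_{i=1}^{2N-1}A_i$ is injective: any $\PartF$ whose identity-channel coefficients all vanish must, by iterating the cascade and invoking $\mathrm{(PLB)}$, be identically zero. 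Combinatorial bookkeeping of which link-pattern channels can survive in $A_i(\PartF)$ for each $i$ matches exactly the Catalan recursion $\Catalan_N = \sum_{k=0}^{N-1} \Catalan_k \Catalan_{N-1-k}$, yielding $\dmn \Sol_N \le \Catalan_N$.

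The main obstacle is the upper bound, and inside it the rigorous justification of the Frobenius expansion at each collision stratum using only $\mathrm{(PDE)}$ and $\mathrm{(PLB)}$. Because the PDEs are merely hypoelliptic and their coefficients become singular on $\{x_{i+1}=x_i\}$, standard elliptic regularity is unavailable: one must instead leverage the power-law bound, together with a delicate multi-scale matching of asymptotics in the remaining variables, both to exclude pathological subleading branches and to show that the collar coefficients themselves satisfy all three axioms on the reduced configuration space. Establishing injectivity of the joint collar map (equivalently, triviality of $\bigcap_i \ker A_i$) is the other subtle ingredient, and is precisely where the detailed analytic arguments of the Flores--Kleban program~\cite{Flores-Kleban:Solution_space_for_system_of_null-state_PDE1,Flores-Kleban:Solution_space_for_system_of_null-state_PDE2,Flores-Kleban:Solution_space_for_system_of_null-state_PDE3,Flores-Kleban:Solution_space_for_system_of_null-state_PDE4} are concentrated.
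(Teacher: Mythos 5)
Your proposal follows essentially the same route as the Flores--Kleban/Kyt\"ol\"a--Peltola argument that the paper summarizes after the theorem statement: the lower bound via an explicit Coulomb gas construction of $\Catalan_N$ solutions indexed by link patterns, with linear independence read off from collision asymptotics, and the upper bound via a Frobenius/indicial analysis at the diagonal strata (roots $-2h$ and $2/\kappa$), justified by interior elliptic (Schauder) estimates away from the diagonals together with the power-law bound, and culminating in the vanishing statement of Proposition~\ref{prop::purepartition_unique}. The one point to tighten is the final count: injectivity of $\bigoplus_{i}A_i$ alone lands in a space of dimension $(2N-1)\,\Catalan_{N-1}>\Catalan_N$, so the upper bound really comes from the $\Catalan_N$ \emph{iterated} (nested) limit functionals indexed by link patterns --- precisely the ``combinatorial bookkeeping'' you allude to via the Catalan recursion.
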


Key arguments in~\cite{Flores-Kleban:Solution_space_for_system_of_null-state_PDE1, Flores-Kleban:Solution_space_for_system_of_null-state_PDE2, 
Flores-Kleban:Solution_space_for_system_of_null-state_PDE3, Flores-Kleban:Solution_space_for_system_of_null-state_PDE4}  
include explicit analysis of boundary behavior of the solutions in $\Sol_N$.
Note that the PDEs in~\eqref{eq: multiple SLE PDEs} are singular on the diagonals $x_i = x_j$, for $i \neq j$.
Consequently, the usual theory of elliptic and hypoelliptic PDEs can only be applied away from the boundary of $\chamber_{2N}$.
However, in~\cite{Flores-Kleban:Solution_space_for_system_of_null-state_PDE1, Flores-Kleban:Solution_space_for_system_of_null-state_PDE2} 
S.~Flores and P.~Kleban successfully applied Schauder interior estimates
and  elliptic PDE theory to establish the upper bound $\Catalan_N$ for $\dmn \Sol_N$.
To obtain the lower bound $\Catalan_N$ for $\dmn \Sol_N$, one constructs a linearly independent set of solutions
with cardinality $\Catalan_N$, see~\cite{Flores-Kleban:Solution_space_for_system_of_null-state_PDE3,
Kytola-Peltola:Pure_partition_functions_of_multiple_SLEs}.

\begin{remark} \label{rem:positiovity}
In order to generate local multiple $\SLEk$ processes via the Loewner evolution~\eqref{eqn::marginalj},
the multiple $\SLEk$ partition functions $\PartF$ \`a la Dub\'edat~\cite{Dubedat:Commutation_relations_for_SLE} 
are defined as \emph{positive} 
solutions to $\mathrm{(PDE)}$ and $\mathrm{(COV)}$.
The former property $\mathrm{(PDE)}$ implies that~\eqref{eq:magle} is a local martingale.
The latter property $\mathrm{(COV)}$ arises naturally from the conformal invariance and domain Markov property of the $\SLE_\kappa$ curve.
The positivity of the functions is manifest, e.g., in order for~\eqref{eq:magle} to be a positive local martingale.

In conclusion, only positive functions $\PartF \colon \chamber_{2N} \to \bRpos$
in $\Sol_N$ are multiple $\SLEk$ partition functions in the sense of Section~\ref{subsec:multiple SLEs}.
On the other hand, a multiple $\SLEk$ partition function does not have to satisfy the bound~$\mathrm{(PLB)}$,
but in all known examples, this bound is satisfied nevertheless.
In fact, when $\kappa \in (0,6]$, the solution space $\Sol_N$ has a basis consisting of positive solutions,
so nothing is lost by relaxing the positivity in this case.
The same property is believed to hold also when $\kappa \in (6,8)$.
\end{remark}

It is convenient to index basis elements for $\Sol_N$
by planar pair partitions $\alpha$ of the integers $\{1,2,\ldots,2N\}$
--- indeed, for each $N$, there are exactly $\Catalan_N$ such planar pair partitions $\alpha$.
We denote the set of them by $\LP_N$, and we call elements $\alpha$ in this set ``link patterns''. 
We also denote the collection of link patterns with any number of links (including zero) by
\begin{align}
\LP := \bigsqcup_{N\geq0} \LP_N .
\end{align}

\begin{figure}[h!]
\centering
\includegraphics[scale=1]{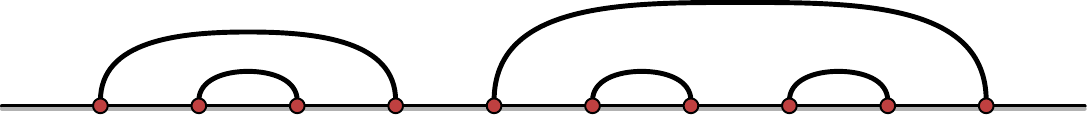}
\caption{\label{fig: nonvallp}
Graphical illustration of a link pattern $\alpha \in \LP_5$
(i.e., a planar pair partition of $\{1,2,\ldots,10\}$).
}
\end{figure}

$\Sol_N$ has certain physically important bases.
One of them, proposed earlier by J.~Dub\'edat~\cite{Dubedat:Euler_integrals_for_commuting_SLEs}, 
was investigated by
S.~Flores and P.~Kleban in~\cite{Flores-Kleban:Solution_space_for_system_of_null-state_PDE1, Flores-Kleban:Solution_space_for_system_of_null-state_PDE2,
Flores-Kleban:Solution_space_for_system_of_null-state_PDE3, Flores-Kleban:Solution_space_for_system_of_null-state_PDE4}. 
Using this explicit basis $\{ \sF_{\alpha} \; | \; \alpha \in \LP_N \}$, they (non-rigorously) argued 
that its dual basis with respect to a certain bilinear form is closely related to crossing probabilities 
in critical models in statistical physics. Elements in this dual basis were called ``connectivity weights''
and denoted by $\Pi_\alpha$. 
Instead of $\Pi_\alpha$, we denote this dual basis by $\{ \PartF_\alpha \; | \; \alpha \in \LP_N \}$, 
following the notation in the author's work~\cite{Kytola-Peltola:Pure_partition_functions_of_multiple_SLEs, 
Peltola-Wu:Global_and_local_multiple_SLEs_and_connection_probabilities_for_level_lines_of_GFF} with K.~Kyt\"ol\"a and H.~Wu.

\begin{defn}
The functions $\PartF_\alpha$ are defined in terms of properties that uniquely characterize them:
the collection (if it exists)
\begin{align} \label{eq: collectionPartF}
\{ \PartF_\alpha \; | \; \alpha \in \LP \} , \qquad \textnormal{with } \kappa\in (0,8) ,
\end{align}
is uniquely determined by the normalization convention 
$\PartF_{\emptyset} \equiv 1$,  
for the empty link pattern $\emptyset \in \LP_0$,
and the requirements that, first, we have $\PartF_\alpha \in \Sol_N$, for all $\alpha \in \LP_N$ and $N \geq 1$,
and second, the following recursive asymptotics properties $\mathrm{(ASY)}$ hold:
\begin{itemize}
\item[\red{$\mathrm{(ASY)}$}] {\bf \textit{Asymptotics}}: 
For all $N \geq 1$, for all $\alpha \in \LP_N$, and for all $j \in \{1, \ldots, 2N-1 \}$ and $\xi \in (x_{j-1}, x_{j+2})$, we have
\begin{align} \label{eq: multiple SLE asymptotics}
\lim_{x_j , x_{j+1} \to \xi} 
\frac{\PartF_\alpha(x_1 , \ldots , x_{2N})}{(x_{j+1} - x_j)^{-2h}} 
= \; & \begin{cases}
0 , \quad &
    \textnormal{if } \link{j}{j+1} \notin \alpha , \\
\PartF_{\hat{\alpha}}(x_{1},\ldots,x_{j-1},x_{j+2},\ldots,x_{2N}) , &
    \textnormal{if } \link{j}{j+1} \in \alpha ,
\end{cases} 
\end{align}
where $\hat{\alpha} = \alpha \removeLink \link{j}{j+1} \in \LP_{N-1}$ denotes
the link pattern obtained from $\alpha$ by removing the link $\link{j}{j+1}$ 
(and relabeling the remaining indices by $1,2,\ldots,2N-2$).
\end{itemize}
Asymptotics properties $\mathrm{(ASY)}$ can be regarded as
boundary conditions for PDE system~\eqref{eq: multiple SLE PDEs},
or as a specified operator product expansion (OPE) if the functions $\PartF_\alpha$
are viewed as correlation functions of 
some ``conformal fields'' --- see Sections~\ref{sec:OPE}--\ref{sec:Malek}.
These asymptotics properties~\eqref{eq: multiple SLE asymptotics} were proposed 
in the work~\cite{BBK:Multiple_SLEs_and_statistical_mechanics_martingales} of M.~Bauer, D.~Bernard, and K.~Kyt\"ol\"a. 
\end{defn}

The power law bound $\mathrm{(PLB)}$ stated in~\eqref{eqn::powerlawbound} might not be a necessary property
but instead a consequence of the other properties of $\PartF_\alpha$. However, the current proof of uniqueness of 
these functions strongly relies on this technical property. The uniqueness is established by virtue of the following lemma,
whose proof constitutes the whole article~\cite{Flores-Kleban:Solution_space_for_system_of_null-state_PDE2}:

\begin{prop}\label{prop::purepartition_unique}
\textnormal{\cite[Lemma~\red{1}, paraphrased]{Flores-Kleban:Solution_space_for_system_of_null-state_PDE2}}
\;
Let $\kappa\in (0,8)$. 
\begin{enumerate}
\itemcolor{red}
\item If $F \in \Sol_N$ satisfies the asymptotics
\begin{align*}
\lim_{x_j , x_{j+1} \to \xi} 
\frac{F(x_1 , \ldots , x_{2N})}{(x_{j+1} - x_j)^{-2h}} = 0 , 
\end{align*}
for all $ j \in \{ 2, 3, \ldots , 2N-1 \}$ and for all  $\xi \in (x_{j-1}, x_{j+2})$,
then $F \equiv 0$. 

\item In particular, if $\{\PartF_\alpha \; | \; \alpha \in \LP\}$ is a collection of functions  $\PartF_\alpha \in \mathcal{S}_N$, 
for $\alpha \in \LP_N$ and $N \geq 1$, satisfying the recursive asymptotics properties~\eqref{eq: multiple SLE asymptotics}
in $\mathrm{(ASY)}$ and the normalization $\PartF_\emptyset = 1$,
then this collection $\{\PartF_\alpha \; | \; \alpha \in \LP\}$ is unique.
\end{enumerate}
\end{prop}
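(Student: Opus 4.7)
The plan is to deduce part (2) from part (1) by a straightforward induction on $N$, and then focus the main effort on establishing part (1). For (2), if $\{\PartF_\alpha\}$ and $\{\PartF_\alpha'\}$ are two collections satisfying $\mathrm{(ASY)}$ and the normalization $\PartF_\emptyset = 1 = \PartF_\emptyset'$, I would induct on $N$: assuming uniqueness at level $N-1$, the difference $G_\alpha := \PartF_\alpha - \PartF_\alpha'$ lies in $\Sol_N$ and, by the inductive hypothesis applied to the right-hand side of~\eqref{eq: multiple SLE asymptotics}, satisfies the hypothesis of part (1) at every adjacent pair $\{j,j+1\}$ with $j \in \{2,\ldots,2N-1\}$ (the pairs at $j=1$ and $j=2N-1$ are covered after using $\mathrm{(COV)}$ to move a point to infinity, or via separate asymptotic extraction). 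Thus $G_\alpha \equiv 0$.

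For part (1), the strategy is to exploit the elliptic/hypoelliptic structure of~\eqref{eq: multiple SLE PDEs} combined with the bound~$\mathrm{(PLB)}$. First I would use $\mathrm{(COV)}$ to fix three of the variables (say via a Möbius map sending $x_1, x_{2N-1}, x_{2N}$ to $0, 1, \infty$), reducing the problem to a function of $2N-3$ variables on a bounded simplex. On this reduced domain, the PDEs become a system of linear second order equations of sum-of-squares form: pointwise, each PDE is strictly elliptic in $x_i$ with lower order terms singular only on the diagonal locus $\{x_i = x_j\}$, so Schauder interior estimates and the hypoelliptic smoothing of H\"ormander type apply away from that locus.

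The substantive content is then to show that the vanishing asymptotic $\lim_{x_j,x_{j+1}\to\xi}(x_{j+1}-x_j)^{2h} F = 0$ at every interior adjacent pair forces $F$ to extend continuously to zero across those diagonal strata, and that this boundary vanishing propagates inward. I would proceed by a double induction on $N$ and on the ``depth'' of nested limits: applying the asymptotic extraction at an interior pair $\{j,j+1\}$, one checks (as in~\cite{Dubedat:Commutation_relations_for_SLE, Flores-Kleban:Solution_space_for_system_of_null-state_PDE1}) that the limiting function satisfies the $N-1$-variable version of the PDE system. The vanishing hypothesis removes the leading $(x_{j+1}-x_j)^{-2h}$ divergence, so the remaining contribution is subleading; using the power law bound $\mathrm{(PLB)}$ to keep all iterated limits finite and to quantify decay at $|x_j-x_i|\to 0$ and $\to\infty$, I would iterate these extractions at all interior pairs to conclude that $F$ and enough of its derivatives vanish on a sufficiently rich subset of the boundary. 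A uniqueness result for the interior elliptic problem (maximum-principle-type argument for the PDE system, as developed in~\cite{Flores-Kleban:Solution_space_for_system_of_null-state_PDE2}) then closes the argument to give $F\equiv 0$ on $\chamber_{2N}$.

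The hard part is step three: controlling the subleading behavior near the diagonals well enough to feed it into an induction. The PDEs~\eqref{eq: multiple SLE PDEs} are singular precisely on the strata where the asymptotic hypothesis is imposed, so classical regularity theory does not apply directly at the boundary, and one must combine Schauder interior estimates, a careful Frobenius-type analysis of the indicial exponents (which are $-2h$ and $1-2h+\tfrac{4}{\kappa}$ at a collision $x_j\to x_{j+1}$, as one sees from substituting $F\sim (x_{j+1}-x_j)^\lambda$), and the uniform control afforded by $\mathrm{(PLB)}$. Managing the endpoint cases $j=1$ and $j=2N-1$ not covered by the hypothesis requires using $\mathrm{(COV)}$ to permute the role of $\infty$, which is exactly the reason $\mathrm{(PLB)}$ is needed in the bounded-away-from-infinity and close-to-infinity regimes. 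This delicate boundary analysis is the bulk of the work carried out in~\cite{Flores-Kleban:Solution_space_for_system_of_null-state_PDE2}, and I would simply invoke their estimates rather than reproduce them.
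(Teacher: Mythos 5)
Your proposal follows essentially the same route as the paper: the paper offers no independent proof of part (1) but defers entirely to \cite[Lemma~1]{Flores-Kleban:Solution_space_for_system_of_null-state_PDE2} (whose proof ``constitutes the whole article''), and your reduction of part (2) to part (1) by induction on $N$ is the intended argument (note that the $j=1$ and $j=2N-1$ worry is moot, since $\mathrm{(ASY)}$ holds for all $j\in\{1,\ldots,2N-1\}$ while part (1) only needs $j\in\{2,\ldots,2N-1\}$). One small correction to your parenthetical: the indicial roots at a collision $x_{j+1}\to x_j$ are $-2h=\frac{\kappa-6}{\kappa}$ and $\frac{2}{\kappa}=h_{1,3}-2h_{1,2}$, not $1-2h+\frac{4}{\kappa}$.
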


The above proposition says nothing about the existence of the functions $\PartF_\alpha$. 
In~\cite{Flores-Kleban:Solution_space_for_system_of_null-state_PDE3}, $\PartF_\alpha$ were implicitly defined 
in terms of a dual space of certain allowable sequences of limits.
An explicit construction for $\PartF_\alpha$ in Coulomb gas integral form
(via integrals similar to, but yet slightly different than in~\cite{Flores-Kleban:Solution_space_for_system_of_null-state_PDE3})
was given in~\cite{Kytola-Peltola:Pure_partition_functions_of_multiple_SLEs} for all $\kappa \in (0,8) \setminus \bQ$
(see  Appendix~\ref{app: Coulomb gas}).
On the other hand, in~\cite{Peltola-Wu:Global_and_local_multiple_SLEs_and_connection_probabilities_for_level_lines_of_GFF}
an explicit probabilistic construction of the functions $\PartF_\alpha$ was given for all $\kappa \in (0,4]$,
following the ideas of M.~Kozdron and G.~Lawler~\cite{Kozdron-Lawler:Configurational_measure_on_mutually_avoiding_SLEs} 
and relating these functions to multiple SLEs.
This construction uses the Brownian loop measure and fails when $\kappa > 4$.
Another construction, somewhat similar in spirit but more suitable for SLE curves with self-touchings,
was given in~\cite[Section~\red{6}]{Wu:Convergence_of_the_critical_planar_ising_interfaces_to_hypergeometric_SLE}. 
Currently, this construction works for $\kappa \in (0,6]$, as discussed in Appendix~\ref{app:Hao}.

\bigskip

It follows from either probabilistic 
construction~\cite{Kozdron-Lawler:Configurational_measure_on_mutually_avoiding_SLEs,
Peltola-Wu:Global_and_local_multiple_SLEs_and_connection_probabilities_for_level_lines_of_GFF,
Wu:Convergence_of_the_critical_planar_ising_interfaces_to_hypergeometric_SLE} 
that each function $\PartF_\alpha$ in fact satisfies a bound significantly stronger than~\eqref{eqn::powerlawbound}:
\begin{itemize}
\item[\red{$\mathrm{(B)}$}] {\bf \textit{``Strong'' power law bound}}: 
Let $\kappa \in (0,6]$. Then, for all $N \geq 1$ and $\alpha \in \LP_N$, and for all $(x_1,\ldots, x_{2N}) \in \chamber_{2N}$, we have
\begin{align} \label{eqn::partitionfunction_positive000}
0<\PartF_\alpha(x_1, \ldots, x_{2N}) 
\le \prod_{\link{a}{b} \in \alpha} |x_{b}-x_{a}|^{-2h} .
\end{align}
\end{itemize}

The upper bound in~\eqref{eqn::partitionfunction_positive000} depends on $\alpha \in \LP_N$.
It is very useful for establishing fine properties of the functions $\PartF_\alpha$.
The lower bound shows that all functions in the collection~\eqref{eq: collectionPartF}
with $\kappa \in (0,6]$
are not only real-valued but also positive,
which is crucial for relating them to multiple $\SLEk$ processes 
via the SDEs~\eqref{eqn::marginalj} (as discussed in Remark~\ref{rem:positiovity}), 
as well as to crossing probabilities 
of critical models in statistical physics~\cite{Peltola-Wu:Crossing_probabilities_of_multiple_Ising_interfaces,
Peltola-Wu:Global_and_local_multiple_SLEs_and_connection_probabilities_for_level_lines_of_GFF}.
Indeed, it has now been proven for $\kappa \in (0,4]$ that
the functions $\PartF_\alpha$ give rise to multiple $\SLEk$ processes
with prescribed connectivity of the curves according to the pairing $\alpha$~\cite{Peltola-Wu:Global_and_local_multiple_SLEs_and_connection_probabilities_for_level_lines_of_GFF}. 
In light of the construction of the functions $\PartF_\alpha$ for $\kappa \in (4,6]$, discussed in Appendix~\ref{app:Hao},
similar arguments should extend to this range --- see Proposition~\ref{prop:cor} and the discussion after it in Section~\ref{subsec:scaling_limimt_results_etc}.
Furthermore, rigorous connections with  crossing probabilities in critical models
(the Ising model, Gaussian free field, and loop-erased random 
walks) have been established~\cite{Peltola-Wu:Crossing_probabilities_of_multiple_Ising_interfaces,
Peltola-Wu:Global_and_local_multiple_SLEs_and_connection_probabilities_for_level_lines_of_GFF,
KKP:Conformal_blocks_pure_partition_functions_and_KW_binary_relation}
--- see Theorem~\ref{thm: summary} in Section~\ref{subsec:scaling_limimt_results_etc} for an example.

\bigskip

So far, we have discussed the pure partition functions as functions of real variables $x_1 < \cdots < x_{2N}$.
However, they can also be defined in other simply connected domains $\Omega \subsetneq \bC$ 
via their conformal covariance property. Namely, if $x_1, \ldots, x_{2N} \in \partial \Omega$ are $2N$ distinct boundary points 
appearing in counterclockwise order on 
sufficiently regular boundary segments (from the point of view of derivatives 
of conformal maps existing in their vicinity), then we set
\begin{align} \label{eq: ppf def in polygon}
\PartF_\alpha(\Omega; x_1, \ldots, x_{2N}) 
:= \; & \prod_{i=1}^{2N} |\Mob'(x_i)|^{h} \times 
\PartF_\alpha(\Mob(x_1), \ldots, \Mob(x_{2N})),
\end{align}
where $\Mob$ is any conformal map from $\Omega$ onto $\bH$ such that 
$\Mob(x_1) < \cdots < \Mob(x_{2N})$.
It is worthwhile to note that when considering ratios of partition functions,
the regularity assumptions for the boundary of $\Omega$ can be relaxed,
and it then suffices to require that conformal maps (but not necessarily their derivatives) exist
in the vicinity of the marked points $x_1, \ldots, x_{2N} \in \partial \Omega$.

For the case of $\Omega = \bH$ and $x_1 < \cdots < x_{2N}$, we still use the shorter notation
\begin{align*}
\PartF_\alpha(x_1, \ldots, x_{2N})  = \PartF_\alpha(\bH; x_1, \ldots, x_{2N}) .
\end{align*}
We also remark that the asymptotics~\eqref{eq: multiple SLE asymptotics} in property $\mathrm{(ASY)}$
holds for $j=2N$ as well, with 
$x_1 \to -\infty$ and $x_{2N} \to +\infty$.
(However, this property is not necessary for the definition of $\PartF_\alpha$.)
In general, given a ``polygon'' 
$(\Omega; x_1, \ldots, x_{2N})$ and $\alpha \in \LP_N$,
the asymptotics property $\mathrm{(ASY)}$ can be written in the form
\begin{align} \label{eq: multiple SLE asymptotics GEN}
\lim_{x_j , x_{j+1} \to \xi} 
\frac{\PartF_\alpha(\Omega; x_1 , \ldots , x_{2N})}{H_\Omega(x_j,x_{j+1})^{h}} 
= \; & \begin{cases}
0 , \quad &
    \textnormal{if } \link{j}{j+1} \notin \alpha , \\
\PartF_{\hat{\alpha}}(\Omega; x_{1},\ldots,x_{j-1},x_{j+2},\ldots,x_{2N}) , &
    \textnormal{if } \link{j}{j+1} \in \alpha ,
\end{cases} 
\end{align}
where $H_\Omega$ is the boundary Poisson kernel in $\Omega$. 
We may also allow ``$\link{j}{j+1} = \link{2N}{1}$'' in this formula.
Similarly, the strong bound~\eqref{eqn::partitionfunction_positive000} can be written in the form
\begin{align} \label{eqn::partitionfunction_positive_in_polygon}
0 < \PartF_\alpha(\Omega;x_1, \ldots, x_{2N}) 
\le \prod_{\link{a}{b} \in \alpha} H_\Omega(x_{a},x_{b})^{h} , \qquad \textnormal{when } \kappa\in (0,6] .
\end{align}

\subsection{\label{subsec:ppfprop} Properties of the multiple SLE partition functions}

The main purpose of this section is to collect known results for the functions $\PartF$ in the solution space $\Sol_N$
and to discuss open problems related to them.
From the CFT point of view, the  defining properties $\mathrm{(COV)}$ and $\mathrm{(PDE)}$ of $\PartF \in \Sol_N$
are manifest for correlation functions of the primary fields $\Phi_{1,2}$ ---
recall from Section~\ref{sec: preli} and Appendix~\ref{app:Vir}
the conformal covariance postulate~\eqref{eq: correlation function Mobius covariance} 
and  PDEs~(\ref{eq: PDE for correlation functions},~\ref{eq: singular equation level two simplified}) for fields with degeneracy at level two. 
Furthermore, the defining asymptotics properties $\mathrm{(ASY)}$ for the basis functions $\PartF_\alpha$,
stated in~\eqref{eq: multiple SLE asymptotics}, 
reflect a ``fusion structure'' (operator product expansion), which we shall discuss in detail in Section~\ref{sec:OPE} (see Remark~\ref{rem:fusion}).
Asymptotics properties $\mathrm{(ASY)}$ are also natural for the identification of $\PartF_\alpha$ as those multiple $\SLEk$ pure partition functions 
which generate curves with prescribed planar connectivity $\alpha$, see Proposition~\ref{prop:extr}.
Finally, these functions also describe crossing probabilities in critical planar models 
(as detailed in item~\ref{item::ising_crossing_proba} of Theorem~\ref{thm: summary} for the Ising model),
and their asymptotics properties are also crucial from this point of view.

\bigskip

Theorem~\ref{thm::purepartition_existence_forallK} below
supplements \cite[Theorem~\red{15}]{Dubedat:SLE_and_Virasoro_representations_fusionB},
\cite[Theorem~\red{8}]{Flores-Kleban:Solution_space_for_system_of_null-state_PDE3}, 
\cite[Theorem~\red{4.1}]{Kytola-Peltola:Pure_partition_functions_of_multiple_SLEs}, 
\cite[Proposition~\red{6.1}]{Wu:Convergence_of_the_critical_planar_ising_interfaces_to_hypergeometric_SLE},
and~\cite[Theorem~\red{1.1}]{Peltola-Wu:Global_and_local_multiple_SLEs_and_connection_probabilities_for_level_lines_of_GFF}.
It states the existence and uniqueness of the pure partition functions $\PartF_\alpha$
and some additional properties for them: linear independence (property~\ref{item: linearly independent}),
a strong growth bound (property~\ref{item: positive}), 
another natural bound (property~\ref{item: Malek}), whose role will be discussed in Section~\ref{sec:Malek},
as well as fusion properties~\ref{item: baby fusion2} and~\ref{item: OPE} 
related to the operator product expansion hierarchy for $\PartF_\alpha$,  discussed in detail in Section~\ref{sec:OPE}.
After Theorem~\ref{thm::purepartition_existence_forallK},
we include a short proof mainly indicating the relevant literature. 
We then discuss limitations of these results and further questions and problems.

\begin{theorem}
\label{thm::purepartition_existence_forallK}
Let $\kappa \in (0,8)$. 
There exists a unique collection $\{\PartF_\alpha \; | \; \alpha\in \LP\}$ of smooth
functions $\PartF_\alpha \in \Sol_N$, for $\alpha \in \LP_N$, such that $\PartF_\emptyset = 1$ 
and the recursive asymptotics properties~\eqref{eq: multiple SLE asymptotics} in $\mathrm{(ASY)}$ hold. 
These functions have the following further properties:
\begin{enumerate}
\itemcolor{red}
\item  \label{item: linearly independent}
For each $N \geq 0$, the functions in $\{\PartF_\alpha \; | \; \alpha\in \LP_N\}$ are linearly independent. 

\item  \label{item: positive} 
If $\kappa\in (0,6]$, then, for each $\alpha \in \LP$, 
the function $\PartF_\alpha$ is positive and satisfies the ``strong'' power law bound $\mathrm{(B):}$
\begin{align} \label{eqn::partitionfunction_positive}
0<\PartF_\alpha(x_1, \ldots, x_{2N}) 
\le \prod_{\link{a}{b} \in \alpha} |x_{b}-x_{a}|^{-2h} .
\end{align}

\item  \label{item: Malek}
 If $\kappa\in (0,6]$, then, for each $\alpha \in \LP_N$, the function $\PartF_\alpha$ satisfies the power law bound
\begin{align} \label{eq:Maleks bound0}
0 < \PartF_\alpha(x_1, \ldots, x_{2N})  \leq \; \prod_{i = 1}^{2N} \big( \min_{j \neq i} |x_i-x_j| \big)^{-h} .
\end{align}

\item \label{item: baby fusion2} 

Let $\kappa \in (0,8) \setminus \bQ$. 
Let $\alpha \in \LP_N$ and suppose that $\link{1}{2} \notin \alpha$. 
Then, for any $\xi < x_3$, the limit 
\begin{align} \label{eq: fusion limit example}
\hat{\PartF}_\alpha(\xi, x_3, \ldots, x_{2N}) := 
\lim_{x_1,x_2  \to \xi} \frac{\PartF_\alpha(x_1 , \ldots , x_{2N})}{(x_2 - x_1)^{2/\kappa}} 
\end{align}
exists and defines a solution to a system of $2N-1$ PDEs
given in Equation~\eqref{eq: PDEs after one fusion}. 
The limit function is M\"obius covariant: 
\begin{align*} 
\hat{\PartF}_\alpha(\xi, x_3, \ldots, x_{2N}) = 
\Mob'(\xi)^{(8-\kappa)/\kappa} \prod_{i = 3}^{2N} \Mob'(x_{i})^{h} 
\times 
\hat{\PartF}_\alpha \left( \Mob(\xi), \Mob(x_3), \ldots, \Mob(x_{2N}) \right) ,
\end{align*}
for all M\"obius maps $\Mob \colon \bH \to \bH$ 
such that $\Mob(\xi) < \Mob(x_{3}) < \cdots < \Mob(x_{2N})$.
Furthermore, such a limiting procedure can be iterated to construct solutions to higher order
PDEs of type~\eqref{eq: BSA differential equations}, as discussed in Section~\ref{sec:OPE}.

\item \label{item: OPE}
Let $\kappa \in (0,8) \setminus \bQ$. 
The collection $\{\PartF_\alpha \; | \; \alpha\in \LP\}$ satisfies an operator product expansion 
detailed in Proposition~\ref{prop: OPE}.
\end{enumerate}
\end{theorem}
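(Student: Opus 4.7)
\medskip

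\textbf{Proof plan.}
Uniqueness of the collection $\{\PartF_\alpha\}$ under the prescribed normalization $\PartF_\emptyset = 1$ and asymptotics $\mathrm{(ASY)}$ is Proposition~\ref{prop::purepartition_unique}, so I only need to produce such a collection in $\Sol_N$. Existence can be assembled from complementary constructions in the literature, depending on the range of $\kappa$: for all $\kappa \in (0,8)$, the abstract duality argument of~\cite{Flores-Kleban:Solution_space_for_system_of_null-state_PDE3}, together with the dimension count in Theorem~\ref{thm:Steven}, produces $\PartF_\alpha$ as a distinguished basis of $\Sol_N$; explicit Coulomb gas integral representations are available for $\kappa \in (0,8) \setminus \bQ$ from~\cite{Kytola-Peltola:Pure_partition_functions_of_multiple_SLEs} (Appendix~\ref{app: Coulomb gas}); and probabilistic constructions via the Brownian loop measure~\cite{Kozdron-Lawler:Configurational_measure_on_mutually_avoiding_SLEs,Peltola-Wu:Global_and_local_multiple_SLEs_and_connection_probabilities_for_level_lines_of_GFF} for $\kappa \in (0,4]$, respectively the conditioning-type construction of~\cite[Section~6]{Wu:Convergence_of_the_critical_planar_ising_interfaces_to_hypergeometric_SLE} for $\kappa \in (4,6]$ (Appendix~\ref{app:Hao}), realize $\PartF_\alpha$ as Radon--Nikodym densities with respect to $\SLE_\kappa$ measures. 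Property~\ref{item: linearly independent} then follows by induction on $N$ from $\mathrm{(ASY)}$: assuming $\sum_{\alpha \in \LP_N} c_\alpha \PartF_\alpha \equiv 0$, fix $\alpha_0 \in \LP_N$ and apply~\eqref{eq: multiple SLE asymptotics} along an innermost link of $\alpha_0$; this annihilates every term with $\alpha \neq \alpha_0$ and reduces $\PartF_{\alpha_0}$ to a pure partition function on $2N-2$ points, forcing $c_{\alpha_0} = 0$ by induction.

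Properties~\ref{item: positive} and~\ref{item: Malek} belong to the range $\kappa \in (0,6]$, where the probabilistic construction applies. Positivity and the bound $\mathrm{(B)}$ of property~\ref{item: positive} are immediate from the construction of $\PartF_\alpha$ as a positive Radon--Nikodym derivative with respect to chordal $\SLE_\kappa$, together with the direct estimate established in~\cite{Peltola-Wu:Global_and_local_multiple_SLEs_and_connection_probabilities_for_level_lines_of_GFF,Wu:Convergence_of_the_critical_planar_ising_interfaces_to_hypergeometric_SLE}. Property~\ref{item: Malek} is then a combinatorial consequence of~$\mathrm{(B)}$. Writing $\pi$ for the involution of $\{1,\ldots,2N\}$ induced by $\alpha$, each link $\link{a}{b}$ contributes $|x_b - x_a|^{-2h} = |x_a - x_{\pi(a)}|^{-h}\,|x_b - x_{\pi(b)}|^{-h}$, so
\begin{align*}
\prod_{\link{a}{b} \in \alpha} |x_b - x_a|^{-2h} \;=\; \prod_{i=1}^{2N} |x_i - x_{\pi(i)}|^{-h} .
\end{align*}
Since $h = (6-\kappa)/(2\kappa) \geq 0$ on this range and $|x_i - x_{\pi(i)}| \geq \min_{j \neq i} |x_i - x_j|$, the inequality~\eqref{eq:Maleks bound0} follows termwise from~\eqref{eqn::partitionfunction_positive}.

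Property~\ref{item: baby fusion2}, the one-step fusion, is the main technical step and the expected obstacle. A Frobenius analysis of the $i=1$ equation in~\eqref{eq: multiple SLE PDEs} near the coalescence $x_1, x_2 \to \xi$ yields the indicial equation $\tfrac{\kappa}{2}\gamma(\gamma-1) + 2\gamma - 2h = 0$, with roots $\gamma \in \{-2h,\, 2/\kappa\}$, for a leading ansatz $(x_2-x_1)^{\gamma} G$. Since $\link{1}{2} \notin \alpha$, the vanishing asymptotics in $\mathrm{(ASY)}$ excludes the exponent $-2h$, leaving $\PartF_\alpha \sim (x_2-x_1)^{2/\kappa}\, \hat{\PartF}_\alpha(\xi, x_3, \ldots, x_{2N})$. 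I would make this asymptotic rigorous using the explicit Coulomb gas integrals of~\cite{Kytola-Peltola:Pure_partition_functions_of_multiple_SLEs} (valid for $\kappa \not\in \bQ$), where the small-separation behaviour is read off directly from the integrand after isolating the screening charges adjacent to $x_1, x_2$. The PDE system for $\hat{\PartF}_\alpha$ is derived by substituting the ansatz into the PDEs for $i=1,2$, whose singular parts cancel to produce a single equation at the fused point $\xi$, and by passing to the limit in the remaining equations ($i \geq 3$), which is regular. The M\"obius covariance is inherited from~\eqref{eq: multiple SLE Mobius covariance}: combining $\Mob'(x_1)^h\Mob'(x_2)^h \to \Mob'(\xi)^{2h}$ with $\bigl((\Mob(x_2)-\Mob(x_1))/(x_2-x_1)\bigr)^{2/\kappa} \to \Mob'(\xi)^{2/\kappa}$ produces the claimed weight $2h + 2/\kappa = (8-\kappa)/\kappa$ at $\xi$. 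Property~\ref{item: OPE} is then Proposition~\ref{prop: OPE} of Section~\ref{sec:OPE}, obtained by iterating the one-step fusion along a nested cascade of link removals. The core difficulty throughout is controlling the fusion limit rigorously, especially for rational $\kappa$ where the explicit integral formulas are unavailable and the Frobenius exponents can become resonant.
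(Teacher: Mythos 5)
Your proposal matches the paper's own proof, which is essentially an assembly of citations: uniqueness from Proposition~\ref{prop::purepartition_unique}, existence and linear independence from the Flores--Kleban and Kyt\"ol\"a--Peltola constructions, positivity and the bound $\mathrm{(B)}$ from the probabilistic constructions for $\kappa\in(0,6]$, property~\ref{item: Malek} by the same termwise estimate as Proposition~\ref{prop:Maleks bound}, and the fusion statements from the Coulomb gas/quantum group results of Section~\ref{sec:OPE}. The only slip is in your linear-independence step: collapsing one link $\link{j}{j+1}$ of $\alpha_0$ does \emph{not} annihilate every $\alpha\neq\alpha_0$ but only those with $\link{j}{j+1}\notin\alpha$, leaving $\sum_{\alpha\ni\link{j}{j+1}}c_\alpha\PartF_{\alpha\removeLink\link{j}{j+1}}=0$; since $\alpha\mapsto\alpha\removeLink\link{j}{j+1}$ is a bijection onto $\LP_{N-1}$, the induction hypothesis still forces these $c_\alpha$ to vanish, and varying $j$ covers all of $\LP_N$, so your conclusion stands.
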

\begin{proofIdea}
Uniqueness follows from Proposition~\ref{prop::purepartition_unique}, and
existence was implicitly argued in~\cite[Theorem~\red{8}]{Flores-Kleban:Solution_space_for_system_of_null-state_PDE3}. 
Property~\ref{item: linearly independent} follows from the results 
in~\cite{Flores-Kleban:Solution_space_for_system_of_null-state_PDE3, Kytola-Peltola:Pure_partition_functions_of_multiple_SLEs}
--- a short proof using the ideas from the previous literature 
is given in~\cite[Proposition~\red{4.5}]{Peltola-Wu:Global_and_local_multiple_SLEs_and_connection_probabilities_for_level_lines_of_GFF} 
for the case of $\kappa \in (0,4]$, and exactly the same proof works for $\kappa \in (4,8)$ as well. 
Property~\ref{item: positive} was proved in~\cite[Lemma~\red{4.1}]{Peltola-Wu:Global_and_local_multiple_SLEs_and_connection_probabilities_for_level_lines_of_GFF} 
for the case of $\kappa \in (0,4]$ 
and extended in~\cite[Proposition~\red{6.1}]{Wu:Convergence_of_the_critical_planar_ising_interfaces_to_hypergeometric_SLE} 
to the range $\kappa \in (0,6]$. 
Property~\ref{item: Malek} is a direct consequence of the bound in property~\ref{item: positive},
see Proposition~\ref{prop:Maleks bound} for the calculation.
Finally, property~\ref{item: baby fusion2} follows from Theorem~\ref{thm: big prop} 
appearing in Section~\ref{subsec: fusion SCCG},
and property~\ref{item: OPE} from Proposition~\ref{prop: OPE} appearing in Section~\ref{subsec:OPE for ppf}. 
\end{proofIdea}

Theorem~\ref{thm::purepartition_existence_forallK}  does not 
give a complete understanding of the pure partition functions $\{\PartF_\alpha \; | \; \alpha\in \LP\}$.
Indeed, properties~\ref{item: positive} and~\ref{item: Malek} have only been proven for $\kappa\in (0,6]$,
and properties~\ref{item: baby fusion2} and~\ref{item: OPE} for $\kappa \in (0,8) \setminus \bQ$. 
We list some unanswered questions below.

\begin{problem}
Construct the functions $\{\PartF_\alpha \; | \; \alpha\in \LP\}$ explicitly for $\kappa \in (6,8) \cap \bQ$.
\end{problem}
In~\cite[Theorem~\red{4.1}]{Kytola-Peltola:Pure_partition_functions_of_multiple_SLEs},
the functions $\{\PartF_\alpha \; | \; \alpha\in \LP\}$ were  
explicitly constructed for all $\kappa \in (0,8) \setminus \bQ$, using a quantum group method. 
The restriction that $\kappa$ is irrational is needed because 
the representation theory of the quantum group is required to be semisimple.
In principle, the functions thus obtained could be analytically continued to include all $\kappa \in (0,8)$,
but the explicit continuation is not obvious, due to delicate cancellations of infinities and zeroes.
(See also Appendix~\ref{app: Coulomb gas}.)

In Appendix~\ref{app:Hao}, we discuss another, probabilistic construction
from~\cite{Wu:Convergence_of_the_critical_planar_ising_interfaces_to_hypergeometric_SLE}, for $\kappa \in (0,6]$.
This construction might also work for the remaining range $\kappa \in (6,8) \cap \bQ$.
In Appendix~\ref{app:Hao},
we will discuss the technical difficulties for establishing this case.

\begin{problem}
Prove property~\ref{item: positive} for $\kappa \in (6,8)$.
\end{problem}
Property~\ref{item: positive} holds for all $\kappa \in (0,8)$ in the case of $N=2$, as can be seen by inspection of the explicit
formulas~\eqref{eq: hg formulas for 4p fctions1}--\eqref{eq: hg formulas for 4p fctions2} for
the two functions 
$\PartF_{\vcenter{\hbox{\includegraphics[scale=0.2]{figures/link-2.pdf}}}}$
and
$\PartF_{\vcenter{\hbox{\includegraphics[scale=0.2]{figures/link-1.pdf}}}}$.
The main trouble for the case of $\kappa \in (6,8)$ and $N \geq 3$
 is that the scaling exponent $h$ in~\eqref{eqn::partitionfunction_positive} is negative,
which results in technical difficulties in the probabilistic approach (discussed in Appendix~\ref{app:Hao}, see Lemma~\ref{lem: bound}).
On the other hand, the Coulomb gas integral approach 
of~\cite{Flores-Kleban:Solution_space_for_system_of_null-state_PDE3, Kytola-Peltola:Pure_partition_functions_of_multiple_SLEs}
does not seem to easily give a bound as strong as~\eqref{eqn::partitionfunction_positive}.

\begin{quest}
Does property~\ref{item: Malek} hold for $\kappa \in (6,8)$?
\end{quest}
The conformal weight $h = h_{1,2}$ is negative when $\kappa > 6$, 
whereas it is positive for $\kappa \in (0,6)$ and zero for $\kappa=6$.
The negative conformal weight spoils unitarity of the corresponding CFT, but from the $\SLEk$ point of view, nothing should really change. 
However, if the bound~\eqref{eq:Maleks bound0} fails for $\kappa \in (6,8)$,
this might indicate something interesting for the $\SLE_6$.
(See also Conjecture~\ref{conj:Malek}.)

\begin{quest}
Is there a hidden phase transition for the $\SLEk$ at $\kappa = 6$?
\end{quest}

The fusion procedure in properties~\ref{item: baby fusion2} and~\ref{item: OPE}
should imply that the functions obtained as limits of the pure partition functions 
satisfy strong bounds analogous to property~\ref{item: positive}, with appropriate conformal weights. For example:

\begin{problem} 
Prove that the function $\hat{\PartF}_\alpha$ in property~\ref{item: baby fusion2} satisfies a bound of type
\begin{align} \label{eq: bound example}
\hat{\PartF}_\alpha (\xi, x_3, \ldots, x_{2N}) 
\le \; C(\kappa) \; |\xi - x_{\alpha(1)}|^{-h_{1,3}} |\xi - x_{\alpha(2)}|^{-h_{1,3}}
|x_{\alpha(2)} - x_{\alpha(1)}|^{h_{1,3}-2h_{1,2}}
\prod_{\substack{\link{a}{b} \in \alpha \\ a,b \neq 1,2,\alpha(1),\alpha(2)}} |x_{b}-x_{a}|^{-2h_{1,2}} ,
\end{align}
for some constant $C(\kappa) > 0$ depending on $\kappa \in (0,8)$, 
where $\alpha(1)$ and $\alpha(2)$ denote the pairs of $1$ and $2$ in $\alpha$, i.e., 
$\link{1}{\alpha(1)} \in \alpha$ and $\link{2}{\alpha(2)} \in \alpha$, 
and the exponents are
\begin{align*}
h_{1,2} =  \frac{6-\kappa}{2\kappa} = h
 \qquad 
\textnormal{and} \qquad
h_{1,3}  = \frac{8-\kappa}{\kappa} .
\end{align*}
\end{problem}

When $N=2$, using the explicit formula~\eqref{eq: hg formulas for 4p fctions1} for 
$\PartF_{\vcenter{\hbox{\includegraphics[scale=0.2]{figures/link-2.pdf}}}}$, 
one can check by hand that a bound of type~\eqref{eq: bound example} holds true:
\begin{align*}
\PartF_{\vcenter{\hbox{\includegraphics[scale=0.2]{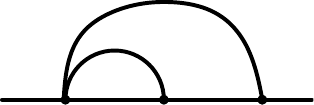}}}}(\xi, x_3, x_4) =  \; & 
\hat{\PartF}_{\vcenter{\hbox{\includegraphics[scale=0.2]{figures/link-2.pdf}}}}(\xi, x_3, x_4) :=  
\lim_{x_1,x_2  \to \xi} \frac{\PartF_{\vcenter{\hbox{\includegraphics[scale=0.2]{figures/link-2.pdf}}}}(x_1 , \ldots , x_{4})}{(x_2 - x_1)^{2/\kappa}}  \\
= \; & 
\frac{\hF\left(\frac{4}{\kappa}, 1-\frac{4}{\kappa}, \frac{8}{\kappa}; 0 \right)}{\hF\left(\frac{4}{\kappa}, 1-\frac{4}{\kappa}, \frac{8}{\kappa}; 1\right)} \;
(x_4-\xi)^{-h_{1,3}} (x_3-\xi)^{-h_{1,3}} (x_4-x_3)^{2/\kappa} ,
\end{align*}
where the prefactor is a constant $C(\kappa)$ depending only on $\kappa$: 
\vspace*{-5mm}
\begin{align} \label{eq: intersting constant}
C(\kappa)
= \frac{\hF\left(\frac{4}{\kappa}, 1-\frac{4}{\kappa}, \frac{8}{\kappa}; 0 \right)}{\hF\left(\frac{4}{\kappa}, 1-\frac{4}{\kappa}, \frac{8}{\kappa}; 1\right)} 
= \frac{1}{\hF\left(\frac{4}{\kappa}, 1-\frac{4}{\kappa}, \frac{8}{\kappa}; 1\right)} 
= \frac{ \Gamma \left( \frac{4}{\kappa} \right) \Gamma \left( \frac{12}{\kappa} - 1 \right)}{\Gamma \left( \frac{8}{\kappa} \right) \Gamma\left( \frac{8}{\kappa} - 1 \right)} 
\in \; & \begin{cases}
(1,\infty), \quad & \kappa \in (0,4) , \\
\{1\}, \quad & \kappa \in \{4\} , \\
(0,1), \quad & \kappa \in (4,8) , \\
\{0\}, \quad & \kappa \in \{8\} .
\end{cases} 
\end{align}

In principle, it should be possible to verify a bound of type~\eqref{eq: bound example} 
for $\kappa \in (0,6]$ using the upper bound from property~\ref{item: positive} and
the explicit construction of $\PartF_\alpha$ discussed in Appendix~\ref{app:Hao}.
This explicit construction should also show that property~\ref{item: baby fusion2} holds for rational values of $\kappa \in (0,6]$.
To verify the PDEs for the limit function, one could use, e.g., 
a result of J.~Dub\'edat~\cite{Dubedat:SLE_and_Virasoro_representations_fusionB},
that we state in Theorem~\ref{thm: Dubedat} in Section~\ref{subsec: fusion dub},
combined with continuity of the function $\PartF_\alpha$ in the parameter $\kappa$.

\begin{problem}
Prove properties~\ref{item: baby fusion2} and~\ref{item: OPE} for $\kappa \in (0,8) \cap \bQ$.
\end{problem}

\begin{quest}
What happens at $\kappa = 8$?
How about when $\kappa > 8$?
\end{quest}
At $\kappa = 8$, the chordal $\SLEk$ describes the scaling limit of the Peano curve between
a uniform spanning tree and its dual tree~\cite{LSW:Conformal_invariance_of_planar_LERW_and_UST}.
Peano curves associated to forests could correspond to multiple $\SLE_8$
processes.  Note that $\SLEk$ type curves with $\kappa \geq 8$ are space-filling, so the situation is drastically different from the range $\kappa \in (0,8)$.
We can also observe this fact from formulas~\eqref{eq: hg formulas for 4p fctions1}--\eqref{eq: hg formulas for 4p fctions2}
for the pure partition functions with $N=2$: both $\PartF_{\vcenter{\hbox{\includegraphics[scale=0.2]{figures/link-2.pdf}}}}$
and $\PartF_{\vcenter{\hbox{\includegraphics[scale=0.2]{figures/link-1.pdf}}}}$ equal zero at $\kappa = 8$,
because their normalization constant, also written explicitly in Equation~\eqref{eq: intersting constant},
tends to zero as $\kappa \to 8$. However, with different normalization, i.e., removing this multiplicative constant,
one obtains a non-zero limit for the renormalized functions
$\PartF_{\vcenter{\hbox{\includegraphics[scale=0.2]{figures/link-2.pdf}}}}$
and $\PartF_{\vcenter{\hbox{\includegraphics[scale=0.2]{figures/link-1.pdf}}}}$ as $\kappa \to 8$.
Finally, let us note that for $\kappa > 8$, the normalization constant~\eqref{eq: intersting constant} becomes negative,
with pole at $\kappa = 12$.

The normalization constant in~\eqref{eq: hg formulas for 4p fctions1}--\eqref{eq: hg formulas for 4p fctions2}
is necessary in order to obtain a clean operator product structure for the multiple SLE pure partition functions $\PartF_\alpha$
(e.g., to establish asymptotics property~\eqref{eq: multiple SLE asymptotics} with no multiplicative constant in front), see Section~\ref{sec:OPE}.
However, from the point of view of multiple $\SLE_\kappa$ processes grown via 
the Loewner evolution~\eqref{eqn::marginalj}, multiplicative constants in the partition functions $\PartF_\alpha$ are irrelevant.

\subsection{\label{subsec:scaling_limimt_results_etc} Relation to Schramm-Loewner evolutions and critical models}

In this section, we briefly illustrate the connection of the partition functions $\PartF$ with,
on the one hand, multiple $\SLE_\kappa$ processes and, on the other hand, 
critical planar lattice models. 
To begin, we discuss the close connection of the pure partition functions $\PartF_\alpha$ to crossing probabilities in critical models.
We give the statement for 
the critical Ising model --- see~\cite{Peltola-Wu:Crossing_probabilities_of_multiple_Ising_interfaces,
Peltola-Wu:Global_and_local_multiple_SLEs_and_connection_probabilities_for_level_lines_of_GFF,
KKP:Conformal_blocks_pure_partition_functions_and_KW_binary_relation} 
for other known results.
We also state convergence results for 
critical Ising interfaces, proved in~\cite{CDHKS:Convergence_of_Ising_interfaces_to_SLE,
Izyurov:Smirnovs_observable_for_free_boundary_conditions_interfaces_and_crossing_probabilities,
BPW:On_the_uniqueness_of_global_multiple_SLEs}.
For other models, analogous statements are expected (and in some cases proven) to hold as well.

\bigskip

Suppose that $\graph^\delta \subset \delta \bZ^2$ approximates a planar simply connected domain  
$\Omega$ as $\delta \searrow 0$ in the Carath\'eodory topology, and  
boundary points $x_1^{\delta}, \ldots, x_{2N}^{\delta}$ of $\graph^\delta$ approximate 
distinct boundary points $x_1, \ldots, x_{2N}$ of $\Omega$ 
(see, e.g.,~\cite{Peltola-Wu:Crossing_probabilities_of_multiple_Ising_interfaces} for the detailed definitions).
Consider the critical Ising model on 
$\graph^{\delta}$ with alternating boundary conditions~\eqref{eq::alternating}. 
Then, each configuration contains $N$ macroscopic interfaces connecting the points
$x_1^{\delta}, \ldots, x_{2N}^{\delta}$ pairwise, illustrated in Figure~\ref{fig: Ising} (right). 
The $\Catalan_N= \frac{1}{N+1} \binom{2N}{N}$ possible planar pairings are labeled by link patterns $\alpha \in \LP_N$.
The basis 
$\{\PartF_\alpha \; | \; \alpha\in \LP_N\}$
of $\Sol_N$ is labeled similarly.

\begin{theorem} \label{thm: summary}
The following hold for the critical Ising model on $(\graph^{\delta};x_1^{\delta}, \ldots, x_{2N}^{\delta})$
with alternating boundary conditions~\eqref{eq::alternating}\textnormal{:}
\begin{enumerate}
\itemcolor{red}
\item \label{item::ising_crossing_proba}
\textnormal{\cite[Theorem~\red{1.1}]{Peltola-Wu:Crossing_probabilities_of_multiple_Ising_interfaces}}
\;
With $\kappa = 3$, we have
\begin{align}\label{eqn::ising_crossing_proba}
\lim_{\delta\to 0} \mathbb{P} [ \textnormal{ the Ising interfaces form the connectivity } \alpha \, ] 
= \frac{\PartF_{\alpha}(\Omega;x_{1},\ldots,x_{2N})}{\PartF_{\textnormal{Ising}}(\Omega;x_{1},\ldots,x_{2N})} , \qquad \textnormal{for all }\alpha\in\LP_N,
\end{align}
where $\{\PartF_{\alpha} \; | \; \alpha \in \LP_N\}$ are the pure partition functions of multiple $\SLE_3$
from Theorem~\ref{thm::purepartition_existence_forallK} and
\begin{align} \label{eqn::ZIsingtotal}
\PartF_{\textnormal{Ising}} (\Omega;x_1, \ldots, x_{2N})
= \sum_{\alpha\in\LP_N}\PartF_{\alpha}  (\Omega;x_1, \ldots, x_{2N}) .
\end{align}
The normalization factor $\PartF_{\textnormal{Ising}}$ also equals the right side of~\eqref{eq: Ising pf general domain},
with~\eqref{eq: 2ptf} plugged in.

\item \label{item::ising_mutltiinterfacealpha}
\textnormal{\cite[Proposition~\red{1.3}]{BPW:On_the_uniqueness_of_global_multiple_SLEs}}
\;
\end{enumerate}
\begin{itemize}
\item[$\bullet$] 
Let $\alpha\in   \LP_N$. Then, as $\delta \to 0$, conditionally on the event that
they form the connectivity $\alpha$,
  the    law     of    the    collection  
of  critical   Ising
  interfaces converges weakly to the (global) $N$-$\SLE_3$ associated to
  $\alpha$, defined in~\textnormal{\cite[Definition~\red{1.1}]{BPW:On_the_uniqueness_of_global_multiple_SLEs}}. 

\item[$\bullet$]
  In particular, as $\delta \to 0$, the law of 
  a single  curve 
  in this  collection connecting two
  points  $x_j$ and  $x_{\alpha(j)}$, where $\link{j}{\alpha(j)} \in \alpha$, 
  converges  weakly to  a conformal  image of  the Loewner  chain given 
  by the SDEs~\eqref{eqn::marginalj} with $\PartF = \PartF_\alpha$ and $\kappa = 3$.

\end{itemize}

\begin{enumerate}
\setcounter{enumi}{2}
\itemcolor{red}
\item \label{item::ising_mutltiinterface} 
\textnormal{\cite[Theorem~\red{3.1}]{Izyurov:Smirnovs_observable_for_free_boundary_conditions_interfaces_and_crossing_probabilities}, 
\cite[Theorem~\red{1.1}]{Izyurov:Critical_Ising_interfaces_in_multiply_connected_domains},
and \cite[Theorem~\red{4.1} \& Proposition~\red{5.1}]{Peltola-Wu:Crossing_probabilities_of_multiple_Ising_interfaces}}
\;

As $\delta \to 0$,  the law of a single  curve 
in the    collection of  critical   Ising  interfaces starting from $x_j$ 
  converges  weakly to  a conformal  image of  the Loewner  chain 
  given by the SDEs~\eqref{eqn::marginalj} with 
  $\PartF = \PartF_{\textnormal{Ising}}$ and $\kappa = 3$.
This curve terminates almost surely at one of the marked points $x_\ell$, where $\ell$ has different parity than $j$.
\end{enumerate}
\end{theorem}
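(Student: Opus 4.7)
The plan is to prove the three items in the order~\ref{item::ising_mutltiinterface} (marginal law of a single curve), then~\ref{item::ising_crossing_proba} (crossing probabilities, by induction on $N$), and finally~\ref{item::ising_mutltiinterfacealpha} (convergence of the full collection), exploiting the Ising domain Markov property together with the PDE characterization of the partition functions from Theorem~\ref{thm::purepartition_existence_forallK}.

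For item~\ref{item::ising_mutltiinterface}, I would follow the strategy of~\cite{Izyurov:Smirnovs_observable_for_free_boundary_conditions_interfaces_and_crossing_probabilities} and build a discrete holomorphic spinor on $\graph^\delta$ with the appropriate ramified behavior at each marked boundary point $x_i^\delta$, to be used as a martingale observable for the exploration process started at $x_j^\delta$. In the Dobrushin case $N=1$ one recovers Smirnov's fermionic observable, which identifies the driving process with $\sqrt{3}B_t$. For general $N$, the discrete spinor converges (via precompactness and discrete holomorphicity, as in~\cite{Chelkak-Smirnov:Universality_in_2D_Ising_and_conformal_invariance_of_fermionic_observables}) to a continuum observable that can be expressed in terms of $\partial_j \log \PartF_{\textnormal{Ising}}$; here $\PartF_{\textnormal{Ising}}$ is pinned down as the unique element of $\Sol_N$ satisfying the Pfaffian asymptotics~\eqref{eq: Ising pf general domain}. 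Combined with the Kemppainen--Smirnov tightness criteria for Ising interfaces, this yields convergence of the driving process of the $j$-th interface to a semimartingale with Brownian part $\sqrt{3}B_t$ and drift $3\,\partial_j \log \PartF_{\textnormal{Ising}}\,\mathrm{d}t$, that is, the SDE~\eqref{eqn::marginalj} with $\PartF = \PartF_{\textnormal{Ising}}$ and $\kappa=3$.

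For item~\ref{item::ising_crossing_proba}, I would induct on $N$. The base case $N=1$ is immediate. For the inductive step, consider $\eta_1^\delta$ started from $x_1^\delta$ and stopped at its terminal time $\tau^\delta$, when it reaches some $x_\ell^\delta$ of opposite parity. By item~\ref{item::ising_mutltiinterface} and the domain Markov property, $\eta_1^\delta$ converges to the Loewner chain driven by $\PartF_{\textnormal{Ising}}$; conditionally on the terminal point, the configuration on $\Omega \setminus \eta_1^\delta$ consists of two sub-polygons with alternating boundary conditions and $2(N-1)$ marked points in total, to which the inductive hypothesis applies. The resulting limit of $\PR[\text{connectivity} = \alpha]$ is a smooth function of $(x_1,\ldots,x_{2N})$ that satisfies $\mathrm{(COV)}$, $\mathrm{(PDE)}$, $\mathrm{(PLB)}$, and the recursive asymptotics $\mathrm{(ASY)}$ with $\PartF_\emptyset = 1$; by the uniqueness in Proposition~\ref{prop::purepartition_unique} it must equal $\PartF_\alpha / \PartF_{\textnormal{Ising}}$, and summing over $\alpha \in \LP_N$ gives~\eqref{eqn::ZIsingtotal}. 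For item~\ref{item::ising_mutltiinterfacealpha}, tightness of the full collection of interfaces combined with item~\ref{item::ising_crossing_proba} shows that any subsequential limit, conditioned on connectivity $\alpha$, has marginals described by the Loewner chain~\eqref{eqn::marginalj} with $\PartF = \PartF_\alpha$ and $\kappa = 3$; the uniqueness of the global $N$-$\SLE_3$ associated to $\alpha$ established in~\cite{BPW:On_the_uniqueness_of_global_multiple_SLEs} then identifies the limit.

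The main obstacle is the inductive step of item~\ref{item::ising_crossing_proba}: one must show that the limiting crossing probability inherits the recursive boundary asymptotics~\eqref{eq: multiple SLE asymptotics GEN} in every degenerate configuration, so that Proposition~\ref{prop::purepartition_unique} actually identifies it with $\PartF_\alpha / \PartF_{\textnormal{Ising}}$. Two delicate points arise: first, when two marked points $x_i^\delta, x_{i+1}^\delta$ coalesce, one must control ratios of the form $\PartF_\alpha / \PartF_{\textnormal{Ising}}$ uniformly up to the diagonal, which is precisely where the strong two-sided bound~$\mathrm{(B)}$ from Theorem~\ref{thm::purepartition_existence_forallK} enters; second, when $\eta_1^\delta$ swallows a boundary arc, one must match the boundary conditions on each resulting sub-component with those prescribed by the removal of the corresponding link from $\alpha$, which requires uniform estimates of the fermionic observable up to and including the swallowing time. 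The bulk of the technical work sits in these uniform estimates.
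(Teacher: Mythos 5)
Your treatment of item~\ref{item::ising_mutltiinterface} (discrete holomorphic spinor observable, Kemppainen--Smirnov precompactness, identification of the drift as $3\,\partial_j\log\PartF_{\textnormal{Ising}}$) and of item~\ref{item::ising_mutltiinterfacealpha} (tightness plus the uniqueness of the global $N$-$\SLE_3$ from the classification of multiple $\SLE$ measures) follows the same route as the sources the paper relies on. The divergence is in item~\ref{item::ising_crossing_proba}: the paper's argument does \emph{not} induct on $N$ and does \emph{not} characterize the limit of the discrete crossing probabilities via Proposition~\ref{prop::purepartition_unique}. Instead it starts from the already-constructed collection $\{\PartF_\alpha\}$ of Theorem~\ref{thm::purepartition_existence_forallK}, observes that the ratio $\PartF_\alpha/\PartF_{\textnormal{Ising}}$ is a \emph{bounded} martingale for the growth of the interface (boundedness coming from the strong bound $\mathrm{(B)}$ and positivity), and identifies its terminal value with the indicator of the connectivity event via the asymptotics $\mathrm{(ASY)}$; optional stopping then gives~\eqref{eqn::ising_crossing_proba} directly.

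The gap in your alternative closure is the sentence asserting that the limit of $\PR[\textnormal{connectivity}=\alpha]$ (times $\PartF_{\textnormal{Ising}}$) ``is a smooth function \dots\ that satisfies $\mathrm{(COV)}$, $\mathrm{(PDE)}$, $\mathrm{(PLB)}$, and $\mathrm{(ASY)}$.'' Membership in $\Sol_N$ is exactly what Proposition~\ref{prop::purepartition_unique} needs as a hypothesis, and it is not free: for a function defined as a scaling limit of discrete probabilities (equivalently, as a conditional expectation along a Loewner chain) neither smoothness nor the second-order PDEs~\eqref{eq: multiple SLE PDEs} are automatic --- this is precisely the hard step that in the paper's framework requires either the hypoellipticity argument of Lemma~\ref{lem: PDE} or It\^o calculus on a function not yet known to be $C^2$. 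Moreover, when you try to close your induction by conditioning on the first curve and multiplying the conditional sub-polygon probabilities, the identity you need,
\begin{align*}
\frac{\PartF_\alpha(\Omega;x_1,\ldots,x_{2N})}{\PartF_{\textnormal{Ising}}(\Omega;x_1,\ldots,x_{2N})}
= \mathbb{E}\left[\, \prod_{D} \frac{\PartF_{\alpha_D}(D;\cdots)}{\PartF_{\textnormal{Ising},D}(D;\cdots)} \,\right],
\end{align*}
where the expectation is over the curve driven by $\PartF_{\textnormal{Ising}}$ and $D$ ranges over the complementary components, is nothing but the optional stopping theorem applied to the bounded martingale $\PartF_\alpha/\PartF_{\textnormal{Ising}}$ at the swallowing time. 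So your induction implicitly presupposes the martingale argument it was meant to replace. The fix is to reverse the logic as the paper does: take the existence, positivity, bound $\mathrm{(B)}$, and asymptotics $\mathrm{(ASY)}$ of $\PartF_\alpha$ as given inputs, and use them to build the bounded martingale whose terminal value reads off the connectivity.
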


\begin{proofIdea}
The convergence of one critical Ising interface with Dobrushin boundary conditions ($N=1$) was proven
in the celebrated work~\cite{CDHKS:Convergence_of_Ising_interfaces_to_SLE}.
This is established in two steps. 
First, one proves that the sequence $(\gamma^\delta)_{\delta>0}$
of lattice interfaces on $\graph^\delta$ is relatively compact in a certain space of curves.
Thus, one deduces that there exist convergent subsequences as $\delta \to 0$.
For the Ising model, the relative compactness is established using 
topological crossing estimates, see in particular~\cite{Kemppainen-Smirnov:Random_curves_scaling_limits_and_Loewner_evolutions}.
Second, one has to prove that all of the subsequences in fact converge to a unique limit,
identified as the chordal $\SLEk$ with $\kappa=3$. For the identification of the limit, 
Smirnov used a discrete holomorphic martingale observable~\cite{Smirnov:Towards_conformal_invariance_of_2D_lattice_models, Smirnov:Conformal_invariance_in_random_cluster_models1},
that is, a solution to a discrete boundary value problem on $\graph^\delta$, 
converging  as $\delta \to 0$ to the solution of the corresponding boundary value problem on  $\Omega$.
Using the martingale observable, he identified the Loewner driving function
of the scaling limit curve as $\sqrt{3} B_t$.

For multiple curves, the relative compactness follows from the one-curve 
case~\cite{Karrila:Multiple_SLE_local_to_global, Wu:Convergence_of_the_critical_planar_ising_interfaces_to_hypergeometric_SLE}.
For the identification, one can use either a multipoint discrete holomorphic observable,
as for item~\ref{item::ising_mutltiinterface} 
in~\cite{Izyurov:Smirnovs_observable_for_free_boundary_conditions_interfaces_and_crossing_probabilities,
Izyurov:Critical_Ising_interfaces_in_multiply_connected_domains},
or the classification of multiple SLE probability measures, 
as for item~\ref{item::ising_mutltiinterfacealpha} in~\cite{BPW:On_the_uniqueness_of_global_multiple_SLEs}.
See also~\cite{Karrila:Limits_of_conformal_images_and_conformal_images_of_limits_for_planar_random_curves,
Karrila:Multiple_SLE_local_to_global} for discussion on the technical points.

Finally, to prove item~\ref{item::ising_crossing_proba}, we used in~\cite{Peltola-Wu:Crossing_probabilities_of_multiple_Ising_interfaces}
the convergence of the interfaces to multiple $\SLE_3$ processes and a martingale argument:
the ratio $\PartF_\alpha / \PartF_{\textnormal{Ising}}$ defines a bounded martingale for the growth of the curve.
Fine properties of the functions $\PartF_\alpha$ and $\PartF_{\textnormal{Ising}}$
were crucial in the proof. See~\cite{Peltola-Wu:Crossing_probabilities_of_multiple_Ising_interfaces} for details.
\end{proofIdea}

\begin{problem}
Prove results analogous to Theorem~\ref{thm: summary} for other critical lattice models.
\end{problem}

\bigskip

Next, we discuss how the pure partition functions are related to the theory of multiple SLEs.
For background, we refer to~\cite{Dubedat:Commutation_relations_for_SLE,
Peltola-Wu:Global_and_local_multiple_SLEs_and_connection_probabilities_for_level_lines_of_GFF},
and references therein.
In general, a curve in a local multiple $N$-$\SLEk$ (sampled from its marginal law) 
has the Loewner chain description~\eqref{eqn::marginalj}
with some partition function $\PartF$. The word ``local'' refers to the fact that a priori, the Loewner chain is only 
defined up to a blow-up time. Choosing $\PartF = \PartF_\alpha$ in~\eqref{eqn::marginalj} 
gives a process where the curves 
growing from the marked boundary points $x_1, \ldots, x_{2N}$ should connect together according to the pairing $\alpha$.
This was indeed proven for $\kappa \in (0,4]$ in~\cite[Proposition~\red{4.9}]{Peltola-Wu:Global_and_local_multiple_SLEs_and_connection_probabilities_for_level_lines_of_GFF},
see also Proposition~\ref{prop:cor} below.
These multiple $\SLEk$ 
processes are extremal, or pure, in the sense that they generate a convex set of 
probability measures for multiple SLEs:

\begin{prop} 
\label{prop:extr}
\textnormal{\cite[Corollary~\red{1.2}, extended]{Peltola-Wu:Global_and_local_multiple_SLEs_and_connection_probabilities_for_level_lines_of_GFF}} 
\;
Let $\kappa \in (0,6]$. 
\begin{enumerate}
\itemcolor{red}
\item
For any $\alpha \in \LP_N$, the pure partition function $\PartF_\alpha$
defines a local $N$-$\SLE_\kappa$ process via the SDEs~\eqref{eqn::marginalj}. 

\item
For any $N\geq 1$, the convex hull of the local $N$-$\SLE_\kappa$ probability measures 
corresponding to $\{\PartF_\alpha \; | \; \alpha\in \LP_N \}$
has dimension $\Catalan_N-1$.
The $\Catalan_N$ local $N$-$\SLE_\kappa$ 
with pure partition functions $\PartF_\alpha$
are the extremal points of this convex set. 
\end{enumerate}
\end{prop}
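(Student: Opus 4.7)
The plan is to deduce both parts from the properties of the pure partition functions collected in Theorem~\ref{thm::purepartition_existence_forallK}, specifically their strict positivity (item~\ref{item: positive}) and linear independence (item~\ref{item: linearly independent}), combined with Dub\'edat's correspondence between positive partition functions and local multiple $\SLEk$ processes. Since the statement for $\kappa \in (0,4]$ is already the content of Corollary~1.2 of~\cite{Peltola-Wu:Global_and_local_multiple_SLEs_and_connection_probabilities_for_level_lines_of_GFF}, the task is to extend that argument to the range $\kappa \in (4, 6]$, which is now made possible by the new positivity bound~\eqref{eqn::partitionfunction_positive}.

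For part~(1), positivity of $\PartF_\alpha$ on $\chamber_{2N}$ together with $\mathrm{(COV)}$ and $\mathrm{(PDE)}$ turns the expression $M_t$ in~\eqref{eq:magle} into a strictly positive local martingale; Girsanov's theorem then gives a well-defined local multiple $\SLEk$ marginal via~\eqref{eqn::marginalj} at least up to a stopping time strictly smaller than the swallowing time of the next marked point. To upgrade this to a Loewner chain defined up to the time the growing curve actually swallows one of the remaining marked points, I would adapt the proof of~\cite[Proposition~4.9]{Peltola-Wu:Global_and_local_multiple_SLEs_and_connection_probabilities_for_level_lines_of_GFF}: the strong upper bound $\mathrm{(B)}$ from item~\ref{item: positive} of Theorem~\ref{thm::purepartition_existence_forallK} controls $\PartF_\alpha$ whenever $W_t$ approaches some $V_t^i$ with $\link{j}{i}\notin \alpha$, ensuring that the Radon-Nikodym density does not blow up before the ``correct'' partner of $x_j$ is reached. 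The argument extends verbatim to $\kappa\in(4,6]$ once the bound~\eqref{eqn::partitionfunction_positive} is available.

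For part~(2), both the dimension claim and the extremality claim will follow from one assertion: the $\Catalan_N$ probability measures $\{\mathbb{P}_\alpha \, | \, \alpha \in \LP_N\}$ are affinely independent, since the convex hull of affinely independent points is a simplex whose extremal points are precisely its vertices. To prove affine independence, I would suppose $\sum_\alpha \lambda_\alpha \mathbb{P}_\alpha = 0$ with $\sum_\alpha \lambda_\alpha = 0$. Writing each $\mathbb{P}_\alpha$ as the Girsanov transform of the chordal $\SLEk$ by $M_t^\alpha / M_0^\alpha$, and using that $M_t^\alpha$ depends \emph{linearly} on $\PartF_\alpha$, this becomes $\sum_\alpha (\lambda_\alpha / M_0^\alpha) M_t^\alpha = 0$ almost surely under the reference measure. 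Considering this identity at $t=0$ as the starting configuration $(x_1,\ldots,x_{2N})$ varies over $\chamber_{2N}$ reduces the problem to the functional identity $\sum_\alpha c_\alpha \PartF_\alpha \equiv 0$ on $\chamber_{2N}$; linear independence of the $\PartF_\alpha$ (item~\ref{item: linearly independent} of Theorem~\ref{thm::purepartition_existence_forallK}) then forces each $c_\alpha$, and hence each $\lambda_\alpha$, to vanish.

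The main technical obstacle is the refinement of the Loewner chain in part~(1): one must rule out that the SDE system~\eqref{eqn::marginalj} breaks down at a time strictly before the curve reaches a marked point. This is the same difficulty handled in the $\kappa \in (0,4]$ case, and overcoming it in the extended range relies crucially on the sharp bound $\mathrm{(B)}$ from Theorem~\ref{thm::purepartition_existence_forallK}~\eqref{item: positive}, particularly on how it decays when $W_t$ approaches a non-partner marked point. Once this is settled, part~(2) is a clean consequence of the Girsanov representation and the elementary convex-geometric fact that the convex hull of affinely independent probability measures is a simplex.
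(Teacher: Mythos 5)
Your proposal follows essentially the same route as the paper: the proof sketched there combines Dub\'edat's classification of local $N$-$\SLE_\kappa$ processes by their (positive) partition functions with linear independence (item~\ref{item: linearly independent} of Theorem~\ref{thm::purepartition_existence_forallK}) and positivity plus the bound $\mathrm{(B)}$ (item~\ref{item: positive}), and the only new ingredient over the $\kappa\in(0,4]$ case of \cite[Corollary~1.2]{Peltola-Wu:Global_and_local_multiple_SLEs_and_connection_probabilities_for_level_lines_of_GFF} is that positivity is now available up to $\kappa=6$ --- exactly the point you identify. You re-derive the injectivity half of Dub\'edat's correspondence by hand via the Girsanov representation, which is legitimate, but one step in part~(2) fails as written: evaluating $\sum_\alpha(\lambda_\alpha/M_0^\alpha)M_t^\alpha=0$ at $t=0$ returns only the assumed relation $\sum_\alpha\lambda_\alpha=0$, since each $M_0^\alpha/M_0^\alpha=1$, so no functional identity is extracted there. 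To get $\sum_\alpha c_\alpha\PartF_\alpha\equiv 0$ you must work at $t>0$: the common prefactor $\prod_{i\neq j}g_t'(x_i)^h$ cancels from the identity, leaving $\sum_\alpha c_\alpha\PartF_\alpha(X_t)=0$ almost surely with $c_\alpha=\lambda_\alpha/\PartF_\alpha(x_1,\ldots,x_{2N})$ constant, and since the driven configuration $X_t$ sweeps out an open neighbourhood of the initial configuration (support of the driving Brownian motion), the identity holds on an open subset of $\chamber_{2N}$ and hence everywhere by real-analyticity of the $\PartF_\alpha$; linear independence then kills the coefficients. With that correction your argument is complete and matches the paper's.
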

\begin{proofIdea}
For $\kappa \in (0,4]$, this statement appears as~\cite[Corollary~\red{1.2}]{Peltola-Wu:Global_and_local_multiple_SLEs_and_connection_probabilities_for_level_lines_of_GFF}.
Its proof works also for $\kappa \in (4,6]$. 
The main idea is to use the classification of local $N$-$\SLE_\kappa$ probability measures in terms of their partition 
functions, proven  
in J.~Dub\'edat's work~\cite{Dubedat:Commutation_relations_for_SLE}
(see also~\cite[Theorem~\red{A.4}]{Kytola-Peltola:Pure_partition_functions_of_multiple_SLEs} and~\cite[Proposition~\red{4.7}]{Peltola-Wu:Global_and_local_multiple_SLEs_and_connection_probabilities_for_level_lines_of_GFF}),
and linear independence of the functions $\PartF_\alpha$,
stated in item~\ref{item: linearly independent} of Theorem~\ref{thm::purepartition_existence_forallK}.
A crucial technical point for the proof is that the partition functions $\PartF_\alpha$ must be positive,
which for $\kappa \in (0,6]$ is guaranteed by item~\ref{item: positive} of Theorem~\ref{thm::purepartition_existence_forallK}. 
Lack of positivity is the only obstacle for extending this result to the remaining range 
$\kappa \in (6,8)$. This
could perhaps be established via the probabilistic construction
presented in Appendix~\ref{app:Hao}.
\end{proofIdea}

The extremal (pure) $\SLE_\kappa$ processes associated to $\PartF_\alpha$ are known to be well-defined  also globally, i.e.,
up to and including the terminal time of the curve (see Proposition~\ref{prop:cor}).
On the other hand, choosing $\PartF = \sum_{\alpha \in \LP_N} \PartF_\alpha$ in~\eqref{eqn::marginalj} 
generates a local multiple $N$-$\SLEk$ for which all planar connectivities of the curves are possible. 
However, it has not been proven in general that such a process is well-defined  up to and including its 
terminal time ---
the case of $\kappa=3$ was treated in~\cite{Peltola-Wu:Crossing_probabilities_of_multiple_Ising_interfaces}
via SLE techniques, and 
``local-to-global'' multiple SLEs arising from scaling limits of interfaces in critical lattice models
were considered in~\cite{Karrila:Multiple_SLE_local_to_global}. 
The main difficulty to generalize Proposition~\ref{prop:cor} for the sum function $\PartF = \sum_{\alpha \in \LP_N} \PartF_\alpha$
is the lack of a bound of type~\eqref{eqn::partitionfunction_positive}. 

\begin{prop} \label{prop:cor}
\textnormal{\cite[Proposition~\red{4.9}, extended]{Peltola-Wu:Global_and_local_multiple_SLEs_and_connection_probabilities_for_level_lines_of_GFF}} 
\;
Let $\kappa\in (0,6]$. Let $\alpha\in\LP_N$ and suppose that $\link{a}{b}\in\alpha$. 
Let $W_t$ be the solution to the SDEs~\eqref{eqn::marginalj} with $j = a$
and $\PartF = \PartF_\alpha$,
and let 
\begin{align*}
\swaltime := \min_{i \neq a} \; \sup\Big\{ t > 0 \; | \; \inf_{ s \in[0,t]} |g_s(x_i)-W_s| > 0 \Big\}
\end{align*} 
be the first swallowing time of one of the points $\{x_1,\ldots, x_{2N}\} \setminus \{x_a\}$.
Then, the Loewner chain driven by $W_t$ is well-defined  up to the swallowing time $\swaltime$.
Moreover, it is almost surely generated by a continuous curve up to and including $\swaltime$. 
\end{prop}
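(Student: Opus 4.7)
The strategy is to realize the Loewner chain driven by $W_t$ as a Girsanov tilt of the chordal $\SLEk$ from $x_a$ to $\infty$, weighted by the local martingale $M_t$ of Equation~\eqref{eq:magle}. The key input is the strong upper bound $\mathrm{(B)}$ on $\PartF_\alpha$ from item~\ref{item: positive} of Theorem~\ref{thm::purepartition_existence_forallK}, which holds precisely in the range $\kappa \in (0, 6]$, together with the strict positivity of $\PartF_\alpha$ in the same range.

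For each $\epsilon > 0$, introduce the localization stopping time
\[
\swaltime_\epsilon := \inf\bigl\{ t \geq 0 \; \big| \; \min_{i \neq a} |W_t - V_t^i| \leq \epsilon \;\; \text{ or } \;\; \max_{i \neq a} |V_t^i| \geq 1/\epsilon \bigr\} ,
\]
so that $\swaltime_\epsilon < \swaltime$ almost surely and $\swaltime_\epsilon \nearrow \swaltime$ as $\epsilon \searrow 0$. On the event $\{t \leq \swaltime_\epsilon\}$, applying $\mathrm{(B)}$ to the explicit formula~\eqref{eq:magle}, together with standard Koebe-type control of the boundary derivatives $g_t'(x_i)$ by the locations of the $V_t^i$, shows that $M_{t \wedge \swaltime_\epsilon}$ is uniformly bounded above in $t$. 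Since $\PartF_\alpha > 0$, the process $(M_{t \wedge \swaltime_\epsilon})_{t \geq 0}$ is then a bounded, strictly positive martingale, and $M_{t \wedge \swaltime_\epsilon}/M_0$ furnishes a Girsanov change of measure under which the chordal $\SLEk$ driving function becomes exactly the solution of~\eqref{eqn::marginalj} with $j = a$ and $\PartF = \PartF_\alpha$, up to time $\swaltime_\epsilon$.

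Because the chordal $\SLEk$ is almost surely generated by a continuous transient curve~\cite{Rohde-Schramm:Basic_properties_of_SLE}, mutual absolute continuity on $[0, \swaltime_\epsilon]$ transports the continuous-curve property to the $W_t$-driven chain on each $[0, \swaltime_\epsilon]$. Taking a countable sequence $\epsilon_n \searrow 0$, one concludes that the Loewner chain driven by $W_t$ is well-defined and generated by a continuous curve on the open interval $[0, \swaltime)$ almost surely.

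The main obstacle, and the delicate step, is to extend continuity to include the terminal time $\swaltime$ itself. Here, one must rule out oscillation of the tip $\eta(t)$ as $t \to \swaltime$ and identify the limiting point. The plan is to combine $\mathrm{(B)}$ with a Beurling-type estimate on the harmonic measure of the hull $K_t$ seen from interior reference points to force the distance $|W_t - V_t^b|$ (where $\link{a}{b} \in \alpha$) to be the first to collapse, and to do so continuously; the asymptotics $\mathrm{(ASY)}$ in~\eqref{eq: multiple SLE asymptotics} encoded in $\PartF_\alpha$ are precisely what single out the link $\link{a}{b}$. For $\kappa \in (0, 4]$ the simple-curve regime makes this argument essentially direct, as carried out in~\cite[Proposition~\red{4.9}]{Peltola-Wu:Global_and_local_multiple_SLEs_and_connection_probabilities_for_level_lines_of_GFF}. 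For $\kappa \in (4, 6]$, where the $\SLEk$ has self-touching, one compares the tilted law with a chordal $\SLEk$ from $x_a$ to $x_b$ by further factorizing $M_t$ through the two-point partition function for the link $\link{a}{b}$; this uses the probabilistic construction discussed in Appendix~\ref{app:Hao}, and reduces continuity at $\swaltime$ to the known continuity of the chordal $\SLEk$ at its endpoint.
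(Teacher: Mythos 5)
The decisive step here is continuity of the curve \emph{up to and including} $\swaltime$, and your main route does not deliver it. Tilting the chordal $\SLEk$ from $x_a$ to $\infty$ by $M_{t\wedge\swaltime_\epsilon}/M_0$ and letting $\epsilon \searrow 0$ only gives the statement on the open interval $[0,\swaltime)$, as you acknowledge. The Beurling-type argument you then propose to close the gap aims at the wrong target: it tries to show that $|W_t - V_t^b|$ is the first distance to collapse, i.e.\ that $x_b$ is the first point swallowed. That is not what the proposition asserts, it is not needed, and for $\kappa \in (4,6]$ it is in fact false --- the curve almost surely touches the boundary before reaching $x_b$, so $\swaltime$ may well be the swallowing time of some other marked point (see the discussion and figure following the proposition). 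So the primary argument you sketch for $\kappa\in(0,4]$ is both incomplete at the endpoint and misdirected.

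The mechanism you relegate to a fallback for $\kappa\in(4,6]$ --- comparing with the chordal $\SLEk$ from $x_a$ to $x_b$ by factorizing $M_t$ through the two-point partition function of the link $\link{a}{b}$ --- is in fact the paper's entire proof, uniformly for all $\kappa\in(0,6]$. Concretely, the Radon--Nikodym derivative of the $W_t$-driven chain with respect to the chordal $\SLEk$ in $(\bH;x_a,x_b)$ is the ratio of $M_t$ from~\eqref{eq:magle} to the corresponding two-point martingale $g_t'(x_b)^h (g_t(x_b)-W_t)^{-2h}$; the strong bound~\eqref{eqn::partitionfunction_positive} in item~\ref{item: positive} of Theorem~\ref{thm::purepartition_existence_forallK}, combined with the monotonicity of the boundary Poisson kernel (which is where $h\ge 0$, i.e.\ $\kappa\le 6$, enters), shows this ratio is bounded above uniformly in $t$, by a constant depending only on the initial configuration. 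This yields absolute continuity on the closed interval $[0,\swaltime]$, and the continuity of the generating curve up to and including $\swaltime$ is then inherited directly from the Rohde--Schramm result~\cite{Rohde-Schramm:Basic_properties_of_SLE} for the chordal $\SLEk$ from $x_a$ to $x_b$. No case split between $(0,4]$ and $(4,6]$, no Beurling estimate, and no identification of which point is swallowed first are required. If you reorganize your write-up around that single comparison, you recover the argument of~\cite[Proposition~4.9]{Peltola-Wu:Global_and_local_multiple_SLEs_and_connection_probabilities_for_level_lines_of_GFF}.
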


\begin{proofIdea}
For $\kappa \in (0,4]$, this statement appears as the first part of~\cite[Proposition~\red{4.9}]{Peltola-Wu:Global_and_local_multiple_SLEs_and_connection_probabilities_for_level_lines_of_GFF}.
The same proof works also for $\kappa \in (4,6]$.
The key point is that up to and including $\swaltime$, the Loewner chain with driving function $W_t$
is absolutely continuous with respect to the chordal $\SLEk$ curve from $x_a$ to $x_b$,
thanks to the strong upper bound~\eqref{eqn::partitionfunction_positive} 
in item~\ref{item: positive} in Theorem~\ref{thm::purepartition_existence_forallK}.
Again, lack of this bound prohibits extending this result to the remaining range 
$\kappa \in (6,8)$. 
\end{proofIdea}

In principle, the process in Proposition~\ref{prop:cor} could also be continued after the time $\swaltime$,
and we expect that eventually
it gives rise to a continuous transient curve that terminates at the point $x_b$.
For $\kappa \in (0,4]$, this is indeed the case, because $\swaltime$ is equal to the hitting time of $x_b$ ---
namely, the simple chordal $\SLEk$ only hits the boundary of the domain at its endpoints~\cite{Rohde-Schramm:Basic_properties_of_SLE}.
However, for $\kappa > 4$, the chordal $\SLEk$ almost surely hits the boundary elsewhere as well,
so further care is needed.

\begin{figure}
\centering
\includegraphics[width=.75\textwidth]{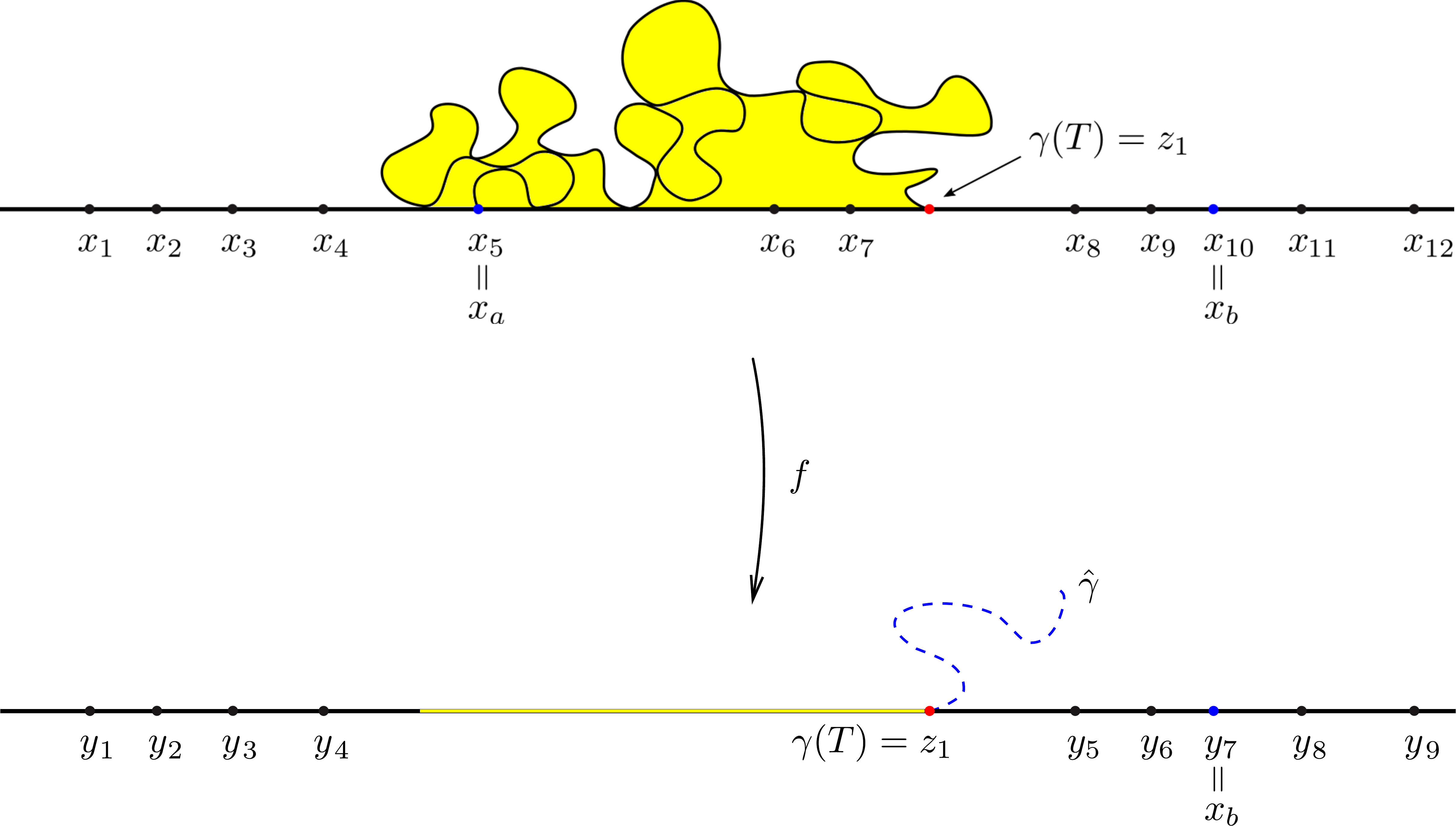}
\caption{\label{fig: iterate}
Illustration of the Loewner chain with partition function $\PartF_\alpha$ when $\kappa > 4$.
}
\end{figure}

Let us briefly sketch how the case of $\kappa > 4$ could be treated; see also Figure~\ref{fig: iterate}.
First, properties of the pure partition function $\PartF_\alpha$, discussed in more detail
in Appendix~\ref{app:Hao} and in~\cite[Section~\red{6}]{Wu:Convergence_of_the_critical_planar_ising_interfaces_to_hypergeometric_SLE},
should guarantee that almost surely, at time $\swaltime$
the Loewner chain $\gamma = (\gamma(t))_{t \leq \swaltime}$ associated to $(W_t)_{t \leq \swaltime}$ does not disconnect 
any two points $x_c$, $x_d$ that correspond to endpoints of a link $\link{c}{d} \in \alpha$ from each other.
In particular, after the swallowing time $\swaltime =: \swaltime_1$, we expect that
the Loewner chain may be continued in the connected component $\hat{\Omega}$ of $\bH \setminus \gamma$ containing $x_b$ on its boundary
as follows. Let 
$\Mob \colon \hat{\Omega} \to \bH$ be a conformal map fixing $\gamma(\swaltime_1) =:z_1 \in \partial \bH$ and $x_b$.
Let $\{y_1, \ldots, y_\ell\}$ be the conformal images under $\Mob$ of those points in $\{x_1,\ldots, x_{2N}\} \setminus \{x_a\}$
that belong to $\partial \hat{\Omega}$ (note that these include $x_b$, and $\ell$ is odd).
Also, let $\hat{\alpha}_0$ be the sub-link pattern of $\alpha$ associated to the points $\{\Mob^{-1}(y_1), \ldots, \Mob^{-1}(y_\ell) \} \cup \{x_a\}$,
let $\hat{\alpha}$ be the link pattern obtained from $\hat{\alpha}_0$ by replacing the endpoint corresponding to $x_a$ with $z_1$
(this possibly results in a cyclic permutation of the endpoints).
Finally, let $\PartF_{\hat{\alpha}}$ be the pure partition function associated to this link pattern
with variables  $\{y_1, \ldots, y_\ell \} \cup \{z_1\}$.
Thus, effectively, $x_a$ gets replaced by the new starting point $z_1$ on $\partial \bH$.

Next, define another Loewner chain $\hat{\gamma}$ driven by $\PartF_{\hat{\alpha}}$ starting from $z_1$ 
(and targeted to $x_b$, which is the pair of $z_1$ determined by the link pattern $\hat{\alpha}$)
by the SDEs~\eqref{eqn::marginalj} with $x_j = z_1$ and $\PartF = \PartF_{\hat{\alpha}}$.
By Proposition~\ref{prop:cor},
this Loewner chain is almost surely generated by a continuous curve up to and including the first swallowing time $\swaltime_2$
of one of the points $\{y_1, \ldots, y_\ell \}$.
Iterating this construction produces a finite sequence of Loewner chains $\gamma_1 = \gamma$, $\gamma_2 = \hat{\gamma}$, $\ldots$,
each of which is almost surely generated by a continuous curve up to and including the first swallowing time ($\swaltime_1, \swaltime_2, \ldots$)
of one of the marked points (excluding its starting point). 
The last Loewner chain $\gamma_m$ in this sequence, defined up to the stopping time $\swaltime_m$,
has the law of the chordal $\SLEk$ from some random point $\gamma_{m-1}(\swaltime_{m-1})$ to $x_b$,
because there are no other marked points left in the same connected component.
Now, we expect that the concatenation of these Loewner chains 
defines a continuous curve from $x_a$ to $x_b$,
regarded as the Loewner chain associated to the original SDE with $\PartF_\alpha$ from $x_a$ to $x_b$.
(See also~\cite{Karrila:Multiple_SLE_local_to_global} for curves arising as scaling limits of lattice interfaces.)

\section{\label{sec:OPE}Fusion and operator product expansion}

In this section, we discuss a fusion hierarchy for the multiple SLE partition functions.
Such ideas are important in quantum field theory, where the  fields are supposed to form an algebra
with multiplication given by their ``operator product expansion'' 
(OPE)~\cite{Wilson:Non_Lagrangian_models_of_current_algebra,
BPZ:Infinite_conformal_symmetry_in_2D_QFT}. 
This postulate results in a hierarchy of correlation functions appearing in each others' Frobenius series.
For the multiple SLE partition functions, there is a particularly nice combinatorial 
structure~\cite{Dubedat:SLE_and_Virasoro_representations_fusionB,
Peltola:Basis_for_solutions_of_BSA_PDEs_with_particular_asymptotic_properties}.

\bigskip

To motivate the results of this section, in Section~\ref{subsec:FusionCFT} we explain features of 
the operator algebra and OPE postulates in conformal field theory, which 
also lead to the so-called conformal bootstrap hypothesis (to be discussed in Section~\ref{subsec: ASPWC}): 
given certain data, the associated CFT can be completely solved.
 Like Sections~\ref{subsec:CFT} and~\ref{subsec:mgles},
this preliminary section is not intended to be mathematically precise.
In contrast, in Section~\ref{subsec: fusion dub} we state a rigorous result, Theorem~\ref{thm: Dubedat}
proved in J.~Dub\'edat's work~\cite{Dubedat:SLE_and_Virasoro_representations_fusionB},
towards understanding of how the operator algebra structure
can be formulated for the CFT correlation functions $\big\langle \Phi_{1,2}(x_1)  \cdots \Phi_{1,2}(x_{2N}) \big\rangle$
corresponding to the multiple SLE partition functions $\PartF(x_1, \ldots, x_{2N})$ from Section~\ref{sec: Multiple SLE partition functions}.
The OPE multiplication rules (fusion rules) for these specific correlation functions 
were found early in the CFT literature~\cite[Sections~\red{5},\red{6}]{BPZ:Infinite_conformal_symmetry_in_2D_QFT}.
In particular, we will see that  limits of solutions $\PartF$ of the second order PDEs~\eqref{eq: multiple SLE PDEs}
give rise to solutions of higher order PDEs 
(recall items~\ref{item: baby fusion2}--\ref{item: OPE} of Theorem~\ref{thm::purepartition_existence_forallK} as well).
Furthermore, the form of this fusion hierarchy can be made very explicit:
in Sections~\ref{subsec: fusion SCCG}--\ref{subsec:OPE for ppf}, 
we briefly discuss a systematic approach 
from the work~\cite{Kytola-Peltola:Conformally_covariant_boundary_correlation_functions_with_quantum_group,
Peltola:Basis_for_solutions_of_BSA_PDEs_with_particular_asymptotic_properties} of the author with K.~Kyt\"ol\"a, 
establishing rather general results.
We state the most important findings in Theorem~\ref{thm: big prop} and Proposition~\ref{prop: OPE}.

\subsection{\label{subsec:FusionCFT}Fusion and operator product expansion in conformal field theory}

In CFT, it is postulated that the conformal fields are operators that constitute an  
algebra with associative product, the operator product expansion, 
OPE~\cite{Wilson:Non_Lagrangian_models_of_current_algebra, 
BPZ:Infinite_conformal_symmetry_in_2D_QFT}. 
In some cases, this algebra and its OPE structure obtain a mathematically clean 
formulation using vertex operator algebras 
--- see, e.g.,~\cite{FHL:On_axiomatic_approaches_to_vertex_operator_algebras_and_modules,
Zhu:Modular_invariance_of_characters_of_vertex_operator_algebras,
Kac:Vertex_algebras_for_beginners,
Schottenloher:Mathematical_introduction_to_CFT}, and references therein.
For general background on OPEs in conformal field theory, the reader may consult,
e.g., the books~\cite{DMS:CFT, Schottenloher:Mathematical_introduction_to_CFT, Mussardo:Statistical_field_theory}.

In the physics literature, 
the formal ``operator product'' of two fields 
$\Phi_{\index_1}(z_1)$ and $\Phi_{\index_2}(z_2)$ is often written  in the form
\begin{align} \label{eq:OPE general}
\textnormal{``}
\Phi_{\index_1}(z_1) \Phi_{\index_2}(z_2) \; \sim \;
\sum_{\index} \frac{C_{\index_1, \index_2}^{\index}}{(z_1-z_2)^{\Delta_{\index_1} + \Delta_{\index_2} - \Delta_\index}} 
\, \Phi_{\index}(z_2) \textnormal{''} ,
\qquad \textnormal{as } 
|z_1 - z_2| \to 0,
\end{align}
where $\Phi_{\index}$ are (scalar) primary fields with conformal weights $\Delta_\index \in \bR$,  
and $C_{\index_1, \index_2}^{\index} \in \bC$ are called structure constants.
(We again omit the anti-holomorphic sector.)
More generally, one could write the right-hand side of~\eqref{eq:OPE general} in the form 
$\sum_{\index} C_{\index_1, \index_2}^{\index} (z_1,z_2) \Phi_{\index}(z_2)$,
for some functions $C_{\index_1, \index_2}^{\index} (z_1,z_2)$ allowing, e.g., logarithmic terms in the expansion.
Physicists speak of ``fusion rules'' that tell which fields 
$\Phi_{\index}$ are present in the OPE product~\eqref{eq:OPE general} of 
$\Phi_{\index_1}$ and $\Phi_{\index_2}$, i.e., which of the structure constants $C_{\index_1, \index_2}^{\index}$ are non-zero.

Morally, Equation~\eqref{eq:OPE general} should be understood ``inside correlations'', that is,
as an asymptotic statement relating correlation functions of type
$\big\langle \Phi_{\index_1}(z_1) \Phi_{\index_2}(z_2) \cdots \big\rangle$
to those of type $\big\langle \Phi_{\index}(z_2) \cdots \big\rangle$ when $|z_2-z_1|\to0$.
In Sections~\ref{subsec:OPE for ppf} and~\ref{subsec: ASPWC}, we shall give 
mathematically precise statements of this sort.

\bigskip

Fusion rules from the physics literature can be used to motivate the choice of 
asymptotic boundary conditions 
in order to single out specific solutions to the PDEs satisfied by correlation functions of fields with degeneracies 
(recall Section~\ref{subsec:CFT} and Appendix~\ref{app:Vir}).
To explicate this, we would like to identify the functions $\PartF \in \Sol_N$, 
discussed in Section~\ref{sec: Multiple SLE partition functions},
with correlation functions of type 
$\big\langle \Phi_{1,2}(z_1)  \cdots \Phi_{1,2}(z_{2N}) \big\rangle$.
We recall that these functions are solutions to the PDEs~\eqref{eq: multiple SLE PDEs},
and a basis for the space $\Sol_N$ can be found by imposing asymptotics properties~\eqref{eq: multiple SLE asymptotics}. 
According to~\cite[Section~\red{6}]{BPZ:Infinite_conformal_symmetry_in_2D_QFT}, the relevant fusion structure looks like
\begin{align} \label{eq: fusion rules spec}
\textnormal{``}
\Phi_{1,2}(z_1) \Phi_{1,2}(z_2) \; \sim \; 
\frac{C_{2,2}^1}{ (z_1-z_2)^{2 h_{1,2} - h_{1,1}}}  \, \Phi_{1,1}(z_2) 
\; + \; 
\frac{C_{2,2}^3}{ (z_1-z_2)^{2 h_{1,2} -h_{1,3} }} \, \Phi_{1,3}(z_2) \textnormal{''}  ,
\qquad \textnormal{as } 
|z_1 - z_2| \to 0 ,
\end{align}
where $C_{2,2}^1$ and $C_{2,2}^3$ are the structure constants,  
and in terms of the parameter $\kappa > 0$, the conformal weights read
\begin{align*}
h_{1,3}  = \frac{8-\kappa}{\kappa} , \qquad 
h_{1,2} = \frac{6-\kappa}{2\kappa} = h ,
\qquad \textnormal{and} \qquad 
h_{1,1} = 0 .
\end{align*}
More generally, for the fields $\Phi_{1,s}$ discussed in Section~\ref{subsec:CFT} and Appendix~\ref{app:Vir},  
according to~\cite[Section~\red{6}]{BPZ:Infinite_conformal_symmetry_in_2D_QFT},
we expect that
\begin{align} 
\label{eq: fusion rules General 1}
\textnormal{``}
\Phi_{1,2}(z_1) \Phi_{1,s}(z_2) \; \sim \; \; &
\frac{C_{2,s}^{s-1}}{ (z_1-z_2)^{h_{1,2} + h_{1,s} - h_{1,s-1}}} \, \Phi_{1,s-1}(z_2) \; + \; 
\frac{C_{2,s}^{s+1}}{ (z_1-z_2)^{h_{1,2} + h_{1,s} - h_{1,s+1}}} \, \Phi_{1,s+1}(z_2) \textnormal{''}  ,
\\
\label{eq: fusion rules General 2}
\textnormal{``}
\Phi_{1,s_1}(z_1) \Phi_{1,s_2}(z_2) \; \sim \;  \; &
\sum_{\projdmn \in S_{1,2}}
\frac{C_{s_1, s_2}^{\projdmn}}{(z_1 - z_2)^{h_{1,s_1} + h_{1,s_2} - h_{1,\projdmn}}} \,
\Phi_{1,\projdmn}(z_2) \textnormal{''} ,
\qquad \textnormal{as } 
|z_1 - z_2| \to 0 
\end{align}
(with the convention that $\Phi_{1,0}=0$), where
\begin{align} \label{eq: Kac weights BSA}
h_{1,s} = \frac{(s-1)(2(s+1)-\kappa)}{2\kappa} , \qquad \textnormal{for } s \in \bZpos 
\end{align}
are the Kac conformal weights~\eqref{eq: Kac weights GEN} parameterized in terms of $\kappa$,
and the index set is
\begin{align*}
S_{1,2} := \big\{ |s_{j+1} - s_{j}| + 1 , \,  |s_{j+1} - s_{j}| + 3 ,\,  \ldots , \, s_{j}+s_{j+1} - 3 , \, s_{j}+s_{j+1} - 1 \big\}  .
\end{align*}
In Theorems~\ref{thm: Dubedat} and~\ref{thm: big prop}, and Proposition~\ref{prop: OPE},
we will see how the fusion rules~\eqref{eq: fusion rules spec}--\eqref{eq: fusion rules General 2} can be phrased mathematically precisely.

\subsection{\label{subsec: fusion dub} Fusion: analytic approach}

In this section, we consider systems of  PDEs written in terms of the first order differential operators
\begin{align*}
\sL_{-k}^{(j)} =
    \sum_{ \substack{ 1 \leq i \leq n \\ i \neq j} } 
    \bigg( \frac{(k-1) h_{1,s_i}}{(z_i-z_j)^{k}} - \frac{1}{(z_i-z_j)^{k-1}} \pder{z_i} \bigg) 
     , \qquad \textnormal{for } k \in \bZpos ,
\end{align*}
labeling the conformal weights $h_{1,s_i}$ parameterized by $\kappa > 0$ as in~\eqref{eq: Kac weights BSA} 
by $\multidim = (s_1, \ldots, s_n) \in \bZpos^n$.
The PDE system of interest is
\begin{align}\label{eq: BSA differential equations}
\left[
\sum_{k=1}^{s_j}\sum_{\substack{n_{1},\ldots,n_{k}\geq1\\
n_{1}+\ldots+n_{k}=s_j}}
\frac{(-4/\kappa)^{s_j-k}\,(s_j-1)!^{2}}{\prod_{l=1}^{k-1}(\sum_{i=1}^{l}n_{i})(\sum_{i=l+1}^{k}n_{i})}\times\sL_{-n_{1}}^{(j)}\cdots
\sL_{-n_{k}}^{(j)} \right]
F(z_1,\ldots,z_n) = 0 , \qquad \textnormal{for all } j \in \{1,\ldots,n\} ,
\end{align}
for functions $F \colon \extendedChamber_n \to \bC$.
We recall from Section~\ref{subsec:CFT} that this type of PDEs are expected to hold
for correlation functions of the  conformal fields $\Phi_{1,s_j}$ with degeneracies at levels $s_j$.
In fact, the above PDEs are a special case of the ones appearing
in~\cite{BPZ:Infinite_conformal_symmetry_in_2D_QFT},
with explicit formulas~\eqref{eq: BSA differential equations} found 
by L.~Benoit and Y.~Saint-Aubin 
 in~\cite{BSA:Degenerate_CFTs_and_explicit_expressions_for_some_null_vectors}.
In general, the PDEs in~\cite{BPZ:Infinite_conformal_symmetry_in_2D_QFT}
also include conformal weights of type $h_{r,s}$ inside the Kac table 
(see~\eqref{eq: Kac weights GEN} in Appendix~\ref{app:Vir}).
We will only consider the case of~\eqref{eq: BSA differential equations}
in this article.
Appendix~\ref{app:Vir} contains examples of PDEs of type~\eqref{eq: BSA differential equations}
of orders one and two.
In particular, for translation-invariant functions, the second order PDE system where we take $\multidim = (2, 2, \ldots, 2)$ 
in~\eqref{eq: BSA differential equations}
is equivalent to PDE system~\eqref{eq: multiple SLE PDEs}, whose solution space we analyzed in Section~\ref{sec: Multiple SLE partition functions}.
The topic of this section is to consider solutions to higher order PDEs of type~\eqref{eq: BSA differential equations}.

\bigskip

J.~Dub\'edat proved in~\cite{Dubedat:SLE_and_Virasoro_representations_fusionB} 
that solutions of the second order PDEs~\eqref{eq: multiple SLE PDEs}
can be used to produce solutions to higher order PDEs of type~\eqref{eq: BSA differential equations}.
This pertains to a mathematical  formulation for the fusion structure discussed in Section~\ref{subsec:FusionCFT}.

\begin{theorem} \label{thm: Dubedat}
\textnormal{\cite[Lemma~\red{14} \& Theorem~\red{15}, simplified]{Dubedat:SLE_and_Virasoro_representations_fusionB};
see also~\cite[Lemma~\red{5.6}]{KKP:Conformal_blocks_pure_partition_functions_and_KW_binary_relation}} 
\;
Let $\kappa \in (0,8) \setminus \bQ$.
\begin{enumerate}
\itemcolor{red}
\item \label{item:Dub1}
Let $\PartF \colon \chamber_{2N} \to \bC$ be a solution of the second order
PDE system~\eqref{eq: multiple SLE PDEs}. 
Suppose that, for $\epsilon > 0$ small enough, we have
\begin{align} \label{eq: detailed boundary bound h12}
\PartF(x_1, x_2, x_3, \ldots, x_{2N}) = \OO \Big( (x_2 - x_1)^{h_{1,3} - 2h_{1,2} - \epsilon} \Big) , \qquad \textnormal{as } x_2 \searrow \, x_1.
\end{align}
Then, the limit
\begin{align*}
\hat{\PartF}(x_1, x_3, \ldots, x_{2N}) 
:= \lim_{x_2 \searrow \, x_1} 
\frac{\PartF(x_1, x_2, x_3, \ldots, x_{2N})}{(x_2 - x_1)^{h_{1,3} - 2h_{1,2}}}
\end{align*}
exists and defines a solution to the following system of $2N-1$ PDEs:
\begin{align} 
\begin{split}
\label{eq: PDEs after one fusion}
\left[ \pdder{x_j} - \frac{4}{\kappa} \sL^{(j)}_{-2} \right] 
\hat{\PartF}(x_1, x_3, \ldots, x_{2N}) = \; & 0 , \qquad \textnormal{for all } j \in \{3,\ldots,2N\} , \\
\left[ \pddder{x_1}
    - \frac{16}{\kappa} \sL^{(1)}_{-2} \pder{x_1}
    + \frac{8(8-\kappa)}{\kappa^2} \sL^{(1)}_{-3} \right] 
\hat{\PartF}(x_1, x_3, \ldots, x_{2N}) = \; & 0 ,
\end{split}
\end{align}
where
\begin{align*} 
\sL_{-2}^{(j)} = \; &
    \sum_{\substack{3 \leq i \leq 2N \\ i \neq j }}
       \bigg( \frac{h_{1,2}}{(x_i-x_j)^{2}} - \frac{1}{x_i-x_j} \pder{x_i} \bigg)
    + \bigg( \frac{h_{1,3}}{(x_1-x_j)^{2}} - \frac{1}{x_1-x_j} \pder{x_1} \bigg)
    , \qquad \textnormal{for all } j \in \{3,\ldots,2N\} , \\
\sL^{(1)}_{-3} = \; &
    \sum_{3 \leq i \leq 2N} \bigg( \frac{2h_{1,2}}{(x_i-x_1)^{3}} - \frac{1}{(x_i-x_1)^{2}} \pder{x_i} \bigg) .
\end{align*}

\item \label{item:Dub2}
Let $\PartF \colon \chamber_n \to \bC$ be a solution of 
the PDE system of type~\eqref{eq: BSA differential equations}
with $\multidim = (s, 2, s_3, \ldots, s_n)$. 
Suppose that, for $\epsilon > 0$ small enough, we have
\begin{align*}
\PartF(x_1, x_2, x_3, \ldots, x_n) = \OO \Big( (x_2 - x_1)^{h_{1,s+1} - h_{1,s} - h_{1,2} - \epsilon} \Big) , \qquad \textnormal{as } x_2 \searrow \, x_1.
\end{align*}
Then, the limit
\begin{align} \label{eq: detailed boundary bound hgen}
\hat{\PartF}(x_1, x_3, \ldots, x_n) 
:= \lim_{x_2 \searrow \, x_1} 
\frac{\PartF(x_1, x_2, x_3, \ldots, x_n)}{(x_2 - x_1)^{ h_{1,s+1} - h_{1,s} - h_{1,2}}}
\end{align}
exists and defines a solution to the system of $n-1$ PDEs comprising~\eqref{eq: BSA differential equations} 
with $\multidim = (s+1, s_3, \ldots, s_n)$ in the $n-1$ variables
$(z_1, z_2, \ldots, z_{n-1}) = (x_1, x_3, \ldots, x_n)$.
\end{enumerate}
\end{theorem}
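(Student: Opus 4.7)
The plan is to carry out a Frobenius analysis in the singular variable $u := x_2 - x_1$, with $v := x_1$ and the spectator coordinates $x_3, \ldots, x_{2N}$ treated as parameters. For part~\ref{item:Dub1}, I would substitute the ansatz $\mathcal{Z} = u^{\Delta}\bigl(F_0(v, x_3, \ldots, x_{2N}) + u \, F_1 + u^2 F_2 + \cdots\bigr)$ into either of the two singular equations~\eqref{eq: multiple SLE PDEs} at $j = 1$ or $j = 2$; reading off the coefficient of $u^{\Delta - 2}$ produces the indicial equation
\[
\kappa \, \Delta(\Delta - 1) + 4 \Delta - 4 h_{1,2} \; = \; 0,
\]
whose discriminant simplifies to $(\kappa - 8)^2$ and whose two roots are therefore exactly $\Delta_- = -2 h_{1,2}$ and $\Delta_+ = h_{1,3} - 2 h_{1,2} = 2/\kappa$, matching the two fusion channels $\Phi_{1,1}$ and $\Phi_{1,3}$ from~\eqref{eq: fusion rules spec}. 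The exponent gap $\Delta_+ - \Delta_- = h_{1,3} = (8-\kappa)/\kappa$ is irrational for $\kappa \in (0,8) \setminus \bQ$, so the expansion is non-resonant and no logarithmic terms arise; standard regular-singular ODE theory applied fibrewise over $(v, x_3, \ldots, x_{2N})$ then yields a two-parameter local family of solutions near $u = 0$.

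The bound~\eqref{eq: detailed boundary bound h12} kills the $\Delta_-$ channel and pins $\hat{\mathcal{Z}}(x_1, x_3, \ldots, x_{2N}) = F_0(x_1, x_3, \ldots, x_{2N})$, giving existence and smoothness of the limit on $\chamber_{2N-1}$. To extract the spectator PDEs in~\eqref{eq: PDEs after one fusion} for $j \in \{3, \ldots, 2N\}$, the next step is to divide the original equation at position $j$ by $u^{\Delta_+}$, Taylor-expand the rational coefficients $1/(x_1 - x_j)^k$ and $1/(x_2 - x_j)^k$ around the fused point, and let $u \to 0$: the $i = 1$ and $i = 2$ summands coalesce into a single contribution whose effective weight is $2 h_{1,2} + \Delta_+ = h_{1,3}$, reproducing the operator $\mathcal{L}^{(j)}_{-2}$ in the statement. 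The third-order equation at the fused index $j = 1$ is the delicate one; I would derive it by expanding $L_1 \mathcal{Z} = 0$ and $L_2 \mathcal{Z} = 0$ simultaneously to order $u^{\Delta_+ + 1}$, obtaining three recursion relations that determine $F_1$ and $F_2$ algebraically in terms of $F_0$ and its $v$- and spectator-derivatives, and then substituting these back into the antisymmetric combination $L_1 - L_2$, whose leading-order action on the series vanishes identically once the indicial equation is imposed and whose subleading $u^{\Delta_+}$ coefficient acts on $F_0$ alone. Matching the result with the Benoit--Saint-Aubin combinatorics of~\cite{BSA:Degenerate_CFTs_and_explicit_expressions_for_some_null_vectors} identifies the resulting operator as the level-three null-vector operator of weight $h_{1,3}$, which is precisely the third-order PDE in~\eqref{eq: PDEs after one fusion}.

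The main obstacle will be analytic rather than algebraic: one must justify that the formal Frobenius expansion converges and can be differentiated term by term, uniformly in the spectator variables. For this I would rely on hypoellipticity of~\eqref{eq: multiple SLE PDEs}, which secures smoothness of $\mathcal{Z}$ on $\chamber_{2N}$, together with the classical observation that $u^{-\Delta_+} \mathcal{Z}$ satisfies a Fuchsian ODE in $u$ whose coefficients depend smoothly on $(v, x_3, \ldots, x_{2N})$; uniform convergence on compacta of $\chamber_{2N-1}$ then follows from standard estimates. For part~\ref{item:Dub2}, the same strategy applies with the order-$s$ equation in $x_1$ and the order-two equation in $x_2$: eliminating $\partial_{x_2}$-derivatives of order $\geq 3$ via the second-order equation reduces the system to a single Fuchsian operator in $u$ whose indicial polynomial has roots $h_{1, s \pm 1} - h_{1,s} - h_{1,2}$, corresponding to the two fusion channels $\Phi_{1, s \pm 1}$; the bound again selects the larger root $h_{1, s+1} - h_{1,s} - h_{1,2}$, and matching orders $u^0, u^1, \ldots, u^{s-1}$ of the Frobenius expansion recovers the order-$(s+1)$ BSA equation of~\eqref{eq: BSA differential equations} at the fused point, with the combinatorial coefficients emerging directly from the recursion.
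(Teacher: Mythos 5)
First, note that the paper does not prove Theorem~\ref{thm: Dubedat}: it is quoted from Dub\'edat's work, so your proposal has to be measured against that proof. The algebraic skeleton of what you propose is correct and is essentially the classical BPZ fusion computation underlying Dub\'edat's argument: your indicial equation $\kappa\Delta(\Delta-1)+4\Delta-4h_{1,2}=0$ is the right one, its discriminant is indeed $(\kappa-8)^2$ with roots $-2h_{1,2}$ and $h_{1,3}-2h_{1,2}=2/\kappa$, the exponent gap $(8-\kappa)/\kappa$ is non-resonant precisely because $\kappa\notin\bQ$, and extracting the level-three Benoit--Saint-Aubin operator from the compatibility of the two second-order equations at subleading order is the standard mechanism. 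The analogous checks go through for part~(2): the exponents $h_{1,s\pm1}-h_{1,s}-h_{1,2}$ do solve the indicial polynomial $\tfrac{\kappa}{2}\Delta(\Delta-1)+2\Delta-2h_{1,s}$ coming from the degenerate equation at $x_2$.

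The genuine gap is in the analytic step, and the fix you propose does not work as stated. After the change of variables $u=x_2-x_1$, $v=x_1$, the equation at $i=2$ reads
\[
\tfrac{\kappa}{2}\,\partial_u^2\PartF+\tfrac{2}{u}\,\partial_u\PartF-\tfrac{2}{u}\,\partial_v\PartF-\tfrac{2h_{1,2}}{u^2}\,\PartF+(\textnormal{terms regular in }u)=0 ,
\]
so it is \emph{not} a Fuchsian ODE in $u$ with the remaining variables as parameters: the singular coefficient $u^{-1}$ multiplies the transverse derivative $\partial_v$ (and, deeper in the expansion, the spectator derivatives), so ``regular-singular ODE theory applied fibrewise'' is not available, and $u^{-\Delta_+}\PartF$ does not satisfy an ODE in $u$. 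Hypoellipticity gives smoothness in the interior of $\chamber_{2N}$ but controls nothing as $u\searrow0$. The hard content of Dub\'edat's Lemma~14 (and of the analogous Flores--Kleban analysis in the second-order setting) is precisely to show that a solution obeying only the one-sided bound~\eqref{eq: detailed boundary bound h12} genuinely admits the Frobenius asymptotics: one needs quantitative interior derivative estimates of Schauder type (bounds of the form $|\partial\PartF|\lesssim|\PartF|/\dist$ near the collapsing face) and a Green's-function or maximum-principle bootstrap to prove that the limit $\hat\PartF$ exists, is smooth, and that the subleading coefficients $F_1,F_2$ exist and can be differentiated. Your proposal assumes exactly this; without it, the formal recursion determines the coefficients of a series that you have not shown represents $\PartF$, and the passage from the order-by-order relations to the PDEs~\eqref{eq: PDEs after one fusion} for the actual limit is unjustified.
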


We invite the reader to compare item~\ref{item:Dub1} of Theorem~\ref{thm: Dubedat} 
with Equation~\eqref{eq: fusion rules spec}, 
and item~\ref{item:Dub2}
with Equation~\eqref{eq: fusion rules General 1}.

\begin{remark}
When $\kappa > 0$, we have
\begin{align*}
h_{1,3} - 2h_{1,2} =  \frac{2}{\kappa} \quad > \quad \frac{\kappa-6}{\kappa} = -2h = h_{1,1} - 2h_{1,2} 
\qquad \qquad \textnormal{if and only if} \qquad \qquad \kappa \in (0,8).
\end{align*}
Therefore, the power $-2h$ always gives the leading asymptotics when $\kappa \in (0,8)$.
Item~\ref{item:Dub1} of Theorem~\ref{thm: Dubedat} concerns the subleading asymptotics, with power $h_{1,3} - 2h_{1,2} = 2/\kappa$.
Similarly, by~\eqref{eq: Kac weights BSA},  
we have $h_{1,s+1} - h_{1,s} - h_{1,2} \; > \; h_{1,s-1} - h_{1,s} - h_{1,2}$ when $\kappa \in (0,8)$ and $s \geq 2$,
so item~\ref{item:Dub2} of Theorem~\ref{thm: Dubedat} also concerns the subleading asymptotics
(when $s=1$, there is only the trivial asymptotics).
\end{remark}

The reason for the condition $\kappa \notin \bQ$ in Theorem~\ref{thm: Dubedat} 
is representation theoretic. 
Furthermore, when $\kappa$ is rational, solutions of PDE system~\eqref{eq: BSA differential equations}
could have Frobenius series with logarithmic terms, see~\cite[Theorem~\red{2}]{Flores-Kleban:Solution_space_for_system_of_null-state_PDE4}. 
In certain applications, the condition $\kappa \notin \bQ$
can be removed by a separate argument, as discussed in~\cite{Dubedat:SLE_and_Virasoro_representations_fusionB}.

In~\cite{Dubedat:SLE_and_Virasoro_representations_fusionB}, Dub\'edat only studied the case when the first two variables of $\PartF$ tend to each other.
One could iterate this to find PDEs of higher order in the other variables as well, producing solutions to general systems of type~\eqref{eq: BSA differential equations}.
However, it is not immediately clear 
whether such iterated limits depend on the order in which the limits are taken. 
Next, we discuss a  systematic method for the fusion procedure,
in which, e.g., iterated limits can be taken easily.

\subsection{\label{subsec: fusion SCCG} Fusion: systematic algebraic approach}

In this section, we consider a general collection of
functions $\PartF_\linkpatt$ that solve PDE systems of type~\eqref{eq: BSA differential equations},
when $\kappa \in (0,8) \setminus \bQ$.
They are indexed by planar \emph{valenced} link patterns $\linkpatt$,
defined in detail in~\cite[Section~\red{2}]{Peltola:Basis_for_solutions_of_BSA_PDEs_with_particular_asymptotic_properties}.
The valenced  link patterns
generalize the usual link patterns (planar pair partitions) $\alpha$, appearing in Figure~\ref{fig: nonvallp} in Section~\ref{subsec:ppfdef}. 
Roughly, a valenced link pattern is 
a collection of $\ell \in \bZnn$ links $\link{a}{b}$ in the upper half-plane, 
with endpoints $a_1,\ldots,a_\ell,b_1,\ldots,b_\ell$ on the real axis,
\begin{align*} 
\linkpatt = \{\link{a_1}{b_1},\ldots,\link{a_\ell}{b_\ell} \} ,
\end{align*} 
where for each link $\link{a}{b} \in \linkpatt$, the two endpoints $a$ and $b$ 
are distinct, $a \neq b$.
The links in $\linkpatt$ are counted by multiplicity, so $\linkpatt$ is a multiset.
We denote by $\ell_{a,b}(\linkpatt)$ the multiplicity of the link $\link{a}{b}$ in $\linkpatt$. 
See Figure~\ref{fig: vallp} for an illustration.

We denote by $\LP_\multii$ the collection of valenced link patterns $\linkpatt$ 
with given valences $\multii = (\sIndex_1, \ldots, \sIndex_n) \in \bZpos^n$, i.e., 
we have $\linkpatt \in \LP_\multii$ if and only if, for each $j \in \{1,\ldots,n\}$, 
the total number of lines in $\linkpatt$ attached to the $j$:th endpoint counted from the left equals $\sIndex_j$.
When $\multii = (1,1,\ldots,1)$ has $2N$ ones in it, we just have $\LP_\multii = \LP_N$
in our earlier notation (denoted by $\mathrm{PP}_N$ in~\cite{Peltola:Basis_for_solutions_of_BSA_PDEs_with_particular_asymptotic_properties})
--- see Figure~\ref{fig: nonvallp}.
In general, because all links in $\linkpatt$ must have a distinct pair of endpoints, 
we necessarily have
\begin{align*}
|\multii| := \sIndex_1 + \cdots + \sIndex_n \, \in \, 2 \bZnn .
\end{align*}
The parameters in the PDEs in~\eqref{eq: BSA differential equations}
are labeled by $\multidim = (s_1, \ldots, s_n)$. 
For solutions $\PartF_\linkpatt$ to~\eqref{eq: BSA differential equations}, we have
$\linkpatt \in \LP_\multii$ with 
\begin{align*}
\multii  = \multidim - 1 , \textnormal{i.e., } 
\sIndex_j = s_j - 1, \quad \textnormal{for all } j \in \{1,\ldots,n\} .
\end{align*}

\begin{figure}[h!]
\centering
\includegraphics[scale=1]{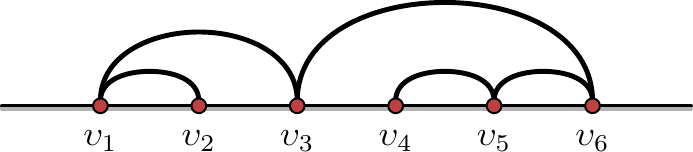}
\caption{\label{fig: vallp}
Graphical illustration of a valenced link pattern $\linkpatt \in \LP_\multii$ with valences 
$\multii = (\sIndex_1, \ldots, \sIndex_6) = (2,1,2,1,2,2)$, and $|\multii|=10$.
The function $\PartF_\linkpatt(x_1, \ldots, x_6)$ is a solution to PDE system~\eqref{eq: BSA differential equations}
with  $\multidim = \multii + 1 = (3,2,3,2,3,3)$. 
It could be thought of as a CFT correlation function of type 
$\big\langle \Phi_{1,3}(x_1) \Phi_{1,2}(x_2) \Phi_{1,3}(x_3) \Phi_{1,2}(x_4) \Phi_{1,3}(x_5) \Phi_{1,3}(x_6) \big\rangle$,
labeled by $\multidim$.}
\end{figure}

\bigskip

Next, we define a map which associates to each valenced link pattern $\linkpatt \in \LP_\multii$
a usual link pattern $\alpha = \alpha(\linkpatt) \in \LP_N$: 
\begin{align}\label{eq: link pattern to pair partition}
\LP_\multii \to \LP_N, \qquad \linkpatt \mapsto \alpha(\linkpatt) \, \in \, \LP_{(1,1,\ldots,1,1)} = \LP_N ,
\end{align}
such that 
$N = \frac{1}{2} |\multii| \in \bZnn$.
This map is defined as follows: in $\linkpatt$, for each $j \in \{1,\ldots,n\}$, 
we split the $j$:th endpoint to $\sIndex_j$ distinct points and attach the $\sIndex_j$ links 
of $\linkpatt$ ending there  to these new $\sIndex_j$ endpoints, 
so that each of them has valence one. 
This results in a link pattern in $\LP_N$, which we denote by $\alpha(\linkpatt)$.
See Figure~\ref{fig: laphamap} for an illustration.

\begin{figure}[h!]
\centering
\includegraphics[scale=.85]{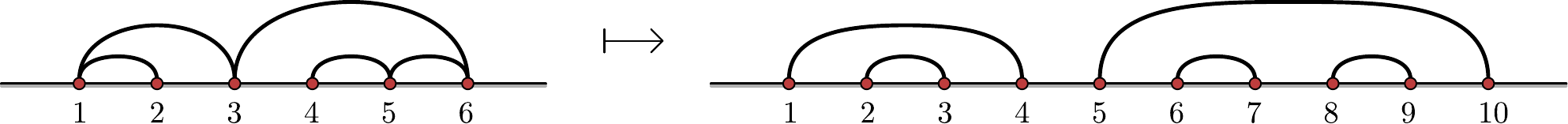}
\caption{\label{fig: laphamap}
Graphical illustration of the map $\linkpatt \mapsto \alpha(\linkpatt)$.
The function $\PartF_\linkpatt(x_1, \ldots, x_6)$ is obtained as a limit of the function
$\PartF_{\alpha(\linkpatt)}(x_1, \ldots, x_{10})$.
}
\end{figure}

Finally, the collection $\{ \PartF_\linkpatt \; | \; \linkpatt \in \LP_\multii \}$ can be defined as follows.
For $\linkpatt \in \LP_\multii$, we 
consider the function $\PartF_{\alpha(\linkpatt)}$ using the map~\eqref{eq: link pattern to pair partition}.
If $\alpha(\linkpatt) = \linkpatt \in \LP_N$, then we set $\PartF_\linkpatt := \PartF_{\alpha(\linkpatt)}$. 
Otherwise, we define $\PartF_\linkpatt$ via a limiting procedure as follows.

\begin{lem}
Let $\kappa \in (0,8) \setminus \bQ$. The following limit
determines a well-defined  smooth function of $(x_1,\ldots,x_n) \in \chamber_n$\textnormal{:}
\begin{align} \label{eq: iterated limit unnormalized}
\widetilde{\PartF}_\linkpatt (x_1,\ldots,x_n) := 
\lim_{\substack{\vspace*{1mm} \\ \hspace*{8.5mm} y_1, \ldots,y_{\sIndex_1} \to x_1 \\ y_{\sIndex_1+1}, \ldots,y_{\sIndex_1+\sIndex_2} \to x_2 \\ \vdots \\ y_{2N-\sIndex_n+1}, \ldots,y_{2N} \to x_n}}
\frac{\PartF_{\alpha(\linkpatt)} (y_1, \ldots, y_{2N})}{\bigg( \underset{1 \leq i < j \leq \sIndex_1}{\prod} (y_j-y_i)  
\underset{1 \leq i < j \leq \sIndex_2}{\prod} (y_{\sIndex_1+j}-y_{\sIndex_1+i})
\underset{1 \leq i < j \leq \sIndex_n}{\prod} (y_{2N-\sIndex_n+j}-y_{2N-\sIndex_n+i}) \bigg)^{2/\kappa}} ,
\end{align}
where $(y_1, \ldots, y_{2N}) \in \chamber_{2N}$ and $N = \frac{1}{2} |\multii|$, for $\linkpatt \in \LP_\multii$.
\end{lem}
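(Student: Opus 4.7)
My plan is to prove the lemma by iterating the fusion theorem (item~\ref{item:Dub2} of Theorem~\ref{thm: Dubedat}) one pair at a time, tracking normalization exponents carefully. Starting from $\PartF_{\alpha(\linkpatt)}$, which solves PDE system~\eqref{eq: multiple SLE PDEs}, I would order the $2N-n$ pairwise coalescences into a sequence. Within each group $j$, letting $r_j := \sIndex_1+\cdots+\sIndex_{j-1}$, I would first collapse $y_{r_j+2}\to y_{r_j+1}$, then $y_{r_j+3}\to y_{r_j+1}$, and so on, processing groups from left to right. Each fusion step is an instance of item~\ref{item:Dub2} of Theorem~\ref{thm: Dubedat}: after $m-2$ fusions inside group $j$ the $j$-th slot carries the field $\Phi_{1,m}$, and the next fusion with a $\Phi_{1,2}$ produces a subleading normalization with exponent $h_{1,m+1}-h_{1,m}-h_{1,2}$, yielding a solution of a PDE system of type~\eqref{eq: BSA differential equations} with the $j$-th label incremented by one.

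The key bookkeeping is the telescoping identity within each group,
\begin{align*}
\sum_{m=2}^{\sIndex_j}\bigl(h_{1,m+1}-h_{1,m}-h_{1,2}\bigr) \; = \; h_{1,\sIndex_j+1}-\sIndex_j\,h_{1,2} \; = \; \binom{\sIndex_j}{2}\cdot\frac{2}{\kappa},
\end{align*}
where the last equality follows from $h_{1,s}=(s-1)(2(s+1)-\kappa)/(2\kappa)$. This matches precisely the total singular power contributed by the denominator in~\eqref{eq: iterated limit unnormalized}, since the product $\prod_{1\le i<k\le\sIndex_j}(y_{r_j+k}-y_{r_j+i})^{2/\kappa}$ has $\binom{\sIndex_j}{2}$ factors, each of which vanishes at the same rate as the group collapses. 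Once all groups have been processed, $\widetilde{\PartF}_\linkpatt$ solves the PDE system~\eqref{eq: BSA differential equations} associated to $\multidim=\multii+1$; smoothness (indeed real analyticity) on $\chamber_n$ is then automatic by hypoellipticity of that system.

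The main obstacle will be to verify the subleading asymptotic hypothesis~\eqref{eq: detailed boundary bound hgen} required by Theorem~\ref{thm: Dubedat} at each fusion step, namely that the dominant Frobenius branch with exponent $h_{1,m-1}-h_{1,m}-h_{1,2}$ is absent. For the very first collision within each group this follows directly from property $\mathrm{(ASY)}$, i.e.,~\eqref{eq: multiple SLE asymptotics}: by construction of the map $\linkpatt\mapsto\alpha(\linkpatt)$ (see Figure~\ref{fig: laphamap}), any two points collapsing to the same $x_j$ are never joined by a link in $\alpha(\linkpatt)$, so the dominant $-2h_{1,2}$ branch vanishes. At subsequent steps the two indicial exponents $h_{1,m\pm 1}-h_{1,m}-h_{1,2}$ differ by $(2m-1)\cdot 2/\kappa$, which is irrational because $\kappa\notin\bQ$; hence the two Frobenius branches are non-resonant and cleanly separated, and a short inductive argument, tracking which branch emerges from the previous step, will confirm that the normalization chosen in~\eqref{eq: iterated limit unnormalized} always selects the subleading one.

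Two clean-up points remain. Independence of the order of the iterated limits should follow from a uniqueness principle for the fused PDE system analogous to Proposition~\ref{prop::purepartition_unique}: any two orderings produce solutions of~\eqref{eq: BSA differential equations} with $\multidim=\multii+1$ having identical leading boundary behaviour at all higher-order collisions, and must therefore coincide. To pass from the iterated limit to the genuinely simultaneous limit displayed in~\eqref{eq: iterated limit unnormalized}, I would invoke locally uniform convergence of the Frobenius expansion on sectors around each collision point, ensuring that every admissible joint approach of the $y_i$ towards the $x_j$ yields the same value $\widetilde{\PartF}_\linkpatt(x_1,\ldots,x_n)$.
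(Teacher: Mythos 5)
Your strategy is genuinely different from the paper's, and it contains gaps that are not merely technical. The paper does not iterate Dub\'edat's fusion theorem at all: it uses the identity $\BasisF_{\alpha(\linkpatt)} = (B^{2,2}_1)^{N}\,\PartF_{\alpha(\linkpatt)}$ from Proposition~\ref{prop in app} to transfer the whole question to the explicit Coulomb gas integral functions $\BasisF$, and then quotes \cite[Lemma~5.2 \& Proposition~5.6]{Peltola:Basis_for_solutions_of_BSA_PDEs_with_particular_asymptotic_properties}, where the simultaneous limit is computed directly on the integral representatives and shown there to be independent of the order of the limits. To your credit, your exponent bookkeeping is correct ($h_{1,m+1}-h_{1,m}-h_{1,2}=2(m-1)/\kappa$ does telescope to $\binom{\sIndex_j}{2}\cdot\tfrac{2}{\kappa}$), and the first collision in each group is handled essentially correctly: since the map $\linkpatt\mapsto\alpha(\linkpatt)$ never links two points of the same group, $\mathrm{(ASY)}$ kills the leading branch, and for $\kappa\notin\bQ$ the Frobenius series of \cite[Theorem~2]{Flores-Kleban:Solution_space_for_system_of_null-state_PDE4} upgrades this to the $\OO$-bound~\eqref{eq: detailed boundary bound h12} required by Theorem~\ref{thm: Dubedat}.

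The two real gaps are these. First, at every fusion step after the first within a group, item~\ref{item:Dub2} of Theorem~\ref{thm: Dubedat} requires the hypothesis that $\hat\PartF=\OO\big((x_2-x_1)^{h_{1,s+1}-h_{1,s}-h_{1,2}-\epsilon}\big)$, i.e.\ that the coefficient of the \emph{more} singular channel $\Phi_{1,s-1}$ vanishes. Non-resonance of the two indicial exponents does not give this: a generic solution of the fused system carries both channels, and ``tracking which branch emerges from the previous step'' does not explain why the $\Phi_{1,s-1}$ component is absent. That vanishing is exactly the fusion-rule selection encoded by the link pattern, it is the substantive content of the lemma, and it is an \emph{input} to Dub\'edat's theorem, not an output of it; in the paper's route it is supplied by the quantum group analysis behind the cited reference. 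Second, your order-independence argument appeals to a uniqueness principle for the higher-order systems~\eqref{eq: BSA differential equations} analogous to Proposition~\ref{prop::purepartition_unique}, but no such principle is available: as noted at the end of Section~\ref{subsec: fusion SCCG}, there is no analogue of Theorem~\ref{thm:Steven} for the higher-order PDEs, because the Flores--Kleban elliptic estimates are specific to the second-order system. Both obstructions are what the explicit integral construction circumvents, since there the simultaneous limit and its order-independence can be verified by hand on the formulas.
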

\begin{proof}
By~\cite[Lemma~\red{5.2} \& Proposition~\red{5.6}]{Peltola:Basis_for_solutions_of_BSA_PDEs_with_particular_asymptotic_properties}, the limit 
\begin{align*}
\lim_{\substack{\vspace*{1mm} \\ \hspace*{8.5mm} y_1, \ldots,y_{\sIndex_1} \to x_1 \\ y_{\sIndex_1+1}, \ldots,y_{\sIndex_1+\sIndex_2} \to x_2 \\ \vdots \\ y_{2N-\sIndex_n+1}, \ldots,y_{2N} \to x_n}}
\frac{\BasisF_{\alpha(\linkpatt)} (y_1, \ldots, y_{2N})}{\bigg( \underset{1 \leq i < j \leq \sIndex_1}{\prod} (y_j-y_i)  
\underset{1 \leq i < j \leq \sIndex_2}{\prod} (y_{\sIndex_1+j}-y_{\sIndex_1+i})
\underset{1 \leq i < j \leq \sIndex_n}{\prod} (y_{2N-\sIndex_n+j}-y_{2N-\sIndex_n+i}) \bigg)^{2/\kappa}} 
\end{align*}
exists independently of the order of the limits taken, and equals
\begin{align*}
\left( \frac{\qnum{2}}{q - q^{-1}} \right)^{2N}
\left( \prod_{j=1}^n \frac{1}{\qfact{\sIndex_j+1}} \right) 
\times
\BasisF_\linkpatt(x_1,\ldots,x_n) ,
\end{align*}
where $\BasisF_{\alpha(\linkpatt)}$ and $\BasisF_\linkpatt$ are certain Coulomb gas integral functions discussed 
in~\cite[Section~\red{5}]{Peltola:Basis_for_solutions_of_BSA_PDEs_with_particular_asymptotic_properties}
(see also Appendix~\ref{app: Coulomb gas}),
and 
\begin{align*}
\qnum{m} := \frac{q^m - q^{-m}}{q-q^{-1}} , \qquad \qquad 
\qfact{m} := \qnum{1} \qnum{2} \cdots \qnum{m}  , \qquad \qquad 
\textnormal{for }  m \in \bZnn \; \textnormal{ and } \; q = e^{\ii \pi 4 / \kappa} ,
\end{align*}
are $q$-integers and $q$-factorials.
On the other hand, by Proposition~\ref{prop in app} in Appendix~\ref{app: Coulomb gas}
(see also~\cite[Section~\red{6}]{Peltola:Basis_for_solutions_of_BSA_PDEs_with_particular_asymptotic_properties}), we have
\begin{align*}
\BasisF_{\alpha(\linkpatt)} = (B^{2,2}_1)^{N} \PartF_{\alpha(\linkpatt)} 
, \qquad \qquad 
\textnormal{where} \qquad
B^{2,2}_1 = 
\frac{\Gamma(1-4/\kappa)^2}{\Gamma(2-8/\kappa)} ,
\end{align*}
so the limit~\eqref{eq: iterated limit unnormalized} also exists and equals
\begin{align*}
(B^{2,2}_1)^{-N}
\left( \frac{\qnum{2}}{q - q^{-1}} \right)^{2N}
\left( \prod_{i=1}^n \frac{1}{\qfact{\sIndex_i+1}} \right) 
\times \BasisF_\linkpatt(x_1,\ldots,x_n) 
=: \widetilde{\PartF}_\linkpatt (x_1,\ldots,x_n) .
\end{align*}
This proves the lemma.
\end{proof}

\begin{defn} \label{defn:Zlinkpatt}
It turns out to be natural to define $\PartF_\linkpatt \colon \chamber_n \to \bC$  via the limit in~\eqref{eq: iterated limit unnormalized}
with a different normalization: we set
\begin{align} 
\label{eq: iterated limit}
\PartF_\linkpatt (x_1,\ldots,x_n) 
:= \; & \left( \frac{q - q^{-1}}{\qnum{2}} \right)^{2N}
\left( \prod_{i=1}^n \qfact{\sIndex_i+1} \right) 
\times \widetilde{\PartF}_\linkpatt (x_1,\ldots,x_n)  \\
\label{eq: relation of F and Z}
= \; & (B^{2,2}_1)^{-N} \; \BasisF_\linkpatt(x_1,\ldots,x_n) ,
\end{align}
where again, $\BasisF_\linkpatt$ is a certain Coulomb gas integral function from~\cite[Section~\red{5}]{Peltola:Basis_for_solutions_of_BSA_PDEs_with_particular_asymptotic_properties}.
(We will not use the precise form of $\BasisF_\linkpatt$, so we omit its definition here.
For the interested reader, the detailed definition can be found 
in~\cite[Theorems~\red{3.1},~\red{5.1}, and~\red{5.3}]{Peltola:Basis_for_solutions_of_BSA_PDEs_with_particular_asymptotic_properties}.)
\end{defn}

In the next theorem,  we summarize salient properties of these functions.
We invite the reader to compare them with the properties
$\mathrm{(PDE)}$, $\mathrm{(COV)}$, and 
$\mathrm{(ASY)}$ for the multiple $\SLEk$ pure partition functions,
stated in~(\ref{eq: multiple SLE Mobius covariance},~\ref{eq: multiple SLE PDEs},~\ref{eq: multiple SLE asymptotics}).

\begin{theorem} \label{thm: big prop}
\textnormal{\cite[Theorem~\red{5.3} \& Proposition~\red{5.6}]{Peltola:Basis_for_solutions_of_BSA_PDEs_with_particular_asymptotic_properties}}
\; 
Let $\kappa \in (0,8) \setminus \bQ$. 
The collection
\begin{align*}
\{\PartF_\linkpatt \; | \; \linkpatt \in \LP_\multii, \, \multii \in \bZpos^n,  \, n \in \bZnn\}
\end{align*}
of functions 
defined via~\textnormal{(\ref{eq: iterated limit unnormalized},~\ref{eq: iterated limit})}
have the following properties: 
\begin{itemize}
\item[\red{$\mathrm{(PDE)}$}] {\bf \textit{Partial differential equations}}:  
For any $\linkpatt \in \LP_\multii$, the function
$\PartF_\linkpatt$ satisfies PDE system~\eqref{eq: BSA differential equations} with $\multidim = \multii + 1 $. 

\item[\red{$\mathrm{(COV)}$}] {\bf \textit{M\"obius covariance}}: 
The function $\PartF_\linkpatt$ is M\"obius covariant:
\begin{align}\label{eq: Mobius covariance}
\PartF_\linkpatt(x_{1},\ldots, x_n) = 
\prod_{i=1}^{n}\Mob'(x_{i})^{h_{1,s_{i}}} \times 
\PartF_\linkpatt \left(\Mob(x_{1}),\ldots,\Mob(x_n)\right) ,
\end{align}
for all M\"obius maps $\Mob \colon \bH \to \bH$ 
such that $\Mob(x_{1}) < \cdots < \Mob(x_n)$.

\item[\red{$\mathrm{(ASY)}$}] {\bf \textit{Asymptotics}}:  For any $j \in \{1,2,\ldots,n-1 \}$, 
$\; m = \frac{1}{2}\left(s_{j}+s_{j+1}-\projdmn-1\right) 
\in \{0,1,\ldots,\min(s_{j},s_{j+1}) - 1 \}$, and
$\xi \in (x_{j-1},x_{j+2})$,
the function $\PartF_\linkpatt$ has the asymptotics property
\begin{align}\label{eq: asymptotic properties}
\lim_{x_j,x_{j+1}\to\xi}
\frac{\PartF_\linkpatt(x_1,\ldots,x_n)}{(x_{j+1}-x_j)^{h_{1,\projdmn} - h_{1,s_j} - h_{1,s_{j+1}}}}
=\; & \begin{cases}
0 , \quad 
& \textnormal{if } \ell_{j,j+1}(\linkpatt) < m ,\\
\frac{\constantfromdiagram{\projdmn}{s_j}{s_{j+1}} \; B^{s_j,s_{j+1}}_{\projdmn}}{ (B^{2,2}_1)^{m}} 
\times 
\PartF_{\hat{\linkpatt}}(x_1,\ldots,x_{j-1},\xi,x_{j+2}\ldots,x_n) ,
& \textnormal{if } \ell_{j,j+1}(\linkpatt) = m ,
\end{cases}
\end{align}
where $\hat{\linkpatt} = \linkpatt\removeLink (m\times\link{j}{j+1})$ 
denotes the valenced link pattern obtained from $\linkpatt$ by removing 
$m$ links $\link{j}{j+1}$ from it (and merging the $j$:th and $(j+1)$:th endpoints if no links remain between them,
and removing endpoints if they become empty),  
and the multiplicative constants are non-zero and explicit:
\begin{align*}
B^{s_j,s_{j+1}}_{\projdmn} = \; & \frac{1}{m!} \; \prod_{u=1}^m
    \frac{\Gamma\big( 1 - \frac{4}{\kappa}(s_j-u)\big) \; \Gamma\big( 1 - \frac{4}{\kappa}(s_{j+1}-u)\big) \; \Gamma\big( 1 + \frac{4}{\kappa}u \big)}
        {\Gamma\big( 1 + \frac{4}{\kappa}\big) \; \Gamma\big( 2 - \frac{4}{\kappa}(s_j+s_{j+1}-m-u)\big)} , \\
\constantfromdiagram{\projdmn}{s_j}{s_{j+1}} = \; &
\frac{\qnum{2}^m\qfact{s_j-1}\qfact{s_{j+1}-1}\qfact{s_j+s_{j+1}-2m-1}}{\qfact{s_j-1-m}\qfact{s_{j+1}-1-m}\qfact{s_j+s_{j+1}-m-1}} ,
\qquad \qquad 
\textnormal{where}
\qquad \qquad
m = \frac{\left(s_{j}+s_{j+1}-\projdmn-1\right)}{2} .
\end{align*}

\item[\red{$\mathrm{(CAS)}$}] 
 {\bf \textit{Cascade property}}:  
For any $1 \leq j < k \leq n$ and $\xi \in (x_{j-1},x_{k+1})$, we have
\begin{align*}
\lim_{x_j, x_{j+1}, \ldots, x_k \to \xi}
\frac{\PartF_\linkpatt(x_1,\ldots,x_n)}{\PartF_\tau(x_j,\ldots,x_k)}
= \; & \PartF_{\linkpatt \removeLink \tau}(x_1,\ldots,x_{j-1},\xi,x_{k+1},\ldots,x_n) ,
\end{align*}
where $\tau$ denotes the sub-link pattern of $\linkpatt$ between the $j$:th and $k$:th endpoints,
and $\linkpatt \removeLink \tau$ denotes the link pattern obtained from $\linkpatt$ by removing the  sub-link pattern  $\tau$, 
as detailed in~\textnormal{\cite[Section~\red{5.3}]{Peltola:Basis_for_solutions_of_BSA_PDEs_with_particular_asymptotic_properties}}.
\end{itemize}
\end{theorem}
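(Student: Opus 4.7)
The plan is to establish the four properties by leveraging both characterizations of $\PartF_\linkpatt$ given in Definition~\ref{defn:Zlinkpatt}: the iterated-limit characterization~\eqref{eq: iterated limit unnormalized} in terms of multiple $\SLEk$ pure partition functions $\PartF_{\alpha(\linkpatt)}$, and the identification~\eqref{eq: relation of F and Z} with a Coulomb gas integral $\BasisF_\linkpatt$. The first viewpoint connects the statements to the known second-order theory of Section~\ref{sec: Multiple SLE partition functions} together with Dub\'edat's fusion Theorem~\ref{thm: Dubedat}; the second allows direct contour-integral manipulations where needed.

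For property $\mathrm{(PDE)}$, I would iterate item~\ref{item:Dub2} of Theorem~\ref{thm: Dubedat} starting from the base case $\PartF_{\alpha(\linkpatt)}$, which satisfies the PDE system~\eqref{eq: BSA differential equations} with $\multidim = (2,2,\ldots,2)$. Merging $\sIndex_j$ variables at $x_j$ amounts to $\sIndex_j - 1$ successive nearest-neighbor fusions; at each step the local Kac index jumps $s \mapsto s+1$ with the exact Frobenius exponent $h_{1,s+1} - h_{1,s} - h_{1,2}$ that a short calculation shows is matched precisely by the accumulated power of $(y_{i+1}-y_i)^{2/\kappa}$ produced by the Vandermonde denominator in~\eqref{eq: iterated limit unnormalized}. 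Since the iterated limits are independent of order (this requires separate justification via the Coulomb gas formula, cf.~\cite[Section~\red{5}]{Peltola:Basis_for_solutions_of_BSA_PDEs_with_particular_asymptotic_properties}), one obtains the BSA system with $\multidim = \multii + 1$ at every endpoint. For property $\mathrm{(COV)}$, I would track how the Vandermonde denominator transforms: the expansion $\Mob(y_j)-\Mob(y_i) = \Mob'(x_k)(y_j-y_i)(1+O(y_j-y_i))$ contributes a factor $\Mob'(x_k)^{\sIndex_k(\sIndex_k-1)/\kappa}$ when combined over the $\binom{\sIndex_k}{2}$ pairs, which together with the $\sIndex_k$ factors of $\Mob'(x_k)^{h_{1,2}}$ inherited from covariance of $\PartF_{\alpha(\linkpatt)}$ builds up to exactly $\Mob'(x_k)^{h_{1,s_k}}$ using $h_{1,s} = \tfrac{(s-1)(2(s+1)-\kappa)}{2\kappa}$.

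The hard part is the asymptotics $\mathrm{(ASY)}$ with the explicit constants $\constantfromdiagram{\projdmn}{s_j}{s_{j+1}}$ and $B^{s_j,s_{j+1}}_{\projdmn}$. Here the iterated-limit definition is poorly suited, because one would need to commute a new pair-limit $x_j, x_{j+1} \to \xi$ past the internal limits defining $\PartF_\linkpatt$, and the correct leading exponent depends on the combinatorial quantity $\ell_{j,j+1}(\linkpatt)$. My plan is to pass to the Coulomb gas representation~\eqref{eq: relation of F and Z}: the multiplicity $\ell_{j,j+1}(\linkpatt) = m$ corresponds to precisely $m$ screening-charge contours linking $x_j$ and $x_{j+1}$. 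When $x_j, x_{j+1} \to \xi$ these $m$ contours collapse onto $\xi$, producing a Selberg-type integral whose evaluation yields the factor $B^{s_j, s_{j+1}}_{\projdmn}$. If fewer than $m$ contours can be pinched (i.e.\ $\ell_{j,j+1}(\linkpatt) < m$), the Frobenius exponent is too negative and the normalized limit vanishes. The quantum-combinatorial factor $\constantfromdiagram{\projdmn}{s_j}{s_{j+1}}$ arises from the translation between the Coulomb gas basis and the valenced-link-pattern basis, i.e.\ from Clebsch--Gordan type coefficients on the $\Uqsltwo$-side mediated by the map $\linkpatt \mapsto \alpha(\linkpatt)$. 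This step is the main obstacle and is the core computation of \cite[Theorem~\red{5.3}]{Peltola:Basis_for_solutions_of_BSA_PDEs_with_particular_asymptotic_properties}; I would rely on the detailed asymptotic expansion of $\BasisF_\linkpatt$ derived there.

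Finally, property $\mathrm{(CAS)}$ would follow from iterating $\mathrm{(ASY)}$, by realizing the cascade limit $x_j, \ldots, x_k \to \xi$ as a composition of successive nearest-neighbor collisions $x_{k-1}, x_k \to x_{k-1}$, $x_{k-2}, x_{k-1} \to x_{k-2}$, etc. At each step $\mathrm{(ASY)}$ extracts one sub-link of $\tau$, and the products of multiplicative constants from each collision regroup into the normalization $\PartF_\tau(x_j,\ldots,x_k)$ in the denominator. The consistency of this telescoping across different orderings of the limits is guaranteed by the same commutation-of-limits argument used in $\mathrm{(PDE)}$, and delivers $\PartF_{\linkpatt \removeLink \tau}$ on the remaining variables $(x_1,\ldots,x_{j-1},\xi,x_{k+1},\ldots,x_n)$.
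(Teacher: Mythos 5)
The paper's own ``proof'' of this theorem is a citation: all four properties are imported from \cite[Theorem~5.3 \& Proposition~5.6]{Peltola:Basis_for_solutions_of_BSA_PDEs_with_particular_asymptotic_properties} via the identification~\eqref{eq: relation of F and Z} of $\PartF_\linkpatt$ with the Coulomb gas functions $\BasisF_\linkpatt$. Your proposal reconstructs part of the argument along a different route --- $\mathrm{(PDE)}$ by iterating Theorem~\ref{thm: Dubedat}, $\mathrm{(COV)}$ by a direct Vandermonde computation (which is correct: the exponent bookkeeping $\sIndex_k h_{1,2} + \sIndex_k(\sIndex_k-1)/\kappa = h_{1,\sIndex_k+1}$ checks out), and $\mathrm{(CAS)}$ by telescoping $\mathrm{(ASY)}$ --- while for $\mathrm{(ASY)}$ itself, with its explicit constants, you explicitly defer to the very same external computation the paper cites. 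So the proposal is not an independent proof of the core content, and two of its steps have concrete gaps.

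First, for $\mathrm{(PDE)}$: each application of item~\ref{item:Dub2} of Theorem~\ref{thm: Dubedat} requires the decay hypothesis $\PartF = \OO\big((x_2-x_1)^{h_{1,s+1}-h_{1,s}-h_{1,2}-\epsilon}\big)$. For the first fusion this follows from property $\mathrm{(ASY)}$ of the second-order theory (Theorem~\ref{thm::purepartition_existence_forallK}), but for every subsequent fusion the input is a \emph{partially fused} function whose boundary behavior is exactly what the theorem being proved is supposed to deliver. You never verify these hypotheses, and doing so essentially requires the Frobenius-series control of $\BasisF_\linkpatt$ from the Coulomb gas analysis --- i.e., the bootstrap is not closed without the external reference. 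Note also that the paper uses the implication in the opposite direction (Theorem~\ref{thm: big prop} is invoked to recover the conclusions of Theorem~\ref{thm: Dubedat} for these functions), so one should be careful not to argue circularly. Second, for $\mathrm{(CAS)}$: the cascade limit is a \emph{joint} limit $x_j,\ldots,x_k\to\xi$ normalized by the full function $\PartF_\tau(x_j,\ldots,x_k)$, not a pure power. Realizing it as a sequence of nearest-neighbor collisions each governed by $\mathrm{(ASY)}$ only controls the leading term of numerator and denominator along one iterated-limit path; to conclude that the ratio converges (and converges jointly, independently of the path) you need uniform control of the subleading corrections, which $\mathrm{(ASY)}$ as stated does not provide. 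This is precisely why the paper invokes Proposition~5.6 of the cited work, which proves $\mathrm{(CAS)}$ directly from the integral representation rather than by iterating the two-point asymptotics.
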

\begin{proofIdea}
The first three asserted properties follow from~\cite[Theorem~\red{5.3}]{Peltola:Basis_for_solutions_of_BSA_PDEs_with_particular_asymptotic_properties} and the relationship~\eqref{eq: relation of F and Z}
of the functions $\PartF_\linkpatt$ with the functions $\BasisF_\linkpatt$ considered in~\cite{Peltola:Basis_for_solutions_of_BSA_PDEs_with_particular_asymptotic_properties},
and the cascade property $\mathrm{(CAS)}$  then
follows from~\cite[Proposition~\red{5.6}]{Peltola:Basis_for_solutions_of_BSA_PDEs_with_particular_asymptotic_properties}.
\end{proofIdea}

We invite the reader to compare property~$\mathrm{(ASY)}$ with Equation~\eqref{eq: fusion rules General 2}
in Section~\ref{subsec:FusionCFT} (and to see Proposition~\ref{prop: OPE} in Section~\ref{subsec:OPE for ppf}).

\begin{remark} \label{rem:fusion}
Asymptotics properties~\eqref{eq: asymptotic properties} are consistent with the asymptotics discussed in Section~\ref{sec: Multiple SLE partition functions}
for the functions $\PartF_\alpha$: 
\begin{itemize}
\item If $s_j = s_{j+1} = 2$ and $\projdmn = 1$  (so $m = 1$),
then~\eqref{eq: asymptotic properties} 
agrees with~\eqref{eq: multiple SLE asymptotics} (see also~\eqref{eq: fusion rules spec}):  we have
\begin{align*}
B^{s_j,s_{j+1}}_{\projdmn} = B^{2,2}_{1} , \qquad\qquad
\constantfromdiagram{\projdmn}{s_j}{s_{j+1}} = \constantfromdiagram{1}{2}{2} = 1 , 
\qquad\qquad
h_{1,1} - 2h_{1,2} = \frac{\kappa-6}{\kappa} ,
\end{align*}
and asymptotics property $\mathrm{(ASY)}$ reads
\begin{align*}
\lim_{x_j,x_{j+1}\to\xi}
\frac{\PartF_\alpha(x_1,\ldots,x_{2N})}{(x_{j+1}-x_j)^{(\kappa-6)/\kappa}}
=\; & \begin{cases}
0, \quad 
& \textnormal{if } \ell_{j,j+1}(\alpha)  =  0 , \\
\PartF_{\hat{\alpha}}(x_1,\ldots,x_{j-1},x_{j+2}\ldots,x_{2N}) ,
& \textnormal{if } \ell_{j,j+1}(\alpha) = 1,\\
\end{cases}
\end{align*}
where $\hat{\alpha} = \alpha \removeLink \link{j}{j+1}$. 

\item If $s_j = s_{j+1} = 2$ and $\projdmn = 3$ (so $m = 0$),
then~\eqref{eq: asymptotic properties} 
agrees with~\eqref{eq: fusion limit example} (see also~\eqref{eq: fusion rules spec}): we have
\begin{align*}
B^{s_j,s_{j+1}}_{\projdmn} = B^{2,2}_{3} = 1 , \qquad\qquad
\constantfromdiagram{\projdmn}{s_j}{s_{j+1}} = \constantfromdiagram{3}{2}{2} = 1 , \qquad\qquad
h_{1,3} - 2h_{1,2} = \frac{2}{\kappa} ,
\end{align*}
and asymptotics property $\mathrm{(ASY)}$ reads
\begin{align*}
\lim_{x_j,x_{j+1}\to\xi}
\frac{\PartF_\alpha(x_1,\ldots,x_{2N})}{(x_{j+1}-x_j)^{2/\kappa}}
=\; &
\PartF_{\hat{\alpha}}(x_1,\ldots,x_{j-1},\xi,x_{j+2}\ldots,x_{2N}) ,
\qquad \textnormal{if } \ell_{j,j+1}(\alpha) = 0,
\end{align*}
where $\hat{\alpha}$ is obtained from $\alpha$ via fusion of the points $j$ and $j+1$:
e.g., $\quad \alpha = \vcenter{\hbox{\includegraphics[scale=0.35]{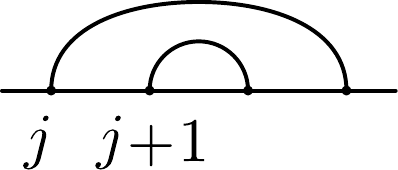}}}$
$\quad \longmapsto \quad$
$\hat{\alpha} = \hbox{\includegraphics[scale=0.35]{figures/link-fused.pdf}}$.
\end{itemize}
\end{remark}

In fact, it follows from~\cite[Theorem~\red{2}]{Flores-Kleban:Solution_space_for_system_of_null-state_PDE4} that
the functions $\PartF_\alpha$ with $\kappa \in (0,8) \setminus \bQ$ have a Frobenius series of the form
\begin{align*}
\PartF_\alpha(x_1,\ldots,x_{2N})
\; = \;  (x_{j+1}-x_j)^{(\kappa-6)/\kappa} \; F_{1,1}(x_1,\ldots,x_{2N})
\; + \;  (x_{j+1}-x_j)^{2/\kappa} \; F_{1,3}(x_1,\ldots,x_{2N}) .
\end{align*}
We invite the reader to compare this 
with the fusion rules~\eqref{eq: fusion rules spec} in Section~\ref{subsec:FusionCFT}
and the observations in Remark~\ref{rem:fusion}. 
When $\kappa \in \bQ$, the above series could contain logarithmic terms, 
see~\cite{Flores-Kleban:Solution_space_for_system_of_null-state_PDE4}.

\bigskip

For those functions $\PartF$  that are (obtained as limits of) functions in $\Sol_N$, 
the conclusions in Theorem~\ref{thm: Dubedat} follow from Theorem~\ref{thm: big prop} combined with 
Theorems~\ref{thm:Steven} and~\ref{thm::purepartition_existence_forallK}.
Indeed, Theorem~\ref{thm:Steven} says that $\dmn \Sol_N  = \Catalan_N$, and
Theorem~\ref{thm::purepartition_existence_forallK} gives a basis of cardinality $\Catalan_N$
for this space, which coincides with the collection $\{\PartF_\alpha \; | \; \alpha\in \LP_N\}$
that appears as a special case in Theorem~\ref{thm: big prop}.
Then, items $\mathrm{(PDE)}$ and $\mathrm{(ASY)}$ in 
Theorem~\ref{thm: big prop} show that all these functions (and their limits) satisfy the conclusions in Theorem~\ref{thm: Dubedat}.

\begin{quest} \label{quest: general solutions}
Are there other solutions to the second order PDE system~\eqref{eq: multiple SLE PDEs} than the ones belonging to the solution space $\Sol_N$?
\end{quest}

By definition~\eqref{eq: solution space}, all solutions of~\eqref{eq: multiple SLE PDEs} which satisfy in addition
the M\"obius covariance~\eqref{eq: multiple SLE Mobius covariance} and growth bound~\eqref{eqn::powerlawbound}
belong to $\Sol_N$. 
However, PDE system~\eqref{eq: multiple SLE PDEs} does have other solutions too ---
for instance, solutions satisfying other M\"obius covariance properties (where infinity is a special point).
This kind of solutions are also discussed in~\cite{Kytola-Peltola:Conformally_covariant_boundary_correlation_functions_with_quantum_group,
Peltola:Basis_for_solutions_of_BSA_PDEs_with_particular_asymptotic_properties},
and all of these solutions 
 satisfy the conclusions in Theorem~\ref{thm: Dubedat}.
The real problem in Question~\ref{quest: general solutions} is therefore whether there exist solutions 
to PDEs~\eqref{eq: multiple SLE PDEs} other than the Coulomb gas integral functions studied 
in~\cite{Flores-Kleban:Solution_space_for_system_of_null-state_PDE1, Flores-Kleban:Solution_space_for_system_of_null-state_PDE2,
Flores-Kleban:Solution_space_for_system_of_null-state_PDE3, Flores-Kleban:Solution_space_for_system_of_null-state_PDE4,
Kytola-Peltola:Conformally_covariant_boundary_correlation_functions_with_quantum_group,
Peltola:Basis_for_solutions_of_BSA_PDEs_with_particular_asymptotic_properties}.

\begin{quest}
Are there other solutions to the general PDE systems~\eqref{eq: BSA differential equations} than the ones obtained as limits of functions in $\Sol_N$?
\end{quest}

For the PDEs of higher order, there is no analogue of Theorem~\ref{thm:Steven}.
Indeed, to prove the upper bound in Theorem~\ref{thm:Steven}, elliptic PDE theory is used, 
which seems quite specific to the case of the second order PDE system~\eqref{eq: multiple SLE PDEs}.

\subsection{\label{subsec:OPE for ppf} Operator product expansion for multiple SLE partition functions} 

We conclude with specific fusion rules for the functions $\PartF_\linkpatt$.
From the point of view of representation theory, the following expansion is not 
surprising~\cite{Peltola:Basis_for_solutions_of_BSA_PDEs_with_particular_asymptotic_properties}, 
but analytical verification for it is challenging.
We only know a proof using the representation theory of 
the quantum group $\Uqsltwo$ and the ``spin chain~--~Coulomb gas correspondence'' 
of~\cite{Kytola-Peltola:Conformally_covariant_boundary_correlation_functions_with_quantum_group}
(with $q = e^{\ii \pi 4 / \kappa}$).

\begin{prop} \label{prop: OPE}
Let $\kappa \in (0,8) \setminus \bQ$.
The collection  $\{\PartF_\linkpatt \; | \; \linkpatt \in \LP_\multii, \, \multii \in \bZpos^n,  \, n \in \bZnn\}$
of functions 
defined via~\textnormal{(\ref{eq: iterated limit unnormalized},~\ref{eq: iterated limit})}
satisfies a closed operator product expansion in the following sense:
\begin{itemize}
\item[\red{$\mathrm{(OPE)}$}]  
For any $(x_1,\ldots,x_n) \in \chamber_n$,
and for all $j \in \{1, \ldots, n-1 \}$ and $\xi \in (x_{j-1}, x_{j+2})$, we have
\begin{align} \label{eq: OPE expansion}
\begin{split}
\PartF_\linkpatt (x_1,\ldots,x_n)
= \; & \sum_{\projdmn \in S_{j,j+1}} 
\frac{C_{s_j, s_{j+1}}^{\projdmn}}{(x_{j+1} - x_{j})^{h_{1,s_j} + h_{1,s_{j+1}} - h_{1,\projdmn}}} \;
\PartF_{\linkpatt\removeLink (m\times\link{j}{j+1})}(x_1,\ldots,x_{j-1},\xi,x_{j+2}\ldots,x_n) \\
\; & + o \left( |x_{j+1} - x_{j}|^{\Delta} \right)  ,
\qquad \textnormal{as } 
x_j, x_{j+1} \to \xi,
\end{split}
\end{align}
where we use the notation
\begin{align*}
m = \; &  \frac{s_{j}+s_{j+1}-\projdmn-1}{2}
\in \{0,1,\ldots,\min(s_{j},s_{j+1}) - 1 \} , \\
\projdmn = \; & s_{j}+s_{j+1} - 2 m - 1
\; \in \; \big\{ |s_{j+1} - s_{j}| + 1 , \, |s_{j+1} - s_{j}| + 3 , \, \ldots , \, s_{j}+s_{j+1} - 3 , \, s_{j}+s_{j+1} - 1 \big\} \; =: \; S_{j,j+1} , \\
\Delta = \; & h_{1,s_{j}+s_{j+1} - 1} - h_{1,s_j} - h_{1,s_{j+1}} ,
\end{align*}
and the structure constants are explicit:
\begin{align*}
C_{s_j, s_{j+1}}^{\projdmn} 
=
\begin{cases}
0, \quad 
& \textnormal{if } \ell_{j,j+1}(\linkpatt) < m , \\
\frac{\constantfromdiagram{\projdmn}{s_j}{s_{j+1}} \; B^{s_j,s_{j+1}}_{\projdmn}}{ (B^{2,2}_1)^{m}} ,
& \textnormal{if } \ell_{j,j+1}(\linkpatt) \geq m.\\
\end{cases}
\end{align*}
\end{itemize}
\end{prop}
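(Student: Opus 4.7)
\begin{proofIdea}
My plan is to establish the operator product expansion via the spin chain -- Coulomb gas correspondence of~\cite{Kytola-Peltola:Conformally_covariant_boundary_correlation_functions_with_quantum_group}, which realizes $\BasisF_\linkpatt$ (and hence, through~\eqref{eq: relation of F and Z}, the function $\PartF_\linkpatt$) as a matrix element associated to a distinguished vector in a tensor product of $\Uqsltwo$-modules with $q = e^{\ii \pi 4/\kappa}$. Each variable $x_i$ corresponds to a factor $\primaryRep_{s_i}$ of dimension $s_i$, and the valenced link pattern $\linkpatt$ encodes which invariant vector is picked out of the tensor product.

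First, I would translate the OPE problem into a question about the decomposition of the adjacent tensor product $\primaryRep_{s_j} \otimes \primaryRep_{s_{j+1}} \cong \bigoplus_{\projdmn \in S_{j,j+1}} \primaryRep_{\projdmn}$ into irreducibles. This Clebsch--Gordan decomposition of $\Uqsltwo$ is semisimple precisely because $q$ is not a root of unity, i.e., $\kappa \notin \bQ$; each summand $\primaryRep_\projdmn$ corresponds to exactly one exponent $h_{1,\projdmn} - h_{1,s_j} - h_{1,s_{j+1}}$ in the Frobenius expansion at the diagonal $x_j = x_{j+1}$. Projecting onto $\primaryRep_{\projdmn}$ on the representation-theoretic side corresponds, on the analytic side, to dividing by $(x_{j+1}-x_j)^{h_{1,\projdmn} - h_{1,s_j} - h_{1,s_{j+1}}}$ and passing to the limit $x_j, x_{j+1} \to \xi$.

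Next, I would use asymptotics property~$\mathrm{(ASY)}$ from Theorem~\ref{thm: big prop} to identify the coefficient of each term. For the exponent labeled by $\projdmn$, with $m = (s_j + s_{j+1} - \projdmn - 1)/2$, the limit~\eqref{eq: asymptotic properties} yields either zero (when $\ell_{j,j+1}(\linkpatt) < m$, which also matches the vanishing of the corresponding projection in the quantum group picture due to incompatibility with the valence structure) or the explicit multiple of $\PartF_{\linkpatt \removeLink (m\times \link{j}{j+1})}$ with constant $\constantfromdiagram{\projdmn}{s_j}{s_{j+1}} B^{s_j,s_{j+1}}_\projdmn / (B^{2,2}_1)^m$, matching the claimed structure constant $C_{s_j,s_{j+1}}^\projdmn$.

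The main obstacle is assembling these individual leading-order asymptotics into a single expansion with the uniform remainder $o(|x_{j+1} - x_j|^\Delta)$, where $\Delta$ corresponds to the largest exponent $\projdmn = s_j + s_{j+1} - 1$. Property~$\mathrm{(ASY)}$ only gives the leading behavior for each $\projdmn$ separately, so one must rule out additional exponents (between the $h_{1,\projdmn} - h_{1,s_j} - h_{1,s_{j+1}}$) and any logarithmic corrections. For this I would perform an indicial analysis of the hypoelliptic PDE system~\eqref{eq: BSA differential equations}, restricted to a Frobenius expansion in the variable $(x_{j+1}-x_j)$: its characteristic roots are exactly the $h_{1,\projdmn} - h_{1,s_j} - h_{1,s_{j+1}}$ for $\projdmn \in S_{j,j+1}$, and the non-resonance condition forbidding logarithmic terms reduces to the requirement that no two of these roots differ by a positive integer, which amounts precisely to $\kappa \notin \bQ$. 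Combining this indicial step with the explicit Coulomb gas integral representation of $\BasisF_\linkpatt$ from~\cite{Peltola:Basis_for_solutions_of_BSA_PDEs_with_particular_asymptotic_properties} should furnish the required uniform control on the remainder and complete the argument.
\end{proofIdea}
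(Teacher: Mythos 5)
Your proposal follows essentially the same route as the paper's own (sketched) argument: both rest on the quantum group symmetry / spin chain--Coulomb gas correspondence of~\cite{Kytola-Peltola:Conformally_covariant_boundary_correlation_functions_with_quantum_group} to decompose $\PartF_\linkpatt$ into channel contributions indexed by the Clebsch--Gordan summands, with property $\mathrm{(ASY)}$ identifying the leading coefficient (the term $m = \ell_{j,j+1}(\linkpatt)$) and the explicit Coulomb gas integral construction of~\cite{Peltola:Basis_for_solutions_of_BSA_PDEs_with_particular_asymptotic_properties} supplying the subleading terms and the remainder control. Your supplementary indicial/Frobenius analysis of the PDE system to exclude intermediate exponents and logarithmic corrections for $\kappa \notin \bQ$ is consistent with the Frobenius series structure known from~\cite{Flores-Kleban:Solution_space_for_system_of_null-state_PDE4}, but it does not alter the approach.
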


\begin{proofIdea}
The rough idea is to write the function $\PartF_\linkpatt$ as a sum of terms, each of which has a prescribed asymptotics as claimed in 
the assertion~\eqref{eq: OPE expansion}. This is established using the quantum group symmetry
developed in~\cite{Kytola-Peltola:Conformally_covariant_boundary_correlation_functions_with_quantum_group, 
Peltola:Basis_for_solutions_of_BSA_PDEs_with_particular_asymptotic_properties}.  
The terms with $m \geq \ell_{j,j+1}(\linkpatt)$ are already immediate from~\eqref{eq: asymptotic properties}.
Note that $m = \ell_{j,j+1}(\linkpatt)$ gives the leading asymptotics in the series~\eqref{eq: OPE expansion},
because $\kappa \in (0,8)$.
For the other terms (subleading asymptotics), careful investigation of the Coulomb gas integral construction 
in~\cite[Theorems~\red{3.1} and~\red{5.1}]{Peltola:Basis_for_solutions_of_BSA_PDEs_with_particular_asymptotic_properties}  
gives the asserted terms in~\eqref{eq: OPE expansion}. 
We leave the details to the reader.
\end{proofIdea}

Again, we invite the reader to compare the recursive asymptotics properties~\eqref{eq: OPE expansion} with Equation~\eqref{eq: fusion rules General 2} in Section~\ref{subsec:FusionCFT}.

\section{\label{sec:Malek}From OPE structure to products of random distributions?}

In this final section, we discuss a conformal bootstrap idea, which might be useful when trying to
interpret the general multiple SLE partition functions $\PartF_\linkpatt$ as correlations of some quantum fields in a CFT.
The bootstrap method appeared in
the recent work~\cite{Abdesselam:Second-quantized_Kolmogorov-Chentsov_theorem} of A.~Abdesselam,  
pertaining to a construction of quantum fields as random distributions from some already known fields 
via multiplication. 
Such an idea was proposed in the physics 
literature initiated by K.~Wilson~\cite{Brandt:Derivation_of_renormalized_relativistic_perturbation_theory_from_finite_local_field_equations,
Wilson:Non_Lagrangian_models_of_current_algebra,
Wilson-Zimmermann:Operator_product_expansions_and_composite_field_operators_in_the_general_framework_of_quantum_field_theory, 
Polyakov:Non_Hamiltonian_approach_to_conformal_quantum_field_theory,
Witten:Perturbative_quantum_field_theory}.
In~\cite{Abdesselam:Second-quantized_Kolmogorov-Chentsov_theorem},  
Abdesselam established a mathematical result for the construction of products of random distributions 
using the OPE structure for their correlations (see Theorem~\ref{thm: Malek}).
In general, the problem of multiplication of distributions is notoriously difficult,
and attempts to accomplish this go back to Schwartz, with spectacular success 
recently in the random setup by T.~Lyons's theory of rough paths~\cite{Lyons:Differential_equations_driven_by_rough_signals} 
and M.~Hairer's regularity structures~\cite{Hairer:Regularity_structures}.
Abdesselam proposed a different approach, which seems potentially useful for the multiple SLE partition functions 
thanks to their OPE hierarchy (Proposition~\ref{prop: OPE}).
Our discussion in this section shall be brief and very restricted --- for more details and background,
we refer to~\cite{Abdesselam:Second-quantized_Kolmogorov-Chentsov_theorem} 
and references therein.

\subsection{\label{subsec: ASPWC} Conformal bootstrap}

In CFT \`a la Belavin, Polyakov \& Zamolodchikov, the complete solution of a theory  
should be possible via the ``conformal bootstrap''.
It is a recursive procedure, where the correlation functions of 
the field operators are found using their fusion rules. 
For this, one only has to know the operator content of the theory
(``spectrum'') and the structure constants appearing in~\eqref{eq:OPE general} in Section~\ref{subsec:FusionCFT}.
Using this data, one then recursively derives all correlation functions.  
In the early work~\cite{BPZ:Infinite_conformal_symmetry_in_2D_QFT, BPZ:Infinite_conformal_symmetry_of_critical_fluctuations_in_2D},
the bootstrap was successfully performed, e.g., for the CFT corresponding to the critical 2D Ising model.
In that case, there are only three primary fields: 
the identity $\one$, energy $\varepsilon$, and spin $\sigma$.
The CFT for the Ising model is an example of a minimal model, where there are only finitely many primary fields,
and which have been solved and classified, see, e.g.,~\cite[Chapter~\red{11}]{Mussardo:Statistical_field_theory}.
Recently, the Ising CFT has also been quite well understood as a scaling limit of the Ising lattice 
model~\cite{CHI:Conformal_invariance_of_spin_correlations_in_planar_Ising_model, Hongler-Smirnov:Energy_density_in_planar_Ising_model, 
CHI:inprep}.

For CFTs with infinitely many primary fields,  one encounters apparent difficulties  in the bootstrap program: 
convergence issues, problems with finding the structure constants, and trouble in classifying the primary fields. 
However, certain CFTs have further restrictive data. 
For example, if the theory consists of fields with degeneracies, as discussed in Section~\ref{subsec:CFT}, 
then the fusion rules become relatively simple~\cite[Section~\red{6}]{BPZ:Infinite_conformal_symmetry_in_2D_QFT}.
The fusion rules for the general multiple SLE partition functions $\PartF_\linkpatt$, stated in Proposition~\ref{prop: OPE},
coincide with these fusion rules.
(On the other hand, the structure constants, calculated 
by V.~Dotsenko and V.~Fateev~\cite{Dotsenko-Fateev:Conformal_algebra_and_multipoint_correlation_functions_in_2D_statistical_models, 
Dotsenko-Fateev:4pt_correlation_functions_and_operator_algebra_in_2D_conformal_invariant_theories_with_central_charge_less_than_one}, 
are still rather complicated, and differ slightly from those appearing in Proposition~\ref{prop: OPE}.)

\bigskip

In~\cite{Abdesselam:Second-quantized_Kolmogorov-Chentsov_theorem}, 
the notion of ``abstract systems of pointwise correlations'' was introduced, 
pertaining to the mathematical understanding of products of quantum fields via their OPE hierarchy.
In Sections~\ref{subsec: Random tempered distributions}--\ref{subsec: Bootstrap for SLEs}, 
we discuss results from~\cite{Abdesselam:Second-quantized_Kolmogorov-Chentsov_theorem} 
on how this could be established in practise (see in particular Theorem~\ref{thm: Malek}).

\begin{defn} \label{defn:ASPWC}
\cite{Abdesselam:Second-quantized_Kolmogorov-Chentsov_theorem} 
\;
Let $\indexSet$ be a finite index set. 
An abstract system of pointwise correlations,
\begin{align} \label{eq: abstract system of pointwise correlations}
\{ F_{\index_1, \ldots, \index_n} \; | \; \index_1, \ldots, \index_n \in \indexSet, \, n \in \bZnn \} ,
\end{align}
consists of specifying, for all $n > 0$ and for all $\index_1, \ldots, \index_n \in \indexSet$, smooth functions  
\begin{align*}
F_{\index_1, \ldots, \index_n} \colon \extendedChamber_n^d \to \bC ,
\end{align*}
defined on the configuration space
(or, in dimension $d=2$, equivalently on $\extendedChamber_n$ given in Equation~\eqref{eq: chamberComplex}) 
\begin{align*}
\extendedChamber_n^d :=\; & 
 \{ (z_{1},\ldots,z_n) \in \bR^{nd} \; | \; z_i \neq z_j \textnormal{ if } i \neq j \} ,
\end{align*}
with normalization convention $F_\emptyset \equiv 1$, for $n=0$.
This collection is required to satisfy the following properties:
\begin{itemize}
\item {\bf \textit{Permutation symmetry}}: For all permutations $\sigma \in \SymmGrp_n$ and for all $(z_1, \ldots, z_n) \in \extendedChamber_n^d$, we have 
$F_{\index_{\sigma(1)}, \ldots, \index_{\sigma(n)}} (z_{\sigma(1)}, \ldots, z_{\sigma(n)}) =  F_{\index_1, \ldots, \index_n} (z_1, \ldots, z_n)$.

\item {\bf \textit{Unit object}}: 
There exists a distinguished object
$\index_0 \in \indexSet$ such that, for all $\index_1, \ldots, \index_n \in \indexSet$ and  for all
$(z_0,z_1, \ldots, z_n) \in \extendedChamber_{n+1}$,
we have $F_{\index_0, \index_1, \ldots, \index_n} (z_0,z_1, \ldots, z_n) = F_{\index_1, \ldots, \index_n} (z_1, \ldots, z_n)$.

\item {\bf \textit{Scaling dimensions}}: 
To each $\index \in \indexSet$, we associate a real number $D_\index$,
and we set $D_{\index_0} := 0$.

\item {\bf \textit{OPE structure}}: The collection~\eqref{eq: abstract system of pointwise correlations} 
of functions satisfies a closed operator product expansion 
in the following sense: given $D \in \bR$, 
for all $\index_1, \ldots, \index_n \in \indexSet$, 
and for all $(z_1, \ldots, z_n) \in \extendedChamber_n^d$
and $\xi \in \bR^d \setminus \{z_3, \ldots, z_n\}$, we have
\begin{align} \label{eq: OPE expansion malek}
\begin{split}
F_{\index_1, \index_2, \index_3, \ldots, \index_n} (z_1, z_2, z_3, \ldots, z_n)
= \; &  \sum_{\index \in \indexSet(D)} 
\frac{C_{\index_1, \index_2}^{\index}}{|z_2 - z_1|^{D_{\index_1} + D_{\index_2} - D_\index}} \;
F_{\index, \index_3, \ldots, \index_n} (\xi, z_3,\ldots, z_n) 
\\
\; & 
\; + o \left( |z_2 - z_1|^{D - D_{\index_1} - D_{\index_2} } \right) ,
\qquad \textnormal{as } 
z_1, z_2 \to \xi ,
\end{split}
\end{align}
where $C_{\index_1, \index_2}^{\index} \in \bC$ are some constants, and
$\indexSet(D) := \{ \index \in \indexSet \; | \; D_\index \leq D \}$.
\end{itemize}
\end{defn}

In dimension $d=1$, the general multiple SLE partition functions from Definition~\ref{defn:Zlinkpatt} in Section~\ref{subsec: fusion SCCG} (with $\kappa \in (0,8) \setminus \bQ$),
\begin{align*}
\{\PartF_\linkpatt \; | \; \linkpatt \in \LP_\multii, \, \multii \in \bZnn^n,  \, n \in \bZnn\},
\end{align*}
form an abstract system of pointwise correlations in a loose sense. Namely,
these functions $\PartF_\linkpatt \colon \chamber_n \to \bC$ are a priori defined on 
the  configuration space~\eqref{eq: chamber},
where their variables are ordered. Therefore, the permutation symmetry is not meaningful.
The functions are indexed by the valences $\multii = (\sIndex_1, \ldots, \sIndex_n) \in \bZnn^n$
of the valenced link patterns $\linkpatt \in \LP_\multii$, and the valence zero, $\sIndex_0 = 0$,
can be thought of as the unit object (corresponding to the empty link pattern $\emptyset \in \LP_0$) 
omitted from $\linkpatt$ as in Definition~\ref{defn:ASPWC}.
To each valence $\sIndex_j \neq 0$, the conformal weight $h_{1,\sIndex_j+1}$
is associated as in Section~\ref{subsec: fusion SCCG}. 
Finally, Proposition~\ref{prop: OPE} gives an OPE structure for this collection of functions,
where the scaling dimensions equal the conformal weights, $D_{\sIndex_j} = h_{1,\sIndex_j+1}$.

The functions $\PartF_\linkpatt \colon \chamber_n \to \bC$ can also be analytically continued to become 
functions of $n$ complex variables on $\extendedChamber_n$.
Then, by including also the anti-holomorphic sector, i.e., by considering functions of the form 
$F_\linkpatt(z_1,\ldots,z_n) := \PartF_\linkpatt (z_1,\ldots,z_n) \PartF_\linkpatt(\bar{z}_1,\ldots,\bar{z}_n)$,
where $\bar{z}$ is the complex conjugate of $z \in \bC$, one obtains a collection
\begin{align*}
\{F_\linkpatt \; | \; \linkpatt \in \LP_\multii, \, \multii \in \bZnn^n,  \, n \in \bZnn\}
\end{align*}
of functions $F_\linkpatt \colon \extendedChamber_n \to \bC$ in dimension $d=2$,
with OPE structure again obtained from Proposition~\ref{prop: OPE}, 
but this time with scaling dimensions $D_{\sIndex_j}  = h_{1,\sIndex_j+1} + h_{1,\sIndex_j+1} = 2h_{1,\sIndex_j+1}$.
(We remark that the functions $F_\linkpatt$ are not permutation-invariant.
However, it is possible to construct a collection of functions that are permutation-invariant and single-valued
--- see~\cite[Chapter~\red{9}]{DMS:CFT} and~\cite[Chapter~\red{11}]{Mussardo:Statistical_field_theory}.
We leave the precise verification of this to future work~\cite{Flores-Peltola:Monodromy_invariant_correlation_function}.)

\subsection{\label{subsec: Random tempered distributions} Random tempered distributions}

In quantum field theory (QFT) , the ``fields'' can be viewed as operator-valued distributions, 
sending suitable test functions to operators acting on some Hilbert space, 
see, e.g.,~\cite{Glimm-Jaffe:Quantum_physics_a_functional_integral_point_of_view, Schottenloher:Mathematical_introduction_to_CFT}, 
and references therein. 
The ``vacuum expectation values'' $\big\langle \cdots \big \rangle$ of the fields
are then defined as tempered distributions \`a la Schwartz, say.
There are various axiomatic approaches to QFT, where the aforementioned
objects are required to satisfy a set of properties, e.g., the Wightman axioms. 
From the point of view of statistical physics and conformal field theory,
Euclidean QFT is relevant.
An axiomatic setting for Euclidean QFT is provided by the Osterwalder-Schrader axioms. 
In constructive field theory, the main objective is to construct fields that satisfy such axioms.

Euclidean two-dimensional QFT can be formulated 
by thinking of the quantum fields $\Phi$ as distribution-valued random variables, 
i.e., assigning a probability measure $\mathbb{P}$ to the space $S'(\bR^d)$ of tempered distributions 
acting on test functions in the Schwartz space $S(\bR^d)$ of rapidly decreasing functions.
The main problem is to find a probability measure 
such that the Osterwalder-Schrader axioms are satisfied.
(For interacting fields, this is a very difficult problem, especially in four and higher dimensions.)

Let $(\Omega, \mathcal{F}, \mathbb{P})$ be a probability space.
Suppose that for each $\index \in \indexSet$, we associate a random tempered distribution $\Phi_{\index}$,
that is, a random variable taking values in the space 
$S'(\bR^d)$, such that the map $\omega \to \Phi_{\index}(\omega)$ is $(\mathcal{F}, \textnormal{Borel}(S'(\bR^d)))$-measurable.
Suppose also that $\Phi_{\index}$ have finite moments, 
i.e.,
for all $\index \in \indexSet$, $p \geq 1$, and $f \in S(\bR^d)$, we have $\Phi_{\index}(f) \in L^p(\Omega, \mathcal{F}, \mathbb{P})$.
Then, the correlations 
\begin{align*}
\mathbb{E} \big[ \Phi_{\index_1}(f_1) \cdots \Phi_{\index_n}(f_n) \big]
= \mathbb{E} \big[ \Phi_{\index_1} \cdots \Phi_{\index_n} \big] (f_1, \ldots, f_n) 
:= \int \Phi_{\index_1}(f_1) \cdots \Phi_{\index_n}(f_n) \; \ud \mathbb{P} 
\end{align*}
are $n$-linear functionals of $f_1, \ldots, f_n \in S(\bR^d)$.

Sometimes pointwise correlations of the fields $\Phi_{\index}$ 
can also be defined by a renormalization 
procedure~\cite{Abdesselam:Second-quantized_Kolmogorov-Chentsov_theorem}:
\begin{align} \label{eq: correlation functions renormalization}
\mathbb{E} \big[ \Phi_{\index_1}(z_1) \cdots \Phi_{\index_n}(z_n) \big]
= \mathbb{E} \big[ \Phi_{\index_1} \cdots \Phi_{\index_n} \big] (z_1, \ldots, z_n) 
:= \lim_{r \searrow -\infty} \int
\Phi_{\index_1}\big(L^{2r} \rho(L^r (\cdot - z_1)) \cdots \Phi_{\index_n}(L^{2r} \rho(L^r (\cdot - z_n)\big) \; \ud \mathbb{P} ,
\end{align} 
where $L > 1$ is a fixed real number and
$\rho \colon \bR^d \to \bR$ is a smooth, compactly supported mollifier with $\int_{\bR^d} \rho(z) \; \ud z = 1$.
Using the physicists' $\langle \cdots \rangle$ notation 
(c.f. Sections~\ref{subsec:CFT}--\ref{subsec:mgles}), we could then write
\begin{align*}
\big\langle \Phi_{\index_1}(z_1) \cdots \Phi_{\index_n}(z_n) \big\rangle 
:= \mathbb{E}[ \Phi_{\index_1}(z_1) \cdots \Phi_{\index_n}(z_n) ] .
\end{align*} 
Now, assume that these are smooth, 
locally integrable functions of $(z_1, \ldots, z_n) \in \extendedChamber_n^d$: 
for all compact sets $K \subset \extendedChamber_n^d$, we have
\begin{align*}
\int_{K \cap \extendedChamber_n^d} 
\big| \big\langle \Phi_{\index_1}(z_1) \cdots \Phi_{\index_n}(z_n) \big\rangle \big|  \; 
\ud z_1 \cdots \ud z_n < \infty .
\end{align*}
Then, the joint moments can be written as integrals against the pointwise correlations:
for all $f_1, \ldots, f_n \in S(\bR^d)$, we have
\begin{align} \label{eq: correlations}
\mathbb{E} \big[ \Phi_{\index_1}(f_1) \cdots \Phi_{\index_n}(f_n)\big] 
= \int_{\extendedChamber_n^d} f_1(z_1) \cdots f_n(z_n) \; 
\big\langle \Phi_{\index_1}(z_1) \cdots \Phi_{\index_n}(z_n) \big\rangle \; \ud z_1 \cdots \ud z_n .
\end{align}

\begin{quest} \label{quest}
Given an abstract system $\{ F_{\index_1, \ldots, \index_n} \; | \; \index_1, \ldots, \index_n \in \indexSet, \, n \in \bZnn \}$
of pointwise correlations, 
can one construct a system of random distributions 
$\{ \Phi_{\index} \; | \; \index \in \indexSet \}$ 
whose correlation functions are given by 
$\big\langle \Phi_{\index_1}(\cdot) \cdots \Phi_{\index_n}(\cdot) \big\rangle = F_{\index_1, \ldots, \index_n}$?
\end{quest}

One motivation to consider this question is that from scaling limits of critical lattice models,
one should obtain abstract systems of pointwise correlations, 
as has been successfully done for the 2D Ising 
model~\cite{Hongler-Smirnov:Energy_density_in_planar_Ising_model, 
CHI:Conformal_invariance_of_spin_correlations_in_planar_Ising_model, 
CHI:inprep} (at least implicitly).
Then, one would hope that also the ``lattice local fields'' in these models 
would converge in the scaling limit 
to some random distributions, whose correlation functions are the scaling limits of the discrete correlations.
This is known for the spin (magnetization) field in the Ising model~\cite{CGN:Planar_Ising_magnetization_field1}, but not for the energy field.
(In fact, there is evidence that the energy field might not have a scaling limit as a random 
distributions~\cite{Clement-Christophe} --- hence, other approaches might be needed for this case.)

Another motivation  for Question~\ref{quest} 
comes from trying to mathematically understand CFTs in relation with SLEs.
Indeed, the general multiple SLE partition functions $\PartF_\linkpatt$
could be morally viewed as abstract systems of pointwise correlations, 
even though they are defined on the boundary $\bR = \partial \bH$, for variables ordered as $x_1 < \cdots < x_n$, 
and they are not permutation-invariant (c.f. Section~\ref{subsec: ASPWC}). 
It would be interesting to see whether one could make sense of ``SLE generating fields'' $\Phi_{1,2}$
and relate them to ``boundary condition changing operators''  \`a la J.~Cardy~\cite{Cardy:SLE_and_Dyson_circular_ensembles, Cardy:SLE_for_theoretical_physicists}.
Recall also the discussion in Section~\ref{subsec:mgles}.

\begin{quest} \label{quest2}
Does there exist a range of parameters $\kappa > 0$ so that the multiple $\SLE_\kappa$ partition functions 
give rise to an abstract system of pointwise correlations 
which are correlation functions of a system of random distributions?
\end{quest}

When $\kappa=4$, the multiple $\SLE_\kappa$ partition functions 
are related to level lines of the Gaussian free field (free boson), which is well-understood as a random 
distribution~\cite{Sheffield:GFF_for_mathematicians, Dubedat:SLE_and_free_field, 
Schramm-Sheffield:A_contour_line_of_the_continuum_GFF, 
Sheffield-Miller:Imaginary_geometry1, 
Peltola-Wu:Global_and_local_multiple_SLEs_and_connection_probabilities_for_level_lines_of_GFF}.
Namely, O.~Schramm and S.~Sheffield proved in~\cite{Schramm-Sheffield:A_contour_line_of_the_continuum_GFF} 
that the level lines, when properly defined, are multiple $\SLE_\kappa$ processes with $\kappa=4$.
In particular, Question~\ref{quest2} might be solvable in this case by considering correlations of the Gaussian free field.

On the other hand, for the case of $\kappa=3$ one might need a more general notion of ``quantum fields''.
Namely, the multiple $\SLE_\kappa$ partition function $\PartF_{\textnormal{Ising}}$ from~(\ref{eq: Ising pf general domain},~\ref{eqn::ZIsingtotal}) with $\kappa=3$
can be identified with a correlation function in the Ising model 
for the energy density, or the free fermion, on the boundary.
However, neither the energy density nor the fermion is understood as a random distribution.
(We also remark that the scaling dimension in this case equals $\frac{6-\kappa}{2\kappa} = 1/2$, 
which lies exactly at the edge of the admissible range in Abdessalam's Theorem~\ref{thm: Malek} 
 (with dimension $d=1$), stated in Section~\ref{subsec: Bootstrap for SLEs}.)

\subsection{\label{subsec: Bootstrap for SLEs} Bootstrap construction?}

Even though the moment problem in Question~\ref{quest} seems difficult,
there do exist fields that are understood as random distributions 
(e.g., the Ising magnetization field, the Gaussian free field, $\varphi_2^4$ in 2D).
The next natural but very difficult question is whether their products are distributions too.

\begin{quest}
Given $\{ \Phi_{\index} \; | \; \index \in \indexSet \}$, can one 
make sense of products of the form $\Phi_{\index_1,\index_2} = \Phi_{\index_1} \Phi_{\index_2}$ 
as random distributions?
\end{quest}

One answer to this question was given in~\cite{Abdesselam:Second-quantized_Kolmogorov-Chentsov_theorem}
using the abstract pointwise correlations defined in Section~\ref{subsec: ASPWC}. 
We present here  a simplified and slightly informal statement, referring to~\cite{Abdesselam:Second-quantized_Kolmogorov-Chentsov_theorem} 
for the precise formulation and extensions.
The main idea is to use the OPE hierarchy~\eqref{eq: OPE expansion malek} of the correlations to recursively 
construct fields from already constructed ones.

\begin{theorem} \label{thm: Malek}
 \textnormal{\cite[Theorem~\red{1}, simplified]{Abdesselam:Second-quantized_Kolmogorov-Chentsov_theorem}}
\;

Let $\mathrm{PWC} := \{ F_{\index_1, \ldots, \index_n} \; | \; \index_1, \ldots, \index_n \in \indexSet, \, n \in \bZnn \}$
be an abstract system of pointwise correlations. Assume that the following further properties hold:
\begin{itemize}
\item For all $\index \in \indexSet$, we have $D_\index \in [0,\frac{d}{2})$.

\item There exists a constant $C > 0$ such that, for all $F_{\index_1, \ldots, \index_n} \in \mathrm{PWC}$
and for all $(z_1, \ldots, z_n) \in \extendedChamber_n^d$, we have
\begin{align} \label{eq: BNNFB}
\tag{BNNFB}
\big| F_{\index_1, \ldots, \index_n} (z_1, \ldots, z_n) \big|
 \leq \; & \; C \;
\prod_{i=1}^n \big( \min_{j \neq i} |z_i-z_j| \big)^{-D_{\index_i}} . 
\end{align}

\item For a subset $\indexSet_0 \subset \indexSet$ of indices, 
we have already constructed random distributions $\{ \Phi_\index \; | \; \index \in \indexSet_0 \}$
whose correlation functions are given by
$\mathrm{PWC}_0 := \{ F_{\index_1, \ldots, \index_n} \; | \; \index_1, \ldots, \index_n \in \indexSet_0, \, n \in \bZnn \}$.
\end{itemize}
Then, for each $\index^* \in \indexSet \setminus \indexSet_0$ such that 
$C_{\index_1, \index_2}^{\index^*} \neq 0$, for some $\index_1, \index_2 \in \indexSet_0$, and
$\indexSet(D_{\index^*}) \setminus \{\index^*\} \subset \indexSet_0$, 
we can construct a random distribution $\Phi_{\index^*}$ as the renormalized product 
$\Phi_{\index_1} \Phi_{\index_2}$\textnormal{:}
\begin{align*}
\textnormal{``}
\Phi_{\index^*} (z)
:= \lim_{w \to z} \;
\frac{|w - z|^{D_{\index_1} + D_{\index_2} - D_{\index^*}}}{C_{\index_1, \index_2}^{\index^*}}
\left(
\Phi_{\index_1} (w) \Phi_{\index_2} (z) \; - \;
\sum_{\index \in \indexSet(D_{\index^*}) \setminus \{\index^*\}} 
\frac{C_{\index_1, \index_2}^{\index}}{|w - z|^{D_{\index_1} + D_{\index_2} - D_\index}} \; \Phi_{\index}(z) \right)
\textnormal{''},
\end{align*}
where the quotation marks indicate that the equation is to be understood in the sense of distributions
(as a limit in $L^p(\Omega, \mathcal{F}, \mathbb{P})$ for any $p \geq 1$, and $\mathbb{P}$-almost surely)
and in terms of an appropriate regularization procedure --- 
see~\textnormal{\cite{Abdesselam:Second-quantized_Kolmogorov-Chentsov_theorem}} for details.
\end{theorem}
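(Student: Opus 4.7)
The plan is to construct $\Phi_{\index^*}$ as a limit, in $L^p(\Omega,\mathcal{F},\mathbb{P})$ for every $p\ge 1$, of a sequence of mollified fields built from $\Phi_{\index_1}$ and $\Phi_{\index_2}$ with OPE counterterms subtracted. Fix a smooth compactly supported mollifier $\rho$ as in~\eqref{eq: correlation functions renormalization}, and for each scale $r\in\mathbb{Z}$ write $\rho_r(z):=L^{dr}\rho(L^r z)$, so $\Phi_\index^{(r)}(z):=\Phi_\index(\rho_r(\cdot - z))$ is a smooth random field. For test $f\in S(\bR^d)$ and parameters $r\le r'$, I would define the candidate mollified field $\Phi_{\index^*}^{(r,r')}(z)$ by
\begin{align*}
\Phi_{\index^*}^{(r,r')}(z)
:= \frac{1}{C_{\index_1,\index_2}^{\index^*}}\int \rho_{r}(w-z) \, |w-z|^{D_{\index_1}+D_{\index_2}-D_{\index^*}}
\left( \Phi_{\index_1}^{(r')}(w)\,\Phi_{\index_2}^{(r')}(z)
- \!\!\! \sum_{\index\in\indexSet(D_{\index^*})\setminus\{\index^*\}} \!\!\!
\frac{C_{\index_1,\index_2}^{\index}}{|w-z|^{D_{\index_1}+D_{\index_2}-D_\index}}\Phi_\index^{(r')}(z) \right) \ud w,
\end{align*}
and then test against $f$. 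The idea is that the counterterms inside the parentheses exactly kill the contributions of scaling dimensions $\le D_{\index^*}$ in the OPE, so the leading behavior is governed by the remainder term $o(|w-z|^{D_{\index^*}-D_{\index_1}-D_{\index_2}})$ from~\eqref{eq: OPE expansion malek}.

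The core step is to prove that $\{\Phi_{\index^*}^{(r,r')}(f)\}_{r,r'}$ is Cauchy in $L^p$. Taking $p=2k$ an even integer, the $p$-th moment reduces to a $2k$-fold integral of pointwise correlations of products $\Phi_{\index_1}(w_i)\Phi_{\index_2}(z_i)$, minus the OPE counterterms, paired via all indices $i=1,\ldots,2k$. Each such correlation, by the OPE, is
\begin{align*}
F_{\ldots,\index_1,\index_2,\ldots}(\ldots,w_i,z_i,\ldots) - \sum_{\index\in\indexSet(D_{\index^*})\setminus\{\index^*\}} \frac{C_{\index_1,\index_2}^{\index}}{|w_i-z_i|^{D_{\index_1}+D_{\index_2}-D_\index}}\,F_{\ldots,\index,\ldots}(\ldots,z_i,\ldots)
\; = \; \OO\big(|w_i-z_i|^{D_{\index^*}-D_{\index_1}-D_{\index_2}+\epsilon}\big)
\end{align*}
for some $\epsilon>0$, uniformly in the other variables (this is where the BNNFB bound~\eqref{eq: BNNFB} enters, since it gives a quantitative control of the remainder in the OPE via Kolmogorov-Chentsov type arguments; one must verify this from $\mathrm{PWC}$ itself rather than from $\mathrm{PWC}_0$, which is where the hypothesis $\indexSet(D_{\index^*})\setminus\{\index^*\}\subset\indexSet_0$ is crucial so that every subtracted field is one we already possess). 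Combined with the weight $|w_i-z_i|^{D_{\index_1}+D_{\index_2}-D_{\index^*}}$ from the definition of $\Phi_{\index^*}^{(r,r')}$, the near-diagonal integrand is $\OO(|w_i-z_i|^\epsilon)$, hence locally integrable. The assumption $D_\index<d/2$ ensures that the joint configuration integral involving $2k$ insertions of $\Phi_{\index^*}$ (scaling dimension $D_{\index^*}<d/2$) against a fixed test function $f^{\otimes 2k}$ is absolutely convergent, by an integrability criterion of Kolmogorov-Chentsov type in dimension $d$.

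The main obstacle is the uniform multi-scale control needed to pass from convergence of $p$-th moments to Cauchyness of $\Phi_{\index^*}^{(r,r')}(f)$ in $L^p$: one must show that the difference $\Phi_{\index^*}^{(r_1,r_1')}(f)-\Phi_{\index^*}^{(r_2,r_2')}(f)$ has $L^p$-norm tending to zero as $r_1,r_2,r_1',r_2'\to\infty$, uniformly in the order of limits. This requires a fine splitting of the configuration integral into short-distance and long-distance regions and controlling each region using, respectively, the OPE remainder together with~\eqref{eq: BNNFB}, and the BNNFB decay of correlations. I would follow the multiscale scheme of~\cite{Abdesselam:Second-quantized_Kolmogorov-Chentsov_theorem}, iterating scale decompositions $L^{-r}\le|w-z|<L^{-r+1}$. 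Once Cauchyness is established, the limit defines an $L^p$-random variable $\Phi_{\index^*}(f)$ depending linearly and continuously on $f\in S(\bR^d)$; Kolmogorov-Chentsov regularity of $f\mapsto\Phi_{\index^*}(f)$ then promotes it to a random tempered distribution. Finally, to verify that the correlations of this limit are exactly $F_{\index^*,\ldots}\in\mathrm{PWC}$, one plugs the defining limit into~\eqref{eq: correlations}, exchanges limits and integration using the dominated convergence established above, and appeals to the OPE~\eqref{eq: OPE expansion malek} with the counterterms cancelling the subleading pieces, leaving the prescribed $F_{\index^*,\ldots}$.
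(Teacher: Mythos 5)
First, note that the paper itself contains no proof of this statement: Theorem~\ref{thm: Malek} is imported (in explicitly ``simplified'' form) from Abdesselam's work, and the surrounding text defers all details to that reference. There is therefore no internal argument to compare against; the assessment below measures your sketch against what the cited theorem actually requires.

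Your architecture --- mollify $\Phi_{\iota_1},\Phi_{\iota_2}$, subtract the counterterms indexed by $\indexSet(D_{\iota^*})\setminus\{\iota^*\}$ (all of which lie in $\indexSet_0$ by hypothesis, so the subtracted fields exist), prove $L^p$-Cauchyness of the regularized products by expanding even moments into configuration integrals of pointwise correlations, control the near-diagonal region by the OPE remainder and the off-diagonal region by \eqref{eq: BNNFB}, and invoke $D_\iota<\frac{d}{2}$ for integrability --- is the correct strategy for the second-quantized Kolmogorov--Chentsov theorem. However, two steps in your sketch conceal genuine content. First, the OPE as formulated in Definition~\ref{defn:ASPWC} provides only a qualitative $o\left(|w-z|^{D-D_{\iota_1}-D_{\iota_2}}\right)$ remainder, whereas your moment estimates require the power-saving bound $\OO\left(|w_i-z_i|^{D_{\iota^*}-D_{\iota_1}-D_{\iota_2}+\epsilon}\right)$ \emph{uniformly} in the spectator variables; this does not follow from \eqref{eq: BNNFB} together with the qualitative OPE \eqref{eq: OPE expansion malek}, and in the unsimplified theorem it is a separate quantitative (H\"older-type) hypothesis on the OPE remainder. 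Your parenthetical claim that ``this is where BNNFB enters'' is misplaced: BNNFB bounds the correlations themselves, not the error left after subtracting the counterterms. Second, the local integrability over $2k$-point configurations of the nearest-neighbor product $\prod_{i}\big(\min_{j\neq i}|z_i-z_j|\big)^{-D_{\iota_i}}$ when all $D_{\iota_i}<\frac{d}{2}$ is itself a nontrivial combinatorial lemma (a sum over nearest-neighbor graphs/forests), not a one-line appeal to ``an integrability criterion of Kolmogorov--Chentsov type.'' With those two ingredients supplied explicitly, the remainder of your scheme --- the multiscale splitting, the passage from moment bounds to $L^p$-Cauchyness, and the identification of the limiting correlations via \eqref{eq: correlations} and dominated convergence --- is the right skeleton of the argument.
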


\begin{conj} \label{conj:Malek}
\textnormal{\cite[Conjecture~\red{1}, simplified]{Abdesselam:Second-quantized_Kolmogorov-Chentsov_theorem}} 
\;
Any reasonable conformal field theory satisfies 
$\mathrm{OPE}$~\eqref{eq: OPE expansion malek} and~\eqref{eq: BNNFB}. 
\end{conj}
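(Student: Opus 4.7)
\emph{Approach.} The plan is to interpret ``reasonable'' as a conformal field theory satisfying Osterwalder--Schrader reflection positivity and Euclidean conformal covariance, with a discrete spectrum of primary fields whose scaling dimensions $D_\index$ are bounded below and accumulate only at $+\infty$. Under such hypotheses, both $\mathrm{OPE}$~\eqref{eq: OPE expansion malek} and $\mathrm{BNNFB}$~\eqref{eq: BNNFB} should follow from the conformal representation theory together with a clustering argument, the content being that the CFT data automatically assemble into an abstract system of pointwise correlations in the sense of Definition~\ref{defn:ASPWC}.

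\emph{OPE from radial quantization.} First I would use radial quantization around $\xi$: the product $\Phi_{\index_1}(z_1)\Phi_{\index_2}(z_2)$, for $z_1,z_2\to\xi$, is expressed as a Hilbert-space convergent expansion $\sum_\index (\text{coeff})\,\Phi_\index(\xi)$ in the state space, decomposed along scaling-dimension eigenspaces of the dilatation generator. Inserting the remaining fields $\Phi_{\index_3}(z_3),\ldots,\Phi_{\index_n}(z_n)$ far away and taking vacuum expectations, the Hilbert-space convergence (guaranteed by reflection positivity and a gap in the spectrum above $D$) translates into the pointwise expansion~\eqref{eq: OPE expansion malek} with remainder $o(|z_2-z_1|^{D-D_{\index_1}-D_{\index_2}})$, the coefficients $C_{\index_1,\index_2}^\index$ being the usual CFT three-point structure constants.

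\emph{BNNFB from induction on $n$ via the OPE.} Second, I would prove~\eqref{eq: BNNFB} by induction on $n$. For $n=2$, global conformal covariance pins $F_{\index_1,\index_2}(z_1,z_2)$ to a constant multiple of $|z_1-z_2|^{-2D_{\index_1}}\delta_{\index_1,\index_2}$, so the bound is immediate. For the induction step, let $i_\ast$ realize $\min_i\min_{j\neq i}|z_i-z_j|$, call its nearest neighbor $j_\ast$, and apply the $\mathrm{OPE}$ in the pair $(i_\ast,j_\ast)$. This produces an $(n-1)$-point correlation $F_{\index,\index_3,\ldots}(\xi,z_3,\ldots)$ multiplied by $|z_{i_\ast}-z_{j_\ast}|^{D_\index-D_{\index_{i_\ast}}-D_{\index_{j_\ast}}}$, summed over $\index\in\indexSet(D)$. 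The inductive hypothesis bounds the reduced correlation by $\prod_{k\neq i_\ast,j_\ast}(\min_{l}|z_k-z_l|)^{-D_{\index_k}}$ times a factor $(\min_{l}|\xi-z_l|)^{-D_\index}$; because $|z_{i_\ast}-z_{j_\ast}|\le|\xi-z_l|$ for all remaining $l$ by choice of $(i_\ast,j_\ast)$, distributing the $|z_{i_\ast}-z_{j_\ast}|$-powers between the $i_\ast$ and $j_\ast$ slots recovers the claimed product bound with the exponents $D_{\index_{i_\ast}}$ and $D_{\index_{j_\ast}}$ reinstated.

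\emph{Main obstacle.} The hard part is making the induction genuinely quantitative: one needs uniform control both of the tail $o$-term in~\eqref{eq: OPE expansion malek} and of the sum over intermediate primaries $\index\in\indexSet(D)$, and neither is automatic from the axioms alone. In particular, ``reasonableness'' has to be strong enough to ensure that $\indexSet(D)$ is finite for each $D$ and that the remainder in~\eqref{eq: OPE expansion malek} is bounded uniformly in the positions $z_3,\ldots,z_n$ of the spectator fields --- essentially a quantitative form of convergence of the conformal bootstrap. Extracting a single constant $C$ independent of $n$, as demanded by~\eqref{eq: BNNFB}, is likely the genuine mathematical content of the conjecture; a weaker bound with an $n$-dependent constant should follow from the scheme above, but the uniform statement probably requires convex-optimization input of the kind used in the numerical bootstrap program.
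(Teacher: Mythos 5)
The statement you are addressing is stated in the paper as a \emph{conjecture} (quoted, in simplified form, from Abdesselam's work), not a theorem: the paper offers no proof, and the word ``reasonable'' is deliberately left undefined, so there is no argument of the paper to compare yours against. Your proposal is a sensible heuristic outline of how one might attack a precise version of the statement (Osterwalder--Schrader positivity plus discrete spectrum), but it does not constitute a proof, and the gaps are not only the ones you flag at the end.

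The most serious problem is in the induction for \eqref{eq: BNNFB}. The operator product expansion \eqref{eq: OPE expansion malek} is, as defined, an \emph{asymptotic} statement as $z_1,z_2\to\xi$ with an unquantified $o(\cdot)$ remainder. Your induction applies it to the pair $(i_\ast,j_\ast)$ realizing the minimal pairwise distance, but that minimal distance need not be small relative to the other separations: for $n$ roughly equidistant points, $|z_{i_\ast}-z_{j_\ast}|$ is comparable to every other $|z_k-z_l|$, and an asymptotic expansion then yields no bound at all. This is precisely the configuration where \eqref{eq: BNNFB} has content, so the induction step fails exactly where it is needed; one would require a \emph{convergent} OPE with quantitative, position-uniform tail control, which reflection positivity supplies only when $|z_1-z_2|$ is strictly dominated by the distances to all spectator points. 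Second, even where the expansion applies, each induction step multiplies the constant by a factor involving the structure constants and the cardinality of $\indexSet(D)$, giving at best a constant of the form $C^n$ rather than the single $n$-independent $C$ demanded by \eqref{eq: BNNFB}; you correctly identify this uniformity as the genuine content of the conjecture, which means that your scheme, even if completed, proves a strictly weaker statement. It is worth noting that for the one family of correlations in this paper for which \eqref{eq: BNNFB} is actually verified --- the pure partition functions, Proposition \ref{prop:Maleks bound} --- the proof proceeds from the model-specific strong bound \eqref{eqn::partitionfunction_positive} obtained probabilistically, not from any OPE-based induction; this is consistent with the expectation that \eqref{eq: BNNFB} encodes information beyond what the axioms alone provide.
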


By inspection of Equation~\eqref{eq: Kac weights BSA} defining the Kac conformal weights $h_{1,s}$,
we note that when $\kappa \in (0,8)$,  we have $0 = h_{1,1} < h_{1,3}$ and $h_{1,r} < h_{1,s}$, for $2 \leq r < s$.
Therefore, the construction proposed by Theorem~\ref{thm: Malek} 
for a family of fields $\Phi_{1,s}$ could possibly give rise to iterated ``operator products'' of type
\begin{align*}
\textnormal{``}
\Phi_{1,s}(x) 
:= \lim_{y \searrow x} \;
\frac{(y - x)^{h_{1,2} + h_{1,s-1} - h_{1,s}}}{C_{2, s-1}^{s}}
\left(
\Phi_{1,2} (y) \Phi_{1,s-1} (x) \; -  \;
\sum_{r < s}
\frac{C_{2, s-1}^{r}}{(y - x)^{h_{1,2} + h_{1,s-1} - h_{1,r}}} \; \Phi_{1,r}(x) \right) 
\textnormal{''} ,
\end{align*}
provided that one first could make sense of the ``building block'' fields $\Phi_{1,2}$ as random distributions (which may or may not be possible).
For instance, we would like to interpret the multiple $\SLE_\kappa$ 
partition functions 
$\PartF_\alpha(x_1, \ldots, x_{2N})$, for $\alpha \in \LP_N$, as correlation functions of type 
$\big\langle \Phi_{1,2}(x_1) \cdots \Phi_{1,2}(x_{2N}) \big\rangle$, 
and then construct fields $\Phi_{1,s}$ by using the OPE structure from Proposition~\ref{prop: OPE}.
Indeed, we already noticed in Section~\ref{subsec: ASPWC} that these functions give rise to an abstract system of pointwise correlations 
(relaxing permutation invariance). 
When $\kappa \in (0,6]$, the partition functions $\PartF_\alpha$ also satisfy the required bound~\eqref{eq: BNNFB}:

\begin{prop} \label{prop:Maleks bound}
If $\kappa\in (0,6]$, then the collection
$\{ \PartF_\alpha \; | \; \alpha \in \LP \} $ 
of multiple $\SLEk$ pure partition functions satisfies the bound
\begin{align} \label{eq:Maleks bound}
0 < \PartF_\alpha (x_1, \ldots, x_{2N})  \, \le \, \prod_{i = 1}^{2N} \big( \min_{j \neq i} |x_i-x_j| \big)^{-h_{1,2}} 
\; = \; \prod_{i = 1}^{2N} \big( \min \big( |x_i-x_{i - 1}| , |x_i-x_{i + 1}| \big) \big)^{-h_{1,2}} .
\end{align}
\end{prop}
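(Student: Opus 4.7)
The plan is to deduce this bound as an immediate corollary of the ``strong'' power law bound from item~\ref{item: positive} of Theorem~\ref{thm::purepartition_existence_forallK}, namely
\begin{align*}
0 < \PartF_\alpha(x_1, \ldots, x_{2N}) \;\le\; \prod_{\link{a}{b} \in \alpha} |x_b - x_a|^{-2h_{1,2}},
\qquad \kappa \in (0,6].
\end{align*}
The positivity in~\eqref{eq:Maleks bound} is then automatic, so only the upper bound requires an argument, which amounts to a short combinatorial estimate. The key point, and the reason for the assumption $\kappa \leq 6$, is that $h_{1,2} = (6-\kappa)/(2\kappa) \geq 0$, so that $t \mapsto t^{-h_{1,2}}$ is monotone decreasing on $(0,\infty)$.

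First, I would observe that since $x_1 < x_2 < \cdots < x_{2N}$, the closest neighbor of $x_i$ is one of its two lattice neighbors, giving the identity $\min_{j \neq i} |x_i - x_j| = \min(|x_i - x_{i-1}|,|x_i - x_{i+1}|)$ (with the convention that the missing terms at $i=1$ and $i=2N$ are dropped). Second, since $\alpha \in \LP_N$ is a pair partition of $\{1,\ldots,2N\}$, each index $i$ is paired with a unique partner $\alpha(i)$ with $\link{i}{\alpha(i)} \in \alpha$, and clearly $\min_{j \neq i} |x_i - x_j| \le |x_i - x_{\alpha(i)}|$.

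Applying the monotonicity of $t \mapsto t^{-h_{1,2}}$ and taking the product over $i$, I obtain
\begin{align*}
\prod_{i=1}^{2N} \big( \min_{j \neq i} |x_i - x_j| \big)^{-h_{1,2}}
\;\geq\; \prod_{i=1}^{2N} |x_i - x_{\alpha(i)}|^{-h_{1,2}}
\;=\; \prod_{\link{a}{b} \in \alpha} |x_b - x_a|^{-2h_{1,2}},
\end{align*}
where the final equality comes from the fact that each link $\link{a}{b} \in \alpha$ contributes the factor $|x_b-x_a|^{-h_{1,2}}$ twice in the middle product, once from $i=a$ and once from $i=b$. Combining with the strong bound from item~\ref{item: positive} yields~\eqref{eq:Maleks bound}.

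There is no substantive obstacle here, the whole argument rests on having the strong bound of item~\ref{item: positive} available, together with the sign $h_{1,2} \geq 0$. The only point to be wary of is that this proof, and indeed the statement, breaks down for $\kappa > 6$, where $h_{1,2} < 0$ reverses the monotonicity and the argument provides no control; this is precisely why the extension of~\eqref{eq:Maleks bound} to $\kappa \in (6,8)$ is posed as an open question in the previous subsection.
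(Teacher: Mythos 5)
Your proof is correct and is essentially identical to the paper's own argument: both start from the strong power law bound of item~\ref{item: positive}, split each factor $|x_b-x_a|^{-2h_{1,2}}$ between its two endpoints, and use $|x_i-x_{\alpha(i)}|\geq\min_{j\neq i}|x_i-x_j|$ together with the monotonicity of $t\mapsto t^{-h_{1,2}}$ for $h_{1,2}\geq 0$. Your closing remark about why the argument fails for $\kappa>6$ also matches the discussion in the paper.
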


\begin{proof}
If $\kappa\in (0,6]$, then  $h_{1,2} \geq 0$.
Thus, bound~\eqref{eqn::partitionfunction_positive} shows that,
for all $N \geq 1$, $\alpha \in \LP_N$, and for all $(x_1,\ldots, x_{2N}) \in \chamber_{2N}$, we have
\begin{align*}
0<\PartF_\alpha(x_1, \ldots, x_{2N}) 
\le \; & \prod_{\link{a}{b} \in \alpha} |x_{b}-x_{a}|^{-2h_{1,2}} 
=  \; \bigg( \prod_{j=1}^{N} |x_{2j-1}-x_{\alpha(2j-1)}|^{-h_{1,2}} \bigg)
\bigg( \prod_{j=1}^{N} |x_{2j}-x_{\alpha(2j)}|^{-h_{1,2}} \bigg) \\
\le  \; & \bigg( \prod_{j=1}^{N} \big( \min \big( |x_{2j-1}-x_{2j-2}| , |x_{2j-1}-x_{2j}|\big) \big)^{-h_{1,2}} \bigg)
\bigg( \prod_{j=1}^{N} 
\big( \min \big( |x_{2j}-x_{2j-1}| , |x_{2j}-x_{2j+1}| \big) \big)^{-h_{1,2}} \bigg)
,
\end{align*}
where $\alpha(i)$ denotes the pair of $i$ in $\alpha$, i.e., $\link{i}{\alpha(i)} \in \alpha$.
The claimed bound~\eqref{eq:Maleks bound} now follows by collecting the terms.
\end{proof}

It seems likely that the bound~\eqref{eq: BNNFB} also holds for the functions $\PartF_\linkpatt$ 
obtained as limits of the functions $\PartF_{\alpha(\linkpatt)}$ as explained in Section~\ref{subsec: fusion SCCG}. 
However, this property is not immediate from the construction, but would require additional arguments.

\begin{problem}
Prove property~\eqref{eq: BNNFB} for all of the functions in the collection 
$\{\PartF_\linkpatt \; | \; \linkpatt \in \LP_\multii, \, \multii \in \bZpos^n,  \, n \in \bZnn\}$.
\end{problem}

\newpage

\appendixpage

\begin{appendices}
\renewcommand{\thesection}{\Alph{section}}
\renewcommand{\thesubsection}{\arabic{subsection}}
\renewcommand{\thesubsubsection}{\Alph{subsubsection}}

\section{\label{app:Vir} Representation theory of the Virasoro algebra}

In this appendix, we summarize some aspects of the representation theory of the Virasoro algebra $\Vir$, which 
plays the role of infinitesimal symmetries in conformally invariant field theories. 
See, e.g., the book~\cite{Iohara-Koga:Representation_theory_of_Virasoro} for more background.

As a Lie algebra, $\Vir$ is spanned by 
$\{ \mathrm{L}_n \; | \; n \in \bZ \}$ and a central element $\mathrm{C}$, which satisfy the commutation relations 
\begin{align}\label{eq: Virasoro commutation relations}
[\mathrm{L}_n,\mathrm{C}] = 0 \qquad \quad \textnormal{and} \quad \qquad 
[\mathrm{L}_n,\mathrm{L}_m] = (n-m) \mathrm{L}_{n+m} 
+ \frac{1}{12} n(n^2-1) \delta_{n,-m} \mathrm{C} , \qquad \textnormal{for } n,m \in \bZ.
\end{align}
We will use the same notation $\Vir$ also for the universal enveloping algebra of the Virasoro algebra,
i.e., the associative algebra obtained by taking the quotient of polynomials in the generators of $\Vir$
modulo the relation $[X,Y] = XY - YX$.
(Because there is a one-to-one correspondence between the representations of a Lie algebra 
and its universal enveloping algebra, 
we do not have to distinguish between them here.)

Important elements of the general representation theory of Lie algebras are the highest-weight modules.
We say that a $\Vir$-module $V$ is a highest-weight module if $V = \Vir \, v_0$,
where $v_0$ is a highest-weight vector of weight 
$h \in \bC$ and central charge $c \in \bC$, that is, a vector $v_0 \in V$ satisfying 
\begin{align*}
\mathrm{L}_0 v_0 = h v_0, 
\qquad 
\mathrm{L}_n v_0 = 0 , \qquad \textnormal{for } n \geq 1, 
\qquad \textnormal{and} \qquad 
 \mathrm{C} v_0 = c v_0 .
 \end{align*}
In particular, 
for any pair $(c,h)$, there exists a unique Verma module 
$\mathrm{M}_{c,h} = \Vir/\mathrm{I}_{c,h}$
(up to isomorphism), where
$\mathrm{I}_{c,h}$ is the left ideal generated by the elements 
$\mathrm{L}_0 - h1$, $\mathrm{C} - c1$, and $\mathrm{L}_n$, for $n \geq 1$.
The Verma module $\mathrm{M}_{c,h}$ is a highest-weight module generated 
by a highest-weight vector $v_{c,h}$ of weight $h$ and central charge $c$
(given by the equivalence class of the unit $1$). 
It has a Poincar{\'e}-Birkhoff-Witt type basis
$\set{\mathrm{L}_{-n_1} \cdots \mathrm{L}_{-n_k}v_{c,h} \;|\; n_1 \geq \cdots \geq n_k > 0, \, k \in \bZnn}$
given by the action of the Virasoro generators with negative index,
ordered by applying the commutation relations~\eqref{eq: Virasoro commutation relations}.
The Verma modules $\mathrm{M}_{c,h}$ are universal in the sense that 
if $V$ is any $\Vir$-module containing a highest-weight vector $v_0$ of weight 
$h$ and central charge $c$, then there exists a canonical homomorphism 
$\varphi \colon \mathrm{M}_{c,h} \to V$ such that 
$\varphi (v_{c,h}) = v_0$. 
In other words, any highest-weight $\Vir$-module is isomorphic to a quotient of some Verma module.

%

Each Verma module $\mathrm{M}_{c,h}$ has a unique maximal proper submodule,
and the quotient of $\mathrm{M}_{c,h}$ by this submodule is 
the unique irreducible highest-weight $\Vir$-module of weight $h$ and central charge $c$.
In general, submodules of Verma modules were classified by B.~Fe{\u\i}gin and D.~Fuchs
\cite{Feigin-Fuchs:Invariant_skew-symmetric_differential_operators_on_the_line_and_Verma_modules_over_Virasoro,
Feigin-Fuchs:Verma_modules_over_Virasoro_book,
Feigin-Fuchs:Representations_of_Virasoro}, who showed that
every non-trivial submodule of a Verma module $\mathrm{M}_{c,h}$ is generated by some singular vectors
--- a vector $v \in \mathrm{M}_{c,h} \setminus \set{0}$ is said to be
singular at level $\ell \in \bZpos$ if it has the properties
\begin{align}\label{eq: definition of singular vector}
\mathrm{L}_0 v  =  (h + \ell) v \qquad \textnormal{and} \qquad 
\mathrm{L}_n v  =  0 , \qquad \textnormal{for } n \geq 1.
\end{align}
Note that the $\mathrm{L}_0$-eigenvalue of a basis vector 
$v = \mathrm{L}_{-n_1} \cdots \mathrm{L}_{-n_k} v_{c,h} \in \mathrm{M}_{c,h}$
can be calculated using the commutation 
relations~\eqref{eq: Virasoro commutation relations}: we have
$\mathrm{L}_0 v = (h + \sum_{i=1}^k n_i) v = (h + \ell) v$.
The number $\ell := \sum_{i=1}^k n_i$ is called the level of the vector $v$.

In particular, Fe{\u\i}gin and Fuchs found a characterization for the existence of singular vectors
and thus for the irreducibility of $\mathrm{M}_{c,h}$.
Indeed, the Verma module $\mathrm{M}_{c,h}$ is irreducible if and only if it contains no singular vectors. 
On the other hand, $\mathrm{M}_{c,h}$ contains singular vectors precisely when the numbers $(c,h)$ belong to a special class:
\begin{theorem} \label{thm:FF}
\textnormal{\cite[Proposition~\red{1.1} \& Theorem~\red{1.2}]{Feigin-Fuchs:Verma_modules_over_Virasoro_book}}
\; 
The following are equivalent:
\begin{enumerate}
\itemcolor{red}
\item The Verma module $\mathrm{M}_{c,h}$ contains a singular vector.

\item There exist $r,s \in \bZpos$, and $t \in \bC \setminus \set{0}$ such that 
\begin{align}\label{eq: Kac weights GEN}
h = h_{r,s}(t) := 
\frac{(r^2-1)}{4}  t + \frac{(s^2-1)}{4}  t^{-1} 
+ \frac{(1-rs)}{2} 
\qquad
\qquad \textnormal{and} \qquad
\qquad 
c = c(t) = 13 - 6( t + t^{-1} ).
\end{align}
In this case, the smallest such $\ell = rs$ is the lowest level at which a singular vector occurs in $\mathrm{M}_{c,h}$.
\end{enumerate}
\end{theorem}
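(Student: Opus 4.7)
\begin{proofIdea}
My plan is to prove the theorem via the Kac determinant formula applied to the Shapovalov (contravariant) bilinear form on $\mathrm{M}_{c,h}$. First, I would introduce the anti-involution $\omega\colon\Vir\to\Vir$ defined by $\omega(\mathrm{L}_n)=\mathrm{L}_{-n}$ and $\omega(\mathrm{C})=\mathrm{C}$, and then construct the unique symmetric bilinear form $\langle\cdot,\cdot\rangle$ on $\mathrm{M}_{c,h}$ characterized by $\langle v_{c,h},v_{c,h}\rangle=1$ and the contravariance relation $\langle Xu,v\rangle=\langle u,\omega(X)v\rangle$. The eigenspace decomposition of $\mathrm{L}_0$ gives a grading
\begin{align*}
\mathrm{M}_{c,h}=\bigoplus_{\ell\ge 0}\mathrm{M}_{c,h}[\ell],\qquad \mathrm{M}_{c,h}[\ell]:=\spn\{\mathrm{L}_{-n_1}\cdots\mathrm{L}_{-n_k}v_{c,h}\mid n_1\ge\cdots\ge n_k>0,\;\textstyle\sum_i n_i=\ell\},
\end{align*}
with $\dmn\mathrm{M}_{c,h}[\ell]=p(\ell)$ (the number of partitions of $\ell$), and distinct levels are orthogonal with respect to $\langle\cdot,\cdot\rangle$.

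The key reformulation is that a non-zero vector $v\in\mathrm{M}_{c,h}[\ell]$ is singular in the sense of~\eqref{eq: definition of singular vector} if and only if $v$ lies in the kernel of $\langle\cdot,\cdot\rangle$ restricted to $\mathrm{M}_{c,h}[\ell]$. Indeed, singularity requires $\langle\mathrm{L}_{n_1}\cdots\mathrm{L}_{n_k}v,v_{c,h}\rangle=\langle v,\mathrm{L}_{-n_k}\cdots\mathrm{L}_{-n_1}v_{c,h}\rangle=0$ for all monomials of total degree $\ell$, since these span $\mathrm{M}_{c,h}[\ell]$; conversely, the maximal proper submodule of $\mathrm{M}_{c,h}$ coincides with the radical of the form, and its non-zero intersection with $\mathrm{M}_{c,h}[\ell]$ must contain a singular vector of minimal level. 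Thus, the existence of singular vectors in $\mathrm{M}_{c,h}$ is equivalent to the vanishing of the Gram determinant $\det M_\ell(c,h)$ for some $\ell\ge 1$.

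At this stage, the whole content of the theorem reduces to the Kac determinant formula
\begin{align*}
\det M_\ell(c,h)=K_\ell\,\prod_{\substack{r,s\in\bZpos\\ rs\le \ell}}\bigl(h-h_{r,s}(t)\bigr)^{p(\ell-rs)},
\end{align*}
where $c=c(t)$ as in~\eqref{eq: Kac weights GEN} and $K_\ell>0$ is a universal constant. To prove this formula, I would first establish that $\det M_\ell$ is a polynomial in $h$ of degree $\sum_{rs\le\ell}p(\ell-rs)$ by inspecting the leading term coming from the diagonal entries $\langle\mathrm{L}_{-n_1}\cdots\mathrm{L}_{-n_k}v_{c,h},\mathrm{L}_{-n_1}\cdots\mathrm{L}_{-n_k}v_{c,h}\rangle$. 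Then I would identify the factors $(h-h_{r,s}(t))^{p(\ell-rs)}$: for each pair $(r,s)$ with $rs\le\ell$, when $h=h_{r,s}(t)$ a singular vector exists at level $rs$, giving descendant null vectors at all higher levels $\ell$, which contributes the factor with exponent $p(\ell-rs)$. Matching degrees and leading coefficients then fixes the formula up to $K_\ell$. Granting the Kac determinant formula, the equivalence in the theorem is immediate: $\mathrm{M}_{c,h}$ contains a singular vector iff some $\det M_\ell(c,h)=0$, iff $h=h_{r,s}(t)$ for some $r,s\in\bZpos$ with $rs\le\ell$; and the minimal such $\ell$ equals $rs$ because $p(0)=1$ while all $p(\ell-rs)$ with $\ell<rs$ do not contribute.

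The hardest step is establishing the existence of singular vectors whenever $h=h_{r,s}(t)$, which is needed to locate the roots of $\det M_\ell$. The cleanest route is via the Feigin--Fuchs free field (Heisenberg) realization: one constructs a Fock module $\mathcal{F}_{\alpha,\alpha_0}$ on which $\Vir$ acts with central charge $c=1-12\alpha_0^2$ and highest weight $h=\tfrac{1}{2}\alpha(\alpha-2\alpha_0)$, together with screening operators that commute with the Virasoro action and produce explicit singular vectors when $\alpha$ takes the Kac-resonant values corresponding to $h_{r,s}(t)$. I expect this Fock-space construction of singular vectors, together with the careful reconstruction of $\det M_\ell$ from its zero locus and leading term, to be the main obstacle; after that, the deduction of Theorem~\ref{thm:FF} as stated is essentially bookkeeping.
\end{proofIdea}
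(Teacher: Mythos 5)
The paper does not prove this statement at all: Theorem~\ref{thm:FF} is imported verbatim as a citation to Fe{\u\i}gin--Fuchs, so there is no in-paper argument to compare yours against. Your sketch is, in substance, the classical proof of that cited result: reduce the existence of singular vectors to the vanishing of the Shapovalov Gram determinants $\det M_\ell(c,h)$, establish the Kac determinant formula by degree counting plus identification of the roots $h_{r,s}(t)$, and supply the needed existence of singular vectors at $h=h_{r,s}(t)$ via the Fock-module (free-field) realization with screening operators. That is the standard route in the literature, and correctly identifies where the real work lies (the determinant formula and the screening construction); so as a proof \emph{plan} it is sound.

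One local imprecision is worth flagging. Your opening reformulation --- that a non-zero $v\in\mathrm{M}_{c,h}[\ell]$ is singular \emph{if and only if} it lies in the kernel of the form restricted to $\mathrm{M}_{c,h}[\ell]$ --- is false in the ``if'' direction: descendants of a singular vector at a lower level, e.g.\ $\mathrm{L}_{-1}w$ for $w$ singular at level $rs<\ell$, lie in the radical at level $\ell$ without being singular (indeed $\mathrm{L}_{1}\mathrm{L}_{-1}w=2(h+rs)w\neq 0$ generically). The correct statement, which is all you actually use, is the weaker equivalence ``$\det M_\ell=0$ for some $\ell$ $\Longleftrightarrow$ $\mathrm{M}_{c,h}$ contains a singular vector at some level $\leq\ell$,'' and your second sentence (radical is a submodule, so a vector of minimal level in it is annihilated by all $\mathrm{L}_n$, $n>0$) is the right justification. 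A second point to keep in mind when you execute the plan: the multiplicity count $p(\ell-rs)$ in the determinant formula only gives a lower bound on the order of vanishing from the descendants of one singular vector; you need the degree count in $h$ to close the gap, and you must handle coincidences $h_{r,s}(t)=h_{r',s'}(t)$ at special $t$ (typically by working at generic $t$ first and arguing by continuity/polynomiality). Neither issue is fatal, but both need to be stated correctly in a full write-up.
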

Fe{\u\i}gin and Fuchs also obtained a fine classification of the submodule structure for the Verma modules.
The weights $h_{r,s}$ are the roots of the Kac determinant
\cite{Kac:Contravariant_form_for_infinite-dimensional_Lie_algebras_and_superalgebras,
Kac:Highest_weight_representations_of_infinite_dimensional_Lie_algebras}, 
often called Kac conformal weights. 
For instance, one can check that $\mathrm{L}_{-1} v_{c,h}$ is a singular vector at 
level one if and only if $h = h_{1,1} = 0$. As a more involved example, let us make an ansatz 
\begin{align}\label{eq: singular vector level two}
v  =  (\mathrm{L}_{-2} + a \mathrm{L}_{-1}^2) v_{c,h} 
\end{align}
for a singular vector at level two, with some $a \in \bC$.
Definition~\eqref{eq: definition of singular vector} implies that,  
in order for $v$ to be singular,
we must have $a = - \frac{3}{2(2h + 1)}$
and $h = \frac{1}{16} \big( 5 - c \pm \sqrt{(c-1)(c-25)} \big)$,
which equals $h_{1,2}$ or $h_{2,1}$ depending on the choice of sign.

In general, explicit expressions for singular vectors are hard to find --- 
one has to construct a suitable (complicated) polynomial $P$ so that the vector
$v = P(\mathrm{L}_{-1},\mathrm{L}_{-2},\ldots) v_{c,h}$
is singular. Remarkably, in the case when either $r=1$ or $s=1$,  
L.~Benoit and Y.~Saint-Aubin
found a family of such vectors~\cite{BSA:Degenerate_CFTs_and_explicit_expressions_for_some_null_vectors}: for $r = 1$ and $s \in \bZpos$, the singular vector at level $\ell = s$ 
has the formula
\begin{align}\label{eq: BSA singular vector}
\sum_{k=1}^{s}\sum_{\substack{n_{1},\ldots,n_{k}\geq1\\
n_{1}+\ldots+n_{k} = s}
}\frac{(-t)^{k-s}\,(s-1)!^{2}}{\prod_{j=1}^{k-1}(\sum_{i=1}^{j}n_{i})(\sum_{i=j+1}^{k}n_{i})}\times
\mathrm{L}_{-n_{1}}\cdots\mathrm{L}_{-n_{k}} v_{c,h_{1,s}}.
\end{align}
The case $s = 1$ and $r \in \bZpos$ is obtained by taking $t \mapsto t^{-1}$.
Later,
M.~Bauer, P.~Di Francesco, C.~Itzykson, and J.-B.~Zuber found the general singular vectors 
via a fusion procedure~\cite{BDIZ:Covariant_differential_equations_and_singular_vectors_in_Virasoro_representations}.
The formulas for these expressions, however, are not explicit.

\bigskip

As described in Section~\ref{subsec:CFT}, singular vectors give rise to degeneracies in conformal field theory
--- null fields whose correlation functions are solutions to PDEs~\eqref{eq: PDE for correlation functions} 
obtained from the Virasoro generators.
From the singular vector at level one, one obtains the null field $\mathrm{L}_{-1} \Phi_{1,1}(z)$, whose correlation functions
$F_{\index_1, \ldots, \index_n, \index} (z_1,\ldots,z_n, z) := \big\langle \Phi_{\index_1}(z_1) \cdots \Phi_{\index_n}(z_n) \Phi_{1,1}(z) \big\rangle$
satisfy the PDE 
\begin{align*}
0 = \mathcal{L}_{-1}^{(z)} \; F_{\index_1, \ldots, \index_n, \index} (z_1,\ldots,z_n, z) 
= - \sum_{i=1}^n \pder{z_i} F_{\index_1, \ldots, \index_n, \index} (z_1,\ldots,z_n, z) .
\end{align*}
Assuming that the correlation function $F$ is translation-invariant, we can replace 
$\sum_{i=1}^n \pder{z_i}$ by the single derivative $\pder{z}$, so 
\begin{align*}
\pder{z} F_{\index_1, \ldots, \index_n, \index} (z_1,\ldots,z_n, z) = 0 ,
\end{align*}
i.e., the correlation function is constant in the variable $z$ corresponding to $\Phi_{1,1}(z)$.

More interestingly, for the level two singular vectors~\eqref{eq: singular vector level two}, the corresponding null fields are
$\big(\mathrm{L}_{-2} - \frac{3}{2(2h_{1,2} + 1)} \mathrm{L}_{-1}^2 \big) \Phi_{1,2}(z)$
and
$\big(\mathrm{L}_{-2} - \frac{3}{2(2h_{2,1} + 1)} \mathrm{L}_{-1}^2 \big) \Phi_{2,1}(z)$.
In the former case, the correlation functions
$F_{\index_1, \ldots, \index_n, \index} (z_1,\ldots,z_n, z) := \big\langle \Phi_{\index_1}(z_1) \cdots \Phi_{\index_n}(z_n) \Phi_{1,2}(z) \big\rangle$
satisfy the second order PDE
\begin{align} \label{eq: singular equation level two}
\left[
- \frac{3}{2(2h_{1,2} + 1)} \left( \sum_{i=1}^n \pder{z_i} \right)^2
\hspace*{-1mm}
- 
\;
\sum_{i=1}^n 
\left(  \frac{1}{z_i - z} \pder{z_i}  -\frac{\Delta_{\index_i}}{(z_i - z)^2}\right) 
\right]
F_{\index_1, \ldots, \index_n, \index} (z_1,\ldots,z_n, z) = 0 ,
\end{align}
where $\Delta_{\index_i}$ are the conformal weights of the fields $\Phi_{\index_i}$, for $1 \leq i  \leq n$.
Assuming again translation invariance, this PDE simplifies~to
\begin{align} \label{eq: singular equation level two simplified}
\left[
- \frac{3}{2(2h_{1,2} + 1)} \pdder{z}
\; - \; \sum_{i=1}^n 
\left(  \frac{1}{z_i - z} \pder{z_i}  - \frac{\Delta_{\index_i}}{(z_i - z)^2}\right) 
\right]
F_{\index_1, \ldots, \index_n, \index} (z_1,\ldots,z_n, z) = 0 .
\end{align}

\begin{remark}
Using the parameterization $t = \kappa / 4$, 
we have $c = \frac{(3\kappa-8)(6-\kappa)}{2\kappa}$ and $h_{1,2}  = \frac{6-\kappa}{2\kappa}$,
and if we take in addition $\Delta_{\index_i} = h_{1,2}$, for all $1 \leq i  \leq n$,
then PDE~\eqref{eq: singular equation level two simplified} is equivalent to~\eqref{eq: multiple SLE PDEs} appearing in Section~\ref{sec: Multiple SLE partition functions}.
\end{remark}

In Section~\ref{sec:OPE}, we briefly discuss higher order PDEs obtained from the higher level singular vectors~\eqref{eq: BSA singular vector}.

\section{\label{app:Hao}Probabilistic construction of the pure partition functions}

In this appendix, we discuss a probabilistic approach to construct the pure partition functions $\PartF_\alpha$ inductively 
using SLE theory. The construction is rigorous for $\kappa\in (0,6]$,  as  proved recently by H.~Wu~\cite{Wu:Convergence_of_the_critical_planar_ising_interfaces_to_hypergeometric_SLE}.
For $\kappa\in (6,8)$, the same construction should also work, but to carry 
it out, one needs certain estimates of technical nature, which seem unavailable at the moment.
We review the approach of~\cite{Wu:Convergence_of_the_critical_planar_ising_interfaces_to_hypergeometric_SLE},
pointing out where the difficulties for $\kappa > 6$ emerge.

We recall that $\alpha$ denote planar pair partitions of the integers $\{1,2,\ldots,2N\}$,
that we call link patterns,
and $\LP_N$ denotes the set of all of them for fixed $N \geq 0$.
The cardinality of this set is the $N$:th Catalan number, $\LP_N = \# \Catalan_N:= \frac{1}{N+1} \binom{2N}{N}$. Also, we set
\begin{align*}
\LP := \bigsqcup_{N\geq0} \LP_N 
\qquad \qquad 
\textnormal{and} \qquad \qquad
\LP_{<N} := \bigsqcup_{M=0}^{N-1} \LP_M .
\end{align*}
Our aim is to construct a collection of functions 
\begin{align}
\{ \PartF_\alpha \; | \; \alpha \in \LP \} 
\end{align}
inductively as follows. Set $\PartF_{\emptyset} \equiv 1$. 
Let $N \geq 1$ and suppose that all of the functions $\{ \PartF_\alpha \; | \; \alpha \in \LP_{<N} \}$
have been defined and that they satisfy properties $\mathrm{(COV)}$, $\mathrm{(PDE)}$,
asymptotics property $\mathrm{(ASY)}$ for all $\alpha \in \LP_{<N}$, as well as 
the strong bounds $\mathrm{(B)}$: 
\begin{align*} 
0<\PartF_\alpha \le \prod_{\link{a}{b} \in \alpha} |x_{b}-x_{a}|^{-2h} .
\end{align*}
Via~\eqref{eq: ppf def in polygon}, we extend the definition of these functions to polygons 
$(\Omega; x_1, \ldots, x_{2n})$ with $2n < 2N$ marked points (on sufficiently regular boundary segments).
Then, for a fixed polygon $(\Omega; x_1, \ldots, x_{2N})$ and for fixed $\alpha \in \LP_N$, we set
\begin{align} \label{eq: Haos construction}
\PartF_\alpha(\Omega; x_1, \ldots, x_{2N}) 
:= H_\Omega (x_a, x_b)^h \;
\mathbb{E}_{\Omega;a,b} \big[ \PartF_{\hat{\alpha}}(\hat{\Omega}_\eta; x_1,\ldots,\hat{x}_a,\ldots,\hat{x}_b,\ldots,x_{2N}) \big] ,
\end{align}
where $\link{a}{b} \in \alpha$ is any link in $\alpha$ with $a < b$, 
the notation $\hat{x}_a$ and $\hat{x}_b$ means that these variables are omitted, and 
\begin{itemize}
\item $H_\Omega (x_a, x_b) = H_\Omega (x_b,x_a)$ is the boundary Poisson kernel in $\Omega$ between the points $x_a, x_b \in \partial \Omega$,

\item  
$h = \frac{6-\kappa}{2\kappa}$ (note that $h \geq 0$ when $\kappa\in (0,6]$ and $h < 0$ when $\kappa > 6$),

\item $\mathbb{E}_{\Omega;a,b} = \mathbb{E}_{\Omega;b,a}$ 
is the expectation under the probability measure $\mathbb{P}_{\Omega;a,b}$
of the chordal $\SLEk$ curve $\eta$ in $(\Omega; x_a, x_b)$, 
which is symmetric in the interchange of $x_a$ and $x_b$ by the celebrated reversibility property of the 
$\SLEk$ measure~\cite{Zhan:Reversibility_of_chordal_SLE, Sheffield-Miller:Imaginary_geometry3}, 

\item $\hat{\alpha} = \alpha \removeLink \link{a}{b} \in \LP_{N-1}$
is obtained from $\alpha$ by removing the link $\link{a}{b}$,

\item $\hat{\Omega}_\eta$ is the union of those connected components $D$ of 
$\Omega \setminus \eta$ that contain some of the points $\{x_1, \ldots, x_{2N}\} \setminus \{x_a, x_b\}$ in $\cl{D}$: 
\begin{align*}
\hat{\Omega}_\eta := 
\bigsqcup_{\substack{ D \textnormal{ c.c of } \Omega \setminus \eta \\ \cl{D} \cap \{x_1, \ldots, x_{2N}\} \setminus \{x_a, x_b\} \neq \emptyset }} D ,
\end{align*}

\item and $\PartF_{\hat{\alpha}}(\hat{\Omega}_\eta; \cdots)$ 
is a generalized pure partition function defined for the (random) finite union $\hat{\Omega}_\eta$
of polygons $D$ as follows:
\begin{itemize}
\item If $\eta$ partitions $\Omega$ into components such that 
the variables $x_c$ and $x_d$ corresponding to some link $\link{c}{d} \in \hat{\alpha}$
belong to different components of $\Omega$, then we set
$\PartF_{\hat{\alpha}}(\hat{\Omega}_\eta; \cdots) := 0$.
(Note that, as $\kappa \in (0,8)$, this event has probability $<1$.)

\item Otherwise, denoting by $\hat{\alpha}_D$ the 
sub-link patterns of $\hat{\alpha}$ 
associated to the components $D \subset \hat{\Omega}_\eta$, we set
\begin{align} \label{eq: componentwise pf}
\PartF_{\hat{\alpha}}(\hat{\Omega}_\eta; \cdots)
:= 
\prod_{\substack{ D \textnormal{ c.c of } \Omega \setminus \eta \\ \cl{D} \cap \{x_1, \ldots, x_{2N}\} \setminus \{x_a, x_b\} \neq \emptyset }}
\PartF_{\hat{\alpha}_D}(D; \cdots) ,
\end{align}
where for each $D$, the ellipses ``$\cdots$'' stand for those 
variables among $\{x_{1},\ldots,x_{2N}\} \setminus \{x_a,x_b\}$ which belong to  $\partial D$.
\end{itemize}
\end{itemize}
We remark that the functions $\PartF_{\hat{\alpha}_D}(D; \cdots)$ have less than $2N$ variables and have thus been defined already.

\bigskip

The first task is to show that $\PartF_\alpha$ is well-defined  via~\eqref{eq: Haos construction}, 
i.e., that the right-hand side of~\eqref{eq: Haos construction}
does not depend on the choice of the link $\link{a}{b} \in \alpha$. 
H.~Wu  proved this 
in~\cite[Lemma~\red{6.2}]{Wu:Convergence_of_the_critical_planar_ising_interfaces_to_hypergeometric_SLE} for the case of $\kappa \in (0,6]$,
and the same proof also works for $\kappa \in (6,8)$. The crucial ingredients in this proof are properties of 
the $2$-$\SLEk$ process (``hypergeometric'' SLE in~\cite{Wu:Convergence_of_the_critical_planar_ising_interfaces_to_hypergeometric_SLE}),
a probability measure on pairs $(\gamma_1, \gamma_2)$ of curves, symmetric in the exchange of the two curves
--- see Appendix~\ref{app}.

\begin{restatable}{prop}{welldefined}
\label{prop: well-defined}
\textnormal{\cite[Lemma~\red{6.2}, extended]{Wu:Convergence_of_the_critical_planar_ising_interfaces_to_hypergeometric_SLE}} 
\;
Let $\kappa\in (0,8)$.
The function $\PartF_\alpha$ is well-defined  via Equation~\eqref{eq: Haos construction}, that is, for any  two different links 
$\link{a}{b}, \link{c}{d} \in \alpha$, we have
\begin{align} \label{eq: Haos construction independent of link choice}
\begin{split}
\PartF_\alpha(\Omega; x_1, \ldots, x_{2N}) 
:= \; & H_\Omega (x_a, x_b)^h \;
\mathbb{E}_{\Omega;a,b} \big[ \PartF_{\alpha \removeLink \link{a}{b}}(\hat{\Omega}_\eta; x_1,\ldots,\hat{x}_a,\ldots,\hat{x}_b,\ldots,x_{2N}) \big]  \\
= \; & H_\Omega (x_c, x_d)^h \;
\mathbb{E}_{\Omega;c,d} \big[ \PartF_{\alpha \removeLink \link{c}{d}}(\hat{\Omega}_\eta; x_1,\ldots,\hat{x}_c,\ldots,\hat{x}_d,\ldots,x_{2N}) \big] .
\end{split}
\end{align}
\end{restatable}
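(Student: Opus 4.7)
\begin{proofIdea}
I would argue by induction on $N$. The case $N=1$ is vacuous since $\alpha$ contains a unique link. Fix $N \geq 2$, $\alpha \in \LP_N$, and two distinct links $\link{a}{b}, \link{c}{d} \in \alpha$. The inductive hypothesis states that all pure partition functions on fewer than $2N$ points are well-defined, so their values do not depend on the link chosen in the recursive formula~\eqref{eq: Haos construction}; in particular, I may freely re-express $\PartF_{\hat{\alpha}}$ inside any subdomain using whichever link is most convenient.

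The plan is to show that both sides of~\eqref{eq: Haos construction independent of link choice} equal a common iterated expectation over a pair of chordal $\SLEk$ curves. For the left-hand side, first sample $\eta \sim \mathbb{P}_{\Omega;a,b}$; then, inside the component $D$ of $\Omega \setminus \eta$ whose boundary carries both $x_c$ and $x_d$, use the inductive hypothesis to rewrite the factor $\PartF_{\alpha \removeLink \link{a}{b}}$ via the link $\link{c}{d}$, i.e., by sampling a chordal $\SLEk$ curve $\eta'$ in $(D; x_c, x_d)$ and then recursing on the remaining link pattern. The remaining components of $\Omega \setminus \eta$ contribute a product that, by induction, is unambiguously defined. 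Performing the analogous manipulation on the right-hand side produces the same iterated integral but with the sampling order of $\eta$ and $\eta'$ reversed. The claim therefore reduces to the statement that the joint law of the pair $(\eta, \eta')$ is the same under both sampling schemes.

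This is precisely the symmetric construction of the $2$-$\SLEk$ (hypergeometric $\SLE_\kappa$) discussed in Appendix~\ref{app}: two commuting $\SLEk$ curves connecting $\{x_a, x_b\}$ and $\{x_c, x_d\}$ admit two equivalent descriptions, differing by which curve is grown first, with each marginal given by a chordal $\SLEk$ suitably Radon--Nikodym weighted by the relevant $4$-point partition function. This symmetry ultimately rests on the reversibility of chordal $\SLE_\kappa$~\cite{Zhan:Reversibility_of_chordal_SLE, Sheffield-Miller:Imaginary_geometry3} together with the classification of local $2$-$\SLEk$ measures by their partition functions~\cite{Dubedat:Commutation_relations_for_SLE}. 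Granting this symmetry, the two iterated expectations coincide term by term and the claimed equality follows.

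The main obstacle in extending the argument from $\kappa \in (0,6]$ to $\kappa \in (6,8)$ is integrability of the expectations in~\eqref{eq: Haos construction}. When $h = (6-\kappa)/(2\kappa) < 0$, the factor $H_{\hat{\Omega}_\eta}(x_\bullet, x_\bullet)^h$ appearing inside the expectation can blow up as the $\SLEk$ curve $\eta$ comes close to one of the remaining marked points on $\partial \Omega$, and the inductive bound $\mathrm{(B)}$ no longer yields a deterministic $L^1$ envelope on its own. Proving integrability in this regime requires quantitative control on how close $\eta$ approaches the other marked points, and through this, control on the blow-up of the boundary Poisson kernels inside $\hat{\Omega}_\eta$. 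Such estimates are not currently available in the literature, and this is the only missing ingredient for extending the construction uniformly to all $\kappa \in (6,8)$.
\end{proofIdea}
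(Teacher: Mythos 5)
Your proposal is correct and follows essentially the same route as the paper: use the inductive hypothesis to re-expand the inner partition function along the other link, absorb the resulting boundary Poisson kernel factor $\one_{\mathcal{E}_\eta} H_{D_{c,d}}(x_c,x_d)^h$ as a Radon--Nikodym reweighting of the first chordal $\SLEk$ to the marginal of the symmetric $2$-$\SLEk$, and conclude that both sides equal $\PartF_{\{\link{a}{b},\link{c}{d}\}}(x_a,x_b,x_c,x_d)$ times a $2$-$\SLEk$ expectation that is manifestly symmetric under exchanging the two links. The only cosmetic difference is that the paper settles the $N=2$ base case by inspecting the explicit hypergeometric formulas~\eqref{eq: hg formulas for 4p fctions1}--\eqref{eq: hg formulas for 4p fctions2}, whereas you fold it into the same $2$-$\SLEk$ symmetry you invoke for the inductive step; and your closing integrability caveat concerns the bounds $\mathrm{(B)}$/$\mathrm{(ASY)}$ rather than this proposition, which the paper asserts goes through for all $\kappa \in (0,8)$.
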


We will summarize the main steps of the proof in Appendix~\ref{app}, where we also briefly discuss the $2$-$\SLEk$. 

\bigskip 

\begin{center}
\bf Properties $\mathrm{(COV)}$, $\mathrm{(PDE)}$, $\mathrm{(ASY)}$, and $\mathrm{(B)}$ for the functions $\PartF_\alpha$.
\end{center}

For the case of $\Omega = \bH$ and $x_1 < \cdots < x_{2N}$, we have
$H_\bH (x_a, x_b) = |x_b-x_a|^{-2}$, so
\begin{align} \label{eq: Haos constructionH}
\PartF_\alpha(x_1, \ldots, x_{2N}) 
:= \PartF_\alpha(\bH;x_1, \ldots, x_{2N}) 
:=
|x_b-x_a|^{-2h} \;
\mathbb{E}_{\bH;a,b} \big[ \PartF_{\hat{\alpha}}(\hat{\bH}_\eta; x_1,\ldots,\hat{x}_a,\ldots,\hat{x}_b,\ldots,x_{2N}) \big] .
\end{align}
We aim to prove the following properties for this function:
\begin{enumerate}
\itemcolor{red}
\item
The function $\PartF_\alpha$ satisfies the M\"obius covariance~\eqref{eq: multiple SLE Mobius covariance} in property $\mathrm{(COV)}$.
[See Lemma~\ref{lem: COV}.]

\item 
The function $\PartF_\alpha \colon \chamber_{2N} \to \bRpos$ 
is smooth and it solves the PDE system~\eqref{eq: multiple SLE PDEs} in property $\mathrm{(PDE)}$.
[See Lemma~\ref{lem: PDE}.]

\item 
The collection $\{ \PartF_\alpha \; | \; \alpha \in \LP_{<N+1} \}$
satisfies the recursive asymptotics in~\eqref{eq: multiple SLE asymptotics} in property $\mathrm{(ASY)}$.
[See Lemma~\ref{lem: ASY}.]

\item 
The function $\PartF_\alpha$ satisfies the strong bound~\eqref{eqn::partitionfunction_positive} in property $\mathrm{(B)}$.
[See Lemma~\ref{lem: bound}.]

\end{enumerate}

\noindent
The first property~$\mathrm{(COV)}$ is immediate from construction:

\begin{lem} \label{lem: COV} 
\textnormal{\cite[Lemma~\red{6.5}, extended]{Wu:Convergence_of_the_critical_planar_ising_interfaces_to_hypergeometric_SLE}} 
\;
Let $\kappa \in (0,8)$. 
The function $\PartF_\alpha$ defined in~\eqref{eq: Haos constructionH} 
satisfies the M\"obius covariance~\eqref{eq: multiple SLE Mobius covariance} in property $\mathrm{(COV)}$.
\end{lem}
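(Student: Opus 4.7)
The proof goes by induction on $N$. The base case $N=0$ is trivial since $\PartF_\emptyset \equiv 1$ is constant. For the inductive step, suppose $\mathrm{(COV)}$ holds for every $\alpha' \in \LP_{<N}$; combined with \eqref{eq: ppf def in polygon} and the product rule \eqref{eq: componentwise pf}, the functions $\PartF_{\alpha'}$ are then unambiguously defined on arbitrary polygons and carry a componentwise covariance rule on disjoint unions. Fix $\alpha \in \LP_N$ and a M\"obius map $\Mob : \bH \to \bH$ with $\Mob(x_1) < \cdots < \Mob(x_{2N})$, set $y_i := \Mob(x_i)$, and pick any link $\link{a}{b} \in \alpha$.

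My plan is to expand both sides of the asserted identity \eqref{eq: multiple SLE Mobius covariance} using the definition \eqref{eq: Haos constructionH} with the \emph{same} link $\link{a}{b}$, and identify them via three standard ingredients: (i) the M\"obius identity $(\Mob(u)-\Mob(v))^2 = \Mob'(u)\Mob'(v)(u-v)^2$, which rewrites $|y_b-y_a|^{-2h}$ as $\Mob'(x_a)^{-h}\Mob'(x_b)^{-h}|x_b-x_a|^{-2h}$; (ii) conformal invariance of chordal $\SLEk$ from Definition~\ref{defn:SLE}, which provides the coupling $\tilde\eta \stackrel{d}{=} \Mob(\eta)$ with $\eta \sim \PR_{\bH;x_a,x_b}$ and the pathwise identity $\hat{\bH}_{\tilde\eta} = \Mob(\hat{\bH}_\eta)$ between the random (possibly disconnected) subdomains; and (iii) the inductive hypothesis, applied componentwise through \eqref{eq: ppf def in polygon} and \eqref{eq: componentwise pf}, which yields
\begin{align*}
\PartF_{\hat\alpha}\bigl(\hat{\bH}_\eta;\{x_i\}_{i\neq a,b}\bigr)
\;=\; \prod_{i\neq a,b}\Mob'(x_i)^h \;\PartF_{\hat\alpha}\bigl(\Mob(\hat{\bH}_\eta);\{y_i\}_{i\neq a,b}\bigr)
\end{align*}
on the coupling. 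Taking expectations and collecting prefactors, the right-hand side of \eqref{eq: multiple SLE Mobius covariance} collapses onto $\PartF_\alpha(x_1,\ldots,x_{2N})$.

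There is no deep obstacle; the argument is essentially an unpacking of definitions. The only bookkeeping point worth verifying carefully is step (iii): one has to check that the $\Mob$-induced bijection between the connected components of $\bH\setminus\eta$ and those of $\bH\setminus\tilde\eta$ preserves the assignment of marked points to components, and hence preserves the sub-link patterns $\hat\alpha_D$ appearing in \eqref{eq: componentwise pf}, so that the per-component boundary factors telescope to exactly $\prod_{i\neq a,b}\Mob'(x_i)^h$ without any double-counting or omission. Finally, independence of the resulting identity from the particular choice of $\link{a}{b}$ is automatic from Proposition~\ref{prop: well-defined}, which has already been established.
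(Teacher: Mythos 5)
Your proposal is correct and follows essentially the same route as the paper: the paper's proof invokes exactly the three ingredients you name (the definition \eqref{eq: Haos constructionH}, conformal invariance of $\mathbb{P}_{\bH;a,b}$, and the covariance $H_\bH(x_a,x_b)=\Mob'(x_a)\Mob'(x_b)H_\bH(\Mob(x_a),\Mob(x_b))$ of the boundary Poisson kernel), with the induction over $N$ and the componentwise application of \eqref{eq: ppf def in polygon} left implicit. Your more careful unpacking of step (iii) is a fine elaboration rather than a different argument.
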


\begin{proof}
This follows from the construction of $\PartF_\alpha$ in~\eqref{eq: Haos constructionH}, 
conformal invariance of the chordal $\SLEk$ measure $\mathbb{P}_{\bH;a,b}$, 
and the conformal covariance property
$H_\bH (x_a, x_b) = \Mob'(x_a) \Mob'(x_b) \; H_\bH (\Mob(x_a), \Mob(x_b))$
of the boundary Poisson kernel.
\end{proof}

When $\kappa \in (0,6]$,
property $\mathrm{(ASY)}$ is also not difficult to show, 
by virtue of Proposition~\ref{prop: well-defined}, which allows us to choose the link $\link{a}{b}$
in the construction~\eqref{eq: Haos constructionH} of $\PartF_\alpha$ freely.
Before giving the proof, we list and recall some notation, to be frequently used throughout.
Fix $\link{a}{b} \in \alpha$ for the construction of $\PartF_\alpha$, with $a < b$, and denote $\hat{\alpha} = \alpha \removeLink \link{a}{b}$. 
Denote also by
\begin{itemize}
\item  $\eta \sim \mathbb{P}_{\Omega;a,b}$ the chordal $\SLEk$ in $(\bH; x_a, x_b)$,

\item $\hat{\bH}_\eta$ the union of the connected components of $\bH \setminus \eta$ 
containing some of the points $\{x_1, \ldots, x_{2N}\} \setminus \{x_a, x_b\}$ on the boundary,

\item $\mathcal{E}_\eta = \mathcal{E}_{\eta;a,b}^{\alpha}(\bH; x_1, \ldots, x_{2N})$ the event that $\eta$ does not partition $\bH$ into components
where some variables corresponding to a link in $\alpha$ would belong to different components
(note that on the complement of this event, $\PartF_{\hat{\alpha}}(\hat{\bH}_\eta; \cdots)$ is zero), and

\item on the event $\mathcal{E}_\eta$, for each link $\link{c}{d} \in \alpha$ such that $\link{c}{d} \neq \link{a}{b}$,  let
$\smash{H_{\hat{\bH}_\eta} (x_{c}, x_{d})}$ denote the boundary Poisson kernel in the connected component of 
$\smash{\hat{\bH}_\eta}$ that has $x_{c}$ and $x_{d}$ on its boundary. 
\end{itemize}

\begin{lem} \label{lem: ASY} 
\textnormal{\cite[Lemma~\red{6.6}, extended]{Wu:Convergence_of_the_critical_planar_ising_interfaces_to_hypergeometric_SLE}} 
\;
Let $\kappa \in (0,6]$. The collection $\{ \PartF_\alpha \; | \; \alpha \in \LP_{<N+1} \}$
satisfies the recursive asymptotics in~\eqref{eq: multiple SLE asymptotics} in property $\mathrm{(ASY)}$.
\end{lem}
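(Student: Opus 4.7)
\begin{proofIdea}
The plan is to proceed by induction on $N$, with base cases $N=0$ trivial ($\PartF_\emptyset\equiv 1$) and $N=1$ immediate: the construction~\eqref{eq: Haos constructionH} with $\alpha=\{\link{1}{2}\}$ gives $\PartF_{\{\link{1}{2}\}}(x_1,x_2)=(x_2-x_1)^{-2h}$, so $\PartF_{\{\link{1}{2}\}}/(x_2-x_1)^{-2h}=1=\PartF_\emptyset$. Assume now $N\geq 2$, and that the asymptotics $\mathrm{(ASY)}$ and strong bound $\mathrm{(B)}$ hold for all $\beta\in\LP_{<N}$. Fix $\alpha\in\LP_N$, $j\in\{1,\ldots,2N-1\}$, and $\xi\in(x_{j-1},x_{j+2})$. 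The crucial flexibility is that by Proposition~\ref{prop: well-defined} we may choose any link $\link{a}{b}\in\alpha$ in~\eqref{eq: Haos constructionH}, and we pick it according to whether $\link{j}{j+1}\in\alpha$ or not.

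If $\link{j}{j+1}\in\alpha$, take $\link{a}{b}=\link{j}{j+1}$, so the prefactor $|x_b-x_a|^{-2h}=(x_{j+1}-x_j)^{-2h}$ cancels the normalization in~\eqref{eq: multiple SLE asymptotics}. The claim reduces to showing
\begin{align*}
\mathbb{E}_{\bH;j,j+1}\big[\PartF_{\hat{\alpha}}(\hat{\bH}_\eta;x_1,\ldots,\hat{x}_j,\hat{x}_{j+1},\ldots,x_{2N})\big]\longrightarrow\PartF_{\hat{\alpha}}(\bH;x_1,\ldots,\hat{x}_j,\hat{x}_{j+1},\ldots,x_{2N})
\end{align*}
as $x_j,x_{j+1}\to\xi$. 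By scale-invariance of the chordal $\SLE_\kappa$, the curve $\eta$ between $x_j$ and $x_{j+1}$ is contained in a set of diameter $O(x_{j+1}-x_j)$ with probability tending to one, whence $\hat{\bH}_\eta\to\bH$ in Carath\'eodory topology seen from each remaining marked point. The conformal covariance~\eqref{eq: ppf def in polygon} of $\PartF_{\hat{\alpha}}$ then transfers this to pointwise convergence of the integrand, and the inductive bound $\mathrm{(B)}$ in~\eqref{eqn::partitionfunction_positive_in_polygon}, together with Poisson-kernel monotonicity $H_D\leq H_\bH$ in subdomains, furnishes a deterministic dominating function, so dominated convergence closes the case.

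If $\link{j}{j+1}\notin\alpha$, we instead pick a link $\link{a}{b}\in\alpha\setminus\{\link{j}{j+1}\}$, choosing one disjoint from $\{j,j+1\}$ whenever possible (when $N=2$ this may be impossible, in which case $\link{a}{b}$ shares exactly one endpoint with $\{j,j+1\}$ and the prefactor $|x_b-x_a|^{-2h}$ times $(x_{j+1}-x_j)^{2h}$ still tends to zero under the bound $\mathrm{(B)}$). In the residual pattern $\hat{\alpha}\in\LP_{N-1}$, the indices corresponding to $j$ and $j+1$ remain unpaired to each other, and on the event $\mathcal{E}_\eta$ they lie in a common connected component $D$ of $\hat{\bH}_\eta$ whose sub-link pattern $\hat{\alpha}_D$ still satisfies $\link{j}{j+1}\notin\hat{\alpha}_D$. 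The inductive hypothesis applied to $\PartF_{\hat{\alpha}_D}$ on $D$ (after a conformal map to $\bH$) yields, pointwise in $\eta$,
\begin{align*}
\frac{\PartF_{\hat{\alpha}_D}(D;\ldots,x_j,x_{j+1},\ldots)}{(x_{j+1}-x_j)^{-2h}}\longrightarrow 0\qquad\text{as }x_j,x_{j+1}\to\xi,
\end{align*}
and the product structure~\eqref{eq: componentwise pf} together with the bound $\mathrm{(B)}$ provides a deterministic integrable dominating function, so dominated convergence completes the case.

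The main obstacle in both cases is this dominated-convergence step. One must ensure, uniformly in the random curve $\eta$, that the boundary Poisson kernels appearing in the bound $\mathrm{(B)}$ on $\hat{\bH}_\eta$ remain controlled as the domain degenerates, and that the complementary event $\mathcal{E}_\eta^c$ contributes negligibly. This is precisely where the hypothesis $\kappa\in(0,6]$ enters, through the nonnegativity $h\geq 0$ of the conformal weight and the inductive validity of the strong bound $\mathrm{(B)}$; it is also exactly the step that obstructs a direct extension to $\kappa\in(6,8)$, where $h<0$ and the uniform Poisson-kernel domination fails.
\end{proofIdea}
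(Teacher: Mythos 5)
Your overall strategy (induction, the freedom from Proposition~\ref{prop: well-defined} to choose the link $\link{a}{b}$, pointwise limits of the integrand plus domination via the strong bound $\mathrm{(B)}$ and Poisson-kernel monotonicity) matches the paper, but your handling of the case $\link{j}{j+1}\in\alpha$ is genuinely different. You set $\link{a}{b}=\link{j}{j+1}$, collapse the $\SLE_\kappa$ curve itself, and invoke continuity of $\PartF_{\hat\alpha}(\Omega;\cdots)$ under Carath\'eodory degeneration $\hat\bH_\eta\to\bH$. The paper instead \emph{always} (for $N\geq3$) picks $\link{a}{b}$ disjoint from $\{j,j+1\}$, keeps the curve between two fixed faraway points, and applies the inductive asymptotics~\eqref{eq: multiple SLE asymptotics GEN} \emph{inside the random domain} $\hat\bH_\eta$, normalized by $H_{\hat\bH_\eta}(x_j,x_{j+1})^h$ and corrected by the Poisson-kernel ratio, whose almost sure convergence to $1$ is the content of Lemma~\ref{lem: ratio poisson}. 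Your route avoids that lemma and the polygon form of $\mathrm{(ASY)}$ in the nonzero case, at the cost of needing convergence of the uniformizing maps and their boundary derivatives at the remaining marked points; this does follow from distortion estimates once the curve is confined to a small ball around $\xi$ with probability tending to one, but it is an extra analytic step you would need to write out.

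There is, however, a concrete gap in your base case. For $N=2$ with $\link{j}{j+1}\notin\alpha$ no link disjoint from $\{j,j+1\}$ exists, and your fallback — that the prefactor $|x_b-x_a|^{-2h}$ times $(x_{j+1}-x_j)^{2h}$ tends to zero under $\mathrm{(B)}$ — relies on $(x_{j+1}-x_j)^{2h}\to0$, which fails precisely at $\kappa=6$ where $h=0$: there your bound only gives boundedness, while the claimed limit is $0$ (and at $\kappa=6$ the quantity reduces to the probability that the curve does not disconnect the two remaining points, which does tend to $0$, but not by your estimate). The same issue does not arise for $N\geq3$, where the dominating function need not vanish because the \emph{pointwise} limit of the integrand is already $0$ by induction. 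The paper sidesteps this entirely by verifying $N=2$ directly from the explicit hypergeometric formulas~\eqref{eq: hg formulas for 4p fctions1}--\eqref{eq: hg formulas for 4p fctions2}, which works for all $\kappa\in(0,8)$; you should either do the same or supply a separate argument for $\kappa=6$, $N=2$.
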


\begin{proof}
If $N=1$, the claim is clear.
For the case of $N=2$, asserted asymptotics properties~\eqref{eq: multiple SLE asymptotics} can be checked by hand:
Equations~\eqref{eq: hg formulas for 4p fctions1}--\eqref{eq: hg formulas for 4p fctions2} state explicit formulas for 
the two functions 
$\PartF_{\vcenter{\hbox{\includegraphics[scale=0.2]{figures/link-2.pdf}}}}$
and
$\PartF_{\vcenter{\hbox{\includegraphics[scale=0.2]{figures/link-1.pdf}}}}$
in terms of a hypergeometric function. Investigation of these formulas shows~\eqref{eq: multiple SLE asymptotics}
for $\{ \PartF_\beta \; | \; \beta \in \LP_{<2} \} 
= \{\PartF_{\vcenter{\hbox{\includegraphics[scale=0.2]{figures/link-0.pdf}}}} , 
\PartF_{\vcenter{\hbox{\includegraphics[scale=0.2]{figures/link-2.pdf}}}} ,
\PartF_{\vcenter{\hbox{\includegraphics[scale=0.2]{figures/link-1.pdf}}}} \}$
(and for all $\kappa \in (0,8)$).

Hence, we assume that $N \geq 3$.
By our induction hypothesis, the collection $\{ \PartF_\beta \; | \; \beta \in \LP_{<N} \}$
satisfies the asymptotics~\eqref{eq: multiple SLE asymptotics} in property $\mathrm{(ASY)}$.
Fix $\alpha \in \LP_N$, $j \in \{1,\ldots,2N-1\}$, and $\xi \in (x_{j-1}, x_{j+2})$. 
Choose a link $\link{a}{b} \in \alpha$ (with $a < b$) such that $\{a,b\} \cap \{j,j+1\} = \emptyset$.
Then by definition~\eqref{eq: Haos constructionH}, we have
\begin{align} 
\nonumber
\; & \lim_{x_j , x_{j+1} \to \xi} 
\frac{\PartF_\alpha(x_1 , \ldots , x_{2N})}{(x_{j+1} - x_j)^{-2h}} \\
\nonumber
= \; & \lim_{x_j , x_{j+1} \to \xi} 
\left( \frac{x_b-x_a}{x_{j+1} - x_j} \right)^{-2h} \;
\mathbb{E}_{\bH;a,b} \big[ \PartF_{\alpha \removeLink \link{a}{b} }(\hat{\bH}_\eta; x_1,\ldots,\hat{x}_a,\ldots,\hat{x}_b,\ldots,x_{2N}) \big] \\
= \; & (x_b-x_a)^{-2h} \; 
 \lim_{x_j , x_{j+1} \to \xi}  \; \mathbb{E}_{\bH;a,b} \left[ \one_{\mathcal{E}_\eta}
\left( \frac{H_{\hat{\bH}_\eta} (x_{j}, x_{j+1})}{H_\bH (x_{j}, x_{j+1})} \right)^h
\frac{\PartF_{\alpha \removeLink \link{a}{b}}(\hat{\bH}_\eta; x_1,\ldots,\hat{x}_a,\ldots,\hat{x}_b,\ldots,x_{2N})}{H_{\hat{\bH}_\eta} (x_{j}, x_{j+1})^h} \right] .
\label{eq: limit of pf}
\end{align}
Now, asymptotics property~\eqref{eq: multiple SLE asymptotics GEN} 
for the already constructed functions $\PartF_{\alpha \removeLink \link{a}{b}}$ combined with
Lemma~\ref{lem: ratio poisson} from Appendix~\ref{app} implies that,
for the expression inside the expectation $\mathbb{E}_{\bH;a,b}$ in~\eqref{eq: limit of pf}, we have
\begin{align*}
 \lim_{x_j , x_{j+1} \to \xi} 
\one_{\mathcal{E}_\eta}
\left( \frac{H_{\hat{\bH}_\eta} (x_{j}, x_{j+1})}{H_\bH (x_{j}, x_{j+1})} \right)^h
\frac{\PartF_{\alpha \removeLink \link{a}{b}}(\hat{\bH}_\eta; x_1,\ldots,\hat{x}_a,\ldots,\hat{x}_b,\ldots,x_{2N})}{H_{\hat{\bH}_\eta} (x_{j}, x_{j+1})^h}
= \; & \begin{cases}
0 , \quad &
    \textnormal{if } \link{j}{j+1} \notin \alpha , \\
\PartF_{\alpha \removeLink (\link{a}{b} \cup \link{j}{j+1})} (\hat{\bH}_\eta; \cdots) , &
    \textnormal{if } \link{j}{j+1} \in \alpha ,
\end{cases} 
\end{align*}
almost surely. Noticing that by definition~\eqref{eq: Haos constructionH}, we have
\begin{align*}
(x_b-x_a)^{-2h} \; \mathbb{E}_{\bH;a,b} \left[ 
\PartF_{\alpha \removeLink (\link{a}{b} \cup \link{j}{j+1})} (\hat{\bH}_\eta; \cdots)
\right] 
= \PartF_{\alpha \removeLink \link{j}{j+1}} (x_{1},\ldots,x_{j-1},x_{j+2},\ldots,x_{2N}) ,
\end{align*}
we see that in order to prove  asserted property~\eqref{eq: multiple SLE asymptotics} for $\PartF_\alpha$,
we only need to prove that the limit and the expectation in~\eqref{eq: limit of pf} can be exchanged.
This is guaranteed if the expression inside the expectation $\mathbb{E}_{\bH;a,b}$ in~\eqref{eq: limit of pf} is uniformly integrable.
Indeed, using the strong bound~\eqref{eqn::partitionfunction_positive_in_polygon} for 
$\PartF_{\alpha \removeLink \link{a}{b}}$, 
the monotonicity property $H_{\hat{\bH}_\eta} (x_{j}, x_{j+1}) \leq H_\bH (x_{j}, x_{j+1}) = (x_{j+1} - x_{j})^{-2}$ for $\hat{\bH}_\eta \subset \bH$,
and the fact that $h \geq 0$ (which only holds when $\kappa \in (0,6]$), 
we obtain the following bound uniformly in $\eta$:
\begin{align}
\nonumber
0 \leq 
\one_{\mathcal{E}_\eta} 
\left( \frac{H_{\hat{\bH}_\eta} (x_{j}, x_{j+1})}{H_\bH (x_{j}, x_{j+1})} \right)^h
\frac{\PartF_{\alpha \removeLink \link{a}{b}}(\hat{\bH}_\eta; x_1,\ldots,\hat{x}_a,\ldots,\hat{x}_b,\ldots,x_{2N})}{H_{\hat{\bH}_\eta} (x_{j}, x_{j+1})^h} 
\leq \; &
(x_{j+1} - x_j)^{2h} \;\prod_{\substack{\link{c}{d} \in \alpha , \\ \link{c}{d} \neq \link{a}{b} }} H_{\hat{\bH}_\eta}(x_{c},x_{d})^{h} \\
\leq \; &
(x_{j+1} - x_j)^{2h} \;
\prod_{\substack{\link{c}{d} \in \alpha , \\ \link{c}{d} \neq \link{a}{b} }} (x_{d} - x_{c})^{-2h} .
\label{eq: bound for asy}
\end{align}
First, suppose that $\link{j}{j+1} \in \alpha$. Then, the right-hand side of~\eqref{eq: bound for asy} 
is independent of $x_j$ and $x_{j+1}$,
\begin{align*}
(x_{j+1} - x_j)^{2h} \;
\prod_{\substack{\link{c}{d} \in \alpha , \\ \link{c}{d} \neq \link{a}{b} }} (x_{d} - x_{c})^{-2h} =
\prod_{\substack{\link{c}{d} \in \alpha , \\ \link{c}{d} \neq \link{a}{b} \\ c,d \neq j,j+1}} (x_{d} - x_{c})^{-2h} ,
\end{align*}
so it is uniformly bounded in the limit $x_j , x_{j+1} \to \xi$.
This justifies the exchange of the limit and the expectation in~\eqref{eq: limit of pf} when $\link{j}{j+1} \in \alpha$.
Second, suppose that $\link{j}{j+1} \notin \alpha$. Then, the right-hand side of~\eqref{eq: bound for asy} equals 
\begin{align*}
\left(\frac{(x_j-x_{\alpha(j)})(x_{j+1}-x_{\alpha(j+1)})}{x_{j+1} - x_j} \right)^{-2h}  
\prod_{\substack{\link{c}{d} \in \alpha , \\ \link{c}{d} \neq \link{a}{b} \\ c,d \neq j,j+1}} (x_{d} - x_{c})^{-2h} ,
\end{align*}
where $\alpha(i)$ denotes the pair of $i$ in $\alpha$, i.e., $\link{i}{\alpha(i)} \in \alpha$, for $i=j,j+1$.
This expression tends to zero in the limit $x_j , x_{j+1} \to \xi$, justifying
the exchange of the limit and the expectation in~\eqref{eq: limit of pf} when $\link{j}{j+1} \notin \alpha$.
This concludes the proof.
\end{proof}

Concerning the case of $\kappa \in (6,8)$, we make two remarks. 
First, if $N \in \{1,2\}$, then the known explicit formulas for the pure partition functions 
immediately imply asymptotics property~\eqref{eq: multiple SLE asymptotics} in $\mathrm{(ASY)}$.
Second, if $N \geq 3$, then the proof of Lemma~\ref{lem: ASY} would carry through for $\kappa \in (6,8)$
provided that the expression inside the expectation $\mathbb{E}_{\bH;a,b}$ in~\eqref{eq: limit of pf} was uniformly integrable.
However, to prove this, additional technical work would be needed --- because we have $h < 0$ when $\kappa \in (6,8)$,
we cannot apply the bound in~\eqref{eq: bound for asy}. 
Currently, we are not aware of any proof 
of asymptotics property~\eqref{eq: multiple SLE asymptotics} in $\mathrm{(ASY)}$
for $\kappa \in (6,8)$ and $N \geq 3$.

\bigskip

Next, 
concerning property $\mathrm{(PDE)}$, thanks to Proposition~\ref{prop: well-defined}
it suffices to only verify the two PDEs with $i = a,b$ in~\eqref{eq: multiple SLE PDEs}, 
and the other PDEs then follow by symmetry. Also, reversibility of the chordal $\SLEk$
implies that it is actually enough to check only the PDE associated with $i=a$. 
This PDE can be verified using diffusion theory and the fact that the PDE is 
hypoelliptic~\cite[Theorem~\red{6}]{Dubedat:SLE_and_Virasoro_representations_localizationA}. 
The main difficulty is to show that the function $\PartF_\alpha$, defined in terms of an expectation~\eqref{eq: Haos constructionH},
is indeed twice continuously differentiable.
The function $\PartF_\alpha$ appears naturally in a certain local martingale, 
but It\^o's formula cannot be used directly because of lack of a priori regularity.
See also~\cite{Lawler-Jahangoshahi:On_smoothness_of_partition_function_for_multiple_SLEs} for the case of $\kappa \in (0,4)$.

\begin{lem} \label{lem: PDE}
\textnormal{\cite[Lemmas~\red{6.3} \& \red{6.4}, extended]{Wu:Convergence_of_the_critical_planar_ising_interfaces_to_hypergeometric_SLE}} 
\;
Let $\kappa \in (0,8)$. 
The function $\PartF_\alpha \colon \chamber_{2N} \to \bRpos$ defined in~\eqref{eq: Haos constructionH}
is smooth and it solves  the PDE system~\eqref{eq: multiple SLE PDEs} in property $\mathrm{(PDE)}$.
\end{lem}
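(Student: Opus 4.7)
The plan is to exploit the domain Markov property and conformal invariance of the chordal $\SLEk$, together with the conformal covariance $\mathrm{(COV)}$ of the already constructed lower-level functions $\PartF_{\hat\alpha}$, to produce for each link $\link{a}{b} \in \alpha$ a continuous local martingale of the form
\begin{align*}
M_t \; = \; \bigl(\text{explicit prefactors from the Loewner flow}\bigr) \cdot \PartF_\alpha\bigl(V_t^1, \ldots, V_t^{a-1}, W_t, V_t^{a+1}, \ldots, V_t^{2N}\bigr),
\end{align*}
where $W_t$ is the driving function of the chordal $\SLEk$ from $x_a$ to $x_b$ and $V_t^i = g_t(x_i)$ are the images of the remaining marked points under the associated Loewner flow. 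This is the standard multiple-SLE martingale of the form~\eqref{eq:magle}: by the tower property and the inductively assumed $\mathrm{(COV)}$ for $\PartF_{\hat\alpha}$, the process $M_t$ coincides, up to a deterministic prefactor, with $\mathbb{E}\bigl[\PartF_{\hat\alpha}(\hat{\bH}_\eta;\cdots)\,\big|\,\mathcal{F}_t\bigr]$, and is therefore a continuous local martingale up to the first swallowing time of any $x_i$ with $i \neq a$. Were $\PartF_\alpha$ a priori $C^2$, applying It\^o's formula to $M_t$ using the Loewner equation and $\mathrm{d}\langle W\rangle_t = \kappa\,\mathrm{d}t$, and then equating the drift to zero, would produce exactly the PDE of index $i = a$ in~\eqref{eq: multiple SLE PDEs}.

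The principal obstacle is that $\PartF_\alpha$ is defined only through an expectation and is not a priori smooth, so It\^o's formula cannot be invoked directly. I would handle this by following the hypoelliptic diffusion strategy of Dub\'edat~\cite[Theorem~\red{6}]{Dubedat:SLE_and_Virasoro_representations_localizationA}: continuity of $\PartF_\alpha$ on $\chamber_{2N}$ is first obtained by dominated convergence from the inductive bound $\mathrm{(B)}$ on $\PartF_{\hat\alpha}$. Testing the local-martingale relation for $M_t$ against smooth, compactly supported functions in the variables $x_i$ then translates the vanishing drift into a distributional identity, namely that $\PartF_\alpha$ is a weak solution of the second-order operator in~\eqref{eq: multiple SLE PDEs} at index $i = a$. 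Since this operator satisfies H\"ormander's bracket condition on the open chamber $\chamber_{2N}$, it is hypoelliptic, so any continuous distributional solution is automatically $C^\infty$, and the PDE then holds in the classical sense. Positivity of $\PartF_\alpha$ on $\chamber_{2N}$ follows from the inductive positivity of $\PartF_{\hat\alpha}$ on each component of $\hat{\bH}_\eta$ together with the strictly positive probability of the event $\mathcal{E}_\eta$.

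To obtain the PDEs at the remaining indices, I would invoke Proposition~\ref{prop: well-defined}: since the construction~\eqref{eq: Haos construction} is independent of the link chosen, repeating the martingale argument with any link $\link{a}{b} \in \alpha$ yields the PDE at index $i = a$, and reversibility of the chordal $\SLEk$ produces the PDE at index $i = b$ from the mirrored martingale. As every index in $\{1, \ldots, 2N\}$ is the endpoint of a unique link of $\alpha$, iterating over the $N$ links of $\alpha$ delivers the full system~\eqref{eq: multiple SLE PDEs}. The main technical delicacies---uniform integrability needed to pass to the weak formulation, and careful handling of the first swallowing time at which some $V_t^i$ meets $W_t$---enter solely through the martingale construction rather than the PDE verification itself. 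Both are controlled using the strong bound $\mathrm{(B)}$, which is the source of the restriction to $\kappa \in (0,6]$ in the companion arguments; for $\kappa \in (6,8)$ the same strategy should apply once the additional estimates alluded to in the excerpt are available.
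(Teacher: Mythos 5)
Your proposal is correct and follows essentially the same route as the paper's proof: the conditional-expectation local martingale from the domain Markov property, Dub\'edat's hypoellipticity result to upgrade the distributional (vanishing-drift) identity to smoothness and the classical PDE at $i=a$, reversibility of the chordal $\SLEk$ for $i=b$, and the link-independence of the construction (Proposition~\ref{prop: well-defined}) to exhaust the remaining indices. The only minor deviation is that you route continuity of $\PartF_\alpha$ through the bound $\mathrm{(B)}$, which would tie this lemma to $\kappa\in(0,6]$, whereas the paper asserts the conclusion for all $\kappa\in(0,8)$ without invoking $\mathrm{(B)}$ at this step.
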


\begin{proof}
For notational simplicity (and without losing generality), we assume that $\link{a}{b} = \link{1}{2}$.
We give a sketch of the proof.
\begin{itemize}
\item 
{\bf \textit{$\PartF_\alpha$ solves the PDE in~\eqref{eq: multiple SLE PDEs} with $i = 1$}}:

As in the construction of $\PartF_\alpha$, let $\eta$ be the chordal $\SLEk$ from $x_1$ to $x_2$, and let
$(W_t)_{t \geq 0}$ be its Loewner driving function and $(g_t)_{t \geq 0}$ the corresponding 
solution to the Loewner equation~\eqref{eq: Loewner equation}.
Then, up to the first time when $\eta$ hits the boundary $\bR = \partial \bH$,
thanks to the domain Markov property of the chordal $\SLEk$ and Equation~\eqref{eq: ppf def in polygon},
the following conditional expectation is a local martingale for $\eta$:
\begin{align*} 
M_t := \; & \mathbb{E}_{\bH;1,2} \big[ \PartF_{\hat{\alpha}}(\hat{\bH}_\eta; x_3,x_4,\ldots,x_{2N}) 
\; \big| \; \eta[0,t] \big]  \\
= \; & \prod_{j = 3}^{2N} g_t'(x_j)^h \times
\mathbb{E}_{\bH;1,2} \big[ \PartF_{\hat{\alpha}} \big(g_t(\hat{\bH}_\eta); g_t(x_3), g_t(x_4), \ldots,g_t(x_{2N})\big) 
\; \big| \; \eta[0,t] \big] \\
= \; & \prod_{j = 3}^{2N} g_t'(x_j)^h \times (g_t(x_2) - W_t)^{2h} \times 
\PartF_\alpha \big(W_t, g_t(x_2), g_t(x_3),\ldots, g_t(x_{2N})\big) 
= F(X_t) ,
\end{align*}
where $X_t = (W_t, g_t(x_2), g_t(x_3),\ldots, g_t(x_{2N}), g_t'(x_3), \ldots, g_t'(x_{2N}))$ is an It\^o process and
\begin{align*} 
F(x_1, \ldots, x_{2N}, y_3, \ldots, y_{2N}) 
:=  \prod_{j = 3}^{2N} y_j^h \times (x_2-x_1)^{2h}  \times \PartF_{\alpha}(x_1, \ldots, x_{2N}) 
\end{align*}
is a continuous function of $(x_1, \ldots, x_{2N}, y_3, \ldots, y_{2N}) \in \chamber_{2N} \times \bR^{2N-2}$.
One can check that the local martingale property of $M$ 
implies that $\PartF_\alpha$ is smooth and solves~\eqref{eq: multiple SLE PDEs} with $i = 1$,
see~\cite[proof of Lemma~\red{4.4}]{Peltola-Wu:Global_and_local_multiple_SLEs_and_connection_probabilities_for_level_lines_of_GFF} 
and~\cite[Theorem~\red{6}]{Dubedat:SLE_and_Virasoro_representations_localizationA}.  
The key fact here is that the PDE~\eqref{eq: multiple SLE PDEs} is 
hypoelliptic~\cite{Dubedat:SLE_and_Virasoro_representations_localizationA}, so any distributional solution to it is smooth.

For $\kappa \in (0,4)$, G.~Lawler and M.~Jahangoshahi~\cite{Lawler-Jahangoshahi:On_smoothness_of_partition_function_for_multiple_SLEs}
provided a proof for the smoothness of $\PartF_\alpha$ by traditional SLE techniques,
without using hypoellipticity of the PDE system. 
It then follows easily from It\^o calculus that 
$\PartF_\alpha$ solves the PDE in~\eqref{eq: multiple SLE PDEs} with $i = 1$.
Unfortunately, the current result~\cite{Lawler-Jahangoshahi:On_smoothness_of_partition_function_for_multiple_SLEs}  
is not strong enough to deal with the case of $\kappa \in [4,8)$.

\item 
{\bf \textit{$\PartF_\alpha$ solves the PDE in~\eqref{eq: multiple SLE PDEs} with $i = 2$}}:

This follows from the above argument by the reversibility of the chordal $\SLEk$ curve $\eta$.
We emphasize that, despite being natural, the reversibility is very non-trivial:
it was proved for $\kappa \in (0,4]$ by D.~Zhan in his celebrated work~\cite{Zhan:Reversibility_of_chordal_SLE} 
using a coupling  of the ``past'' and ``future'' of the $\SLEk$, 
and for $\kappa \in (4,8)$ by J.~Miller and S.~Sheffield in the even more striking work~\cite{Sheffield-Miller:Imaginary_geometry3},
which relies on the theory of ``imaginary geometry'' developed by the authors,
coupling the $\SLEk$ curve as a flow line with the Gaussian free field.
We are not aware of a proof for the PDE~\eqref{eq: multiple SLE PDEs} with $i = 2$ avoiding the reversibility.

\item 
{\bf \textit{$\PartF_\alpha$ solves the PDEs in~\eqref{eq: multiple SLE PDEs} with $i \geq 3$}}:
This follows by the symmetry of the definition~\eqref{eq: Haos constructionH} of $\PartF_\alpha$ stated in 
Equation~\eqref{eq: Haos construction independent of link choice} in Proposition~\ref{prop: well-defined}:
using the above argument for the function $\PartF_\alpha$ written in~\eqref{eq: Haos constructionH}
with some other link $\link{a}{b} \neq \link{1}{2}$, we exhaust all of the indices $i \geq 3$.
Note that in order to prove this property, one uses strong facts about the $2$-$\SLEk$ probability measure, 
as we will discuss in Appendix~\ref{app}.
\end{itemize}
\end{proof}

To finish, we prove property $\mathrm{(B)}$ for $\PartF_\alpha$.
When $\kappa \in (0,6]$, this property is easy to prove,
whereas for $\kappa \in (6,8)$ it seems to be very difficult, because we have $h < 0$ in that case.
Currently, we are not aware of any proof of~$\mathrm{(B)}$ for the case of $\kappa \in (6,8)$.

\begin{lem} \label{lem: bound} 
\textnormal{\cite[Lemma~\red{6.7}, extended]{Wu:Convergence_of_the_critical_planar_ising_interfaces_to_hypergeometric_SLE}} 
\;
Let $\kappa \in (0,6]$. 
The function $\PartF_\alpha$ defined in~\eqref{eq: Haos constructionH}
satisfies the strong bound~\eqref{eqn::partitionfunction_positive} in property $\mathrm{(B)}$.
\end{lem}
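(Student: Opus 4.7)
The plan is to argue by induction on $N$, using the inductive definition~\eqref{eq: Haos constructionH} together with the inductive hypothesis that property $\mathrm{(B)}$ holds for all $\beta \in \LP_{<N}$. The base case $N=0$ is trivial since $\PartF_\emptyset \equiv 1$, and the case $N=1$ follows from the explicit formula $\PartF_{\vcenter{\hbox{\includegraphics[scale=0.2]{figures/link-0.pdf}}}}(x_1,x_2) = (x_2-x_1)^{-2h}$.

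For the inductive step, fix $\alpha \in \LP_N$ and some link $\link{a}{b} \in \alpha$ with $a < b$. Starting from~\eqref{eq: Haos constructionH}, I would control the integrand pointwise in $\eta$. On the event $\mathcal{E}_\eta$ (and $\PartF_{\hat\alpha}(\hat\bH_\eta;\cdots)$ vanishes on its complement), $\hat{\bH}_\eta$ decomposes into components $D$ carrying sub-link patterns $\hat\alpha_D$, and definition~\eqref{eq: componentwise pf} gives
\begin{align*}
\PartF_{\hat\alpha}(\hat\bH_\eta; \cdots)
\;=\; \prod_D \PartF_{\hat\alpha_D}(D; \cdots)
\;\leq\; \prod_D \prod_{\link{c}{d}\in\hat\alpha_D} H_D(x_c, x_d)^h
\;=\; \prod_{\link{c}{d}\in\hat\alpha} H_{\hat\bH_\eta}(x_c, x_d)^h,
\end{align*}
where the middle inequality is the inductive hypothesis applied componentwise (see~\eqref{eqn::partitionfunction_positive_in_polygon}), and where for each remaining link $\link{c}{d}\in\hat\alpha$, the factor $H_{\hat\bH_\eta}(x_c,x_d)$ refers to the boundary Poisson kernel in the component of $\hat\bH_\eta$ containing $x_c$ and $x_d$ on its boundary (well-defined on $\mathcal{E}_\eta$).

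The key step is now to pass from $\hat\bH_\eta$ to $\bH$. Since $\hat\bH_\eta \subset \bH$, the monotonicity of the boundary Poisson kernel under domain inclusion gives $H_{\hat\bH_\eta}(x_c,x_d) \leq H_\bH(x_c,x_d) = |x_d - x_c|^{-2}$. This is precisely where the restriction $\kappa \in (0,6]$ enters: since $h = \frac{6-\kappa}{2\kappa} \geq 0$ on this range, raising to power $h$ preserves the inequality and yields the \emph{deterministic} bound $H_{\hat\bH_\eta}(x_c,x_d)^h \leq |x_d-x_c|^{-2h}$ uniformly in $\eta$. Taking expectations and multiplying by the prefactor $|x_b - x_a|^{-2h}$, I obtain
\begin{align*}
\PartF_\alpha(x_1,\ldots,x_{2N})
\;\leq\; |x_b-x_a|^{-2h} \prod_{\link{c}{d}\in\hat\alpha} |x_d-x_c|^{-2h}
\;=\; \prod_{\link{c}{d}\in\alpha} |x_d-x_c|^{-2h},
\end{align*}
which is the desired upper bound. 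Strict positivity follows from the fact that the event $\mathcal{E}_\eta$ has positive $\mathbb{P}_{\bH;a,b}$-probability for $\kappa \in (0,8)$ (the chordal $\SLE_\kappa$ may separate any prescribed pair of boundary points from each other only with probability strictly less than one), combined with the inductive hypothesis that $\PartF_{\hat\alpha_D}(D;\cdots) > 0$ on each such component.

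The main obstacle, and the reason this argument cannot be pushed past $\kappa = 6$, is the sign of $h$: for $\kappa \in (6,8)$ one has $h < 0$, so the monotonicity step reverses direction, and $H_{\hat\bH_\eta}(x_c,x_d)^h$ may become arbitrarily large when $\eta$ comes close to the marked points $x_c$ or $x_d$. In that regime, one would need delicate control over the conditional distribution of the boundary Poisson kernel in $\hat\bH_\eta$ under $\mathbb{P}_{\bH;a,b}$ to hope to recover an analogous deterministic bound after integration, which, as discussed in the main text, is currently out of reach.
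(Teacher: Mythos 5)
Your argument is correct and is essentially identical to the paper's proof: both proceed by applying the inductive strong bound~\eqref{eqn::partitionfunction_positive_in_polygon} componentwise inside the expectation in~\eqref{eq: Haos constructionH}, then use the monotonicity $H_{\hat{\bH}_\eta}(x_c,x_d)\le H_\bH(x_c,x_d)$ together with $h\ge 0$ for $\kappa\in(0,6]$ to get a deterministic bound uniform in $\eta$, and obtain positivity from the positive probability of the event $\mathcal{E}_\eta$. The only cosmetic difference is that the paper phrases the bound via the ratio $\bigl(H_{\hat{\bH}_\eta}(x_c,x_d)/H_\bH(x_c,x_d)\bigr)^h\le 1$ rather than bounding $H_{\hat{\bH}_\eta}(x_c,x_d)^h$ directly, and your closing remark on why the argument breaks for $\kappa>6$ matches the paper's discussion.
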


\begin{proof}
First, the positivity of $\PartF_\alpha$ follows from its construction, since 
the probability for the chordal $\SLEk$ curve in $(\bH; x_a, x_b)$ to not partition $\bH$ 
into components where some variables $x_c$, $x_d$ corresponding to a link $\link{c}{d} \in \hat{\alpha}$ 
would belong to different components is positive. 
Second, the definition~\eqref{eq: Haos constructionH} of $\PartF_\alpha$
and property $\mathrm{(B)}$ for the already constructed functions in $\PartF_{\hat{\alpha}}$ give
\begin{align*}
\PartF_\alpha(x_1, \ldots, x_{2N}) 
:= \; & |x_b-x_a|^{-2h} \;
\mathbb{E}_{\bH;a,b} \big[ \PartF_{\hat{\alpha}}(\hat{\bH}_\eta; x_1,\ldots,\hat{x}_a,\ldots,\hat{x}_b,\ldots,x_{2N}) \big] \\
\leq \; & 
\prod_{\link{c}{d} \in \alpha} |x_{d}-x_{c}|^{-2h} \;
\mathbb{E}_{\bH;a,b} \left[ \one_{\mathcal{E}_\eta}
\prod_{\substack{\link{c}{d} \in \alpha , \\ \link{c}{d} \neq \link{a}{b} }}
\left( \frac{H_{\hat{\bH}_\eta} (x_{c}, x_{d})}{H_\bH (x_{c}, x_{d})} \right)^h
 \right] .
\end{align*}
Because $\kappa \in (0,6]$, we have $h \geq 0$, so the monotonicity property 
$H_{\hat{\bH}_\eta} (x_{c}, x_{d}) \leq H_\bH (x_{c}, x_{d})$ for $\hat{\bH}_\eta \subset \bH$ implies
the asserted bound~\eqref{eqn::partitionfunction_positive} in~$\mathrm{(B)}$: 
\begin{align*}
\left( \frac{H_{\hat{\bH}_\eta} (x_{c}, x_{d})}{H_\bH (x_{c}, x_{d})} \right)^h \leq 1
\qquad \qquad \Longrightarrow \qquad \qquad
\PartF_\alpha(x_1, \ldots, x_{2N}) 
\leq \prod_{\link{c}{d} \in \alpha} |x_{d}-x_{c}|^{-2h} .
\end{align*}
\end{proof}

We note that when $\kappa \in (6,8)$, the above argument does not work, since $h < 0$.
However, if one could prove, e.g., that
\begin{align*}
\mathbb{E}_{\bH;a,b} \left[ \one_{\mathcal{E}_\eta}
\prod_{\substack{\link{c}{d} \in \alpha , \\ \link{c}{d} \neq \link{a}{b} }}
\left( \frac{H_{\hat{\bH}_\eta} (x_{c}, x_{d})}{H_\bH (x_{c}, x_{d})} \right)^h
 \right] \leq 1 ,
\end{align*}
then Lemma~\ref{lem: bound} would follow.
Arguments similar to the ones used in~\cite{Lawler:Partition_functions_loop_measure_and_versions_of_SLE,
Lawler-Jahangoshahi:On_smoothness_of_partition_function_for_multiple_SLEs}
might be helpful for this.
Similar arguments are probably needed for extending the proof of Lemma~\ref{lem: ASY}, i.e., showing uniform integrability
in~\eqref{eq: limit of pf}.

\section{\label{app}Proof of Proposition~\ref{prop: well-defined} and a technical lemma}

In this appendix, we 
summarize the main steps of the proof 
of~\cite[Lemma~\red{6.2}]{Wu:Convergence_of_the_critical_planar_ising_interfaces_to_hypergeometric_SLE},
which holds for all $\kappa \in (0,8)$.
We also prove a technical result, Lemma~\ref{lem: ratio poisson}, that was used  in the proof of Lemma~\ref{lem: ASY}.

\bigskip

It was proved in~\cite{Miller-Werner:Connection_probabilities_for_conformal_loop_ensembles} 
(see also~\cite{Sheffield-Miller:Imaginary_geometry2, Sheffield-Miller:Imaginary_geometry3, BPW:On_the_uniqueness_of_global_multiple_SLEs}) 
that for all $\kappa \in (0,8)$, 
given a polygon $(\Omega; x, y, z, w)$, there exists a unique probability measure on
pairs of curves $(\gamma_1, \gamma_2)$ such that
\begin{itemize}
\item $\gamma_1$ is a curve connecting $x$ and $y$ in $\overline{\Omega}$ and $\gamma_2$ is a curve connecting $z$ and $w$ in $\overline{\Omega}$,
and these two curves do not cross (however, they can touch when $\kappa \in (4,8)$),

\item given $\gamma_1$, the conditional law of $\gamma_2$ is that of the chordal $\SLEk$ in $(\hat{\Omega}_{z,w}; z,w)$,
that is, in the connected component $\hat{\Omega}_{z,w}$ of $\Omega \setminus \gamma_1$ having $z$ and $w$ on its boundary, and

\item given $\gamma_2$, the conditional law of $\gamma_1$ is that of the chordal $\SLEk$ in $(\hat{\Omega}_{x,y}; x,y)$,
that is, in the connected component $\hat{\Omega}_{x,y}$ of $\Omega \setminus \gamma_2$ having $x$ and $y$ on its boundary.
\end{itemize}
We call this probability measure the $2$-$\SLEk$ in $(\Omega; x, y, z, w)$. 
Importantly, it is completely symmetric in the two curves $(\gamma_1, \gamma_2)$.
The marginal laws of the curves $\gamma_1$ and $\gamma_2$ are also known, and they are given by the so-called
``hypergeometric'' SLE (``$\hSLEk$'')  --- this is nothing but a chordal $\SLEk$ variant with partition function 
$\PartF_{\{\link{1}{2},\link{3}{4}\}}= \PartF_{\vcenter{\hbox{\includegraphics[scale=0.2]{figures/link-1.pdf}}}}$
given in Equation~\eqref{eq: hg formulas for 4p fctions2}.

\welldefined*
\begin{proof}
The asserted property is trivial for 
$\PartF_{\vcenter{\hbox{\includegraphics[scale=0.2]{figures/link-0.pdf}}}}(x_1,x_2) 
= H_\Omega (x_{1}, x_{2})^h$ with $N=1$.
Also, when $N=2$, the two functions $\PartF_{\vcenter{\hbox{\includegraphics[scale=0.2]{figures/link-1.pdf}}}}$ 
and $\PartF_{\vcenter{\hbox{\includegraphics[scale=0.2]{figures/link-2.pdf}}}}$ 
are explicit and related to each other via a cyclic permutation of variables, 
see~\eqref{eq: hg formulas for 4p fctions1}--\eqref{eq: hg formulas for 4p fctions2}.
It is obvious from these formulas that the choice of link in~\eqref{eq: Haos construction} does not matter.

Now, we proceed inductively on $N \geq 3$, assuming that the claimed property~\eqref{eq: Haos construction independent of link choice} 
has already been proven for the collection $\{ \PartF_\alpha \; | \; \alpha \in \LP_{<N} \}$.
By rotational invariance, without loss of generality, we may assume that $a < b < c < d$.
To facilitate the notation, we denote $\hat{\alpha}_1 := \alpha \removeLink \link{a}{b}$
and $\hat{\alpha}_2 := \alpha \removeLink \link{c}{d}$, and we let $\eta_1$ and $\eta_2$ be independent chordal $\SLEk$ curves 
in $(\Omega; x_a, x_b)$ and $(\Omega; x_c, x_d)$, respectively. Also, we let
$\mathbb{P}_{\Omega;a,b,c,d}$ denote the $2$-$\SLEk$ probability measure on the polygon $(\Omega; x_a,x_b,x_c,x_d)$
for pairs of curves $(\gamma_1,\gamma_2)$. (We remark that if $\kappa \in (0,4]$, then the joint law of $(\eta_1,\eta_2)$ is absolutely 
continuous with respect to $(\gamma_1,\gamma_2)$, but when $\kappa \in (4,8)$, it is singular).

We also let $\mathcal{E}_1$ be the event that $\eta_1$ does not partition $\Omega$ into components
where some variables corresponding to a link in $\alpha_1$ would belong to different components.
On the event $\mathcal{E}_1$, in definition~\eqref{eq: Haos construction}--\eqref{eq: componentwise pf}, 
we let $D_{c,d}$ be the c.c of $\hat{\Omega}_{\eta_1}$
having $x_c$ and $x_d$ on its boundary. Then, by the induction hypothesis, with
$\tilde{\eta}$ denoting the chordal $\SLEk$ in $(D_{c,d}; x_c, x_d)$, we have
\begin{align*}
\PartF_{\hat{\alpha}_{D_{c,d}}}(D_{c,d}; \cdots)
= \; & H_{D_{c,d}} (x_c, x_d)^h \;
\mathbb{E}_{D_{c,d};c,d} \big[ \PartF_{\hat{\alpha}_{D_{c,d}} \removeLink \link{c}{d}} \big(\hat{D}_{c,d}(\tilde{\eta}); \cdots\big) \big] ,
\end{align*}
where we abuse notation, trusting that no confusion arises.
Here, $\hat{D}_{c,d}(\tilde{\eta})$ is the random finite union of 
those connected components of $D_{c,d} \setminus \tilde{\eta}$ that 
have some of the marked points $\{x_1, \ldots, x_{2N}\} \setminus \{x_a,x_b,x_c,x_d\}$ on their boundary.
Now, we have
\begin{align*}
 \; &
\PartF_\alpha(\Omega; x_1, \ldots, x_{2N}) \\
:= \; & H_\Omega (x_a, x_b)^h \;
\mathbb{E}_{\Omega;a,b} \big[ \PartF_{\hat{\alpha}_1}(\hat{\Omega}_{\eta_1}; \cdots) \big]  \\
= \; & H_\Omega (x_a, x_b)^h \;
\mathbb{E}_{\Omega;a,b} \left[ \PartF_{\hat{\alpha}_{D_{c,d}}}(D_{c,d}; \cdots)
\prod_{\substack{ D \textnormal{ c.c of } \Omega \setminus \eta_1 , \;  D \neq D_{c,d} \\ \cl{D} \cap \{x_1, \ldots, x_{2N}\} \setminus \{x_a, x_b\} \neq \emptyset }}
\PartF_{\hat{\alpha}_D}(D; \cdots) \right] \\
= \; & H_\Omega (x_a, x_b)^h \;
\mathbb{E}_{\Omega;a,b} \left[ 
\one_{\mathcal{E}_1} H_{D_{c,d}} (x_c, x_d)^h \;
\mathbb{E}_{D_{c,d};c,d} \left[ \PartF_{\hat{\alpha}_{D_{c,d}} \removeLink \link{c}{d}}(\hat{D}_{c,d}(\tilde{\eta}); \cdots) \right] 
\prod_{\substack{ D \textnormal{ c.c of } \Omega \setminus \eta_1 , \;  D \neq D_{c,d} \\ \cl{D} \cap \{x_1, \ldots, x_{2N}\} \setminus \{x_a, x_b\} \neq \emptyset }}
\PartF_{\hat{\alpha}_D}(D; \cdots) \right] ,
\end{align*}
where the indicator function $\one_{\mathcal{E}_1}$ just accounts for the fact that $\smash{\PartF_{\hat{\alpha}_{D_{c,d}}} = 0}$ on the complementary event $\mathcal{E}_1^c$
--- in particular, there is no problem with the seemingly troublesome situation that $H_{D_{c,d}} (x_c, x_d) = 0$ on the event $\mathcal{E}_1^c$.

Now, we can conclude after a few observations:
\begin{itemize}
\item By~\cite[Proposition~\red{3.5}]{Wu:Convergence_of_the_critical_planar_ising_interfaces_to_hypergeometric_SLE} 
(see also~\cite[Section~\red{8}]{BBK:Multiple_SLEs_and_statistical_mechanics_martingales},
\cite[Section~\red{4}]{Dubedat:Euler_integrals_for_commuting_SLEs}, and 
\cite[Section~\red{4}]{Miller-Werner:Connection_probabilities_for_conformal_loop_ensembles}), 
the law of $\eta_1$ weighted by $\one_{\mathcal{E}_1} H_{D_{c,d}} (x_c, x_d)^h$ 
is equal to the marginal law of the curve $\gamma_1$ connecting $x_a$ and $x_b$ 
in the $2$-$\SLEk$ process $(\gamma_1,\gamma_2) \sim \mathbb{P}_{\Omega;a,b,c,d}$. 
In~\cite{Wu:Convergence_of_the_critical_planar_ising_interfaces_to_hypergeometric_SLE}, this curve was called the $\hSLEk$ in $(\Omega; x_a, x_b)$ with marked points $(x_c, x_d)$.
Explicitly, 
\begin{align*}
\gamma_1 \; \sim \; \mathbb{P}_{\gamma_1}  := 
\one_{\mathcal{E}_1} H_{D_{c,d}} (x_c, x_d)^h 
\; \mathbb{E}_{\Omega;a,b} \big[ \one_{\mathcal{E}_1} H_{D_{c,d}} (x_c, x_d)^h \big]
\; \mathbb{P}_{\Omega;a,b} .
\end{align*}

\item On the other hand, conditionally on this curve $\gamma_1 \sim \mathbb{P}_{\gamma_1}$,
the curve $\gamma_2$ has the law
$\mathbb{P}_{D_{c,d}; x_c, x_d}$ of $\tilde{\eta}$, 
i.e., $\mathbb{P}_{\gamma_2 \; | \; \gamma_1} = \mathbb{P}_{D_{c,d}; c, d}$.

\item Also, by definition~\eqref{eq: Haos construction} and the already established cases $N=1$ and $N=2$, we have
\begin{align*}
\PartF_{\vcenter{\hbox{\includegraphics[scale=0.2]{figures/link-1.pdf}}}} (x_a,x_b,x_c,x_d)
= H_\Omega (x_a, x_b)^h \; \mathbb{E}_{\Omega;a,b} \big[ \one_{\mathcal{E}_1} H_{D_{c,d}} (x_c, x_d)^h \big] .
\end{align*}

\item Combining these facts, we conclude that
\begin{align*}
 \; & \PartF_\alpha(\Omega; x_1, \ldots, x_{2N}) \\
= \; & \PartF_{\vcenter{\hbox{\includegraphics[scale=0.2]{figures/link-1.pdf}}}} (x_a,x_b,x_c,x_d) \times \;
\mathbb{E}_{\gamma_1} \left[ 
\mathbb{E}_{\gamma_2 \; | \; \gamma_1} \left[ \PartF_{\hat{\alpha}_{D_{c,d}} \removeLink \link{c}{d}}(\hat{D}_{c,d}(\gamma_2); \cdots) 
\; \Big| \; \gamma_1 \right] 
\prod_{\substack{ D \textnormal{ c.c of } \Omega \setminus \gamma_1 , \;  D \neq D_{c,d} \\ \cl{D} \cap \{x_1, \ldots, x_{2N}\} \setminus \{x_a, x_b\} \neq \emptyset }}
\PartF_{\hat{\alpha}_D}(D; \cdots) \right] \\
= \; & \PartF_{\vcenter{\hbox{\includegraphics[scale=0.2]{figures/link-1.pdf}}}} (x_a,x_b,x_c,x_d) \times \;
\mathbb{E}_{\gamma_1} \left[ 
\mathbb{E}_{\gamma_2 \; | \; \gamma_1} \bigg[ 
\prod_{\substack{ D \textnormal{ c.c of } \Omega \setminus (\gamma_1 \cup \gamma_2) \\ \cl{D} \cap \{x_1, \ldots, x_{2N}\} \setminus \{x_a, x_b, x_c, x_d\} \neq \emptyset }}
\PartF_{\beta_D}(D; \cdots) \; \Big| \; \gamma_1 \bigg]  \right] \\
= \; & \PartF_{\vcenter{\hbox{\includegraphics[scale=0.2]{figures/link-1.pdf}}}} (x_a,x_b,x_c,x_d) \times \;
\mathbb{E}_{\Omega;a,b,c,d} \left[
\prod_{\substack{ D \textnormal{ c.c of } \Omega \setminus (\gamma_1 \cup \gamma_2) \\ \cl{D} \cap \{x_1, \ldots, x_{2N}\} \setminus \{x_a, x_b, x_c, x_d\} \neq \emptyset }}
\PartF_{\beta_D}(D; \cdots) \right] ,
\end{align*}
where $\beta_D$ are the sub-link patterns of $\alpha \removeLink (\link{a}{b} \cup \link{c}{d})$ 
associated to the components $D \subset \Omega \setminus (\gamma_1 \cup \gamma_2)$.
\end{itemize}
The assertion now follows because this final expression is symmetric with respect to the exchange of $\link{a}{b}$ and $\link{c}{d}$.
\end{proof}

\bigskip

Next, we prove a technical result used in the proof of Lemma~\ref{lem: ASY}.
Fix $\link{a}{b} \in \alpha$ with $a < b$, and 
recall the notations from Appendix~\ref{app:Hao} listed above Lemma~\ref{lem: ASY}.

\begin{lem} \label{lem: ratio poisson}
Let $\kappa \in (0,8)$. 
Let $j,j+1 \notin \{a,b\}$. Then, for any $\xi \in (x_{j-1} , x_{j+2} )$, 
on the event that $x_j$ and $x_{j+1}$ belong to the same connected component of $\bH \setminus \eta$,
we have
\begin{align*}
\lim_{x_j , x_{j+1} \to \xi} \frac{H_{\hat{\bH}_\eta} (x_j, x_{j+1})}{H_\bH (x_j, x_{j+1})}  = 1 , \qquad \textnormal{almost surely.}
\end{align*}
\end{lem}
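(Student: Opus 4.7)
The plan is to reduce the claim to a classical fact about boundary Poisson kernels in a simply connected domain whose boundary is a straight line segment near the point of interest.

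First, I would show that almost surely the point $\xi$ lies in the relative interior of a real boundary arc of the relevant component of $\hat{\bH}_\eta$. Since $\xi \in (x_{j-1}, x_{j+2})$ is a fixed interior boundary point distinct from $x_a, x_b$, and since for $\kappa \in (0,8)$ the chordal $\SLE_\kappa$ curve $\eta$ hits any fixed boundary point with probability zero (this is immediate from the phase structure recalled after Definition~\ref{defn:SLE}: for $\kappa \leq 4$ the curve avoids $\bR \setminus \{x_a, x_b\}$ entirely, while for $\kappa \in (4,8)$ the boundary intersection $\eta \cap \bR$ has Hausdorff dimension $2 - 8/\kappa < 1$~\cite{Alberts-Sheffield:Hausdorff_dim_of_SLE_intersected_with_real_line}, hence Lebesgue measure zero, so a fixed point misses it a.s.), almost surely there exists a random $\epsilon > 0$ with $B(\xi, \epsilon) \cap \eta = \emptyset$. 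Let $D$ denote the connected component of $\bH \setminus \eta$ whose closure contains $\xi$; then $\partial D \cap B(\xi, \epsilon) = (\xi - \epsilon, \xi + \epsilon)$. For $x_j, x_{j+1}$ close enough to $\xi$, both points lie in this segment, so they are automatically in the same component $D$, and thus $H_{\hat{\bH}_\eta}(x_j, x_{j+1}) = H_D(x_j, x_{j+1})$.

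Next, I would invoke conformal covariance of the boundary Poisson kernel together with boundary regularity. Let $\psi \colon D \to \bH$ be any conformal bijection; since $\partial D$ contains the real-analytic arc $(\xi - \epsilon, \xi + \epsilon)$, the Schwarz reflection principle implies that $\psi$ extends real-analytically across this segment, with $\psi'(\xi) \neq 0$. Using the normalization $H_\bH(u,v) = |u-v|^{-2}$ and the standard covariance rule (recall the discussion around~\eqref{eq: 2ptf uhp}), one has, for $x_j, x_{j+1}$ close to $\xi$,
\begin{align*}
H_D(x_j, x_{j+1}) \; = \; |\psi'(x_j)| \, |\psi'(x_{j+1})| \; H_\bH(\psi(x_j), \psi(x_{j+1})) \; = \; \frac{|\psi'(x_j)| \, |\psi'(x_{j+1})|}{(\psi(x_{j+1}) - \psi(x_j))^2} .
\end{align*}

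Finally, I would conclude by a direct Taylor expansion. Write $\psi(x_{j+1}) - \psi(x_j) = \psi'(\xi)(x_{j+1} - x_j) + O\big((x_{j+1}-x_j)(|x_j - \xi| + |x_{j+1} - \xi|)\big)$ as $x_j, x_{j+1} \to \xi$, and use the continuity $|\psi'(x_j)|, |\psi'(x_{j+1})| \to |\psi'(\xi)|$ to obtain
\begin{align*}
\frac{H_{\hat{\bH}_\eta}(x_j, x_{j+1})}{H_\bH(x_j, x_{j+1})} \; = \; \frac{|\psi'(x_j)| \, |\psi'(x_{j+1})| \, (x_{j+1} - x_j)^2}{(\psi(x_{j+1}) - \psi(x_j))^2} \; \xrightarrow[x_j, x_{j+1} \to \xi]{} \; \frac{|\psi'(\xi)|^2}{|\psi'(\xi)|^2} \; = \; 1 ,
\end{align*}
as claimed. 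The only delicate point in the argument is the first step, namely ensuring that the fixed point $\xi$ is not swallowed by $\eta$; this is immediate for $\kappa \leq 4$ but relies on the boundary-dimension estimate for $\kappa \in (4,8)$, which we quote as a black box from the SLE literature.
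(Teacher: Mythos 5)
Your proof is correct, and it reaches the statement by a route that differs in its organization from the paper's, although both arguments rest on the same probabilistic input: for $\kappa \in (0,8)$ the chordal $\SLE_\kappa$ trace almost surely stays at positive distance from the fixed boundary point $\xi \notin \{x_a,x_b\}$. The paper exploits this after applying a M\"obius map that sends $\xi$ to $\infty$ (normalizing $x_a = 0$, $x_b = 1$, $x_{j+1} = \infty$, $x_j = R$), so that the ratio of Poisson kernels becomes $\Mob'(R)$ for a map with $\Mob'(\infty)=1$, and then quotes the Alberts--Kozdron semicircle estimate $\mathbb{P}[\eta \cap \mathcal{C}(0,R) \neq \emptyset] \asymp R^{1-8/\kappa} \to 0$. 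You instead stay local at $\xi$: once a ball $B(\xi,\epsilon)$ is known to be disjoint from $\eta$, the uniformizing map $\psi$ extends analytically across $(\xi-\epsilon,\xi+\epsilon)$ by Schwarz reflection and the limit is a deterministic Taylor expansion; this has the advantage of making clear that the statement is purely deterministic once the realization of $\eta$ is fixed, and it also disposes of the conditioning, since nearby points are automatically in the same component. The one step you should tighten is the deduction ``$\eta\cap\bR$ has Hausdorff dimension $2-8/\kappa<1$, hence Lebesgue measure zero, so a fixed point misses it a.s.'': a Lebesgue-null random set can still contain a prescribed point with positive probability, so you need either the Fubini-plus-scale-invariance argument (for $\SLE_\kappa$ from $0$ to $\infty$ the map $\xi \mapsto \mathbb{P}[\xi \in \eta]$ is constant on each half-line by scaling and integrates to zero by Fubini) or, more directly, the standard boundary-hitting estimate $\mathbb{P}[\eta \cap [x,x+\epsilon] \neq \emptyset] \lesssim \epsilon^{(8-\kappa)/\kappa} \to 0$, which is the local analogue of the semicircle estimate the paper invokes. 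With that point made precise, your argument is complete.
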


\begin{proof}
The ratio of Poisson kernels of interest reads
\begin{align} \label{eq:ratio}
\frac{H_{\hat{\bH}_\eta} (x_j, x_{j+1})}{H_\bH (x_j, x_{j+1})}
= \Mob'(x_j)  \Mob'(x_{j+1}) ,
\end{align}
where $\Mob$ is a conformal map from the connected component of $\bH \setminus \eta$
containing $x_j$ and $x_{j+1}$ onto $\bH$, such that $\Mob(x_{j+1}) = x_{j+1}$ and $\Mob(x_j) = x_j$.
Another interpretation of the ratio~\eqref{eq:ratio} is the probability for a Brownian excursion 
connecting the points $x_j$ and $x_{j+1}$ in $\bH$ to stay in $\hat{\bH}_\eta$~\cite{LSW:Conformal_restriction_the_chordal_case, Virag:Brownian_beads}.
In particular, it belongs to $[0,1]$.

By topological reasons and thanks to conformal invariance of the $\SLEk$, 
we may assume that $x_a = 0$, $x_b = 1$, $x_j = R$, and $x_{j+1} = \infty$ without loss of generality.
Thus, it suffices to show that 
\begin{align*}
\lim_{R \to \infty} \Mob'(R) = 1 , \qquad \textnormal{almost surely,}
\end{align*}
where $\Mob$ is the conformal map from the unbounded component of $\bH \setminus \eta$ onto $\bH$,
such that $\Mob(R) = R$ and $\Mob'(\infty) = 1$.
Now, T.~Alberts and M.~Kozdron proved in~\cite[Corollary~\red{1.2}]{Alberts-Kozdron:Intersection_probabilities_for_chordal_SLE_path_and_semicircle} 
that if $R \geq 3$, then we have
\begin{align*}
\mathbb{P} [ \eta \cap \mathcal{C}(0,R) \neq \emptyset ] \asymp R^{1-8/\kappa} ,
\end{align*}
where $\mathcal{C}(0,R) \subset \bH$ is the semi-circle centered at $0$ with radius $R$.
Because $\kappa \in (0,8)$, we therefore have $\mathbb{P} [ \eta \cap \mathcal{C}(0,R) \neq \emptyset ] \to 0$ as $R \to \infty$.
The assertion follows from this.
\end{proof}

\section{\label{app: Coulomb gas} Coulomb gas construction of the pure partition functions}

In this appendix, we summarize an alternative construction of the pure partition functions $\PartF_\alpha$,
which works for $\kappa \in (0,8) \setminus \bQ$.
The functions are constructed in integral form (as so-called Coulomb gas integrals).
The key tool for this construction is a quantum group symmetry 
on the solution space~\eqref{eq: solution space}
of the second order PDE system~\eqref{eq: multiple SLE PDEs}~\cite{Kytola-Peltola:Conformally_covariant_boundary_correlation_functions_with_quantum_group}. 
This symmetry is very useful also for analyzing the solutions 
--- indeed, we used it, e.g., to establish Proposition~\ref{prop: OPE} in Section~\ref{subsec:OPE for ppf}.

The idea is to construct the pure partition functions $\PartF_\alpha$ in terms of Dotsenko-Fateev (Feigin-Fuchs) integrals
\cite{Dotsenko-Fateev:Conformal_algebra_and_multipoint_correlation_functions_in_2D_statistical_models},
which appear in the Coulomb gas formalism of conformal field theory. 
Then, for each $\alpha \in \LP_N$, $\PartF_\alpha$ is proportional to
\begin{align}\label{eq: ansatz for solution to PDEs gives ppf}
\BasisF_\alpha(x_{1},\ldots,x_{2N}) := 
\int_{\Gamma(\alpha)} 
\prod_{1\leq i<j\leq 2N}(x_{j}-x_{i})^{2/\kappa}
\prod_{\substack{1\leq i\leq 2N \\ 1\leq r\leq N}}
(w_{r}-x_{i})^{-4/\kappa}
\prod_{1\leq r<s\leq N}(w_{s}-w_{r})^{8/\kappa} \;
\ud w_1 \cdots \ud w_N ,
\end{align}
for $(x_{1},\ldots,x_{2N}) \in \chamber_{2N}$, 
where the branch of the integrand is chosen in a certain way (see~\cite[Section~\red{3}]{Kytola-Peltola:Conformally_covariant_boundary_correlation_functions_with_quantum_group}).
The key in the construction of $\PartF_\alpha$ is a judicious choice of the integration 
contours $\Gamma(\alpha)$, certain closed $N$-surfaces designed in such a way that 
the functions $\PartF_\alpha$ do satisfy the asymptotics properties in~\eqref{eq: multiple SLE asymptotics}.
We refer  the interested reader 
to~\cite{Kytola-Peltola:Pure_partition_functions_of_multiple_SLEs,
Kytola-Peltola:Conformally_covariant_boundary_correlation_functions_with_quantum_group, Peltola:Basis_for_solutions_of_BSA_PDEs_with_particular_asymptotic_properties}.

\begin{prop} \label{prop in app}
Let $\kappa \in (0,8) \setminus \bQ$. 
The functions appearing in Theorem~\ref{thm::purepartition_existence_forallK} can be written in the form
\begin{align} \label{eq: Ppf cadidate}
 \PartF_\alpha (x_1 , \ldots, x_{2N}) = 
\left( \frac{\Gamma(2-8/\kappa)}{\Gamma(1-4/\kappa)^2} \right)^N
 \times \BasisF_\alpha (x_1 , \ldots, x_{2N}) , \qquad \textnormal{for } \alpha \in \LP_N.
\end{align}
\end{prop}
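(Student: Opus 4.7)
The plan is to verify the characterization of $\PartF_\alpha$ from Proposition~\ref{prop::purepartition_unique} for the candidate functions on the right side of~\eqref{eq: Ppf cadidate}. That is, setting $\widetilde{\PartF}_\alpha := (B_{1}^{2,2})^{-N}\,\BasisF_\alpha$ with $B_{1}^{2,2} = \Gamma(1-4/\kappa)^2/\Gamma(2-8/\kappa)$, I will check that the collection $\{\widetilde{\PartF}_\alpha \mid \alpha \in \LP\}$ satisfies $\widetilde{\PartF}_\emptyset \equiv 1$, lies in $\Sol_N$ for each $\alpha \in \LP_N$, and obeys the recursive asymptotics $\mathrm{(ASY)}$ in~\eqref{eq: multiple SLE asymptotics}. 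Since $\kappa \in (0,8)\setminus\bQ$, Proposition~\ref{prop::purepartition_unique} then forces $\widetilde{\PartF}_\alpha = \PartF_\alpha$, which is exactly~\eqref{eq: Ppf cadidate}.

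First, I would recall the three structural properties that the $\BasisF_\alpha$ already possess. The second order PDE system~\eqref{eq: multiple SLE PDEs} and the M\"obius covariance~\eqref{eq: multiple SLE Mobius covariance} are automatic: they follow from the differential equations satisfied by the Dotsenko--Fateev integrand in~\eqref{eq: ansatz for solution to PDEs gives ppf} together with closedness of the contours $\Gamma(\alpha)$, as established in~\cite{Kytola-Peltola:Pure_partition_functions_of_multiple_SLEs, Kytola-Peltola:Conformally_covariant_boundary_correlation_functions_with_quantum_group}. The power law bound $\mathrm{(PLB)}$ is also verified there, so $\BasisF_\alpha \in \Sol_N$ and thus $\widetilde{\PartF}_\alpha \in \Sol_N$. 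The empty-link normalization $\widetilde{\PartF}_\emptyset \equiv 1$ is an empty Coulomb gas integral. The only non-trivial task is to pin down the multiplicative constant via $\mathrm{(ASY)}$.

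Next, I would compute the asymptotic behavior as $x_j, x_{j+1} \to \xi$ in the integral~\eqref{eq: ansatz for solution to PDEs gives ppf}. This is exactly the content of Theorem~\ref{thm: big prop} specialized to the all-ones valence case $\multii=(1,\ldots,1)$: the asymptotic analysis of~\cite[Section~\red{5}]{Peltola:Basis_for_solutions_of_BSA_PDEs_with_particular_asymptotic_properties} yields, for $\link{j}{j+1} \in \alpha$, a limit of the form
\begin{align*}
\lim_{x_j, x_{j+1} \to \xi}\frac{\BasisF_\alpha(x_1,\ldots,x_{2N})}{(x_{j+1}-x_j)^{-2h}} \;=\; B_{1}^{2,2}\cdot\constantfromdiagram{1}{2}{2}\;\BasisF_{\hat\alpha}(x_1,\ldots,\hat x_j,\hat x_{j+1},\ldots,x_{2N}),
\end{align*}
whereas the limit vanishes when $\link{j}{j+1}\notin\alpha$. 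The factor $B_{1}^{2,2}$ arises from collapsing one integration contour onto the coalescing pair of real points: after the variable change $w_r = x_j + (x_{j+1}-x_j)t$ on the corresponding contour, the leading behavior is governed by the Selberg-type integral that evaluates to $B_{1}^{2,2}$, as computed in~\cite[Theorems~\red{3.1} \& \red{5.3}]{Peltola:Basis_for_solutions_of_BSA_PDEs_with_particular_asymptotic_properties}. The combinatorial constant satisfies $\constantfromdiagram{1}{2}{2} = 1$ (Remark~\ref{rem:fusion}). Dividing by $(B_{1}^{2,2})^N$ and using $\hat\alpha \in \LP_{N-1}$ shows that $\widetilde{\PartF}_\alpha$ satisfies the recursive asymptotics~\eqref{eq: multiple SLE asymptotics} with the clean $B_{1}^{2,2}$-free normalization in $\mathrm{(ASY)}$.

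The main technical obstacle is precisely this asymptotic analysis of the Coulomb gas integral: one has to control the deformation of the $N$-dimensional integration surface $\Gamma(\alpha)$ under the coalescence $x_j, x_{j+1} \to \xi$, isolate the single screening contour that pinches between these two points, and verify that the remaining contours deform to $\Gamma(\hat\alpha)$ for the $(N-1)$-link pattern. This is exactly the delicate analysis carried out in~\cite[Sections~\red{5}--\red{6}]{Peltola:Basis_for_solutions_of_BSA_PDEs_with_particular_asymptotic_properties} and relies crucially on $\kappa \notin \bQ$ so that all relevant Gamma-function factors are finite and non-zero. With this asymptotic analysis in hand, the uniqueness statement of Proposition~\ref{prop::purepartition_unique} immediately concludes the proof.
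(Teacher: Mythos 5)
Your proposal is correct and follows essentially the same route as the paper: the paper's proof simply cites \cite[Theorem~4.1]{Kytola-Peltola:Pure_partition_functions_of_multiple_SLEs} for the fact that the right-hand side of~\eqref{eq: Ppf cadidate} lies in $\Sol_N$ and satisfies $\mathrm{(ASY)}$, and then concludes by the uniqueness statement of Proposition~\ref{prop::purepartition_unique}, exactly as you do. Your bookkeeping of the constant is also consistent with Theorem~\ref{thm: big prop} and Remark~\ref{rem:fusion}: the factor $B^{2,2}_1$ produced by the collapsing screening contour is precisely cancelled by the normalization $(B^{2,2}_1)^{-N}$, yielding the constant-free asymptotics required by $\mathrm{(ASY)}$.
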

\begin{proof}
\cite[Theorem~\red{4.1}]{Kytola-Peltola:Pure_partition_functions_of_multiple_SLEs} 
shows that for any $\alpha \in \LP_N$, the right side of~\eqref{eq: Ppf cadidate} belongs to the solution space 
$\Sol_N$ and the asymptotic properties~\eqref{eq: multiple SLE asymptotics} hold.
Uniqueness of the functions with these properties, Proposition~\ref{prop::purepartition_unique}, 
then implies that $\PartF_\alpha$ must be equal to the functions appearing in Theorem~\ref{thm::purepartition_existence_forallK}.
\end{proof}

The restriction that $\kappa$ is irrational is needed because the current form of 
the ``spin chain~--~Coulomb gas correspondence''
established in~\cite[Theorems~\red{4.16}~and~\red{4.17}]{Kytola-Peltola:Conformally_covariant_boundary_correlation_functions_with_quantum_group}
requires the representation theory of the quantum group $\Uqsltwo$ to be semisimple (here, $q = e^{\ii \pi 4 / \kappa}$).
In principle, the functions thus obtained could be analytically continued to include all $\kappa \in (0,8)$,
but the explicit continuation is not obvious, due to delicate cancellations of infinities and zeroes.
On the other hand, 
because of the non-semisimplicity of the representation theory of $\Uqsltwo$ for rational $\kappa$,
one observes interesting phenomena in these cases.

We also emphasize that smoothness of $\PartF_\alpha$ is immediate from the Coulomb gas integral construction,
and the asymptotics can also be analyzed in a powerful and systematic way. 
However, it seems very difficult to show in general that the functions $\PartF_\alpha$ obtained from~\eqref{eq: Ppf cadidate}
are positive. For $\kappa \leq 6$, this latter property is provided by the probabilistic construction of $\PartF_\alpha$ 
discussed in Appendix~\ref{app:Hao}, combined with the very strong fact from 
Proposition~\ref{prop::purepartition_unique} that both constructions indeed give the same functions.
Also the ``strong'' power law bound $\mathrm{(B)}$ given in~\eqref{eqn::partitionfunction_positive} 
is not obvious from~\eqref{eq: Ppf cadidate} at all, whereas for $\kappa \leq 6$, it is manifest in the probabilistic construction
(however difficulties do occur when $\kappa > 6$).

\end{appendices}

\bibliographystyle{annotate}

\renewcommand{\bibnumfmt}[1]{\makebox[5.3em][l]{[#1]}}

\newcommand{\etalchar}[1]{$^{#1}$}

\end{document}